\definecolor{ForestGreen}{rgb}{0.1333,0.5451,0.1333}
\definecolor{DarkRed}{rgb}{0.8,0,0}
\definecolor{Red}{rgb}{1,0,0}
\newtheorem{theorem}{Theorem}[section]
\newtheorem{informaltheorem}[theorem]{Informal Theorem}
\newtheorem{corollary}[theorem]{Corollary}
\newtheorem{lemma}[theorem]{Lemma}
\newtheorem{claim}[theorem]{Claim}
\newtheorem{definition}[theorem]{Definition}
\newtheorem{remark}[theorem]{Remark}
\newtheorem*{theorem*}{Theorem}
\newtheorem*{corollary*}{Corollary}
\newtheorem*{conjecture*}{Conjecture}
\newtheorem*{lemma*}{Lemma}
\newtheorem*{thm*}{Theorem}
\newtheorem*{prop*}{Proposition}
\newtheorem*{obs*}{Observation}
\newtheorem*{definition*}{Definition}
\newtheorem*{remark*}{Remark}
\newtheorem*{rec*}{Recommendation}
\newenvironment{fminipage}%
  {\begin{Sbox}\begin{minipage}}%
  {\end{minipage}\end{Sbox}\fbox{\TheSbox}}
\def\defeq{\stackrel{\mathrm{def}}{=}}
\def\setof#1{\left\{#1  \right\}}
\def\union{\cup}
\def\abs#1{\left|#1  \right|}
\newcommand\vecone{\boldsymbol{1}}
\renewcommand{\deg}{\operatorname{deg}}
\newcommand\Otil{\widetilde{O}}
\newcommand\R{\mathbb{R}}
\DeclareMathOperator*{\argmin}{arg\,min}
\DeclareMathOperator*{\im}{im}
\newcommand{\polylog}{\text{ polylog }} 
\newcommand{\concat}{\oplus}
\newcommand{\econg}{\text{econg}}
\newcommand{\diam}{diam}
\newcommand{\eps}{\varepsilon}
\renewcommand{\O}{\widetilde{O}}
\renewcommand{\forall}{\mathrm{\text{ for all }}}
\newcommand{\bx}{\boldsymbol{x}}
\renewcommand{\root}{\mathsf{root}}
\renewcommand{\hat}{\widehat}
\renewcommand{\tilde}{\widetilde}
\DeclareFontFamily{U}{mathb}{\hyphenchar\font45}
\DeclareFontShape{U}{mathb}{m}{n}{<5> <6> <7> <8> <9> <10> gen * mathb
<10.95> mathb10 <12> <14.4> <17.28> <20.74> <24.88> mathb12}{}
\DeclareSymbolFont{mathb}{U}{mathb}{m}{n}
\DeclareMathSymbol{\rcirclearrow}{\mathbin}{mathb}{'367}
\newcommand{\wt}{\widetilde}
\renewcommand{\bar}{\overline}
\newif\ifrandom
\newcommand{\poly}{{\mathrm{poly}}}
\newcommand{\str}{{\mathsf{str}}}
\newcommand{\wstr}{{\wt{\str}}}
\newcommand{\todolater}[1]{}
\newcommand{\cC}{\mathcal{C}}
\newcommand{\cW}{\mathcal{W}}
\newcommand{\projP}{\tilde{\mathcal{P}}}
\renewcommand{\root}{\mathsf{root}}
\newcommand{\gammaVSApprox}{\gamma_{approxVS}}
\newcommand{\Apxball}{\textsc{ApxBall}}
\newcommand{\stretchSSSP}{\gamma}
\newcommand{\cover}{\textsc{Cover}}
\newcommand{\Core}{\textsc{Core}}
\newcommand{\Htil}{\tilde{H}}
\newcommand{\ball}{\bar{B}}
\DeclareMathOperator{\dist}{dist}
\title{\vspace{-3em}
A Dynamic Shortest Paths Toolbox:\\
Low-Congestion Vertex Sparsifiers and their Applications
}
\date{}
\newcommand*\samethanks[1][\value{footnote}]{\footnotemark[#1]}
\author{Rasmus Kyng\thanks{The research leading to these results has received funding from grant no. 200021 204787 of the Swiss National Science Foundation.} \\ ETH Zurich \\ kyng@inf.ethz.ch \\  \and Simon Meierhans\samethanks \\ ETH Zurich \\ mesimon@inf.ethz.ch \and Maximilian Probst Gutenberg\samethanks \\ ETH Zurich \\ maximilian.probst@inf.ethz.ch}
\begin{document}
\pagenumbering{gobble}

\maketitle

\vspace{-2 em} 
\begin{abstract}
We present a general toolbox, based on vertex sparsifiers, for designing new data structures to maintain shortest paths in graphs undergoing edge insertions and/or deletions.
In particular, we obtain the following results:
\begin{itemize}
    \item the first data structure to maintain $m^{o(1)}$-approximate all-pairs shortest paths (APSP) in an $m$-edge graph undergoing edge insertions and deletions with \emph{worst-case} update time $m^{o(1)}$ and query time $\tilde{O}(1)$, and %
    \item a data structure to maintain a tree $T$ that has diameter no larger than a subpolynomial factor than the underlying graph $G$ that is undergoing edge insertions and deletions where each update is handled in  amortized subpolynomial time, and
    \item a simpler and more efficient data structure to maintain a $(1+\eps)$-approximate single-source shortest paths (SSSP) tree $T$ in a graph undergoing edge deletions in amortized time $m^{o(1)}$ per update. 
\end{itemize}
All our data structures are deterministic. For the last two data structures, we further have that while the trees $T$ are not subgraphs of $G$, they do embed with small edge congestion into $G$. This is in stark contrast to previous approaches and is particularly useful for algorithms that use these data structures internally to route flow along shortest paths. 

To illustrate the power of our new toolbox, we show that our SSSP data structure can be used directly to give a deterministic implementation of the classic MWU algorithm for approximate undirected minimum-cost flow running in time $m^{1+o(1)}$.
Previously, Bernstein-Gutenberg-Saranurak [FOCS'21] had built a randomized data structure achieving $m^{1+o(1)}$ time whp. By using our SSSP data structure in the recent almost-linear time algorithm for computing Gomory-Hu trees by Abboud-Li-Panigrahi-Saranurak [FOCS'23], we simplify their algorithm significantly and slightly improve their runtime.

To obtain our toolbox, we give the first algorithm that, given a graph $G$ undergoing edge insertions and deletions and a dynamic terminal set $A$, maintains a vertex sparsifier $H$ that approximately preserves distances between terminals in $A$, consists of at most $|A|m^{o(1)}$ vertices and edges, and can be updated in worst-case time $m^{o(1)}$. 
Crucially, our vertex sparsifier construction allows us to maintain a low edge-congestion embedding of $H$ into $G$.
This low congestion embedding is needed when using our toolbox in data structures that are then in turn used to implement algorithms routing flows along shortest paths.

\vspace{2em}

\end{abstract}

\pagebreak

\tableofcontents

\pagebreak
\pagenumbering{arabic}

\section{Introduction}

Over the past two decades, vertex sparsifiers have emerged as a central tool in graph algorithms and played a crucial role in the development of efficient flow algorithms. A vertex sparsifier\footnote{Here we use the term in accordance with usage in the `fast graph algorithms' literature. A different notion of vertex sparsification was introduced in \cite{M09}.
}
$H$ of a graph $G=(V,E,l)$ with respect to a terminal set $A \subseteq V$ is a graph that contains roughly $|A|$ vertices and (approximately) preserves a certain graph property between the vertices in $A$ in $H$. 

In \cite{spielman2004nearly}, a framework of vertex sparsifiers that preserve electrical energy between terminals was used to derive the first nearly-linear time\footnote{We follow the convention where an algorithm that is inputted an $m$-edge graph is said to run in nearly-linear time if it runs in time $\tilde{O}(m)$ and to run in almost-linear time if it runs in time $m^{1+o(1)}$.} algorithm to compute electrical flows, a major breakthrough in graph algorithms. In \cite{sherman2013nearly, kelner2014almost, peng2016approximate}, different frameworks based on vertex sparsifiers that preserve cuts%
\footnote{All of these vertex sparsifiers were heavily inspired by the work in \cite{madry2010fast}. 
\cite{kelner2014almost} also ensures a notion of (low-congestion) flow preservation.
} were used to obtain the first almost-linear and then nearly-linear time algorithms to compute approximate, undirected maximum flow.

More recently, algorithms for \emph{dynamically maintaining} vertex sparsifiers have received considerable attention \cite{goranci_et_al:LIPIcs:2017:7846, goranci2018incremental, goranci2018dynamic, forster2019dynamic, durfee2019fully, goranci2020improved, chen2020fast, forster2021dynamic, goranci2021expander, gao2023fully, chen2022maximum, detMaxFlow, forster2023deterministic}. Besides their applications to a myriad of dynamic graph problems, they recently have been used to obtain faster algorithms to solve the static exact maximum flow problem and various of its generalizations. In \cite{gao2023fully}, dynamic vertex sparsifiers that preserve electrical energy between terminals were used to obtain the first exact maximum flow algorithm with runtime ${O}(m^{1.5-\delta})$ for some constant $\delta > 0$ for the important case where the number of edges $m$ is almost-linear in the number of vertices\footnote{For reasonably dense graphs, the algorithm in \cite{van2021minimum} achieves near-linear runtime. They obtain a runtime of $\tilde{O}(m + n^{1.5})$ where $n$ is the number of vertices in the graph $G$.}. Briefly thereafter, \cite{chen2022maximum} presented the first almost-linear time algorithm to compute maximum flows which crucially relies on fast dynamic distance-preserving vertex sparsifiers. But despite their pivotal role in the development of many dynamic algorithms and the recent almost-linear time exact maximum flow algorithm, distance-preserving vertex sparsifiers can only be maintained with polynomial update time \cite{chen2020fast} or by making strong assumptions on the adversary model either by requiring the adversary to be oblivious to the vertex sparsifier \cite{forster2021dynamic, chen2022maximum} or by placing extremely strong assumptions on the update sequence \cite{chen2022maximum, detMaxFlow}. 

In this article, we finally give an algorithm to maintain distance-preserving vertex sparsifiers that work against any adversary and have subpolynomial update time and approximation quality, and thus are essentially optimal. We summarize our main technical result in the following theorem.

\begin{theorem}\label{thm:mainVS}
Given an $m$-edge input graph $G=(V,E,l)$ with lengths in $[1,L]$ and an initially empty terminal set $A$. Then, for some $\gamma_{VS} = e^{O(\log^{20/21} m \log\log m})$, there is a deterministic algorithm that processes edge insertions and deletions to $G$ and vertex insertions and deletions to $A$ and maintains a graph $H$ such that for all vertices $u,v \in A$, we have 
\[
\dist_G(u,v) \leq \dist_H(u,v) \leq \gamma_{VS} \cdot \dist_G(u,v),
\]
and $H$ consists of at most $(|A|+1)\gamma_{VS}\log L$ vertices and edges at any time. The algorithm takes initial time $m\gamma_{VS}\log L$ and then processes each update in worst-case time $\gamma_{VS} \log L$.
\end{theorem}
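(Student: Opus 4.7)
My plan is to reduce the construction of $H$ to a single-scale primitive built from dynamically maintained low-diameter covers, and then to bootstrap that primitive recursively.

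First, I would decompose by length class: for each $i \in \{0, 1, \ldots, \lceil \log L \rceil\}$ maintain a separate sub-sparsifier $H_i$ responsible for certifying distances in the range $[2^i, 2^{i+1})$ between terminals of $A$, and output $H := \bigcup_i H_i$. This decomposition is the source of the $\log L$ factor in both the size and update-time bounds in the statement, and lets us focus on a single distance scale.

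Second, at a fixed scale $d$, I would use the deterministic dynamic \Apxball\ primitive (already developed earlier in the paper) to maintain an approximate ball of radius $\Theta(d)$ around each terminal $a \in A$. From these balls I would extract a \cover\ of the $d$-neighborhoods of $A$, together with a small \Core\ of Steiner vertices per cluster. The scale-$d$ sparsifier then has vertex set $A$ augmented by the union of all cluster cores and, inside each cluster, a weighted complete bipartite graph whose edge weights equal approximate intra-cluster distances. The lower bound $\dist_G(u,v) \le \dist_H(u,v)$ follows by realising each edge of $H_d$ by an actual path of that length in $G$; the matching upper bound follows by the standard cover-stitching argument, which turns any $G$-path between two terminals into a walk in $H$ hopping from core to core across overlapping clusters, at the cost of a polylogarithmic factor in $m$ per scale.

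Third, to bring the stretch all the way down to the subpolynomial target $\gamma_{VS} = e^{O(\log^{20/21} m \log \log m)}$, I would bootstrap the construction: the approximate intra-cluster distances used to weight edges of $H_d$ are themselves computed not in $G$ but in a recursively maintained vertex sparsifier whose terminal set is the cluster's core, a set much smaller than $|V|$. Unrolling $k$ recursive levels multiplies per-level stretches and shrinks the effective terminal set at each level; balancing the per-level polylogarithmic blow-up against the number of levels produces the exponent $20/21$ as the optimal trade-off.

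The main obstacle, and the step I expect to dominate the formal proof, is upgrading this scheme from amortized to \emph{worst-case} deterministic update time against an adaptive adversary. The natural recursive scheme gives only amortized bounds, since a cover rebuild costs time proportional to its affected region and a single input update can in principle cascade through all recursive levels. To convert to worst-case, I would run $O(\log m)$ phased parallel copies per recursive instance with geometrically staggered rebuild schedules, route queries through the freshest valid copy, and amortise each rebuild across its phase so that any single input update touches only $O(\log m)$ copies at each level. Combined with the worst-case guarantees of \Apxball\ and a deterministic cover extraction, this would yield the claimed $\gamma_{VS} \log L$ worst-case per-update bound stated in the theorem.
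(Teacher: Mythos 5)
Your proposal takes a genuinely different route from the paper, but I believe it has a real gap that would prevent it from working as stated.

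The paper does \emph{not} build its fully-dynamic vertex sparsifier out of covers / approximate-ball primitives. Instead, it implements a single-step ``size-reduction'' sparsifier (\Cref{lma:workhorseSparsifier}) built from a Thorup--Zwick pivot structure, the Andoni--Stein--Zhong path collection, a small number of deterministic low-stretch spanning forests that embed all of those paths, core graphs over those forests, and a dynamic spanner of their union; it then stacks $\Lambda = \Theta(\log^{1/21} m)$ such reduction steps recursively with staggered rebuild intervals (\Cref{subsec:fullyDynAPSP}, \Cref{thm:mainTheoremVSExt}). The covers and $\Apxball$ machinery that you invoke appear in the paper only in \Cref{sec:decrSSSP}, and there they are explicitly \emph{decremental} structures: \Cref{thm:covering} and \Cref{thm:Apxball} only handle edge deletions, and their correctness (monotonically shrinking cores and covers, the potential-function charging in \Cref{clm:fewVerticesInPromity}) hinges on distances only increasing. \Cref{thm:mainVS}, by contrast, is a fully-dynamic statement with both edge insertions and deletions, so you would first have to build a fully-dynamic cover primitive against an adaptive adversary --- which is essentially as hard as the theorem you are trying to prove.

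The more fundamental gap is the adaptive-adversary issue, which your phased-rebuild de-amortization does not address. The obstacle the paper is overcoming is not amortization per se: prior work (e.g., \cite{forster2021dynamic,chen2022maximum}) already had amortized-to-worst-case conversions, but only against an oblivious adversary, because the sparsifier's stretch guarantee is tied to a fixed family of paths/trees and an adaptive adversary can steer the graph toward a shortest path that no tree in the family preserves. Running $O(\log m)$ staggered copies does not change that --- the adversary sees all copies. The paper's key new idea is that, at initialization time, one can enumerate \emph{all paths that will ever need to be preserved} (the projected collection $\projP$ built from initial balls and their potential future pivots, using \Cref{obs:decrementalMaintenanceofTZ}), because the ASZ ball/pivot structure is monotonically shrinking under deletions. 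This is what lets a fixed set of $O(\log m)$ LSSFs survive the entire update sequence without re-randomizing or guessing; insertions are then handled by simply adding inserted endpoints to the terminal set $A$. Your proposal has no analogue of this ``prepare for all future states'' step, and without it the recursive cover scheme either fails against an adaptive adversary or quietly assumes the obliviousness the paper set out to remove. Finally, the $20/21$ exponent is not a free parameter balanced by a polylog-per-level trade-off as you suggest --- it comes out of a specific interplay between the reduction factor $k = m^{1/\Lambda}$, the internal depth $K$ of the dynamic spanner from \Cref{thm:spanner}, and the $\gamma_\ell = e^{O(\log^{2/3} m \log\log m)}$ base stretch, so it would need to be derived, not asserted.
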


Given this result, we obtain new algorithms for dynamic shortest paths problems: we give the first $m^{o(1)}$-approximate all-pairs shortest-paths (APSP) algorithm that runs in $m^{o(1)}$ \emph{worst-case} update and query time, the first algorithm to explicitly maintain a tree $T$ over the graph $G$ that has the same diameter as $G$ up to a $m^{o(1)}$ factors, and obtain a much simpler and faster algorithm to maintain a $(1+\epsilon)$-approximate single-source shortest-paths tree $T$ for graphs undergoing only edge deletions. Both trees $T$ here are hierarchical trees, that is, they span the vertex set of $G$ but might include vertices and edges not in $G$.

Our techniques differ starkly from previous techniques used in the area, are arguably simpler, and also yield stronger data structures. Our algorithms for maintaining a low-diameter tree $T$ in a fully-dynamic graph $G$, and for maintaining a $(1+\eps)$-approximate single-source shortest-path tree $T$ in a graph under deletions, both have the property that they can be embedded into $G$ with very low congestion. 
This turns out to be crucial in applications where the tree $T$ is used to route flow as it allows us to maintain the flow explicitly (and always have an $m^{o(1)}$-approximate estimate of the increase of flow on a single edge in $G$ since a given time). This in turn allows us to use the SSSP algorithm to derandomize the recent almost-linear time implementation of the MWU algorithm for undirected approximate min-cost flow in \cite{FOCSbernstein2022deterministic}, and to significantly simplify the recent almost-linear time algorithm to compute Gomory-Hu trees \cite{gomoryHu}. 

\subsection{Roadmap} 

In the next section, \Cref{sec:applicaitons}, we describe our applications of \Cref{thm:mainVS} in more detail and give more formal statements. We then give a brief overview of related work on dynamic distance-preserving vertex sparsifiers and their application to maximum flow in \Cref{subsec:relWork}. Finally, we give, in \Cref{subsec:overview}, an overview of our new techniques to obtain \Cref{thm:mainVS}.

\subsection{Applications}
\label{sec:applicaitons}

\paragraph{Application \#1: Approximate Dynamic APSP.} As an immediate Corollary of \Cref{thm:mainVS} one can obtain a data structure that maintains approximate all-pairs shortest paths (APSP) with worst-case subpolynomial update and query time. To obtain this Corollary, one can simply maintain the terminal set $A$ to be the empty set and upon query for the distance between two vertices $u,v \in V$, add these two vertices to the set $A$, compute static APSP on the vertex sparsifier $H$ (which is of size $m^{o(1)}$) and then output the distance between $u$ and $v$ in $H$ as a distance estimate.

We show that using a more careful, but still simple, approach, one can in fact obtain query times that are significantly better, and a slightly better approximation guarantees.

\begin{restatable}{theorem}{mainTheoremAPSP}\label{thm:mainTheoremAPSP}
Given an $m$-edge input graph $G = (V,E,l)$ with lengths in $[1,L]$, there is a data structure $\textsc{DynamicAPSP}$ that can process a polynomial\footnote{In this paper, the term polynomial always refers to a polynomial in $m$.} number of edge insertions and deletions to $G$ and at any point in time answers queries where inputted $u,v \in V$, it returns a distance estimate $\widehat{\dist}(u,v)$ such that $\dist_G(u,v) \leq \widehat{\dist}(u,v) \leq \gamma_{ApproxAPSP} \cdot \dist_G(u,v)$,  for some $\gamma_{ApproxAPSP} =  e^{O(\log^{6/7} m \log\log m)}$.

For some $\gamma_{timeAPSP} =  e^{O(\log^{20/21} m \log\log m)}$, the data structure can be initialized in time $m \cdot \gamma_{timeAPSP} \cdot \log L$, and thereafter processes each edge update in worst-case time $\gamma_{timeAPSP} \cdot \log L$ and each query in worst-case time $O(\log m \log L)$.
\end{restatable}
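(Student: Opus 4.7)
\emph{Proof plan.} Our plan is to reduce dynamic approximate APSP to $O(\log L)$ instances of the dynamic vertex sparsifier from Theorem~\ref{thm:mainVS}, one per distance scale. For each scale $i \in \{0, 1, \ldots, \lceil \log L \rceil\}$, we maintain a sparsifier $H_i$ whose terminal set $A_i$ is a $2^i$-net of $V$ under $G$-distances, i.e.\ a set of landmarks such that every $v \in V$ is within $G$-distance $O(2^i)$ of some landmark in $A_i$. The net is maintained incrementally under edge updates using $H_i$ itself, so that each scale contributes at most $\gamma_{VS} \log L$ worst-case time per update, and the total update time remains within $\gamma_{timeAPSP} \log L$ when summed across scales.

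For each vertex $v$ and each scale $i$, we maintain a pointer $\pi_i(v) \in A_i$ to its $G$-nearest landmark at that scale, computed by bounded-radius exploration in $H_i$. These pointers stitch together into a hierarchical tree $\mathcal{T}$ of depth $O(\log L)$ over $V$, where the parent of $\pi_i(v)$ is $\pi_{i+1}(\pi_i(v))$. Two vertices $u, v$ are said to \emph{merge} at scale $i$ whenever $\pi_i(u) = \pi_i(v)$: at the smallest merger scale $i^*$, the triangle inequality together with the net property gives $\dist_G(u, v) \le O(2^{i^*})$, while the failure to merge at scale $i^* - 1$ forces $\dist_G(u, v) = \Omega(2^{i^*-1})$, sandwiching the true distance to within a constant factor of $2^{i^*}$.

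To answer a query $\widehat{\dist}(u, v)$, we locate $i^*$ by computing the LCA of $u$ and $v$ in $\mathcal{T}$ using standard Euler-tour indexing; each scale-level lookup costs $O(\log m)$, and summing across the $O(\log L)$ scales yields total query time $O(\log m \log L)$. We then output $\widehat{\dist}(u, v) := C \cdot 2^{i^*}$ for an appropriate constant $C$, which immediately gives $\dist_G(u, v) \le \widehat{\dist}(u, v) \le O(\gamma_{VS}) \cdot \dist_G(u, v)$ as a baseline guarantee.

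The main obstacle will be sharpening the stretch from the baseline $O(\gamma_{VS}) = e^{O(\log^{20/21} m \log\log m)}$, which is what a black-box use of Theorem~\ref{thm:mainVS} yields, down to the claimed $\gamma_{ApproxAPSP} = e^{O(\log^{6/7} m \log\log m)}$ — note that $6/7 < 20/21$, so the approximation is strictly better than the sparsifier's stated stretch. We expect this to require a white-box view of the vertex sparsifier construction, exploiting a tighter per-application stretch at the cost of a slightly larger update-time exponent — exactly the $6/7$ vs.\ $20/21$ tradeoff separating $\gamma_{ApproxAPSP}$ from $\gamma_{timeAPSP}$ in the theorem statement. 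The delicate step is balancing net granularity and sparsifier parameters across the $\log L$ scales to attain the $6/7$ exponent without exceeding the subpolynomial update budget.
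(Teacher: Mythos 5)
Your proposal takes a genuinely different route from the paper, but it has a decisive gap that you have essentially flagged yourself, plus some structural problems that a black-box appeal to Theorem~\ref{thm:mainVS} cannot circumvent.

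The most important issue is exactly the one you defer: a black-box use of Theorem~\ref{thm:mainVS} yields stretch $\gamma_{VS} = e^{O(\log^{20/21} m \log\log m)}$, and the whole content of achieving $\gamma_{ApproxAPSP} = e^{O(\log^{6/7} m \log\log m)}$ is in the white-box argument you sketch as ``we expect this to require.'' The paper does not split the improvement between granularity of nets and sparsifier parameters; it never uses nets at all. Instead it builds a \emph{recursive size-reduction hierarchy} $G_0 = G, G_1, \ldots, G_\Lambda$ with $\Lambda = \log^{1/21} m$ levels, where each $G_{i+1}$ is the workhorse sparsifier (Lemma~\ref{lma:workhorseSparsifier}) applied to $G_i$ with reduction parameter $k = m^{1/\Lambda}$ and internal depth $K = \log^{3/21} m$. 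The crucial exponent gap $6/7$ vs.\ $20/21$ arises because the per-level \emph{stretch} is $(\gamma_\ell)^{O(K)}$ while the per-level \emph{update time} is $(\gamma_\ell)^{O(K^2)}$; compounding $\Lambda$ levels yields $(\gamma_\ell)^{O(K\Lambda)} = e^{O(\log^{18/21} m \log\log m)}$ for stretch versus $(\gamma_\ell)^{O(K^2)} = e^{O(\log^{20/21} m \log\log m)}$ for time. The query algorithm then follows the pivot chain $\hat{p}_0(u), \hat{p}_1(u), \ldots$ through the hierarchy until a level at which the two chains land inside a common ball (or until level $\Lambda$), and sums the pivot-path contributions --- none of this is visible at the granularity of Theorem~\ref{thm:mainVS}.

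Beyond the missing step, there are two structural obstacles to your scale-based plan. First, for small $i$ a $2^i$-net $A_i$ of $V$ may have $\Theta(n)$ landmarks, so the sparsifier $H_i$ built on terminal set $A_i$ has $\Omega(n)$ vertices and cannot be scanned or explored in subpolynomial time per operation; the paper's hierarchy sidesteps this by shrinking the graph geometrically, not by stratifying by distance scale. Second, the pointer $\pi_i(v)$ to $v$'s nearest landmark cannot be ``computed by bounded-radius exploration in $H_i$'': the graph $H_i$ output by Theorem~\ref{thm:mainVS} lives on (roughly) the terminal set $A_i$ alone, so a non-terminal vertex $v$ has no presence in $H_i$ and there is no way to read $\dist_G(v, A_i)$ off of $H_i$. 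The paper's Lemma~\ref{lma:workhorseSparsifier} exposes the richer internal state needed here (pivot function $p$, balls $B_G(v, p(v))$, and the pivot forest $F$), but that interface is precisely what is hidden when one passes to the cleaner statement of Theorem~\ref{thm:mainVS}. Finally, maintaining a $2^i$-net under \emph{insertions} as well as deletions within worst-case $\gamma_{VS}\log L$ per update is itself nontrivial and not addressed; the paper handles this by periodically rebuilding each hierarchy level after $u_i = m^{1-(i+1)/\Lambda}$ updates and de-amortizing the rebuilds.
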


We also show that our data structure can be used to certify the diameter of a dynamic set $X \subseteq V$ by outputting two vertices from $X$ that are at a distance roughly equal to the diameter of the set $X$ in the graph $G$. Further, it can be extended to output an approximate shortest path $P$ in time $O(|P|\log m \log L)$. See \Cref{thm:mainTheoremAPSPFormal} and \Cref{rmk:pathReporting} for an extended version of \Cref{thm:mainTheoremAPSP}.

Recently, two different data structures with similar guarantees were obtained in \cite{chuzhoy2023new, ForsterGNS23}, however, both achieved only amortized update time guarantees (and both approximation and update times are significantly larger subpolynomial factors). In fact, even in the easier setting where only edge deletions (no insertions) are allowed, all state-of-the-art algorithms (see \cite{henzinger2018decremental, chechik2018near, forster2021dynamic, chuzhoy2021decremental, FOCSbernstein2022deterministic} obtain trivial worst-case update times. Further, among these, the algorithms that work against an adaptive adversary \cite{chuzhoy2021decremental, FOCSbernstein2022deterministic} all have subpolynomial approximation and update times slightly worse than the bounds we achieve. For the setting where only edge insertions are allowed (no deletions), \cite{forster2023deterministic} obtains polylogarithmic approximation and amortized update time guarantees. We believe that the framework can further be adapted to run with subpolynomial worst-case update time (and approximation). We point out that both Erdös' girth conjecture \cite{thorup2005approximate} and recent conditional hardness results \cite{abboud2022hardness, abboud2023stronger} strongly indicate that an $\omega(1)$ approximation factor is necessary to obtain subpolynomial update and query time (even amortized and against oblivious adversary). We refer the reader to \cite{probst2020new, brand2023deterministic} for a more in-depth discussion of the literature on the dynamic APSP problem.

\paragraph{Application \#2: Dynamic Low-Diameter Tree.} We further show how to maintain a dynamic forest $F$ along with a vertex map $\Pi_{V(G) \mapsto V(F)}$ such that any two vertices in $G$ are mapped to vertices in $F$ that are at distance at most $m^{o(1)} \cdot \diam(G)$ where $\diam(G)$ denotes the diameter of $G$. Further, $F$ embeds into $G$ where every edge in $F$ is mapped to an edge in $G$, thus $F$ is simply obtained from copying $G$ multiple times, and the map is of low congestion meaning that we only need few copies of $G$ to form $F$. For technical reasons, we maintain a forest $F$, however, the vertex map $\Pi_{V(G) \mapsto V(F)}$ maps all vertices in $G$ to the same tree in $F$ as can be extracted from the statement below.

\begin{theorem}\label{thm:mainTheoremLowDiamTreeOverview}
Given an $m$-edge input graph $G = (V,E,l)$ with lengths in $ [1,L]$ and a parameter $D \geq 1$. There is a data structure $\textsc{LowDiamTree}$ that maintains a forest $F$ that can process a polynomially-bounded number of edge insertions and deletions to $G$. 

Under these updates, the algorithm maintains the forest $F$ along with graph embedding $\Pi_{F \mapsto G}$ that embeds each edge in $F$ into a single edge in $G$ and vertex maps $\Pi_{V(G) \mapsto V(F)}, \Pi_{V(F) \mapsto V(G)}$ consistent with the graph embedding, such that, for some $\gamma_{lowDiamTree}= e^{O(\log^{20/21} m \log\log m})$, at any time:
\begin{enumerate}
    \item $\diam_F(\Pi_{V(G) \mapsto V(F)}(V)) \leq \gamma_{lowDiamTree}\cdot \diam(G)$, and
    \item  we have $\econg(\Pi_{F \mapsto G}) \leq \gamma_{lowDiamTree}$, and
    \item $F$ consists of at most $\gamma_{lowDiamTree} \cdot m$ vertices and edges.
\end{enumerate}
The algorithm maintains the forest $F$ and all maps explicitly. It is deterministic, can be initialized in time $m \cdot \gamma_{lowDiamTree}$, and thereafter processes each edge insertion/deletion in amortized time $\gamma_{lowDiamTree}$.
\end{theorem}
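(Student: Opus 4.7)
The plan is to prove \Cref{thm:mainTheoremLowDiamTreeOverview} by applying \Cref{thm:mainVS} recursively. I will construct a hierarchy $G = H_0, H_1, \ldots, H_k$ of vertex sparsifiers and then build $F$ bottom-up as a union of low-diameter trees at each level, lifted to $G$ through the composed embedding chain. Throughout, I write $\gamma_{VS}$ for the parameter of \Cref{thm:mainVS}.

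\textbf{Hierarchy and lifting.} Fix a parameter $k$ (to be tuned later) and a geometric sequence of radii $R_0 < R_1 < \cdots < R_k$ with $R_k \geq D$. At each level $i \in [1,k]$ I dynamically maintain a terminal set $A_i \subseteq V(H_{i-1})$ that is an $R_i$-net of $V(H_{i-1})$: every vertex of $H_{i-1}$ has some vertex of $A_i$ within distance $R_i$. Feeding $H_{i-1}$ and $A_i$ into \Cref{thm:mainVS} gives the sparsifier $H_i$ together with an embedding that realizes each $H_i$-edge by a short, low-congestion walk in $H_{i-1}$ (in line with the abstract, I assume this embedding is exposed by the construction, even though the top-level statement of \Cref{thm:mainVS} only records distance preservation). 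Inside $H_{i-1}$ I then maintain a shortest-path forest $F_{i-1}$ rooted at $A_i$, which by the net condition has per-tree radius $O(R_i)$; at the top I place $H_k$ inside a single BFS tree of diameter $O(\gamma_{VS}^k \cdot D)$. Each edge of each $F_{i-1}$ is lifted through the chain $\Pi_{H_{i-1} \mapsto H_{i-2}} \circ \cdots \circ \Pi_{H_1 \mapsto G}$ to a walk in $G$, and each walk is instantiated using fresh copies of intermediate vertices so that the union remains a forest that maps edge-to-edge into $G$.

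\textbf{Bounds.} For the diameter guarantee, two vertices $u, v \in V(G)$ reach a common root of the top-level tree by $O(k)$ level crossings, each contributing at most $O(R_i) \cdot \gamma_{VS}^{O(i)}$ to the $F$-distance after lifting; telescoping and adding the top-level tree's $O(\gamma_{VS}^k D)$ diameter yields an $F$-diameter bound of $\gamma_{VS}^{O(k)} \cdot \diam(G)$. The congestion bound $\econg(\Pi_{F \mapsto G}) \leq \gamma_{VS}^{O(k)}$ is immediate by composing the per-level congestions of the embedding chain. The size bound follows by summing $|V(H_i)| \cdot \gamma_{VS}^{O(i)}$ across levels and invoking the $(|A_i|+1)\gamma_{VS}\log L$ size guarantee of \Cref{thm:mainVS} together with the geometric shrinking of $|A_i|$. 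Choosing $k$ so that $\gamma_{VS}^{O(k)}$ matches $e^{O(\log^{20/21} m \log\log m)}$ yields the claimed $\gamma_{lowDiamTree}$.

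\textbf{Main obstacle.} The delicate part will be the amortized update time. When an edge of $G$ changes, the induced edge changes in each $H_i$ propagate upward, and additionally the net $A_i$ can gain or lose terminals, each of which is itself a dynamic update to $H_i$ that \Cref{thm:mainVS} must absorb. Bounding this cascade by an amortized $\gamma_{lowDiamTree}$ per original update requires a potential-function argument that charges net-point changes and shortest-path forest rebuilds against a budget replenished only by true updates to $G$. I expect this bookkeeping, together with the careful amortized maintenance of the per-level forests $F_{i-1}$ in the presence of simultaneous edge and terminal changes, to be the main technical content of the proof.
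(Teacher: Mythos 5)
The high-level architecture you propose (a hierarchy of vertex sparsifiers, a per-level forest, lifted and unioned through the embedding chain) matches the paper's plan in \Cref{subsec:maintainLowDiam}, and your assumption that the vertex-sparsifier construction exposes a back-mapping is in fact vindicated by \Cref{lma:workhorseSparsifier} and \Cref{lma:extensionOfMainThm}. However, the mechanism you use to build the per-level forest is genuinely different from the paper's, and it opens a gap that your sketch does not close.

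You propose to maintain, at each level, an external $R_i$-net $A_i$ of $V(H_{i-1})$ together with a multi-source shortest-path forest rooted at $A_i$. Both of these are themselves hard fully-dynamic problems, and you correctly flag them as the ``main technical content'' — but the difficulty is not merely amortized bookkeeping. Under an edge deletion, many vertices can simultaneously lose their nearest net point within radius $R_i$, forcing an unbounded cascade of net insertions with no obvious charging scheme; and maintaining a shortest-path forest rooted at a growing set in a fully dynamic graph is essentially as hard as the dynamic SSSP problem you are trying to solve. The paper avoids both issues: \Cref{lma:workhorseSparsifier} already maintains, for free as part of its interface, a monotonically growing terminal set $A$, a pivot function $p$, and the pivot forest $F \subseteq G$ of vertex-to-pivot shortest paths (its Property~3). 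The decisive observation (\Cref{obs:decrementalMaintenanceofTZ}) is that the balls $B_G(v,A)$ are monotonically decreasing, so pivot paths are subpaths of the \emph{initial} shortest-path tree and the forest can be maintained with $\Otil(k)$ recourse per update; insertions are handled by making affected endpoints terminals (hence roots) themselves. Your $R_i$-net has none of this monotone structure, so the same argument does not apply.

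A secondary issue: you anchor the top-level tree and the radius schedule on the input parameter $D$ (``a geometric sequence of radii ... with $R_k \geq D$''). But $D$ plays no role in the guarantees of \Cref{thm:mainTheoremLowDiamTreeOverview}, and in a fully-dynamic graph the diameter can fluctuate arbitrarily, so $D$ cannot be taken as an upper bound on $\diam(G)$. The paper's diameter bound works instead because each pivot path has length at most the distance to the current closest terminal, which is automatically $\leq \diam(G_i)$, and the sparsifiers preserve distances up to $e^{O(\log^{20/21} m \log\log m)}$; no fixed radius schedule is needed. To close these gaps you should replace the external $R_i$-net and shortest-path forest with the pivot-forest interface of \Cref{lma:workhorseSparsifier}, use \Cref{lma:extensionOfMainThm} for the per-level lift, and let the diameter bound fall out of the adaptive pivot distances rather than a preset geometric sequence.
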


To the best of our knowledge, no previous result for maintaining low-diameter trees/forests is known. While we believe that some of the above-mentioned APSP algorithms can produce a tree/forest satisfying the above diameter properties, we believe they cannot maintain an embedding of $F$ into $G$ explicitly. We give a more detailed version of the theorem above in \Cref{thm:mainTheoremLowDiamTree}. %

\paragraph{Application \#3: Approximate Decremental SSSP.} Building on our new algorithm to maintain a low-diameter forest $F$ as described in \Cref{thm:mainTheoremLowDiamTreeOverview}, and our APSP data structure from \Cref{thm:mainTheoremAPSP}, we provide an alternative implementation of the high-level framework from \cite{FOCSbernstein2022deterministic} to obtain an algorithm that maintains a single-source shortest path tree. The technical result is summarized below.

\begin{theorem}\label{thm:mainSSSPGeneral}
Given an $m$-vertex graph $G$ with lengths in $[1, L]$ that undergoes a sequence of edge deletions, a dedicated source vertex $s \in V$ and an accuracy parameter $\eps = \Omega(1/\polylog m)$. Then, there is an algorithm that maintains a collection of forests $F_0, F_1, \ldots, F_{\log_2 L}$ along with vertex maps $\Pi_{V(G) \mapsto V(F_i)}, \Pi_{V(F_i) \mapsto V(G)}$ and an embedding $\Pi_{F_i \mapsto G}$ that maps each edge in $F_i$ to a single edge in $G$ for each forest $F_i$ such that, for some $\gamma_{SSSP} = e^{O(\log^{83/84} m \log\log m)}$, at any time:
\begin{enumerate}
    \item for every $v \in V$, if $\dist_G(s,v) < 2^i \cdot n$, then $\Pi_{F_i \mapsto G}(\pi_{F_i}(\Pi_{V(G) \mapsto V(F_i)}(s), \Pi_{V(G) \mapsto V(F_i)}(v))) \leq (1+\eps) \dist_G(s,v) + \eps \cdot 2^i$, i.e. the path between the two nodes in $F_i$ that vertices $s$ and $v$ are mapped to has length at most $(1+\eps)\dist_G(s,v)$, and
    \item $\econg(\Pi_{F_i \mapsto G}) \leq \gamma_{SSSP}$.
\end{enumerate}
The algorithm maintains each forest $F_i$ and the associated maps $\Pi_{V(G) \mapsto V(F_i)}, \Pi_{V(F_i) \mapsto V(G)}$ and $\Pi_{F_i \mapsto G}$ explicitly and the total number of changes to $F$ and these maps is at most $m \cdot \gamma_{SSSP} \log L$. The algorithm runs in time $m \cdot \gamma_{SSSP} \log L$.
\end{theorem}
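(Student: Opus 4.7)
The plan is to implement the high-level framework of Bernstein-Gutenberg-Saranurak \cite{FOCSbernstein2022deterministic} for decremental $(1+\eps)$-approximate SSSP, with our new \lowDiamTree{} data structure (\Cref{thm:mainTheoremLowDiamTreeOverview}) and \textsc{DynamicAPSP} structure (\Cref{thm:mainTheoremAPSP}) playing the role of their main subroutines. As in BGS, I would handle each of the $O(\log L)$ distance scales $i$ independently: for scale $i$, the goal is to maintain a forest $F_i$ that captures, up to multiplicative $(1+\eps)$ and additive $\eps \cdot 2^i$ error, the distances from $s$ to every vertex $v$ with $\dist_G(s,v) < 2^i n$.

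For a fixed scale $i$, the construction of $F_i$ proceeds as follows. First, I would apply \lowDiamTree{} to the subgraph of $G$ with edge-lengths truncated at $\eps \cdot 2^i / \polylog m$, producing a forest $F_i^{\textsf{short}}$ whose components each have diameter at most $\gamma_{\textsf{lowDiamTree}} \cdot \eps \cdot 2^i / \polylog m$ and whose embedding into $G$ has congestion at most $\gamma_{\textsf{lowDiamTree}}$. Second, I would pick a representative vertex for each component of $F_i^{\textsf{short}}$ and feed this dynamic representative set to \textsc{DynamicAPSP} to identify short connector paths between components. Third, I would build $F_i$ by combining $F_i^{\textsf{short}}$ with these connector edges, recording for each connector edge the underlying embedding into $G$. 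The approximation error for a vertex $v$ with $\dist_G(s,v) \leq 2^i n$ is controlled by noting that its shortest path in $G$ crosses at most $O(\polylog m)$ components of $F_i^{\textsf{short}}$, each traversed with additive error $\eps \cdot 2^i / \polylog m$, summing to the required $\eps \cdot 2^i$ overall, with the multiplicative $(1+\eps)$ factor handled by choosing per-scale accuracy $\eps' = \eps / \polylog m$.

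The main obstacle will be simultaneously controlling the congestion of the embedding $\Pi_{F_i \mapsto G}$ across all $O(\log L)$ scales. \lowDiamTree{} directly provides a low-congestion embedding, but \textsc{DynamicAPSP} does not by itself guarantee low congestion when many shortest paths are queried and assembled into connector edges. To address this, I plan to route the connector edges through the dynamic vertex sparsifier $H$ from \Cref{thm:mainVS} that underlies \textsc{DynamicAPSP}, which is equipped with a low-congestion embedding by construction, and then bound the congestion of the assembled $F_i$ by composing the congestion of the \textsc{LowDiamTree} embedding with that of the sparsifier embedding. A secondary challenge will be maintaining $F_i$ incrementally under deletions to $G$ rather than rebuilding from scratch; I would resolve this by adopting the standard BGS rebuild schedule, where substructures at scale $i$ are rebuilt at frequencies such that each deletion triggers only amortized $\gamma_{\textsf{SSSP}}$ work. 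The final $\gamma_{\textsf{SSSP}}$ bound follows by combining $\gamma_{\textsf{lowDiamTree}}$, $\gamma_{\textsf{timeAPSP}}$, and an $O(\log L)$ factor from the scale enumeration.
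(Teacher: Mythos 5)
Your high-level ingredient list is right: \textsc{LowDiamTree} and \textsc{DynamicAPSP} are indeed the new primitives, and the BGS rebuild/hopset framework is the scaffolding. But your approximation argument has a genuine gap, and it's the central one. You claim that for a vertex $v$ with $\dist_G(s,v)<2^i n$, the shortest $s$-$v$ path crosses only $O(\polylog m)$ components of $F_i^{\textsf{short}}$, each of diameter $\approx \eps\cdot 2^i/\polylog m$. This is false in general: a path of length up to $2^i n$ can traverse $\Theta(n\polylog m/\eps)$ such components, and there is no mechanism in a single-scale low-diameter clustering that forces the number of component boundaries crossed to be polylogarithmic. (Truncating at a short length threshold gives a weak clustering, not a sparse neighborhood cover; it lacks the ``every ball of the right radius is wholly contained in some cluster'' property that makes hop-reduction arguments work.) Consequently the additive error does not telescope to $\eps\cdot 2^i$ as claimed. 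You also conflate two different sets of scales: the $O(\log L)$ scales in the statement come from the standard reduction from general lengths to the bounded-length case (Proposition II.1.2 of \cite{bernstein2022deterministic}), whereas the hop reduction requires an \emph{internal} hierarchy of $\Theta(\Lambda)$ covering levels at geometrically increasing radii, which your construction does not build.

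The paper's actual route differs substantially: it maintains a family of $5\Lambda+1$ sparse neighborhood covers $\cC_0,\ldots,\cC_{5\Lambda}$ (\Cref{thm:covering}), where the \textsc{DynamicAPSP} data structure is used \emph{inside} the covering maintenance to detect clusters whose diameter has grown too large, and where the crucial parameter $\Delta$ bounds how many cover sets any vertex participates in. From these covers it builds hopsets $\hat H_i$ (\Cref{thm:hopsetMaintenance}) and approximate-ball forests $Y_C$ (\Cref{thm:Apxball}), runs \textsc{LowDiamTree} on each cover set to get $F'_C$, glues $Y_C$ and $F'_C$ into $X_C$, and then iteratively replaces hopset edges of a forest $F_{i+1}$ by copies of $X_C$ to descend to $F_i$. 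The $\Delta$-overlap bound from the sparse cover is exactly what makes the congestion telescope across the $5\Lambda$ levels; your proposed congestion argument of ``route through the vertex sparsifier $H$'' does not supply such a bound, since $H$ with $\Omega(n)$ terminals has no low-congestion embedding of all its queried paths into $G$ unless one controls overlap externally, and that control is precisely what the sparse cover provides. To repair your proposal you would need to (i) replace the single low-diameter clustering per scale by a proper hierarchy of sparse neighborhood covers maintained via the APSP primitive, (ii) introduce hopsets and the approximate-ball forests, and (iii) derive congestion from the $\Delta$-overlap guarantee rather than from the vertex sparsifier's embedding.
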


Our algorithm should be compared to the recent result from \cite{FOCSbernstein2022deterministic} that obtains similar guarantees with larger subpolynomial factors (the update time is in $m^{1+\Omega(1/\sqrt{\log\log n})}$), except that \cite{FOCSbernstein2022deterministic} cannot provide any non-trivial bound on the congestion with which the shortest path trees/forests $F_i$ embed into $G$. We note that the work in \cite{henzinger2018decremental} obtains a similar result as \cite{FOCSbernstein2022deterministic} with much better subpolynomial factors, however, it again cannot give an guarantees on the congestion of mapping $F_i$ into $G$ and only works against an oblivious adversary while both our result and \cite{FOCSbernstein2022deterministic} are deterministic. We refer the reader for an in-depth discussion of related work to \cite{kyng2022incremental}. 

We also point out that in order to obtain our SSSP data structure, we give a new algorithm to maintain a sparse neighborhood cover in a graph undergoing edge deletions. We refer the reader to \Cref{thm:covering} for the formal statement and point out that our algorithm enforces much stronger properties than the recent algorithm by Chuzhoy \cite{chuzhoy2021decremental} that obtains a similar result.

\paragraph{Application \#4: A Simple Deterministic Almost-Linear Time Algorithm for Undirected, Approximate Minimum-Cost Flow.} A classic approach for computing the approximate $st$-maximum flow in an undirected graph $G = (V,E,u)$, where $u$ is a function mapping edges in $E$ to capacities and $s$ is the source and $t$ the sink vertex, is to solve (approximately) the following linear program (LP) which is the dual to a classic LP formulation to compute maximum flows 
\begin{align*}
    \min &\sum_{e \in E} u(e)l(e) &\\
    & l(P) \geq 1 & \forall P \in \mathcal{P}_{s,t} \\
    & l(e) \geq 0 & \forall e \in E
\end{align*}
Here $\mathcal{P}_{s,t}$ denotes the set of $st$-paths in $G$.
Applying the multiplicative weight method (MWU) (see \cite{arora2012multiplicative, fleischer2000approximating}) to the LP, one obtains the following simple algorithm (\Cref{alg:apxmaxflow}) to compute a $(1+\tilde{O}(\epsilon))$-approximate solution for both the primal and dual LP, that is the algorithm explicitly constructs the approximate maximum flow.

\begin{algorithm}
\label{alg:apxmaxflow}
\ForEach{$e \in E$}{$f(e) \gets 0; l(e) \gets m^{1/\epsilon}$.}
\While{there is a path $P \in \mathcal{P}_{s,t}$ with $l(P) \leq 1$}{
    Let $P$ be the shortest such $st$-path.\label{lne:pickShortestPath}\\
    $\Delta \gets \min_{e \in P} u(e)$.\\
    \ForEach{$e \in P$}{
        $f(e) \gets f(e) + \Delta$; $l(e) \gets l(e)\left(1 + \eps \cdot \frac{\Delta}{u(e)}\right)$.
    }
}
\Return $\log_{1+\eps}\left(\frac{1+\eps}{m^{1/\epsilon}}\right) \cdot f$.
\caption{$\textsc{ApproxMaxFlow}(G =(V,E,u), s,t, \eps)$}
\label{alg:MWU}
\end{algorithm}

For constant $\eps > 0$, it can be shown that the algorithm terminates after $\tilde{O}(m)$ iterations, and the approximation guarantees of the algorithm still hold if one relaxes \Cref{lne:pickShortestPath} (and the while-loop condition) and requires a $(1+\epsilon)$ shortest $st$-path in $G$ with respect to $l$. This also allows us to work with an approximation $\hat{l}$ of the lengths $l$ that is monotonically increasing and where every time a value $\hat{l}(e)$ is increased, it is increased by at least a $(1+\epsilon/2)$ multiplicative factor. Thus, the total number of changes to $\hat{l}(e)$ is at most $\Otil(1)$ times. 

In the recent work of \cite{FOCSbernstein2022deterministic}, the authors gave a data structure that can be queried for the $(1+\eps)$-approximate distance between $s$ and $t$ in the current graph $G$ w.r.t. to $\hat{l}$. The data structure is also able to output an approximate $st$-shortest path $P$ in time linear in the number of edges on $P$. But note that \Cref{alg:MWU} iterates over the edges in the path and therefore the algorithm would require $\Omega(n)$ time per iteration if the chosen paths are long.

To overcome this problem, \cite{FOCSbernstein2022deterministic} designed their data structure to allow the user to sample edges from the approximate shortest $st$-path $P$ according to their capacity. This allowed them to maintain \emph{estimators} of $f$ and $l$ that are then only updated very few times in each coordinate. They then gave a rather intricate argument by standard analysis based on Martingale theory to prove that the algorithm still is well-behaved, i.e. terminates quickly and guarantees good approximations.

In this article, we offer a different approach that is much simpler and more direct. We can simply use the data structure from \Cref{thm:mainSSSPGeneral}. We maintain the single-source shortest path forest $F$ in a dynamic tree structure. We can then find a $(1+\eps)$-approximate distance estimate for the distance from $s$ to $t$ by querying the dynamic tree structure. Further, the identified forest path $P_F$ can then be queried for the min-capacity edge and the flow can directly be added to $F$ via the dynamic tree structure, all in time $O(\log m)$. As shown in \cite{chen2022maximum}, dynamic tree structures on graphs that embed with $n^{o(1)}$ edge congestion into $G$ can be used to maintain $\hat{l}$ to be a $(1+\eps)$-approximation of the real lengths $l(e)$ which is sufficient for the MWU algorithm.

To obtain approximate minimum-cost flows instead of approximate maximum flows, a slightly different LP formulation is used resulting in a very similar outer loop via the MWU method. We refer the reader to \cite{fleischer2000approximating} for the full details.

\paragraph{Application \#5: Simplifying the Recent Almost-Linear Time Algorithm to Compute Gomory-Hu Trees.} In \cite{abboud2022breaking}, the authors designed a framework reducing the computation of Gomory-Hu trees in almost-linear time to a problem named the decremental minimum $U$-Steiner subgraph problem. In this problem, the graph is undergoing edge deletions over time, and the goal is to maintain a subgraph $H$ of $G$ that connects the vertices in $U$ with minimum length paths under length function $l$. The problem can be solved by an MWU algorithm that is very similar to the maximum flow MWU algorithm discussed in the last section. Finally, they show that the problem of finding $H$ in each iteration boils down to solving the decremental $(1+\eps)$-approximate SSSP problem. In \cite{abboud2022breaking}, this reduction only works for unit-lengthed graphs, and more recently \cite{gomoryHu} this was addressed by showing that the technique of edge sampling from approximate $st$-paths introduced in \cite{FOCSbernstein2022deterministic} can also be used to extend their algorithm to weighted graphs. Again, our new SSSP data structure can be used in lieu of the data structure from \cite{FOCSbernstein2022deterministic}. While \cite{gomoryHu} also has to design various additional components that are added to the data structure in \cite{FOCSbernstein2022deterministic} to correctly maintain the subgraphs $H$, our data structure delivers these properties out-of-the-box as one can simply use the approximate shortest paths forest $F$ directly to maintain the graphs $H$ rather explicitly (in the form of fractional flows).

\subsection{Related Work}
\label{subsec:relWork}

In this section, we review related work on distance-preserving vertex sparsifiers. In \cite{forster2019dynamic, forster2021dynamic}, the authors present an algorithm for unweighted dynamic graphs to maintain a probabilistic low-stretch spanning tree (LSST) $T$ with expected stretch $m^{o(1)}$, that is, for any vertices $u,v \in V$, $\mathbb{E}[\dist_T(u,v)] = m^{o(1)}$. The algorithms work against an oblivious adversary and \cite{forster2021dynamic} obtains $m^{o(1)}$ randomized amortized update time. This algorithm can be used to obtain maintain a distance-preserving vertex sparsifier as follows\footnote{We are not aware that the following reduction is known in the literature. Rather, probabilistic LSSTs are usually used directly to maintain a dynamic APSP algorithm by querying the LSSTs.}: instead of maintaining a single LSST $T$, let us maintain $\lambda = O(\log m)$ dynamic LSSTs $T_1, T_2, \ldots, T_\lambda$ such that at any time, for any two vertices $u,v \in V$, one of the trees $T_i$ has $u$ and $v$ at distance at most $m^{o(1)}$ (proving this claim is straightforward via Markov's Inequality and a simple Union Bound). Finally, maintain sparsifier $H$ for a vertex set $A \subseteq V$ as the union of graphs obtained from trees $T_i$ after finding the set of least common ancestors $A_i$ of $A$ in $T_i$ and then contracting all maximal paths in $T_i$ that have no internal vertex in $A_i$. Since $A_i$ can be shown to be of size $2|A|$, this yields that $H$ consists of only $\Otil(|A|)$ vertices and edges. It is not hard to show that using link-cut trees (see \cite{sleator1981data}), $H$ can be maintained efficiently given the trees $T_1, T_2, \ldots, T_\lambda$.

In \cite{chen2020fast}, the first non-trivial algorithm to dynamically maintain distance-preserving vertex sparsifiers was given for \emph{weighted} graphs. However, the algorithm still required polynomial update time and only worked against an oblivious adversary. Only recently, an algorithm with subpolynomial update time was given in \cite{chen2022maximum} (inspired by \cite{chen2020fast}), however, their algorithm still does not work against an adaptive adversary\footnote{Again, the algorithm in \cite{chen2022maximum} maintains probabilistic LSSTs but by the above reduction this again yields dynamic vertex sparsifiers.}.

Despite not being able to maintain the dynamic vertex sparsifiers against an adaptive adversary, \cite{chen2022maximum} succeeded in using them to solve a dynamic subproblem called min-ratio cycle problem that appears in their almost-linear time algorithm for minimum-cost flow. This was achieved, surprisingly, by arguing that the specific update sequence produced by the outer-loop of the flow algorithm is rather well-behaved and that the vertex sparsifier can thus still be maintained. More recently, \cite{detMaxFlow} obtained a deterministic almost-linear time algorithm for maximum flow, but remarkably, they used a deterministic variant of the \cite{chen2022maximum} vertex sparsification procedure which \emph{still does not work against a general adversary} but again only proves correctness of the dynamic algorithm for the specific update sequence produced by the outer-loop of the flow algorithm.
But in general, proving that update sequences are well-behaved is a difficult endeavor. \cite{chen2022maximum} and \cite{detMaxFlow} both use an intricate set-up for restarting data structures after failure, which adds significant complexity and makes it hard to modularize components.
Roughly speaking, both approaches rely on restarting vertex sparsifiers after they fail, while showing this only occurs a subpolynomial number of times.\footnote{The precise statement is a recursive version of this simpler statement, and incorporates a restarting strategy that is analyzed against a restricted adversary.}
Further, if one would want to use dynamic vertex sparsifiers in any other such algorithm, one would first have to prove well-behavedness of the update sequence again and tailor the vertex sparsifier maintenance algorithm to the update sequence. Even worse, some interesting algorithms may produce an update sequence that is simply not well-behaved, such as the incremental threshold min-cost flow algorithm of \cite{BLS23} running in $m^{1+o(1)}\sqrt{n}$ time.
This algorithm crucially needs a min-ratio cycle data structure that works against a very general adversary, as one can no longer use restarting strategies to cope with data structure failure.
This is because the algorithm cannot distinguish whether the lack in progress of a step is due to a failure of the vertex sparsifier maintenance algorithm or the fact that edges necessary to route flow have not arrived yet.

\subsection{Overview}
\label{subsec:overview}

\paragraph{High-Level Strategy.} In this overview, we sketch our algorithm to dynamically maintain a vertex sparsifier $H$ preserving distances between vertices in terminal set $A$ in the dynamic input graph $G = (V,E,l)$, as described in \Cref{thm:mainVS}. For the rest of the overview, we assume that $G$ has constant maximum degree at all times which can be assumed without loss of generality by standard reductions.

The key building block for our algorithm is given in the informal theorem below.

\begin{informaltheorem}\label{infThm:stepRed}
Consider a size reduction parameter $k$, an $n$-vertex graph $G=(V,E,l)$ undergoing at most $n/k$ edge deletions and insertions, and insertions of isolated vertices, such that at all times $G$ has maximum degree $\Delta$ and a monotonically increasing set $A$. Then for some $\gamma = n^{o(1)}$, there is an algorithm that maintains a vertex sparsifier $H$ with respect to $A$ consisting of at most $(n/k + |A|) \gamma$ edges and vertices, with maximum degree $\gamma \Delta$, with $\gamma$ recourse, with stretch $\gamma$ on the distances between terminals, and initialization time $n^{1+o(1)} \poly(\Delta k)$ and update time $n^{o(1)} \poly(\Delta k)$.  
\end{informaltheorem}

Note that for large terminal set $A$, \Cref{infThm:stepRed} already yields the desired result. However, typically, the terminal set $A$ is of very small size. Consider for example dynamic APSP algorithms, where the terminal set $A$ is usually merely of size $2$. To illustrate the overall technique, we assume that $A$ is of size at most $n^{o(1)}$ and show how to recursively use  \Cref{infThm:stepRed} to obtain a vertex sparsifier $H$ consisting of $n^{o(1)}$ vertices and edges that preserves distances between vertices in $A$.

To obtain such a vertex sparsifier, we choose a reduction parameter $k$ that is subpolynomial in $n$ but superpolynomial in $\gamma$. In doing so, we ensure that we chose $k = n^{o(1)}$ and for $K = \log_k(n)$, we have $\gamma^K = n^{o(1)}$. 

Given this choice of parameters, let us define $G_0 = G$ and for every $0 \leq i < K$, we define $G_{i+1}$ to be the graph obtained from maintaining the vertex sparsifier from \Cref{infThm:stepRed} with size reduction parameter $k$ and restart $G_{i+1}$ every $\lceil n/(k^{i+1}\gamma^K) \rceil$ updates to $G$. Finally, we output $H = G_{K}$. 

Let us briefly analyze the algorithm. The final stretch between terminals in $A$ in the vertex sparsifier is at most $\gamma^K = n^{o(1)}$, and the number of vertices and edges in $H$ is at most $(n/k^K + |A|) \gamma^K = (1+|A|)\gamma^K$. Finally, for the update time, we have that each update to $G$ causes at most $\gamma^i$ updates to $G_i$ by the upper bound on the recourse. Thus, each update to $G$ causes at most $\gamma^K$ updates to any such graph $G_i$. By rebuilding each graph $G_{i+1}$ after every interval of $\lceil n/(k^i\gamma^K) \rceil$ updates to $G$, we ensure that the update sequence to $G_{i+1}$ does at no point exceed a length of $n/k^{i+1}$ as desired\footnote{Technically, $G_{i+1}$ could consist of much less than $n/k^i$ vertices and therefore not allow for a sequence of $n/k^{i+1}$ updates, however, in such case one can add isolated vertices to the initial graph $G_{i+1}$ until it is of size $n/k^i$.}. To establish an amortized bound on the compute time, consider an update sequence consisting of $n$ updates to $G$. The total time required by the data structure to maintain graph $G_{i+1}$ given $G_i$ is at most 
\[
    \frac{n}{\lceil n/(k^i\gamma^K) \rceil} \cdot n\gamma^K/k^i \cdot \poly(\Delta\gamma^K k) + n\gamma^K \cdot n^{o(1)} \poly(\Delta\gamma^K k) = n^{1+o(1)}
\]
where the first term stems from the number of rebuilds, of which there are $\frac{n}{\lceil n/(k^i\gamma^K) \rceil}$, and the re-initialization on a graph of size at most $n\gamma^K/k^i$ for each such rebuild. The second term stems from the total recourse at level $i$ and the update time given by \Cref{infThm:stepRed}. Thus, all properties from \Cref{thm:mainVS} have been established.

For the rest of the overview, we outline an algorithm that implements \Cref{infThm:stepRed} focusing on achieving the claimed recourse bound. Our approach builds on the recent techniques by Andoni, Stein and Zhong \cite{andoni2020parallel} to statically compute vertex sparsifiers, which also step-wise reduces the size of the vertex sparsifier by a factor $k$. We show that dynamizing their construction for one size reduction step is rather straightforward but can only be realized with $\Omega(k)$ recourse in the vertex sparsifier.  But, this recourse is too high for a recursive application and thus we cannot usefully apply multiple size-reduction steps.
To improve the recourse
to the desired $\gamma$ factor that is subpolynomial and independent of reduction size $k$, we need to develop new techniques. We show that the path collection used by Andoni, Stein and Zhong \cite{andoni2020parallel} to build the edge set of the vertex sparsifier can be embedded in few deterministic Low-Stretch Spanning Trees (LSSTs). Using dynamic core graph constructions on these LSSTs and dynamic edge sparsification on the core graphs, both as developed in \cite{chen2020fast, chen2022maximum, detMaxFlow}, then preserves the distances between terminals with small stretch $\gamma$. Unfortunately, the path collection from \cite{andoni2020parallel} is fully-dynamic meaning at a later time a new path might be added. But we need to know all paths to ever be in the path collection when we compute the LSSTs at initialization time.

To this end, we make a simple but crucial observation: when we initially build our core graphs, we can construct both the current Andoni-Stein-Zhong paths for the current state of the graph, and \emph{the future paths for a long sequence of future states of the graph}. This allows our core graphs to ``prepare'' for future states of the graph.

\paragraph{The Andoni-Stein-Zhong Vertex Sparsifier.} We start by giving a brief introduction\footnote{Note that we give a presentation tailored to build intuition for our final algorithm, thus our presentation deviates from the presentation in \cite{andoni2020parallel}.} to the \emph{static} vertex sparsifier as presented by Andoni, Stein and Zhong \cite{andoni2020parallel}. Given an $n$-vertex graph $G$ with maximum degree $\Delta$, unique shortest paths, a size reduction parameter $k$ and a terminal set $A$. Sample a set $A'$ by adding each vertex in $V$ to $A'$ with probability $1/k$. Obtain $A''$ as the union of $A$ and $A'$. For each vertex $v$, denote by $p(v)$ the closest vertex in $A''$ to $v$ in graph $G$. Let $B_G(v, A'')$ denote the open ball around vertex $v$ in $G$ of radius equal to the distance from $v$ to the closest vertex in $A''$ (that is $p(v)$). Given this set-up, we construct the path collection $\mathcal{P}$ as follows:
\begin{itemize}
    \item for any two vertices $u \in V$ and $v \in B_G(u, A'')$ and edge $(v, x) \in E(G)$, add $\pi_{G}(u,v) \oplus (v,x)$ to $\mathcal{P}$ where $\pi_G(u,v)$ denotes a shortest $uv$-path in $G$.
\end{itemize}

We define the set of projected path $\widehat{\mathcal{P}}$ as follows: for every $uv$-path $P$ in $\mathcal{P}$, add to $\widehat{\mathcal{P}}$ the path $\pi_{G}(p(u), u) \oplus P \oplus \pi_G(v, p(v))$. Finally, we take the vertex sparsifier to be the graph $H$ over vertex set $A''$ with an edge $e = (a,b)$ for every $ab$-path $P \in \widehat{\mathcal{P}}$ with length $l_H(e) = l_G(P)$.

To gain some intuition for this construction, let us analyze the stretch of the sparsifier. We prove the following claim.

\begin{claim}
For any $u,v \in A$, we have $\dist_G(u,v) \leq \dist_H(u,v) \leq 4 \cdot \dist_G(u,v)$.
\end{claim}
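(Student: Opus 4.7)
The lower bound is immediate from the construction: every edge $(a,b) \in H$ has length $l_H(a,b) = l_G(P)$ for some walk $P$ from $a$ to $b$ in $G$, so $l_H(a,b) \geq \dist_G(a,b)$; concatenating along any $H$-walk and invoking the $G$-triangle inequality then yields $\dist_H(u,v) \geq \dist_G(u,v)$. For the upper bound, let $L := \dist_G(u,v)$, write $\pi_G(u,v) = (w_0 = u, w_1, \ldots, w_n = v)$, and define the potential $\phi(w) := \dist_G(w, p(w))$; note that $\phi$ vanishes on $A''$ and that $\phi(w') \leq \phi(w) + \dist_G(w,w')$ by the triangle inequality against the terminal $p(w) \in A''$. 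The plan is to construct a sequence $0 = \psi_0 < \psi_1 < \cdots < \psi_k = n$ together with an edge of $H$ from $p(w_{\psi_s})$ to $p(w_{\psi_{s+1}})$ for each $s$; concatenating these edges yields an $H$-walk from $p(u) = u$ to $p(v) = v$ whose length upper-bounds $\dist_H(u,v)$.

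I construct the sequence greedily, setting $d_s := \dist_G(w_{\psi_s}, w_{\psi_{s+1}})$. \emph{Case (a):} if $w_{\psi_s} \notin A''$ then $\phi(w_{\psi_s}) > 0$, and I let $j$ be the largest index with $\dist_G(w_{\psi_s}, w_j) < \phi(w_{\psi_s})$ (such $j$ exists with $j \geq \psi_s$ since $w_{\psi_s} \in B_G(w_{\psi_s}, A'')$, and $j < n$ because $v \in A''$ forces $\dist_G(w_{\psi_s}, v) \geq \phi(w_{\psi_s})$); set $\psi_{s+1} := j + 1$. Then $P := \pi_G(w_{\psi_s}, w_j) \oplus (w_j, w_{j+1}) \in \mathcal{P}$, and its projection lies in $\widehat{\mathcal{P}}$ and yields an edge of $H$ from $p(w_{\psi_s})$ to $p(w_{\psi_{s+1}})$ of length $\phi(w_{\psi_s}) + d_s + \phi(w_{\psi_{s+1}})$ (using that $w_{\psi_s}, w_j, w_{j+1}$ lie in order on the shortest path, so $|P|_G = d_s$). \emph{Case (b):} if $w_{\psi_s} \in A''$, set $\psi_{s+1} := \psi_s + 1$ and use the ``reverse'' path $P := (w_{\psi_s + 1}, w_{\psi_s}) \in \mathcal{P}$, obtained by taking the outer vertex of the construction to be the non-terminal $w_{\psi_s+1}$ so that $w_{\psi_s+1} \in B_G(w_{\psi_s+1}, A'')$; projecting gives an edge of $H$ from $p(w_{\psi_s}) = w_{\psi_s}$ to $p(w_{\psi_{s+1}})$ of length $d_s + \phi(w_{\psi_{s+1}})$. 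In either case the hop cost $c_s$ is at most $\phi(w_{\psi_s}) + d_s + \phi(w_{\psi_{s+1}})$.

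Two per-hop inequalities close the argument: (i) $\phi(w_{\psi_s}) \leq d_s$, which is exactly the greedy choice in case (a) and trivial in case (b) since $\phi(w_{\psi_s}) = 0$; and (ii) $\phi(w_{\psi_{s+1}}) \leq \phi(w_{\psi_s}) + d_s \leq 2 d_s$, by the triangle inequality noted above combined with (i). Hence $c_s \leq d_s + d_s + 2 d_s = 4 d_s$, and summing gives $\dist_H(u,v) \leq \sum_s c_s \leq 4 \sum_s d_s = 4 L$. The main---and in my view only---nontrivial obstacle is a small edge case in (b): if both $w_{\psi_s}$ and $w_{\psi_s + 1}$ lie in $A''$ then $B_G(w_{\psi_s+1}, A'')$ is empty and the construction as literally stated produces no path connecting them. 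The natural fix, which is implicit in any reasonable implementation of the Andoni--Stein--Zhong construction, is to also include directly in $H$ every $G$-edge with both endpoints in $A''$ with its original length; this handles the subcase with $c_s = d_s$ without affecting any other part of the argument, so the stretch bound $\dist_H(u,v) \leq 4 \dist_G(u,v)$ holds in full generality.
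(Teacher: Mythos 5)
Your proof is correct and follows essentially the same approach as the paper's: segment $\pi_G(u,v)$ at pivot-ball boundaries, observe that each segment together with its two pivot detours is a projected path in $\widehat{\mathcal{P}}$ and hence an edge of $H$, and bound each hop's length by $4d_s$ via the triangle inequality together with the key fact $\phi(w_{\psi_s}) \leq d_s$ (the paper's remark that ``the pivot vertex $p(x_i)$ is at a distance equal to the radius of $B_G(x_i,A'')$'' encodes precisely your per-hop inequalities (i) and (ii)). The edge case you flag---two consecutive path vertices both in $A''$, leaving an empty open ball---is indeed glossed over by the paper's overview sketch, and your fix of including $G$-edges with both endpoints in $A''$ directly in $H$ is a correct and standard remedy.
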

\begin{proof}
For any $u,v\in V$, let $x_0 = u$ and find vertices $x_1, x_2, \ldots, x_k = v$ by letting $x_{i+1}$ be the last vertex on the shortest $uv$-path $\pi_G(u,v)$ that is incident to a vertex in the ball $B(x_i, A'')$ (and thus there exists an edge $(x_i, x_{i+1})$ in $H$) and stop once $x_i = v$. Then, since $p(x_0) = u, p(x_k) = v$ because $u,v \in A \subseteq A''$, we can upper bound the distance $\dist_H(u,v) \leq \sum_{i} l_H(p(x_i), p(x_{i+1}))$. Finally, we can use that since the pivot vertex $p(x_i)$ is at a distance equal to the radius of $B_G(x_i, A'')$, we have that $\dist_G(p(x_i), x_i) \leq \dist_G(x_i, x_{i+1})$ and thus by the triangle inequality it is easy to show that $l_H(p(x_i), p(x_{i+1})) \leq  \dist_G(p(x_i), x_i) + \dist_G(x_i, x_{i+1}) + \dist_G(x_{i+1}, p(x_{i+1})) \leq 4 \dist_G(x_i, x_{i+1})$. The proof then follows since the vertices $x_i$ segment the shortest $uv$-path.
\end{proof}

By a standard hitting set argument, we have that for every $v \in V$, $|B_G(v, A'')| = \tilde{O}(k)$ because w.h.p. one of $\tilde{O}(k)$ closest vertices to $v$ is in $A'$. Therefore, the size of the edge set of $H$ is bounded by $\tilde{O}(nk \Delta)$. The size of $A''$ and thus the vertex set of $H$ is $|A| + n/k$ in expectation.

\paragraph{A First Attempt at Dynamizing the ASZ-Vertex Sparsifier.} For the rest of the overview, we focus on $G$ being a graph undergoing only edge deletions (extending to the case where $G$ undergoes edge insertions and other operations is rather straightforward). Let us further assume for convenience that $G$ has unique shortest paths at all times.

Consider the following attempt: Initially compute a deterministic set $A'$ of size $\tilde{O}(n/k)$ that not only ensures that all balls $B_G(v, A'')$ are small, but also that the inverses of the ball, the so-called clusters $C_G(u, A'') = \{ v \;|\; u \in B_G(v, A'')\}$, satisfy that $|C_G(u, A'')| = \tilde{O}(k)$. Then, whenever an edge $(u,v)$ is deleted from $G$, simply add the endpoints $u, v$ to the set $A''$ and update the ASZ-Vertex Sparsifier to reflect this change. 

We first note that every ball $B_G(v, A'')$ has decreasing radius. That is because no vertex in $B_G(v, A'')$ is incident to a deleted edge as otherwise it would have been added to $A''$, but $B_G(v, A'')$ is the open ball with radius equal to the closest vertex in $A''$, so it contains no vertex in $A''$. Therefore any current shortest path $\pi_G(u,v)$ between $v$ and a vertex $u \in B_G(v, A'')$ has always been the shortest path between $u$ and $v$. This implies that the path collection $\mathcal{P}$ is monotonically decreasing over time, i.e. paths are removed one by one from $\mathcal{P}$ and at no time is a new path added. 

Further, since $B_G(v, A'')$ has a monotonically decreasing radius while $G$ has monotonically increasing distances, we have that $B_G(v, A'')$ is a monotonically decreasing set and so are the clusters $C_G(u, A'')$. The latter fact implies that a single edge deletion can affect at most $\tilde{O}(k)$ balls $B_G(v, A'')$ and thus only change $\tilde{O}(k)$ pivots $p(v)$. It thus follows that the number of changes to the set of projected paths $\mathcal{P}$ is at most $\tilde{O}(k^2 \Delta)$. And this is equal to the recourse of the vertex sparsifier $H$.

Unfortunately, any (super-)linear dependency in $k$ for the recourse is not tolerable when attempting to recursively use vertex sparsifiers. Thus, this simple attempt does not appear to work.

\paragraph{Low-Stretch Spanning Trees (LSSTs) to the Rescue.} Before we explain how the path collection $\widehat{\mathcal{P}}$ is still useful, let us briefly discuss the tools for maintaining LSSTs in graphs undergoing edge deletions from \cite{chen2022maximum}, which in turn built on \cite{chen2020fast}.

Given an initial graph $G=(V,E, l)$, a low-stretch spanning tree of $G$ is a tree $T$ such that on average each edge is only stretched slightly, more formally, $\sum_{e = (u,v) \in E}  l(T[u,v]) / l(e) \leq \tilde{O}(m)$ where $m = O(n\Delta)$ is the number of edges in $G$. It turns out to be useful to extend this definition to rooted forests. Given a rooted forest $F$ and denoting by $\root^F(v)$ the root of the component containing vertex $v$, we define the stretch of $e = (u,v)$ induced by $F$ with 
\begin{align*}
\str^{F,G}(e) \defeq
\begin{cases}
  1 + l(F[u,v])/ l(e) &~\text{ if } \root^F(u) = \root^F(v) \\
  1 + \left(l(F[u, \root^F(u)]) + l(F[v, \root^F(v)]) \right)/l(e) &~\text{ if } \root^F(u) \neq \root^F(v).
\end{cases}
\end{align*}
This forest stretch is essentially defined so that for edges internal to a tree of the forest, it equals tree stretch, but for edges crossing between components, it measures the detour caused by always forcing paths to move to the root of a component before exiting it.
\cite{chen2022maximum} gave an algorithm that initially computes a rooted forest $F$ and stretch upper bounds $\tilde{\str}(e)$ for every edge that satisfy that $\sum_{e \in E} \tilde{\str}(e) l(e) = \tilde{O}(1) \cdot \sum_{e \in E} l(e)$. Then, as $G$ undergoes edge deletions, the algorithm removes for every update to $G$ at most $\tilde{O}(1)$ edges from $F$, determines new roots in components without a root, and thereby ensures that thereafter for every edge $e \in E$, $\str^{F, G}(e) \leq \tilde{\str}(e)$ and that $F \subseteq G$.

Using this construction,
\cite{chen2022maximum} then builds \emph{core graphs}. That is a graph $C(G, F, \tilde{\str})$ that is obtained from $G$ by contracting all components of the current forest $F$ where the length of every edge $\hat{e}$ in the core graph that corresponds to edge $e = (u,v)$ in $G$ is set to $l_{C(G, F, \tilde{\str})}(\hat{e}) = \tilde{\str}(e) \cdot l(e)$. They then show the following statement.

\begin{informaltheorem}\label{inf:enforceDistByPathWitness}
Given a graph $G=(V,E,l)$, and a rooted forest $F$ that is updated as described above, for any two vertices $u, v$ that are roots in $F$ at the current time,
we have
for every current $uv$-path $P$ in $G$ 
that $\dist_G(u,v) \leq \dist_{C(G, F, \tilde{\str})}(u,v)
\leq \sum_{e \in P} \tilde{\str}(e) l(e)$.
\end{informaltheorem}
In \cite{chen2022maximum}, the above algorithm is run with probabilistic LSSTs to obtain a vertex sparsifier. That is, the stretch estimates are not only correct on average but in expectation, i.e. for every edge $e \in E$, we have $\mathbb{E}[\tilde{\str}(e)] = \tilde{O}(1)$. It is not hard to enforce that the vertices in our terminal set $A$ become roots in the forest $F$. Against an oblivious adversary, one can then argue that for any $u,v \in A$, we have that for the current shortest $uv$-path $\pi_G(u,v)$, we have $\mathbb{E}[\sum_{e \in \pi_G(u,v)} \tilde{\str}(e) \cdot l(e)] = \tilde{O}(1) \cdot \sum_{e \in \pi_G(u,v)} l(e) = \tilde{O}(1) \cdot \dist_G(u,v)$. Using Markov's inequality and \Cref{inf:enforceDistByPathWitness}, we thus derive that with probability at least $\frac{1}{2}$, the core graph $C(G, F, \tilde{\str})$ preserves the distance between $u$ and $v$ for any $u,v\in A$.
To boost probabilities, instead of maintaining a single core graph, we can sample  $\lambda = O(\log n)$ core graphs $C(G, F_i, \tilde{\str}_i)$ for every $1 \leq i \leq \lambda$ where $F_i$ is independently taken from all other forests and $\tilde{\str}_i$ is the set of stretch overestimates outputted when computing $F_i$. This construction ensures  whp. that for any pair of terminal vertices, in some core graph, the distance is preserved.
Taking a union of these graphs leads to a distance preserving graph on the terminal set with a few extra vertices added.\footnote{Note that \cite{chen2022maximum} was directly building vertex sparsifiers for an oblivious version of the so-called min-ratio cycle problem.
In this context, taking a union of core graphs does not work, for subtle reasons related to preserving the so-called \emph{gradient} (a vector over $\R^{E}$) under contraction in a dynamic graph. Hence, they had to treat each core graph separately, and reason about the expected stretch of a ``hidden flow'', instead of distances between vertex pairs. }
This yields a vertex sparsifer for terminal distances against an oblivious adversary. The sparsifer has too many edges, but low recourse measured under edge insertions/deletions and vertex splits. This in turn makes it amenable to edge sparsification using the dynamic spanner of \cite{chen2022maximum}.
Thus, a vertex sparsifier preserving distances against oblivious adversaries is implicit in \cite{chen2022maximum}.

Crucially, the success of this algorithm hinges on being able to take a union bound over the shortest paths that exist at each time in graph $G$. If these paths are not determined before the randomness is used, the adversary can pick one of exponentially many paths that is not preserved by any forest, and design an update sequence that leaves this path to become a shortest path.

In our algorithm, we use a different approach: instead of sampling forests $F_1, F_2, \ldots, F_{\lambda}$, we design them to preserve all paths in the projected path collection $\widehat{\mathcal{P}}$. Clearly, if this is the case, and again if we make the vertices in $A$ roots in the forests $F_1, F_2, \ldots, F_k$, then the distances between vertices in $A$ are preserved in the vertex sparsifier $H$ taken to be the union of core graphs $C(G, F_i, \tilde{\str}_i)$. However, the set $\widehat{\mathcal{P}}$ is a dynamic set, and we construct all forests $F_1, F_2, \ldots, F_{\lambda}$ at the start without knowing the update sequence. Thus, we need the set of all future paths in $\widehat{\mathcal{P}}$ to carry out this approach.

\paragraph{Finding All Future Paths.} Turning back to our dynamization of the ASZ vertex sparsifier, we already observed that the set $\mathcal{P}$ is a monotonically decreasing set. Recall that $\widehat{\mathcal{P}}$ is the set of projections, that is, for every $uv$-path $P \in \mathcal{P}$, it contains a path $\pi_G(p(u), u) \oplus P \oplus \pi_G(v, p(v))$ in $\widehat{\mathcal{P}}$. 

Now observe that for every $u \in V$, the ball $B_G(v, A'')$ is a monotonically decreasing set of initial size $\tilde{O}(k)$. But every pivot $p(u)$ of $u$ at any time in the algorithm is either equal to the initial pivot of $u$, or it is at a smaller distance than the initial pivot of $u$ and thus contained in the initial ball $B_G(v, A'')$. Thus, there are only $\tilde{O}(k)$ potential pivot vertices to consider for every vertex $u$. Moreover, the shortest path to any future pivot vertex $x$, is always already the shortest $ux$-path in the initial graph $G$.

Thus, we can initially construct the set $\widetilde{\mathcal{P}}$ to contain for every $uv$-path $P \in \mathcal{P}$, any $\tilde{u} \in B_G(u, A'') \cup \{p(u)\}$ and  $\tilde{v} \in B_G(v, A'') \cup \{p(v)\}$, the path $\pi_G(\tilde{u}, u) \oplus P \oplus \pi_G(v, \tilde{v})$. It is not hard to see, that the initial and any future set $\widehat{\mathcal{P}}$ is a subset of the set $\widetilde{\mathcal{P}}$. 

This allows us to construct forests $F_1, F_2, \ldots, F_{\lambda}$ that preserve all future paths between vertices in $A$ that would have been included in the dynamic set $\widehat{\mathcal{P}}$ by enforcing that they preserve the paths in $\widetilde{\mathcal{P}}$. Finally, we note that the set $\widetilde{\mathcal{P}}$ is only of size $\tilde{O}(k^2)$ larger than the set of paths $\mathcal{P}$ and thus of size at most $\tilde{O}(n k^3 \Delta)$. 

\paragraph{The Final Vertex Sparsifier Algorithm.} We summarize our construction. Consider graph $G$ undergoing edge deletions with unique shortest paths and with maximum degree $\Delta$ and terminals $A$. We initially compute a set $A'$ such that for every vertex $v \in V$, the ball $B_G(v, A')$ but also its cluster $C_G(v, A')$ are of size at most $\tilde{O}(k)$. We initialize the set $A''$ to $A \cup A'$ and observe that it also enforces the above property on balls and clusters. Computing set $A'$ and balls $B_G(v, A'')$, clusters $C_G(v, A'')$ and all shortest paths between vertices therein, and all vertex $v$ to initial pivot $p(v)$ shortest paths, can be done in time $\tilde{O}(n\Delta k)$ rather straightforwardly.

We then compute the path collection $\widetilde{\mathcal{P}}$ as follows: for every $u \in V, x \in B(G, A'')$ and edge $(x, v) \in E$ and any $\tilde{u} \in B_G(u, A'') \cup \{p(u)\}, \tilde{v} \in B_G(v, A'') \cup \{p(v)\}$, we to add to $\widetilde{\mathcal{P}}$ the path 
\[
    \pi_G(\tilde{u}, u) \oplus \pi_G(u,x) \oplus (x,v) \oplus \pi_G(v, \tilde{v}). 
\]
We then compute forests $F_1, F_2, \ldots, F_{\lambda}$ for $\lambda = O(\log n)$ such that every path $P$ in $\widetilde{\mathcal{P}}$ is preserved by at least one forest, i.e. the stretch $\tilde{\str}_i(P) = \frac{1}{l(P)} \cdot \sum_{e \in P} \tilde{\str}_i(e) \cdot l(e)$ is in $\tilde{O}(1)$ where $\tilde{\str}_i$ is the stretch overestimate outputted with forest $F_i$. Computing $\widetilde{\mathcal{P}}$ can be done in time $\tilde{O}(n\Delta k^4)$ because all segments from which a path $P \in \widetilde{\mathcal{P}}$ are already computed when computing balls and clusters, and each of the constantly many path segments from which it is assembled is contained in a ball except for the last edge and thus trivially of size $\tilde{O}(k)$. Computing the forests $F_1, F_2, \ldots, F_\lambda$ can be done in time near-linear in the number of edges on all paths in $\widetilde{\mathcal{P}}$ which again yields runtime $\tilde{O}(n\Delta k^4)$.

Next, we compute the core graphs $C(G, F_i, \tilde{str}_i)$ as previously described by contracting components in $F_i$ in the graph $G$ and adjusting the length of an edge $\hat{e}$ that corresponds to edge $e$ in $G$ to have length $l_{C(G, F_i, \tilde{\str}_i)}(\hat{e}) = \tilde{\str}_i \cdot l(e)$. We enforce that the vertices in $A''$ become roots in every forest $F_i$ and then form vertex sparsifier $\tilde{G}$ as the union of the core graphs, which yields a graph that preserves distances between vertices in $A$ consisting of $\tilde{O}(|A''|) = \tilde{O}(n/k + |A|)$ vertices and $\tilde{O}(m)$ edges. Building core graphs is rather straightforward and thus $\tilde{G}$ can be obtained in time $\tilde{O}(m)$ given the forests $F_1, F_2, \ldots, F_{\lambda}$. 

Finally, we obtain the vertex sparsifier $H$ from applying an edge sparsification procedure to $\tilde{G}$ which ensures that the distances in $H$ approximate the distances in $\tilde{G}$ while the number of edges in $H$ is only $n^{o(1)} \cdot |V(\tilde{G})| = (|A| + n/k)n^{o(1)}$.

As $G$ undergoes updates, it adds the endpoints of deleted edges to $A''$ and it then only remains to forward the edge deletions to the algorithm maintaining the forests $F_1, F_2, \ldots, F_{\lambda}$. Each edge deletion results in at most $\tilde{O}(1)$ edge removals from each forest $F_i$ and thus it is not hard to see that the core graphs and by extension the sparsifier $\tilde{G}$ only change by $\tilde{O}(1)$ edge deletions and vertex splits. Combining this with the powerful dynamic sparsification techniques from \cite{chen2022maximum, detMaxFlow}, we can then update $H$ to only change $n^{o(1)}$ edges and add at most one new isolated vertex. All of these updates can be processed in worst-case time $n^{o(1)} \poly(\Delta k)$ and as mentioned above the recourse of $H$ is only $n^{o(1)}$.

We point out that in our algorithm, we have to carefully control the maximum degree of $H$. While the average degree of $H$ is small since it is sparse, this might not be the case for the maximum degree. However, we show that it suffices to remove few edges during preprocessing and then a single additional edge from forests $F_1, F_2, \ldots, F_{\lambda}$ per update to $G$ to ensure that the maximum degree of $H$ remains bounded by $\Delta \cdot n^{o(1)}$.

This construction then yields an implementation of \Cref{infThm:stepRed} where $\gamma$ is chosen to subsume all $n^{o(1)}$ factors above.

\paragraph{Mapping a Forest $F$ in $H$ to a Forest $F'$ in $G$.} Finally, we point out that our vertex sparsifiers allow us to map a hiearchical forest $F$ in $H$ to a hiearchical forest $F'$ in $G$ with low-congestion. 

To gain some intuition, first consider $H$ to be derived by running the one-step size reduction from \Cref{infThm:stepRed} on $G$. Now to map the forest $F$ on $H$ into $G$, we can do the following: forest every vertex $v \in V(H)$, let $T_v$ denote the direct sum of all trees in $F_1, F_2, \ldots, F_{\lambda}$ rooted at $v$ glued in the vertex $v$. More precisely, there are trees $T_1, T_2, \ldots, T_k$ for $k \leq \lambda$ that are rooted in $v$ in the forests $F_1, F_2, \ldots, F_{\lambda}$. Note that the trees $T_1, T_2, \ldots, T_k$ might not be vertex-disjoint and in fact, $v$ is the root of every such tree by definition. The tree $T_v$ is formed by having a copy of each tree $T_1, T_2, \ldots, T_k$ in the graph and then contracting all copies of vertex $v$ into a single vertex identified with $v$.

Then, let $F'$ be the graph formed as the direct sum of all such trees $T_v$. Then, for each edge $\hat{e} = (\hat{u}, \hat{v})$ in $F \subseteq H$ that stems from core graph $C(G, F_i, \tilde{\str}_i)$ (recall $H$ is a subgraph of $\tilde{G}$ which is formed as the union of these core graphs) and originates from mapping the edge $e =(u,v) \in G$, add an edge between the vertices identified with vertices $u$ and $v$ in the core graph $C(G, F_i, \tilde{\str}_i)$ in the subgraph $T_v$ of $G'$ of length $l(e)$.

Note that this process adds for any such edge $\hat{e}$ as above, a path $\hat{u}$ to $\hat{v}$ to $F'$ of length $l(F[\hat{u} = \root^{F'}(u), u]) + l(e) + l(F[v, \hat{v} = \root^{F'}(v)])$. But this is exactly the length that the edge $\hat{e}$ was assigned in its core graph, and thus in the graphs $\tilde{G}$ and $H$. Arguing carefully, one can then establish that the distances in $F$ are preserved by $F'$.

Further, we note that each tree in forest $F_i$ is added to only one graph $T_v$ (the one for $v$ being the root of the tree). And trees in $F_i$ are vertex-disjoint. Thus, each vertex $w \in V$, appears in at most $\lambda$ trees $T_v$. And thus, $F'$ contains each vertex and edge in $G$ at most $\lambda$ times. Thus, one can establish that $F'$ embeds into $G$ with low congestion. 

Applying this mapping from graph $H$ into $G$ recursively in the multi-step vertex sparsifier reduction, one can still bound the congestion by a subpolynomial factor in $n$, as desired.

\section{Preliminaries}
\label{sec:prelim}

\paragraph{Standard Definitions.} In this article, we consider $G=(V,E,l)$ to be an $m$-edge, $n$-vertex graph with edge lengths being integers in the interval $[1, L]$ where $L$ is polynomially bounded in $n$. This is without loss of generality as all our results can then be extended to work with any upper bound $L$ at the cost of an additional $O(\log(L))$ factor in the runtime by standard reductions (see for example Proposition II.1.2, in \cite{bernstein2022deterministic}). 

\paragraph{Dynamic Graphs.} \label{para:dynG} 
We say $G$ is a \emph{dynamic} graph, if it undergoes \emph{batches} $U^{(1)}, U^{(2)}, \ldots$ of updates consisting of edge insertions/deletions and/or vertex splits that are applied to $G$. We stress that results on dynamic graphs in this article often only consider a subset of the update types and we therefore explicitly state for each dynamic graph which updates are allowed. We say that the graph $G$, after applying the first $t$ update batches $U^{(1)}, U^{(2)}, \ldots, U^{(t)}$, is at \emph{time} $t$ and denote the graph at this time by $G^{(t)}$. Additionally, when $G$ is clear, we often denote the value of a variable $x$ at the end of time $t$ of $G$ by $x^{(t)}$, or a vector $\bx$ at the end of time $t$ of $G$ by $\bx^{(t)}$. 

For each update batch $U^{(t)}$, we encode edge insertions by a tuple of tail and head of the new edge and deletions by a pointer to the edge that is about to be deleted. We further also encode vertex splits by a sequence of edge insertions and deletions as follows: if a vertex $v$ is about to be split and the vertex that is split off is denoted $v^{\text{NEW}}$, we can delete all edges that are incident to $v$ but should be incident to $v^{\text{NEW}}$ from $v$ and then re-insert each such edge via an insertion (we allow insertions to new vertices, that do not yet exist in the graph). 

For technical reasons, we assume that in an update batch $U^{(t)}$, the updates to implement the vertex splits are last, and that we always encode a vertex split of $v$ into $v$ and $v^{\text{NEW}}$ such that $\deg_{G^{(t+1)}}(v^{\text{NEW}}) \leq \deg_{G^{(t+1)}}(v)$. We let the vertex set of graph $G^{(t)}$ consist of the union of all endpoints of edges in the graph (in particular if a vertex is split, the new vertex $v^{\text{NEW}}$ is added due to having edge insertions incident to this new vertex $v^{\text{NEW}}$ in $U^{(t)}$).

\paragraph{Distances, Balls, Bunches and Clusters.} We denote by $\dist_G(u,v)$ the distance from vertex $u$ to $v$ in the graph $G$, and by $\dist(u, X)$ for some vertex set $X \subseteq V$, the distance from $u$ to the closest vertex in $X$. We assume that graph $G$ has unique shortest paths at any point in time which can be assumed w.l.o.g. at the cost of constant time additional overhead. We denote by $\pi_G(u,v)$ the unique shortest path in $G$ from $u$ to $v$. Given a path $\pi$ in $G$, for any two vertices $u,v$ on the path $\pi$, we denote by $\pi[u,v]$ the segment of the path $\pi$ from $u$ to $v$. Again, we extend this notion and denote by $\pi[u,X]$, for any $u$ and set $X \subseteq V$ such that $u$ is on $\pi$ and at least one vertex in $X$ is on $\pi$, the subsegment of $\pi$ from $u$ to the vertex in $X$ closest to $u$ on $\pi$ (whenever we use this notation, we ensure that there is a unique vertex in $X$ that minimizes this distance to avoid ambiguity).

We define $B_G(v, r) = \{ w \in V \;|\; \dist_G(v,w) < r\}$ to be the open ball around $v$ of radius $r$ in $G$ and $\bar{B}_G(v,r) = \{ w \in V \;|\; \dist_G(v,w) \leq r\}$ to be the closed ball around $v$. For convenience, we define for any two vertices $v,x$, the ball $B_G(v,x) = B_G(v, \dist_G(v,x))$, and for $X \subseteq V$, $B_G(v, X) = B_G(v, \dist_G(v, X))$, and similiarly define $\bar{B}_G(v, x)$ and $\bar{B}_G(v, X)$.
We define cluster as inverses to balls, defining $C_G(v, r) = \{ w \in V \;|\; v \in B_G(w,r)\}$ and define $C_G(v, x)$ and $C_G(v, X)$ analogously.

\paragraph{Graph Embeddings and Hierarchical Graphs.} Next, we discuss the precise definitions for graph embeddings used in this article. We start by defining a graph embedding.

\begin{definition}[Graph Embedding]
Given two graphs $H, G$, and a vertex map $\Pi_{V(H) \mapsto V(G)}$ that maps every vertex in $H$ to a vertex in $G$, we say that a map $\Pi_{H \mapsto G}$ is a graph embedding of $H$ into $G$ if it maps each $e = (u,v) \in H$ to a $xy$-path $\Pi_{H \mapsto G}(e)$ in $G$ for $x = \Pi_{V(H) \mapsto V(G)}(u)$, and $y = \Pi_{V(H) \mapsto V(G)}(v)$. We say that $\Pi_{H \mapsto G}$ is a \emph{flat} graph embedding if $\im(\Pi_{H \mapsto G}) \subseteq E(G)$, i.e. every edge in $H$ maps to a single edge in $G$.

Since $\Pi_{V(H) \mapsto V(G)}$ is implicitly defined by graph embedding $\Pi_{H \mapsto G}$, we often omit to state the vertex map explicitly.
\end{definition}

\begin{definition}[Edge Congestion of Paths and Embeddings]\label{def:graphEmbeddingEdgeCong}
Given a set of paths $P_1, P_2, \ldots, P_k$ in graph $G$, we define the edge congestion induced by a collection of paths for an edge $e \in E(G)$ by \[\econg( \{ P_i \}_{i \in [1, k]}, e) = \sum_{i \in [1, k]} \sum_{e' \in P_i} \vecone[e = e'].\]  We define the edge congestion by $\econg( \{ P_i \}_{i \in [1, k]}) = \max_{e \in E} \econg( \{ P_i \}_{i \in [1, k]}, e)$. We define the congestion of a graph embedding $\Pi_{H \mapsto G}$ by $\econg(\Pi_{H \mapsto G}) = \econg(\im(\Pi_{H \mapsto G}))$.
\end{definition}

\begin{definition}[Hierarchical Forest/ Tree]\label{def:hierarchicalTree}
Given a graph $G$ and a vertex set $X \subseteq V(G)$, we say a forest $F$ associated with vertex maps $\Pi_{X \mapsto V(F)}$, $\Pi_{V(F) \mapsto V(G)}$ and graph embedding $\Pi_{F \mapsto G}$ if $X \subseteq \im(\Pi_{V(F) \mapsto V(G)})$ is a \emph{hierarchical forest} over $X$. If $X = V(G)$, we also say that $F$ is a hierarchical forest over $G$.

We often refer to the $V(T)$ as a \emph{node set} to distinguish from the vertex set $V(G)$. We say a node $x \in V(T)$ is identified with vertex $v \in V(G)$ if $v = \Pi_{V(F) \mapsto V(G)}(x)$.

We also say that $F$ is a hierarchical tree if $F$ is a tree graph. We say that $F$ is a \emph{flat} hierarchical forest/ tree if $\Pi_{F \mapsto G}$ is a flat graph embedding. 
\end{definition}

\section{A Fully-Dynamic Vertex Sparsifier}
\label{sec:vertexSparsifer}
The main result of this section is summarized by the following theorem. It allows us to either extract distances directly from the data structure or has them preserved in a vertex sparsifier of low recourse.

\begin{restatable}{theorem}{workhorseSparse}\label{lma:workhorseSparsifier}
Given a size reduction parameter $k$, a number of levels $1 \leq K \leq o\left(\frac{\log^{1/6} m}{\log\log m}\right)$, a degree threshold $\Delta$ and an $n$-vertex $m$-edge (multi-)graph $G = (V,E, l)$ undergoing a sequence of edge insertions and deletions and isolated vertex insertions/ deletions, where at all times, lengths in $G$ are in $[1, L]$ and the maximum degree in $G$ is at most $\Delta$. Then, for some $\gamma_{\ell} = \exp(O(\log^{2/3} m \cdot \log\log m)), \gamma_{recVS} =\Otil(m^{4/K} \cdot (\gamma_{\ell})^4)$, there is a deterministic algorithm that can be initialized in time $\tilde{O}(m \cdot k^4 + m \cdot \Delta + m \gamma_{\ell})$, processes every update with worst-case time $\Otil(k^4 + \Delta \cdot  \gamma_{recVS} (\gamma_{\ell})^{O(K^2)} k^2)$ and explicitly maintains
\begin{enumerate}
    \item a monotonically growing vertex set $A \subseteq V(G)$, such that $A$ contains at any time the endpoints of all previously deleted/inserted edges and vertices that were added to $G$ after initialization (however, vertices that are removed from $G$ are also removed from $A$), and $|A| \leq n/k + 2q$.
    \item a pivot function $p$ that maps each vertex $v \in V$ to its closest vertex $p(v)$ in $A$ (ties broken arbitrarily but consistently) in the current graph $G$. 
    \item \label{prop:dynamicShortestPathPivotForest} a dynamic forest $F \subseteq G$ such that at any time, for every $v \in V$, $\dist_{F}(v, p(v)) = \dist_G(v, p(v))$ where $F$ undergoes at most $\Otil(k)$ changes per update to $G$ that are explicitly outputted and every tree in $F$ consists of at most $\Otil(k)$ vertices, and 
    \item the exact distances and shortest paths from each vertex $v \in V$ to the vertices in its ball $B_G(v, p(v)) \cup \{p(v)\}$ where each shortest path consists of $\O(k)$ edges and every edge appears on at most $\Otil(k^2)$ such paths. Further, each such shortest path has already been the shortest path to said vertex since the initialization. 
    \item \label{prop:workhorseDistancePreserve} a dynamic graph $H$ with vertex set $A \subseteq V(H)$ where  the algorithm initially outputs $H^{(0)}$ with $\Delta \cdot \gamma_{recVS} \cdot m/k$ edges and vertices. Then at any stage $t \geq 1$, it outputs a batch of updates $U_H^{(t)}$ consisting of edge insertions and deletions, and isolated vertex insertions such that when $U_H^{(t)}$ is applied to $H^{(t-1)}$ it yields $H^{(t)}$ and:
    \begin{enumerate}
        \item $|U_H^{(t)}| \leq \Delta \cdot \gamma_{recVS}$, and 
        \item \label{prop:maxDegreeVS} at any stage, the maximum vertex degree of $H$ is at most $\Delta \cdot \gamma_{recVS}$, and
        \item at any stage, $H$ has lengths in $[1, nL]$, and 
        \item at any stage, for every two vertices $u,v \in V(H)$, we have $\dist_G(u,v) \leq \dist_{H}(u,v)$ and if $u,v \in A$, we additionally have $ \dist_{H}(u,v) \leq (\gamma_{\ell})^{O(K)} \cdot \dist_G(u,v)$. \label{prop:approxVS}
        \item at any stage, given any edge $e = (u,v) \in H$, the algorithm can return a $uv$-path $P$ in $G$ with $l_G(P) \leq l_H(e)$ in time $O(|P|)$. \label{prop:mapBackToPath}
    \end{enumerate}
\end{enumerate}
\end{restatable}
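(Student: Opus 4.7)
I would prove this by induction on the number of levels $K$, the base case $K=1$ being a dynamization of the Andoni-Stein-Zhong one-step reduction described in the overview. For the base case, during initialization I would deterministically compute, by a greedy hitting-set argument, a set $A' \subseteq V$ of size $\tilde{O}(n/k)$ such that at the initial time every ball $B_G(v, A'')$ and cluster $C_G(v, A'')$ has size $\tilde{O}(k)$, where $A'' := A \cup A'$. Inside each ball I precompute all pairwise shortest paths and the pivot $p(v)$ in time $\tilde{O}(m\cdot \Delta \cdot k)$. As $G$ evolves I move the endpoints of every inserted/deleted edge into $A$; because balls and pivot-distances can then only shrink, the in-ball shortest paths precomputed at initialization remain valid throughout the data structure's lifetime, which immediately establishes properties~1--4 together with the $\tilde{O}(k)$-size bound on each tree of $F$ and the $\tilde{O}(k^2)$ edge-appearance bound.

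For property~5, I next build the static ``future-proof'' path collection $\widetilde{\mathcal{P}}$ containing, for every current or future ASZ-projected path, all $\tilde{O}(k^4)$ concatenations $\pi_G(\tilde u,u)\oplus\pi_G(u,x)\oplus(x,v)\oplus\pi_G(v,\tilde v)$ over $\tilde u \in B_G(u,A'')\cup\{p(u)\}$, $\tilde v \in B_G(v,A'')\cup\{p(v)\}$. Shrinking balls guarantee this set covers every projected path that ever arises. Using the decremental low-stretch forest routine of \cite{chen2022maximum,detMaxFlow} I then maintain $\lambda = O(\log m)$ rooted forests $F_1,\ldots,F_\lambda$ with stretch overestimates $\tilde{\str}_i$, rooted at $A''$, such that every $P \in \widetilde{\mathcal{P}}$ satisfies $\tilde{\str}_i(P) \le \gamma_\ell$ for some $i$. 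Taking $\tilde G$ as the union of the core graphs $C(G,F_i,\tilde{\str}_i)$ preserves distances between terminals up to factor $\gamma_\ell$ by \Cref{inf:enforceDistByPathWitness} applied path-by-path to $\widetilde{\mathcal{P}}$ together with the standard ASZ stretch calculation.

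The final sparsifier $H$ is obtained by feeding $\tilde G$ into the dynamic edge sparsifier of \cite{chen2022maximum,detMaxFlow}; this inflates recourse by only an $n^{o(1)}$ factor, and each update to $G$ causes $\tilde O(1)$ edge deletions from each $F_i$, which in turn translate to $\tilde O(1)$ vertex splits and edge deletions in each core graph. The maximum-degree constraint (property~5b) is enforced by a preprocessing trim that deletes the few vertices/edges violating the degree bound plus at most one extra edge removed per update, which we can charge to $\Delta\gamma_{recVS}$. For property~5e I store with each edge of $\tilde G$ (and hence of $H$, propagated by the sparsifier) a pointer to the concrete $G$-path realizing its length, assembled from the precomputed in-ball shortest paths and the current forest paths in the $F_i$; traversal is linear in the path length.

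The main obstacle, and the reason for the multi-level $K$ in the statement, is that a single reduction step only divides the size by $k$; to obtain the $m^{4/K}$ factor in $\gamma_{recVS}$ I would recursively compose the one-step reduction $K$ times, feeding the output sparsifier of one level as the input graph of the next and aligning the restart schedule so that each level processes $n/k^{i+1}$ updates between rebuilds. Recourse multiplies by $\gamma_\ell$ per level, giving $\gamma_{recVS} = \tilde O(m^{4/K})\cdot \gamma_\ell^{O(1)}$, and the worst-case update bound is obtained by distributing the work across levels as in the overview. The delicate points in this composition are (i) correctly handling edge \emph{insertions} (which the overview elides) by routing them through the monotonically growing $A$, and (ii) composing the path-realization map of property~5e across all $K$ levels via the forest-in-forest gluing construction sketched at the end of the overview, which only loses an extra $\lambda^{O(K)}$ factor in congestion and thus is absorbed into $\gamma_\ell^{O(K^2)}$.
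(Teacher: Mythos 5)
Your approach for Properties 1--4 and the one-step ASZ dynamization (future-proof path collection, $\lambda = O(\log m)$ deterministic LSSFs covering $\widetilde{\mathcal{P}}$, core graphs, then a dynamic edge sparsifier) matches the paper's structure. However, your interpretation of the parameter $K$ is wrong in a way that changes the construction.

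\textbf{$K$ is not a number of recursive ASZ reduction levels inside this lemma.} In the paper's proof, \Cref{lma:workhorseSparsifier} performs exactly \emph{one} size-reduction step: the set $A$ has size $n/k$ (not $n/k^K$), and $H^{(0)}$ has $\Theta(m/k)$ vertices, both of which are explicitly in the statement. The parameter $K$ is passed unchanged as the trade-off parameter $\Lambda$ to the deterministic dynamic spanner of \cite{detMaxFlow} (\Cref{thm:spanner}) applied to $\tilde G$. That spanner has per-update recourse $\Otil(n^{1/\Lambda}\gamma_\ell)$ and approximation $(\gamma_\ell)^{O(\Lambda)}$, and the constraint $1 \le K \le o\big(\frac{\log^{1/6} m}{\log\log m}\big)$ is inherited verbatim from \Cref{thm:spanner}. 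The $m^{4/K}$ in $\gamma_{recVS}$ arises from two squarings of that $m^{1/K}$ recourse: first the pile-splitting analysis (\Cref{thm:pileSplittingThm} applied after \Cref{algo:reduceDegree}) bounds the maximum degree of $\tilde H \cap \cC(G,F_i,\wstr_i)$ by roughly $\Delta\, m^{2/K}(\gamma_\ell)^2$, and then emulating each vertex split of $\tilde H$ by edge operations on $H$ multiplies by another degree factor, giving $m^{4/K}(\gamma_\ell)^4$. Your plan to compose $K$ one-step reductions with a restart schedule of $n/k^{i+1}$ updates per level is actually the construction used \emph{on top of} this lemma in \Cref{subsec:fullyDynAPSP} (there with $\Lambda = \log^{1/21}m$ outer levels and internal $K = \log^{3/21}m$); building it here would give $|A| \leq n/k^K$, contradict the stated bounds, and does not explain the $m^{4/K}$ factor.

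\textbf{The degree control is under-specified in your plan.} You propose a ``preprocessing trim'' plus ``at most one extra edge removed per update, which we can charge to $\Delta\gamma_{recVS}$.'' This will not maintain the degree bound: when components of $F_i$ are split, the core graph undergoes vertex splits, and the dynamic spanner's edge insertions can concentrate on a single root, so a vertex's degree in $\tilde H \cap \cC(G,F_i,\wstr_i)$ can grow by $\Theta(m^{1/K}\gamma_\ell)$ per update. The paper's \Cref{algo:reduceDegree} calls Frederickson's tree-partitioning (\Cref{thm:partitionFrederickson}) to cut $T_i$ into pieces of bounded incidence, each cut charges a $\gamma_{degConstr}$-fraction of the degree, and the pile-splitting potential argument shows the max degree stays $\Otil(\Delta m^{2/K}(\gamma_\ell)^2)$ over all time. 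A flat one-removal-per-update scheme has no mechanism to unwind a burst that concentrates on one root.

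The rest of your outline (monotone balls, $\widetilde{\mathcal{P}}$ future-proofing, core graph lower bound via \Cref{inf:enforceDistByPathWitness}, the path-realization map for Property 5e) is consistent with the paper.
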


Henceforth, we assume wlog that the dynamic graph $G$ input to \Cref{lma:workhorseSparsifier} has unique distances. This removes ambiguity when defining the pivot function $p$ and can be implemented with additional $O(1)$ overhead per operation (see \cite{demetrescu2004new}).

In \Cref{subsec:algoVS}, we give the algorithm behind \Cref{lma:workhorseSparsifier}. In \Cref{subsec:analysisVS}, we analyze the algorithm from \Cref{subsec:algoVS} and prove that it indeed satisfies the guarantees given in \Cref{lma:workhorseSparsifier}. Finally, in \Cref{subsec:mappingHierachVS}, we show how to extend \Cref{lma:workhorseSparsifier} to support the maintenance of low-diameter hierarchical trees. 

\subsection{The Algorithm} 
\label{subsec:algoVS}
\paragraph{Maintaining Pivots, Balls and Shortest Paths.} We initialize the set $A$ by using the following procedure by Thorup and Zwick (see \cite{thorup2001compact}, Theorem 3.1). We derandomize their result with standard techniques to obtain the following theorem. For completeness, we state the derandomized procedure and prove correctness in \Cref{sec:bunchesAndClusters}.

\begin{theorem}
\label{thm:TZschemes}
Given a graph $G=(V,E,l)$ and a size reduction parameter $k$, there is an algorithm $\textsc{Center}(G, k)$ that in time $\tilde{O}(mk)$ computes a set $A \subseteq V$ of size at most $n/k$ such that for every vertex $v \in V$, we have $|B_G(v, A)| \leq 2k \log n$ and $|C_G(v,A)| \leq 2k \log n$.
\end{theorem}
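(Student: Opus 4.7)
The plan is to derandomize Thorup--Zwick's sampling procedure specialized to a single level. The randomized template is to include each vertex in $A$ independently with probability $\Theta(1/k)$, so that $\mathbb{E}[|A|] \le n/k$. For the ball bound, ordering $V$ by distance from each $v$ yields the geometric-tail estimate $\Pr[|B_G(v,A)| \ge 2k\log n] \le (1-1/k)^{2k\log n} \le 1/n^2$; a union bound over $v \in V$ gives the global ball bound with probability at least $1-1/n$, and a Chernoff bound on $|A|$ handles the size guarantee.

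The cluster bound is more subtle, because plain sampling can leave some vertex with an arbitrarily large cluster --- e.g.\ the center of a star, if unsampled, has cluster size $\Theta(n)$. I would handle this by augmenting the sampled $A$: add to $A$ every ``hub'' vertex $w$ whose cluster size would otherwise exceed $2k\log n$. Adding $w$ to $A$ immediately forces $|C_G(w,A)|=0$, so it suffices to bound the number of such hubs. Using the duality $\sum_w |C_G(w,A)| = \sum_v |B_G(v,A)|$ together with a concentration bound on the aggregate ball mass shows that only $O(n/k)$ hubs arise with positive probability, so after augmentation $|A|$ remains within the desired $n/k$ budget (after rescaling constants in the initial sampling probability).

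To derandomize, I would apply the standard method of conditional probabilities with a pessimistic estimator combining (i) an exponential penalty $\sum_v (1-1/k)^{-\min(|B_G(v,A)|,\,2k\log n)}$ tracking ball-tail events, and (ii) a linear penalty for the hub count. Vertices are processed in any fixed order, and each inclusion decision is made greedily to keep the estimator below its target; the usual telescoping argument then yields a deterministic set $A$ meeting all three requirements.

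The main obstacle will be the cluster analysis: a naive counting argument gives $\tilde{O}(n/k)$ hubs, and driving this down to the clean $n/k$ bound requires careful balancing of the sampling rate against the cluster threshold, possibly leveraging higher-moment concentration of the cluster-size distribution or iterating the augmentation step so that each round shrinks the remaining aggregate violator mass multiplicatively. The claimed runtime $\tilde{O}(mk)$ comes from computing, for each source, a truncated Dijkstra that touches only the $\tilde{O}(k)$ closest vertices, yielding amortized cost $\tilde{O}(mk/n)$ per source and $\tilde{O}(mk)$ overall; each step of the derandomization adds only logarithmic overhead.
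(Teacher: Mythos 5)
Your proposal takes a genuinely different route from the paper's proof. You go through the classic derandomization pipeline: sample each vertex with probability $\Theta(1/k)$, observe via a geometric-tail argument that all balls are small whp, augment with ``hub'' vertices whose clusters would be too large, and then derandomize the whole thing with a pessimistic estimator and the method of conditional probabilities. The paper instead never introduces randomness: it uses a direct greedy hitting-set primitive (pick the element covering the most still-uncovered sets; a max-heap gives the claimed running time), applies it once to the size-$b$ balls to handle the ball bound, and then handles the cluster bound by an iterative refinement --- in round $i$ it restricts attention to the set $V_i$ of vertices whose clusters are still too large, truncates each ball to its first $\lfloor |V_i| b / n \rfloor$ members of $V_i$, computes a hitting set of those truncated balls, and shows by the counting duality $\sum_v |C_G(v,W)| = \sum_v |B_G(v,W)|$ plus Markov that $|V_{i+1}| \le |V_i|/2$, so $O(\log n)$ rounds suffice. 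The paper's argument is more elementary in the sense that it needs no concentration inequalities, no pessimistic estimator, and no bookkeeping of conditional expectations; the greedy hitting set does all the work. The multiplicative-shrinkage idea you flagged as ``possibly iterating the augmentation step'' is exactly the paper's mechanism, so your instinct about where the argument needs to go is right. One caveat worth flagging for both arguments: the paper's own proof only establishes $|A| = O\big(\tfrac{n}{k}\log^2 n\big)$, not the clean $n/k$ stated in the theorem, so the polylog slack you were worried about is present in the reference proof as well (presumably to be absorbed by rescaling $k$ elsewhere); your concern does not represent a gap unique to your approach.
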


After initializing $A$ to be the set returned by procedure $\textsc{Center}(G, k)$ from \Cref{thm:TZschemes}, we maintain $A$ by adding vertices that are affected by the updates applied to $G$. We maintain the pivot function $p$ over the vertices such that for each vertex $v \in V$, at any time, $p(v)$ is the closest vertex to $v$ from the current set $A$ in the current graph $G$.

From \cite{thorup2005approximate}, we further have the following result that is also immediate from the subpath-property of shortest paths.

\begin{theorem}\label{thm:realTZ}
Given a graph $G=(V,E,l)$ with unique distances, a set $A \subseteq V$ and a pivot function $p$ that maps each vertex $v \in V$ to the closest vertex $p(v)$ in the set $A$. Then, the union of vertex-to-pivot paths $\{ \pi_{G}(v, p(v)) \}_{v \in V}$ forms a forest $F$ where each component can be rooted at a vertex in $A$.
\end{theorem}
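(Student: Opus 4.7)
The plan is to show that the pivot function is \emph{consistent along shortest paths}: for any vertex $u$ lying on $\pi_G(v, p(v))$, we have $p(u) = p(v)$, and the suffix of $\pi_G(v,p(v))$ from $u$ equals $\pi_G(u,p(u))$. Given this, the union of the pivot paths is the image of a functional graph on $V \setminus A$ whose orbits terminate at pivots, which is tantamount to saying that $F$ is a forest rooted at $A$.

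First I would prove the pivot-consistency lemma. Fix $v$, let $u$ lie on $\pi_G(v,p(v))$, and consider any $a \in A$. If $\dist_G(u,a) < \dist_G(u, p(v))$, concatenating the prefix $\pi_G(v,p(v))[v,u]$ with a shortest $u$-$a$ path gives a $v$-$a$ walk of length $\dist_G(v,u) + \dist_G(u,a) < \dist_G(v,u) + \dist_G(u,p(v)) = \dist_G(v,p(v))$, contradicting minimality of $p(v)$. So $p(v)$ minimizes $\dist_G(u, \cdot)$ over $A$, and by uniqueness of distances this minimizer is $p(u) = p(v)$. Then the suffix $\pi_G(v,p(v))[u, p(v)]$ is \emph{a} shortest $u$-to-$p(u)$ path; by uniqueness of shortest paths it equals $\pi_G(u,p(u))$.

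Next I would read off the forest structure. For each $v \in V \setminus A$ let $\mathsf{next}(v)$ denote the successor of $v$ on $\pi_G(v,p(v))$; this is well-defined by uniqueness of shortest paths. By the subpath property just established, every edge on any $\pi_G(w,p(w))$ is of the form $(u, \mathsf{next}(u))$ for the corresponding $u \in V \setminus A$. Hence the edge set of $F$ is exactly $\{(u, \mathsf{next}(u)) : u \in V \setminus A\}$, so every non-pivot vertex has out-degree one and every pivot has out-degree zero. Following $\mathsf{next}$-pointers from any $v$ traces the shortest path to $p(v)$, which has strictly positive edge lengths and therefore cannot revisit a vertex; hence no directed cycles exist. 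For a functional graph with out-degree one, the absence of directed cycles precludes undirected cycles as well (any such cycle would, after following out-edges from one of its vertices, close into a directed cycle). Thus $F$ is a forest, and each component is rooted at the unique pivot in $A$ reached by iterating $\mathsf{next}$.

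The main (though minor) obstacle is being disciplined about the uniqueness-of-distances hypothesis, which the proof invokes twice: once to promote ``$p(v)$ attains the minimum'' to ``$p(v) = p(u)$'', and once to identify the suffix of $\pi_G(v,p(v))$ with the canonical path $\pi_G(u,p(u))$. Without uniqueness, one could only conclude that the paths give a \emph{subgraph} of $G$ compatible with some consistent tie-breaking, which is precisely why the author assumes unique distances wlog earlier in the section.
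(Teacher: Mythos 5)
Your proof is correct and is precisely the argument the paper gestures at when it says the result is ``immediate from the subpath-property of shortest paths'' (the paper itself gives no proof, only a citation). The pivot-consistency lemma is exactly the subpath-property consequence being invoked, and the functional-graph bookkeeping that turns it into ``$F$ is a forest with one $A$-root per component'' is the standard completion of that observation.
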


We discuss in the runtime analysis the precise implementation details for maintaining pivots, associated balls and shortest paths within the ball and to the pivots of vertices, and describe how to maintain the forest $F$ described in \Cref{thm:realTZ} for every version of $G$.

\paragraph{Maintaining LSSFs.} Further, we maintain a set of Low-Stretch Spanning Forests that we then use to obtain the final sparsifier $H$. Before we describe the precise algorithm, we state the definition of a Low-Stretch Spanning Forest and state a Lemma from \cite{chen2022maximum} that allows us to maintain a low-stretch spanning forest efficiently. 

\begin{definition}[Rooted Spanning Forest, Stretch Induced by Forests]
  \label{def:spanningforest}
  A \emph{rooted spanning forest} of a graph $G = (V, E, l)$ is a forest $F$ on $V$ such that each connected component of $F$ has a unique distinguished vertex known as the \emph{root}. We denote the root of the connected component of a vertex $v \in V$ as $\root^F(v)$. We define the stretch of an edge $(u,v) \in E$ induced by $F$ by
   \begin{align*}
    \str^{F,G}(e) \defeq
    \begin{cases}
      1 + l(F[u,v])/ l(e) &~\text{ if } \root^F(u) = \root^F(v) \\
      1 + \left(l(F[u, \root^F(u]) + l(F[v, \root^F(v)]) \right)/l(e) &~\text{ if } \root^F(u) \neq \root^F(v).
    \end{cases}
  \end{align*}
    We also define the stretch induced by $F$ on a $uv$-path $P$ by $\str^{F,G}(P) = \sum_{e \in P} \str^{F, G}(e)$. We say that $\wstr(e)$ is a \emph{stretch overestimate} for the stretch of $e$ if $\str(e) \leq \wstr(e)$. Given stretch overestimates on the edges $\wstr$, we also define overestimates on paths by $\wstr(P) = \frac{1}{l(P)} \cdot \sum_{e \in P} \wstr(e) \cdot l(e)$ which implies $\str^{F, G}(P) \leq \wstr(P)$. 
\end{definition}

\begin{lemma}[see Lemma 6.5, \cite{chen2022maximum}]
  \label{lemma:globalstretch}
  Given an $m$-edge graph $G=(V,E,l)$ with maximum-degree at most $\Delta$, a weight function $w$ over the edges, and a parameter $k$. There is a deterministic algorithm that is initialized in time $\O(m)$ and outputs a tree $T$, stretch overestimates $\wstr(e)$ for each edge $e$ and maintains a rooted spanning forest $F \subseteq G$. The algorithm then supports the following updates in worst-case update time $\O(\Delta \cdot k)$:
  \begin{itemize}
      \item $\textsc{InsertEdge}(e)/ \textsc{DeleteEdge}(e)/ \textsc{AddIsolatedVertex}()$: If an edge $e$ is inserted, it is inserted into $G$. If an edge $e$ is removed, we remove the edge from $G$ and $F$ (if it exists in $F$). If a new vertex is added, we add this vertex to both $G$ and $F$ as an isolated vertex and then return the identifier of the new vertex.
      \item $\textsc{DeleteEdgeFromForest}(e)$: The operation deletes the edge $e$ from forest $F$, or for the last operation, 
      adds a new vertex to $F$ that is isolated and returns the identifier of the new vertex. 
  \end{itemize}
    Note that the above operations enforce that at any time, we have $F \subseteq G$. Under these operations, the algorithm maintains $F$ such that: 
  \begin{enumerate}
  \item the edge set of $F$ is a monotonically decreasing set and the vertex set of $F$ is only changed by the data structure operation $\textsc{AddIsolatedVertex}()$ which adds a new (isolated) vertex to $F$, and
  \item initially the forest $F$ has at most $O(m/k)$ connected components and every update increases the number of connected components by at most $\gamma_{recLSD} = O(\log^2 m)$.  \label{item:cccountLem}
  \item every vertex that becomes a root in the forest $F$ at some point, remains a root for the rest of time. Further, all vertices incident to the set $S$, to an edge inserted or deleted or to/from $G$, or having been added to $G$ as an isolated vertex at some point, become roots in the forest $F$.   \label{item:affectedVerticesBecomeRoots}
  \item every connected component $T$ of $F$ is incident to at most $k \cdot \Delta$ edges. \label{item:degboundLem}
  \item the stretch overestimates $\wstr(e)$ for edges $e \in E^{(0)}$, the initial edge set, remain fixed to the value that they are initialized to, new edges $e$ added by either edge insertions or vertex splits have $\wstr(e)$ initialized to $1$ and then fixed throughout. 
  
  At any time, we have for any edge $e$ in the current graph that $\str^{F, G}(e) \leq \wstr(e)$, i.e. $\wstr(e)$ is a stretch overestimate at all times. Finally, the algorithm guarantees that $\sum_{e \in E^{(0)}} w(e) \cdot \wstr(e) \le \gamma_{LSST} \cdot \sum_{e \in E^{(0)}} w(e)$ for some $\gamma_{LSST} = \Otil(1)$. \label{item:avgstretchboundLem}
  \end{enumerate}
\end{lemma}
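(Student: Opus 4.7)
The plan is to reduce to a single static weighted low-stretch spanning tree computation, carve that tree into subtrees of bounded size to form the initial rooted forest $F$, and then maintain $F$ under updates by enforcing the invariant that any vertex which ever becomes a root stays a root forever. Crucially, the stretch overestimates $\wstr(e)$ will be fixed at initialization from the underlying tree and never touched again; their validity under all future forest states will follow from the root invariant.

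First I would compute, by a standard nearly-linear time construction (AKPW or Abraham--Neiman), a spanning tree $T$ of $G$ satisfying $\sum_{e=(u,v)\in E} w(e) \cdot l(T[u,v])/l(e) \le \gamma_{LSST}' \sum_{e\in E} w(e)$ for some $\gamma_{LSST}' = \Otil(1)$, and set $\wstr(e) \defeq 1 + l(T[u,v])/l(e)$; this gives the required weighted bound with $\gamma_{LSST} = 1+\gamma_{LSST}'$. For edges added later (by $\textsc{InsertEdge}$ or by the side effect of $\textsc{AddIsolatedVertex}$ plus subsequent insertion) I would set $\wstr(e) \defeq 1$, which is valid because on insertion I immediately promote both endpoints to permanent roots, making the new edge's forest stretch exactly $1$. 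To obtain the initial $F$, I would walk $T$ in postorder and cut the upward edge whenever the bottom-up subtree first reaches $k$ vertices, producing $O(m/k)$ subtrees each incident to at most $k\Delta$ edges of $G$, each rooted at its topmost vertex.

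The technical core is the stretch-preservation claim $\str^{F,G}(e) \le \wstr(e)$ for every $e$ at every time. When both endpoints of $e=(u,v)$ are in the same current component of $F$, this is immediate from $F \subseteq T$. Otherwise, let $(x_1,y_1),\dots,(x_j,y_j)$ be the edges of $T[u,v]$ that have previously been removed from $F$, listed from $u$ to $v$. By the root invariant, each endpoint $x_i,y_i$ is a permanent root, so the walk in $F$ from $u$ to $\root^F(u)$ must terminate at or before $x_1$; since $F \subseteq T$, this walk is a subpath of $T[u,x_1]$, and symmetrically for $v$. Summing, $l(F[u,\root^F(u)])+l(F[v,\root^F(v)]) \le l(T[u,v])$, yielding $\str^{F,G}(e) \le \wstr(e)$. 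To maintain the invariant under updates I would promote any required vertex $v$ into a root by removing only the single edge from $v$ to its parent in its current rooted component -- this creates one new component with $v$ as the natural root and leaves the old root in the sibling piece. An $\textsc{InsertEdge}$ or $\textsc{DeleteEdge}$ call thus triggers only $O(1)$ removals from $F$ (at most two promotions plus possibly removing the updated edge from $F$), comfortably within the $\gamma_{recLSD} = O(\log^2 m)$ slack; component sizes can only shrink, so the $k\Delta$ incidence bound is preserved throughout.

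For the data structures I would store $F$ in link-cut trees (for $O(\log n)$ edge removal and path queries), maintain each vertex's current root pointer, and iterate over the (at most $k\Delta$) incident edges of a split-off component when adjacency bookkeeping requires. This yields the $\Otil(m)$ initialization and $\Otil(\Delta k)$ worst-case update times. The main obstacle I anticipate is making the stretch-preservation argument robust across the full mix of updates: inserted edges, edges whose $T$-path has been broken at multiple points at different times, and edges whose endpoints were promoted to roots at different moments all must continue to satisfy $\str^{F,G}(e)\le \wstr(e)$, and this is precisely where the ``once a root, always a root'' invariant must be applied most carefully, together with $F \subseteq T$, to telescope the forest-stretch expression along $T[u,v]$.
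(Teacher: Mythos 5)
There is a genuine gap in the stretch-preservation argument, and it is fatal for a single-tree construction of the kind you propose.

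You set $\wstr(e) = 1 + l(T[u,v])/l(e)$ and want to argue $\str^{F,G}(e) \le \wstr(e)$. But when $u$ and $v$ are in different components of $F$, the forest stretch is $1 + \bigl(l(F[u,\root^F(u)]) + l(F[v,\root^F(v)])\bigr)/l(e)$, which involves \emph{root} distances, not $T$-path distances. Your argument that $\root^F(u)$ lies on $T[u, x_1]$ rests on the assertion that $x_1$ (the first cut endpoint of a removed $T$-edge along $T[u,v]$) is a root. This is not established by your construction, neither at initialization nor under updates. At initialization, the postorder carving roots each subtree at its topmost vertex; if a component has several boundary cuts (one upward edge and several downward edges to child components), only the top vertex is the root, and the lower endpoints of the internal cut edges are not. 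Under updates, when $\textsc{DeleteEdgeFromForest}(x,y)$ is called, only the piece not containing the old root gets a new root, so only one of $x,y$ becomes a root. Since a rooted forest has a \emph{unique} root per component, you cannot make every boundary vertex a root; when a component has two or more boundaries, some of them necessarily fail to be roots, and the telescoping $l(F[u,\root^F(u)])+l(F[v,\root^F(v)])\le l(T[u,v])$ breaks.

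The failure is quantitatively severe. Take $T$ to be a unit-length path $v_1 - v_2 - \cdots - v_n$, carve it into $\{v_1,\dots,v_{n/2}\}$ rooted at $v_1$ and $\{v_{n/2+1},\dots,v_n\}$ rooted at $v_n$, and consider $e = (v_{n/2}, v_{n/2+1})$. Your $\wstr(e)=2$, but $\str^{F,G}(e) = n-1$. So the overestimate is off by a $\Theta(n)$ factor and no change of roots within a single carving of a single tree can save it, because each side has two boundaries and only one root. This is why the lemma (following \cite{chen2022maximum}) is proved by a \emph{multi-level} recursive construction: one repeatedly builds an LSST on a contracted graph, carves it into low-radius pieces, and contracts; the stretch overestimate for an edge is accumulated across levels and explicitly accounts for the distance to the piece centre/root at each level. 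That product structure is what lets $\sum w(e)\wstr(e)$ stay $\Otil(1)\cdot\sum w(e)$ while simultaneously dominating the forest stretch at all times. Your approach also correctly identifies the ``once a root, always a root'' invariant (Property~\ref{item:affectedVerticesBecomeRoots}) and the $O(1)$ extra splits per update, which are indeed how the dynamic maintenance works; the missing ingredient is that $\wstr$ must be defined relative to the hierarchical decomposition, not relative to $T$ alone.
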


We remark that  \Cref{lemma:globalstretch} differs from Lemma 6.5 in \cite{chen2022maximum} in the following ways: we added an initial set $S$ which can be implemented by adding a self-loop to every vertex in $S$ in the initial graph and deleting the self-loops before resuming with the real updates to $G$; Property \ref{item:affectedVerticesBecomeRoots} was not explicitly stated in \cite{chen2022maximum} but can be extracted from their proof straightforwardly; and Property \ref{item:degboundLem} is not given, but instead Lemma 6.5 in \cite{chen2022maximum} maintains a partition of the edge set of $F$ denoted by $\cW$ in their Lemma from which this Property can be derived straightforwardly.

Note further that the notation in \Cref{lemma:globalstretch} differs slightly from the notation in the rest of the section where we rather describe the set of updates to $F$ by batches of updates that are given to the data structure after every update to $G$. Here, we use a slightly different way to formalize the interface since we later use the output of the data structure, i.e. the forest $F$, to create the update sequence to $F$ until a certain condition for $F$ is met. Thus, the update sequence to $F$ while processing an update to $G$ is created using the data structure itself and is only known by the end of this iterative process.

In our algorithm, we want to use a collection of LSSFs from \Cref{lemma:globalstretch} to encode all distances in the graph $G$. 
Therefore, we generate in the initial graph $G$ the following path set $\mathcal{P}$ as follows:
\begin{itemize}
    \item for any two vertices $u \in V$, $v \in B_G(u,A)$ and edge $e = (v,x) \in E(G)$, add path $\pi_G(u,v) \concat e $ to $\mathcal{P}$.
\end{itemize}

Next, we construct the projection of this set onto potential pivots. Note that for any vertex $v \in V$, we have that the pivot of $v$ at any point of the algorithm has to be in the set $B_G(v, A) \cup \{p(v)\}$ where $p(v)$ is the pivot of $v$ at initialization time. This follows from the fact that the distance to $A$ is monotonically decreasing over time as we later show in \Cref{obs:decrementalMaintenanceofTZ}. We construct this collection denoted by $\projP$ as follows:
\begin{itemize}
    \item for any $ux$-path $P$ in $\mathcal{P}$, and any $\hat{u} \in B_G(u, A) \cup \{p(u)\}$ and $\hat{x} \in B_G(x, A) \cup \{p(x)\}$, add the path $\pi_G(\hat{u}, u) \concat P \concat \pi_G(x, \hat{x})$ to $\projP$.
\end{itemize}

Again, we construct $\projP$ at initialization time. Given this set, the algorithm constructs iteratively path collections $\projP = \projP_0 \supseteq  \projP_1 \supseteq \ldots \supseteq \projP_{\lambda} = \emptyset$ and forests $F_0, F_1, \ldots, F_{\lambda- 1}$ by invoking \Cref{lemma:globalstretch} on the graph $G$ with initial vertex subset $A$, edge weights $w_i(e) \defeq \econg(\projP_{i}, e) \cdot l(e)$ for all $e \in E$ and parameter $k$, and we let the corresponding stretch overestimates be denoted by $\wstr_0, \wstr_1, \ldots, \wstr_{\lambda - 1}$. Here, for every $0 \leq i < \lambda$, the paths in $\projP_{i+1}$ are the paths $P \in \projP$ such that for all $j \leq i$, $\wstr_j(P) > 2 \cdot \gamma_{LSST}$. This concludes the initialization of the LSSFs.

We then update the data structures from \Cref{lemma:globalstretch} by forwarding the updates to $G$ to each of these data structures.

\paragraph{Maintaining Vertex Sparsifier $\tilde{G}$ from $F_0, F_1, \ldots, F_{\lambda-1}$.} Next, we decribe how to obtain a vertex sparsifier $\tilde{G}$ of $G$ as the direct sum of the core graphs with respect to forests $F_0, F_1, \ldots, F_{\lambda -1}$. Core graphs have already played a prominent role in \cite{chen2022maximum} and are defined as follows.

\begin{definition}[Core graph]
  \label{def:coregraph}
 Given a graph $G=(V,E,l)$, a rooted spanning forest $F$ of $G$ and stretch overestimate $\wstr(e)$ for each edge $e \in E$. We define the \emph{core graph} $\mathcal{C}(G, F, \wstr)$ to be the graph obtained from contracting every connected component in $F$ into the root vertex of the component, i.e. the vertex set of $\mathcal{C}(G, F, \wstr)$ is the set of roots of $F$. We define the length function $l_{\mathcal{C}(G, F, \wstr)}$ of $\mathcal{C}(G, F, \wstr)$ as follows: for every $e = (u, v) \in E(G)$ with image $\hat{e} \in E(\mathcal{C}(G, F, \wstr))$, we define its length as $l_{\mathcal{C}(G, F, \wstr)}(\hat{e}) \defeq \wstr(e) \cdot l(e)$. 
\end{definition}

Given this definition, it would be most natural to maintain vertex sparsifier $\hat{G}$ as the union of core graphs $\mathcal{C}(G,  F_0, \wstr_0)$, $\mathcal{C}(G, F_1, \wstr_1), \ldots, \mathcal{C}(G, F_{\lambda-1}, \wstr_{\lambda-1})$ where each forest $F_i$ is maintained by a data structures from \Cref{lemma:globalstretch}. But while such a vertex sparsifier $\hat{G}$ preserves distances reasonably well, it is hard to maintain since the update sequence to $\hat{G}$ would have to undergo vertex splits and merges. 

Instead, we maintain the vertex sparsifier $\tilde{G}$ which we take as the direct sum of the core graphs $\mathcal{C}(G,  F_0, \wstr_0)$, $\mathcal{C}(G, F_1, \wstr_1), \ldots, \mathcal{C}(G, F_{\lambda-1}, \wstr_{\lambda-1})$ where each forest $F_i$ is maintained by a data structures from \Cref{lemma:globalstretch},  and additionally we have an edge of length $0$ between any vertices $u \in V(\mathcal{C}(G, F_i, \wstr_i)), v \in V(\mathcal{C}(G, F_j, \wstr_j))$ that are identified with the same vertex $v$ in $V(G)$. The sequence of updates to $\tilde{G}$ can now be described only by edge updates, isolated vertex insertions, and vertex splits, but no vertex merges which is crucial for efficiency.

\paragraph{Maintaining $H$ as an Edge-Sparsifier of $\tilde{G}$.} We start this section by revisiting the following result that is given in \cite{detMaxFlow}. Here, we state a stronger version of Theorem 7.2 in \cite{detMaxFlow}: most importantly, bounds on recourse and update time are worst-case whereas in \cite{detMaxFlow} these bounds are stated amortized. Our strengthening is achieved via two observations: firstly, while the original statement in \cite{detMaxFlow} only updates graph $H$ via edge insertions/ deletions and vertex insertions, we additionally allow for the vertex splits to $G$ to be forwarded to $H$. This gives us more control later as we will see and the bound is obtained from inspecting the proof in \cite{detMaxFlow} carefully; secondly, we obtain worst-case bounds because the algorithm in \cite{detMaxFlow} uses a standard batching technique that can be de-amortized using standard techniques (see for example \cite{willard1985adding, thorup2004fully, gutenberg2020fully}). In fact, Section 8 in \cite{detMaxFlow} already claims that the de-amortization technique works in the same way as it is specified here.

Finally, the precise statement in \cite{detMaxFlow} only claims to work for unweighted/ unit-length graphs, and does not allow for edge insertions. But the more general version below is easily obtained by a simple length bucketing scheme and by adding edges that are inserted directly into the sparsifier (both of which are standard techniques for dynamic spanners, see for example \cite{bernstein2022fully}). Again, when using the sparsifier in \cite{detMaxFlow}, this strengthening is already used by appealing to the standard techniques for spanners.

\begin{theorem}[see Theorem 7.2, \cite{detMaxFlow}]
\label{thm:spanner}
Given an $m$-edge $n$-vertex undirected, dynamic graph $G$ with quasi-polynomially-bounded lengths, undergoing update batches $U_G^{(1)}, U_G^{(2)}, \ldots$ consisting of edge insertions/ deletions and isolated vertex insertions and vertex splits. There is a deterministic algorithm with parameter $1 \le \Lambda \le o\left(\frac{\log^{1/6} m}{\log\log m}\right)$, that maintains a spanner $H$ and an embedding $\Pi_{G \to H}$ such that for some $\gamma_{\ell} = \exp(O(\log^{2/3} m \cdot \log\log m))$ and $\gamma_{recES} = \Otil(1)$, we have
\begin{enumerate}
    \item \label{prop:sparsitySpanner} \underline{Sparsity and Low-Recourse:} the algorithm initially outputs $H^{(0)}$ with $\gamma_{recES} \cdot n \cdot \gamma_{\ell}$ edges. Then at any stage $t \geq 1$, where $G$ undergoes updates $U_G^{(t)}$ it outputs a batch of updates $U_H^{(t)}$ such that when applied to $H^{(t-1)}$ produce $H^{(t)}$ such that $H^{(t)} \subseteq G^{(t)}$. We further have that
    \begin{itemize}
        \item $U_H^{(t)}$ contains the vertex split updates that are present in $U_G^{(t)}$, 
        \item all other updates in $U_H^{(t)}$ are edge insertions/ deletions or isolated vertex insertions.
        \item $|U_H^{(t)}| \leq |U_G^{(t)}| \cdot \gamma_{recES} \cdot  n^{1/\Lambda}\gamma_{\ell}$.
    \end{itemize}
    \item \underline{Distance Preservation:} at any stage, for any $u,v \in V$, $\dist_G(u,v) \leq \dist_H(u,v) \leq (\gamma_{\ell})^{O(\Lambda)} \cdot \dist_G(u,v)$. \label{prop:distancePresSpanner}
\end{enumerate}
The algorithm takes initialization time $\tilde{O}(m \gamma_{\ell})$ and processing the $t$-th update batch $U_G^{(t)}$ takes worst-case update time $\tilde{O}(|U_G^{(t)}| \cdot n^{1/\Lambda}(\gamma_{\ell})^{O(\Lambda^2)}(\Delta_{\max}(G))^2)$.
\end{theorem}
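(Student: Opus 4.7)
My plan is to obtain the theorem by starting from the amortized, unit-length, deletion-only spanner of \cite{detMaxFlow} (their Theorem 7.2) and applying four orthogonal extensions, each a reduction that is by now standard in the dynamic-spanner literature. The four extensions are: length bucketing (to allow weighted edges), direct edge-insertion forwarding (to allow insertions), vertex-split propagation (to match the update model demanded by \Cref{lma:workhorseSparsifier}), and Willard--Lueker--style staggered instances (to promote amortized bounds to worst-case).

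For the weighted extension, I partition the edges of $G$ into $O(\log m)$ length classes, where class $i$ holds edges of length in $[2^i, 2^{i+1})$, and run a separate unit-length spanner of \cite{detMaxFlow} on each class (treating the class edges as unit-length). The final sparsifier $H$ is the union of these sparsifiers with true lengths restored. A shortest $uv$-path in $G$ decomposes into at most $O(\log m)$ maximal subpaths lying within a single class; applying the per-class stretch bound to each subpath yields overall stretch $(\gamma_{\ell})^{O(\Lambda)}$ after absorbing the polylog discretization factor into $\gamma_{\ell}$. The per-update recourse and edge count blow up by only $O(\log m)$, which fits inside the advertised $\gamma_{recES} = \tilde O(1)$.

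For edge insertions, I simply forward each inserted edge directly into $H$ with its true length; since any edge has stretch $1$ against itself, property~\ref{prop:distancePresSpanner} is preserved and the edge is a legitimate element of any spanner contained in $G$. To prevent indefinite accumulation of inserted edges, I re-initialize the whole structure after every $\Theta(n)$ updates on the current graph, which is absorbed into the stated bounds once combined with the de-amortization step below. This is the same trick used, e.g., in \cite{bernstein2022fully} for insertion-capable spanners.

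For vertex splits, I propagate every split $v \mapsto \{v, v^{\text{NEW}}\}$ in $G$ directly to $H$, with the edges incident to $v^{\text{NEW}}$ in $H$ being exactly those $H$-edges whose $G$-endpoint moved to $v^{\text{NEW}}$. Verifying that the internal invariants of \cite{detMaxFlow} (their neighborhood covers, low-stretch forest structures, and contracted components) remain intact under such a split is where I expect the main technical work to lie, and is the step I flag as the principal obstacle; however, as that paper itself notes in its Section~8, a careful walk-through of the invariants suffices, because a vertex split does not alter distances between pre-existing vertex pairs and the construction touches only local neighborhoods of the split vertex. Finally, I promote the resulting amortized update time to worst-case via the standard Willard--Lueker staggered-instances technique (cf.\ \cite{willard1985adding,thorup2004fully,gutenberg2020fully}): maintain $O(\log m)$ parallel copies of the amortized structure, advance each by a proportional share of work per real update, and always answer from the most up-to-date copy. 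Composing all four reductions yields the claimed initialization, worst-case update-time, recourse, and stretch bounds.
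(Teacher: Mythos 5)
Your proposal matches the paper's approach essentially step for step: the paper presents this theorem as a strengthening of Theorem 7.2 of \cite{detMaxFlow} obtained by exactly the same four modifications you list — a length-bucketing reduction to the unit-length case, forwarding inserted edges directly into the sparsifier, propagating vertex splits from $G$ to $H$ by inspecting the invariants of \cite{detMaxFlow}, and de-amortizing the batching technique via the standard methods in \cite{willard1985adding,thorup2004fully,gutenberg2020fully} (with the paper noting that Section~8 of \cite{detMaxFlow} already anticipates this de-amortization). You also correctly flag the vertex-split case as the one requiring a careful walk-through of the internal invariants rather than a purely black-box reduction, which is exactly how the paper positions it.

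One small point worth tightening: your sentence that a shortest $uv$-path ``decomposes into at most $O(\log m)$ maximal subpaths lying within a single class'' is not literally true — a path can alternate classes at every edge — but this does not affect the conclusion. The correct stretch argument is edge-by-edge: each edge $e$ on the shortest path lies in some class, the class spanner gives a replacement path of length at most $(\gamma_\ell)^{O(\Lambda)} l(e)$, and summing over the edges of the path gives the stated stretch. Also, for quasi-polynomially bounded lengths (rather than polynomially bounded) the bucketing requires $\polylog m$ rather than $O(\log m)$ classes, though this still fits within $\gamma_{recES} = \tilde O(1)$.
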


We can now describe how we maintain the final sparsifier $H$. In \Cref{algo:initSparsifier}, we describe the initialization procedure. The algorithm is rather straightforward and follows the discussion of the previous sections: it initializes the forests $F_0, F_1, \ldots, F_{\lambda-1}$ as discussed previously, maintains $\tilde{G}$ as discussed, and then maintains the final sparsifier $H$ by applying \Cref{thm:spanner} to graph $\tilde{G}$ from which it obtains graph $\tilde{H}$, and then contracting all vertices in $\tilde{H}$ that are identified with the same vertex $v$ in $V(G)$. Thus, $V(H) \subseteq V(G)$. 

Additionally, in the while-loop of the initialization algorithm, the algorithm checks for the existence of vertices that have large degree in the graph $\tilde{H} \cap \cC(G, F_i, \wstr_i)$ over all $i$, where we take the intersection to mean the graph $\tilde{H}$ where only the edges are present that originate from the core graph $\cC(G, F_i, \wstr_i)$. While such a vertex $v$ and index $i$ exists, we then call the subprocedure given in \Cref{algo:reduceDegree} to reduce the degree of $v$ w.r.t. to the graph $\tilde{H} \cap \cC(G, F_i, \wstr_i)$. The while-loop ensures that on termination, the degrees of vertices in $\tilde{H}$ are small.

Finally, the algorithm outputs the graph $H$ that is obtained from $\tilde{H}$ by contracting all vertices in $\tilde{H}$ that are identified with the same vertex in $G$.

\begin{algorithm}
Initialize data structures $\mathcal{D}_0, \mathcal{D}_1, \ldots, \mathcal{D}_{\lambda-1}$ to obtain initial stretch overestimates $\wstr_0, \wstr_1, \ldots, \wstr_{\lambda-1}$ and maintain forests $F_0, F_1, \ldots, F_{\lambda -1}$ as described in the previous section.\\
Maintain $\tilde{G}$ as described in the previous section as the union of core graphs $\cC(G, F_i, \wstr_i)$. \\
Let $\tilde{H}$ be the graph maintained by applying \Cref{thm:spanner} to graph $\tilde{G}$ with parameter $\Lambda = K$.

\tcc{ Here $\gamma_{degConstr}$ is a constant fixed later. }
\While(\label{lne:whileLoopDegreeConstraint}){$\exists v \in V(\tilde{H})$ and $0 \leq i < \lambda$ with $\deg_{\tilde{H} \cap \cC(G, F_i, \wstr_i)}(v) > 2 \cdot \gamma_{degConstr} \cdot \Delta$}{
    $\textsc{ReduceDegree}(v, i, \gamma_{degConstr})$.
}
\Return graph $H^{(0)}$ obtained from contracting all vertices in $\tilde{H}$ that are identified with the same vertex $v \in V(G)$.
\caption{$\textsc{InitSparsifier}()$}
\label{algo:initSparsifier}
\end{algorithm}

Let us next describe the subprocedure given in \Cref{algo:reduceDegree} that achieves this goal. The procedure heavily relies on the following classic result on tree partitioning that is obtained straightforwardly from \cite{frederickson1983data} (in \cite{frederickson1983data} the procedure is assumed to run on a graph of maximum degree $3$ but the extension is straightforward).  

\begin{theorem}[see \cite{frederickson1983data}, Lemma 1]\label{thm:partitionFrederickson}
Given a tree $T$ spanning a subset of vertices in an $m$-edge graph $G$ of maximum degree $\Delta$ and a positive integer $z$. Then, there is a procedure $\textsc{FindSets}(T, G, z)$ that returns a set of edges $E'$ such that every connected component in $T \setminus E'$ is incident to at most $\Delta \cdot z$ edges and all but one component is incident to at least $z$ edges in $G$. The algorithm runs in time $O(m)$.
\end{theorem}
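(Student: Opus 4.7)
The plan is to use the classical bottom-up Frederickson stripping argument, lightly adapted from its original max-degree-$3$ formulation to our degree-$\Delta$ setting. I would root $T$ at an arbitrary vertex $r$ and visit vertices in post-order DFS. For each visited $v$, the algorithm maintains a \emph{residual weight}
\[
f(v) \defeq \deg_G(v) + \sum_{c \in \mathrm{children}_T(v)} g(c),
\]
where $g(c) \defeq f(c)$ if the edge $(v,c)$ has not yet been added to $E'$, and $g(c) \defeq 0$ otherwise. The cutting rule is: if $v \neq r$ and $f(v) \geq z$, add $(v, \mathrm{parent}_T(v))$ to $E'$. This finalizes the connected component $C_v$ of $T \setminus E'$ currently rooted at $v$, and by construction $\sum_{w \in C_v} \deg_G(w) = f(v)$, which upper bounds the number of $G$-edges incident to $C_v$.

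The lower bound of $\geq z$ incident $G$-edges per finalized component will follow immediately from $f(v) \geq z$ at the moment of cut; I would use threshold $2z$ in place of $z$ to handle the mild factor-$2$ gap arising from $G$-edges whose both endpoints lie inside $C_v$. The single exception allowed by the theorem will be the component containing the root $r$, which is never cut.

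For the upper bound, the key observation is that for every uncut child $c$ we have $g(c) = f(c) < z$ (otherwise $c$ would itself have triggered a cut and set $g(c) = 0$). In the intended uses of this procedure within the paper, the input tree $T$ arises as a subtree of the forest $F \subseteq G$, so $\deg_T(v) \leq \deg_G(v) \leq \Delta$; hence at the moment $v$ triggers its cut,
\[
f(v) \leq \deg_G(v) + \deg_T(v)\cdot(z-1) \leq \Delta + \Delta(z-1) = \Delta z.
\]
To drop the assumption $T \subseteq G$ in full generality, I would first replace each $T$-vertex of tree-degree $d > 3$ by a caterpillar of $d-1$ degree-$3$ Steiner vertices (placing all of the vertex's $\deg_G$-weight on a single copy), run the above algorithm on the resulting degree-$3$ tree, and translate Steiner-edge cuts back to cuts in $T$; the analysis then proceeds essentially as in Frederickson's original proof with constants absorbable into $\Delta z$.

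The runtime target is $O(m)$: the post-order DFS costs $O(|V(T)|)$, and a single pass over $E(G)$ suffices both to initialize $\{\deg_G(v)\}_{v \in V(T)}$ and to charge each $G$-edge incident to a finalized component. The main (mild) obstacle will be cleanly verifying the $\deg_T(v) \le \Delta$ bound — once that is in place via the assumption $T \subseteq G$ or via the binary-expansion reduction, the rest is a routine Frederickson-style analysis.
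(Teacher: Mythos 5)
The paper does not prove this statement; it is given as a black-box citation to Frederickson's Lemma 1 together with a remark that the extension from maximum degree $3$ to degree $\Delta$ is routine, and your bottom-up post-order strip-off argument is exactly the construction that citation invokes, so you are taking the same route.

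One internal inconsistency worth fixing: you propose cutting at threshold $2z$ so that a finalized component has degree-sum at least $2z$ and hence at least $z$ distinct incident $G$-edges after absorbing the factor-of-two loss from internally spanned edges, but you then compute the upper bound at threshold $z$, obtaining $f(v) \leq \Delta + \Delta(z-1) = \Delta z$. You cannot use both: with $\tau = 2z$ the same calculation gives $f(v) \leq 2\Delta z$, and with $\tau = z$ the lower bound degrades to $z/2$. Either pair of constants is harmless for the paper's only downstream use (the potential-drop argument for the while-loop of $\textsc{InitSparsifier}$ and the degree bound in \Cref{clm:degreeUpperBoundH}, both of which tolerate constant slack), and in fact the cleanest reading of "incident to $k$ edges" — and the one under which your threshold-$z$ computation delivers the literal $(z,\Delta z)$ pair exactly — is the degree-sum count. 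Finally, $\deg_T(v) \le \Delta$ is automatic because the theorem is only invoked with $T$ a subtree of $F_i \subseteq G$, so the ternarization gadget is unnecessary; you correctly note this but could state the assumption $T \subseteq G$ once and drop the gadget.
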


\Cref{algo:reduceDegree} uses this procedure to compute a set $E'$ of edges in $F_i$ that when removed reduce the degree of $v$ significantly. It deletes these edges in $E'$ from $F_i$ by forwarding them to data structure $\mathcal{D}_i$ which then updates the forest $F_i$. Thereafter, graphs $\tilde{G}$ and $\tilde{H}$ are updated accordingly. As we will show in the analysis, for reasonably large value $\gamma_{degConstr}$, the process results in the degree of $v$ decreasing significantly, and the overall process is terminating quickly. 

\begin{algorithm}
 Let $\hat{E}_i$ be the set of edges in $\tilde{H} \cap \cC(G, F_i, \wstr_i)$ incident to $v$; let $E_i$ denote the pre-images of the edges in $\hat{E}_i$.\\
Let $T_i$ be the tree in forest $F_i$ that has $v$ as its root. \\
$E' \gets \textsc{FindSets}(T_i, G[E_i], z)$.\\
Invoke operation $\textsc{DeleteEdgeFromForest}(\cdot)$ on data structure $\mathcal{D}_i$ to remove the edges in $E'$ from forest $F_i$.\\
Update $\tilde{G}$ and $\tilde{H}$ accordingly.
\caption{$\textsc{ReduceDegree}(v, i, z)$}
\label{algo:reduceDegree}
\end{algorithm}

Finally, to process the $t$-th update to $G$, we invoke \Cref{algo:updateSparsifier} with parameter $t$. The procedure first forwards the update to the data structures $\mathcal{D}_i$ which results in updates to $F_i$ and then updates $\tilde{G}$ and $\tilde{H}$ accordingly. It then picks the vertex $v$ in $\tilde{H}$ of largest degree with respect to graph $\tilde{H} \cap \cC(G, F_i, \wstr_i)$ for some $i$ and invokes the subprocedure from \Cref{algo:reduceDegree} to decrease its degree in $\tilde{H}$.

\begin{algorithm}
\ForEach{$0 \leq i < \lambda$}{
    Update $\mathcal{D}_i$ by using one of the update operations  $\textsc{InsertEdge}(e)/ \textsc{DeleteEdge}(e)/ \textsc{AddIsolatedVertex}()$ to forward the $t$-th update to $G$ to each such data structure.
}
Update $\tilde{G}$ and $\tilde{H}$ accordingly.\\
Let $v \in V(\tilde{H})$ and $0 \leq i < \lambda$ be chosen to maximize $\deg_{\tilde{H} \cap \cC(G, F_i, \wstr_i)}(v)$.\label{lne:pickHighestDegreeVertex}\\
$\textsc{ReduceDegree}(v, i, \gamma_{degConstr})$.\label{lne:reduceDegree}\\
Update $H$ accordingly.\\
\Return let $U_H^{(t)}$ to be the set of updates required to update $H^{(t-1)}$ to $H^{(t)}$ where $U_H^{(t)}$ only consists of edge insertions/deletions and isolated vertex insertions.
\caption{$\textsc{MaintainSparsifier}(G, t)$}
\label{algo:updateSparsifier}
\end{algorithm}

Finally, it updates $H$ which is defined to be the graph $\tilde{H}$ where vertices in $\tilde{H}$ that are identified with the same vertex in $G$ are contracted. The algorithm returns an update batch that reflects the changes to $H$.

This update batch $U_H^{(t)}$ is obtained as follows: for every edge inserted/ deleted to $\tilde{H}$, we forward this update straightforwardly. For every vertex split in $\tilde{H}$ where a vertex $v$ is split and $v^{NEW}$ is split off, we add to $U_H^{(t)}$ the following updates to emulate the split: we first delete all edges incident to $v^{NEW}$ from the vertex identified by $v$ from $H$, then if the vertex in $G$ that is identified with $v^{NEW}$ is not yet present in $H$, we add it via an isolated vertex insertion, and then we add all edges incident to $v^{NEW}$ back into the graph $H$.

Note that the number of updates in $U_H^{(t)}$ might be larger than the number of updates in $U_{\tilde{H}}^{(t)}$ because of the emulation process. However, we show that replacing vertex splits with the above batch of updates necessary to emulate the vertex splits does not increase the number of updates significantly where we leverage that $\tilde{H}$ and $H$ have small maximum degree at all times.

\subsection{Analysis} 
\label{subsec:analysisVS}

\paragraph{Analyzing Pivots, Balls, and Shortest Paths.} The following claim summarizes the key insight into the first part of the data structure.

\begin{claim}\label{obs:decrementalMaintenanceofTZ}
For any $u \in V$, we have that $B_G(u, A)$ and $C_G(u, A)$ are monotonically decreasing sets. We further have that for any vertex $v \in B_G(u,A)$ and edge $(v,x) \in E$ if the current shortest path $\pi_{G}(u,x) = \pi_G(u,v) \concat (v,x)$, then we have that $\pi_{G}(u,x)$ is equal to the shortest path from $u$ to $x$ in the initial graph $G^{(0)}$.
\end{claim}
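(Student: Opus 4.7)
The plan is to prove the two assertions in the order stated, with the monotonicity of $B_G(u,A)$ doing the heavy lifting for the second assertion. Throughout, write $D_t := \dist_{G^{(t)}}(u, A^{(t)})$.

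\emph{Monotonicity of the ball (and cluster).} Fix $u$ and suppose $v \in B_{G^{(t+1)}}(u, A^{(t+1)})$; I will deduce $v \in B_{G^{(t)}}(u, A^{(t)})$. Every vertex on $\pi_{G^{(t+1)}}(u,v)$ (except possibly $u$) lies at distance strictly less than $D_{t+1}$ from $u$ in $G^{(t+1)}$ and therefore cannot lie in $A^{(t+1)}$. Since $A^{(t+1)}$ contains the endpoints of every edge inserted or deleted up to time $t+1$, no edge on this path has been touched, so the path already lies in $G^{(t)}$ with the same length; in particular $\dist_{G^{(t)}}(u,v) \leq \dist_{G^{(t+1)}}(u,v) < D_{t+1}$. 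To close the gap to $v \in B_{G^{(t)}}(u, A^{(t)})$ I separately prove $D_{t+1} \leq D_t$: the shortest $u$-to-$A^{(t)}$ path in $G^{(t)}$ either survives in $G^{(t+1)}$ (so $D_{t+1} \leq D_t$ via $A^{(t)} \subseteq A^{(t+1)}$), or has one of its edges deleted at step $t+1$, in which case the near endpoint of the deleted edge joins $A^{(t+1)}$ at distance $< D_t$ from $u$; insertions only shrink distances while still adding their endpoints to $A$. For the cluster $C_G(u,A)$, I use $w \in C_G(u,A) \Leftrightarrow u \in B_G(w,A)$ and apply the ball argument with $w$ in the role of $u$.

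\emph{Stability of the shortest path.} I first prove the intermediate statement $\pi_{G^{(t)}}(u,v) = \pi_{G^{(0)}}(u,v)$ whenever $v \in B_{G^{(t)}}(u, A^{(t)})$. The inequality $\dist_{G^{(0)}}(u,v) \leq \dist_{G^{(t)}}(u,v)$ is immediate from the monotonicity argument, since the current shortest path to $v$ already lies in $G^{(0)}$. For the reverse, suppose for contradiction some edge of $\pi_{G^{(0)}}(u,v)$ is deleted by time $t$; let $s$ be the earliest such step and $e^{\ast} = (w_1^{\ast}, w_2^{\ast})$ the edge deleted at step $s$ that is closest to $u$ on $\pi_{G^{(0)}}(u,v)$. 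The prefix $\pi_{G^{(0)}}[u, w_1^{\ast}]$ still exists in $G^{(s)}$ (no earlier deletion, and edges deleted at step $s$ further from $u$ do not affect this prefix), hence $D_s \leq \dist_{G^{(s)}}(u, w_1^{\ast}) \leq \dist_{G^{(0)}}(u, w_1^{\ast}) < \dist_{G^{(0)}}(u,v)$. Combined with $D_t \leq D_s$ and $\dist_{G^{(0)}}(u,v) \leq \dist_{G^{(t)}}(u,v)$, this gives $\dist_{G^{(t)}}(u,v) > D_t$, contradicting $v \in B_{G^{(t)}}(u,A^{(t)})$. The assumed uniqueness of shortest paths in $G^{(0)}$ then identifies the two paths.

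To finish the claim, note that $v \in B_{G^{(t)}}(u,A^{(t)})$ forces $v \notin A^{(t)}$, so the edge $(v,x)$ cannot have been inserted at any step (both endpoints of an inserted edge enter $A$), and hence $(v,x) \in E(G^{(0)})$ with unchanged length. Therefore $\pi_{G^{(t)}}(u,x) = \pi_{G^{(t)}}(u,v) \concat (v,x) = \pi_{G^{(0)}}(u,v) \concat (v,x)$ as edge sequences, which is the $u$-to-$x$ shortest path in $G^{(0)}$ claimed by the statement. The main obstacle is the monotonicity $D_{t+1} \leq D_t$ under mixed insertions and deletions: individual pairwise distances in $G$ need not be monotone, but every edge update that could threaten this inequality also installs its endpoints into $A$, so $A$ grows fast enough to keep $D$ non-increasing.
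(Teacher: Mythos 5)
Your argument for the monotonicity of $B_G(u,A)$ and $C_G(u,A)$ is correct and somewhat more careful than the paper's. You argue element-by-element: if $v\in B_{G^{(t+1)}}(u,A^{(t+1)})$, then every vertex of $\pi_{G^{(t+1)}}(u,v)$ is strictly within radius $D_{t+1}$ and hence outside $A^{(t+1)}$, so no edge on this path was ever inserted or deleted; the path therefore lives unchanged in $G^{(t)}$, and together with $D_{t+1}\le D_t$ this gives $v\in B_{G^{(t)}}(u,A^{(t)})$. The paper's proof instead argues that the radius is monotone and then invokes ``in a decremental graph distances only increase, hence a ball of decreasing radius shrinks,'' which on its face covers only edge deletions even though the lemma explicitly admits insertions; your route handles both update types uniformly with the same idea.

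For the second assertion there is a genuine gap in your last sentence. The intermediate fact you prove --- $\pi_{G^{(t)}}(u,v)=\pi_{G^{(0)}}(u,v)$ for every $v$ in the current ball, via the ``first deleted edge of $\pi_{G^{(0)}}(u,v)$ would put a member of $A^{(t)}$ strictly closer than $D_t$'' contradiction --- is correct and is the substance the paper's downstream arguments actually use. But concluding that $\pi_{G^{(0)}}(u,v)\concat(v,x)$ ``is the $u$-to-$x$ shortest path in $G^{(0)}$'' does not follow from this. You have shown that $\pi_{G^{(t)}}(u,x)$ is \emph{a} $ux$-path in $G^{(0)}$, hence $\dist_{G^{(0)}}(u,x)\le\dist_{G^{(t)}}(u,x)$, but you have not ruled out a strictly shorter initial $ux$-path. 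The contradiction you engineered for $v$ relies on $\dist_{G^{(t)}}(u,v)<D_t$; here $\dist_{G^{(t)}}(u,x)=\dist_{G^{(t)}}(u,v)+l(v,x)$ may exceed $D_t$ by the arbitrary amount $l(v,x)$, so a deleted edge on $\pi_{G^{(0)}}(u,x)$ can place a vertex of $A^{(t)}$ at distance exactly $D_t$ from $u$ with no contradiction (take $G^{(0)}$ containing $u$--$w$--$v$ and $u$--$w$--$z$--$x$ with small lengths and a long edge $(v,x)$, then delete $(z,x)$; now $v$ is still in the ball, $\pi_G(u,x)$ routes through $v$, yet $\pi_{G^{(0)}}(u,x)$ routed through $z$). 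To be fair, the paper's own two-line proof has the same slippage --- it only concludes ``all edges on it were in $G^{(0)}$'' --- and the uses in \Cref{clm:runtimeTZ} and \Cref{lma:vsUpperBoundStretch} really only need (i) $\pi_{G^{(t)}}(u,v)=\pi_{G^{(0)}}(u,v)$ for $v$ in the ball and (ii) $(v,x)\in E(G^{(0)})$, both of which you do establish; the literal identity with $\pi_{G^{(0)}}(u,x)$ is only needed, and only holds, in the special case $x=p(u)$ where $\dist_{G^{(t)}}(u,x)=D_t$ restores the contradiction. So the gap is largely inherited from the claim's phrasing, but as written your closing step is an assertion, not a deduction, and you should not present it as following from what precedes.
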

\begin{proof}
Let us prove the first statement. Consider first the ball $B_G(u, A)$. We claim that the radius of this ball is monotonically decreasing over time. To see this, assume for the sake of contradiction that the radius of $B_G(u,A)$ would increase due to an edge deletion $(u,v)$. This implies that the vertex in $A$ closest to $u$ before the deletion, moves further away from $u$ due to the deletion. But this implies that the deletion affects the previous shortest path from $u$ to this vertex in $A$. This then implies that one of the endpoints of the deleted edge is strictly closer to $u$ than the previously closest vertex in $A$. And since we have that the endpoint of the deleted edge closer to $u$ has its shortest path to $u$ unaffected by the edge deletion, the radius of $B_G(u,A)$ strictly decreases, which yields the desired contradiction. Finally, we have that in a decremental graph distances only increase and thus a ball of monotonically decreasing radius can only decrease over time. For the monotonicity property of open clusters, it suffices to use that they are defined as inverses of balls $B_G(u, A)$. 

The fact about the shortest paths follows from the fact that only the last vertex on such a shortest path $\pi_{G}(u,x)$ can be an affected vertex which can be seen by inspecting the line of reasoning above. Thus all edges on it were in $G^{(0)}$. 
\end{proof}

Given this claim it is now straightforward to calculate the time spent on maintaining pivots, balls and shortest paths.

\begin{claim}\label{clm:runtimeTZ}
With initial time $\Otil(mk^3)$ and worst-case time $\Otil(k^2)$ per update to $G$, the algorithm can maintain all pivots, balls, shortest paths and forest $F$ as described in \Cref{lma:workhorseSparsifier}.
\end{claim}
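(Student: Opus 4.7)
The plan is to split the claim into an initialization cost and a worst-case per-update cost, leaning on \Cref{obs:decrementalMaintenanceofTZ} so that updates only touch a small ``affected'' neighborhood.

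For initialization, I would first invoke $\textsc{Center}(G,k)$ from \Cref{thm:TZschemes} to obtain $A$ of size at most $n/k$ with $|B_G(v,A)|, |C_G(v,A)| \le \tilde{O}(k)$ for every $v$, in time $\tilde{O}(mk)$. Then, for each $v \in V$, run a truncated Dijkstra from $v$ that halts as soon as it has settled its ball $B_G(v,A)$ together with $p(v)$. Because the ball has at most $\tilde{O}(k)$ vertices, each of degree at most $\Delta$, each truncated Dijkstra costs $\tilde{O}(k\Delta)$, summing to $\tilde{O}(nk\Delta) = \tilde{O}(mk)$ over all sources. Writing down each of the $\tilde{O}(k)$ in-ball shortest paths as a sequence of at most $\tilde{O}(k)$ edges adds $\tilde{O}(nk^2)$, well inside the $\tilde{O}(mk^3)$ budget. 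The forest $F$ is then formed as the edge-union of vertex-to-pivot paths; by \Cref{thm:realTZ} this is a forest whose components are rooted at vertices of $A$ and, being indexed by pivots, have size at most $\max_{a \in A} |C_G(a,A)| \le \tilde{O}(k)$.

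For the per-update cost, the crux is \Cref{obs:decrementalMaintenanceofTZ}: balls $B_G(v,A)$ and clusters $C_G(v,A)$ shrink monotonically. An update touches at most two edge endpoints; we place each such endpoint into $A$ before propagating the change. A newly added $a \in A$ can only change the pivot of vertices in $C_G(a, A \cup \{a\})$, a set of size $\tilde{O}(k)$. For each affected $u$, the new pivot is guaranteed to lie in the cached $B_G(u,A) \cup \{p(u)\}$, so it is identified in $\tilde{O}(k)$ by a single scan of the cached distances; we then shrink the ball by discarding vertices no longer strictly closer than the new pivot, again in $\tilde{O}(k)$. No shortest path needs recomputation, because \Cref{obs:decrementalMaintenanceofTZ} guarantees every surviving in-ball shortest path is identical to the one cached at initialization. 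Finally, $F$ is repaired by, for each affected $u$, swapping in the already-cached shortest path from $u$ to its new pivot; these repairs cost $\tilde{O}(k)$ per affected vertex and touch $\tilde{O}(k)$ edges of $F$, yielding $\tilde{O}(k^2)$ total work per update.

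The main technical obstacle is verifying that the monotonicity underlying \Cref{obs:decrementalMaintenanceofTZ} survives edge \emph{insertions}, since that observation is most naturally phrased for the purely decremental setting. For an insertion of $(x,y)$, the fix is to add $x$ and $y$ to $A$ \emph{before} inserting the edge and then argue: any vertex $v$ that would join $B_G(u,A)$ as a result would need a shorter $u$-$v$-path through the new edge, forcing this path to pass through $x$ or $y$, both now in $A$; but then $\dist_G(u,A) \le \dist_G(u, \{x,y\}) \le \dist_G(u,v)$, contradicting $v \in B_G(u,A)$. Combined with the observation that isolated-vertex insertions/deletions do not affect any ball or cluster of an existing vertex, this extends \Cref{obs:decrementalMaintenanceofTZ} to the full update model of \Cref{lma:workhorseSparsifier}, after which the bookkeeping described above goes through unchanged.
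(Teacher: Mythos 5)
Your approach matches the paper's in all essentials: initialize $A$ via $\textsc{Center}$, run a truncated Dijkstra per vertex to record each ball and all in-ball shortest paths at initialization, and then exploit the monotonicity from \Cref{obs:decrementalMaintenanceofTZ} so that an update only re-assigns pivots for the $\tilde{O}(k)$ vertices in the cluster of each newly added $A$-vertex, with no shortest-path recomputation. Your explicit treatment of edge insertions — add the endpoints to $A$ \emph{before} inserting the edge and observe that any would-be new ball member forces the shorter path to pass through the new endpoints, hence through $A$ — is a genuine value-add: the proof of \Cref{obs:decrementalMaintenanceofTZ} in the paper is phrased only for deletions, and your argument is what makes the claim actually apply to the full update model of \Cref{lma:workhorseSparsifier}.

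Two bookkeeping points deserve correction. First, the step $\tilde{O}(nk\Delta)=\tilde{O}(mk)$ is not an identity: for irregular graphs (e.g.\ a near-star) $n\Delta$ can vastly exceed $m$, and without further argument $\tilde{O}(nk\Delta)$ need not even fit the $\tilde{O}(mk^3)$ budget when $\Delta > k^2$. The correct accounting is to sum Dijkstra work over sources by clusters: the total edge relaxation cost is $\tilde{O}\bigl(\sum_{w}\deg(w)\cdot|C_G(w,A)|\bigr)=\tilde{O}(mk)$ by the cluster bound from \Cref{thm:TZschemes}, which does fit the budget and is independent of the per-vertex $\Delta$ estimate. (The paper instead sidesteps any $\Delta$-dependence with a sorted-adjacency-list exploration costing $\tilde{O}(k^3)$ per source regardless of degree.) Second, your bound on tree sizes via $\max_{a\in A}|C_G(a,A)|$ is vacuous: for $a\in A$ we always have $\dist_G(w,a)\ge\dist_G(w,A)$, so $a$ is never strictly inside any ball $B_G(w,A)$ and hence $C_G(a,A)=\emptyset$. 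The quantity that is actually bounded by $\tilde{O}(k)$ — and is what is used downstream — is the \emph{number of edges on each pivot path} (every internal vertex of $\pi_G(v,p(v))$ lies in $B_G(v,A)$), not the number of vertices in a whole tree of $F$; the latter does not follow from \Cref{thm:TZschemes} alone.
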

\begin{proof} 
From \Cref{thm:TZschemes}, we can compute the initial set $A$ in time $\tilde{O}(mk)$. Since $A$ is then maintained by adding affected vertices, and each affected vertex can be identified in time $O(1)$ per update, and each update adds at most $2$ new affected vertices to $A$, we can maintain $A$ in the claimed runtime.

To maintain the balls and shortest paths, note that given the set $A$, we can compute the initial balls and shortest paths required in time $\tilde{O}(k^3)$ per vertex $v \in V$, by running the following procedure: we initialize the set of explored vertices $\textsc{Explored}(v)$ to just contain the vertex $v$. Then, we iteratively search for the edge of minimum weight for every vertex $w \in \textsc{Explored}(v)$ that goes to a vertex $w'$ not in $\textsc{Explored}(v)$. We then add to $\textsc{Explored}(v)$ a vertex $z$ from the set 
\[\argmin_{w' \not\in \textsc{Explored}(v)} \min_{w \in \textsc{Explored}(v), (w,w') \in E} \dist_G(v,w) + l(w,w').
\]
The algorithm stops once it explores the first vertex $z$ that is not in $B_G(v, A)$. It then declares the last vertex added to be the initial pivot $p(v)$ of $v$. It is not hard to see that Dijktra's analysis yields that this computes a shortest path tree in $G$ rooted at vertex $v \in V$ that contains the shortest paths of all vertices in $B_G(v, A) \cup \{p(v)\}$. Using sorted-adjacency lists, it takes at most $O(k^2)$ time to find the next vertex to add, since $\textsc{Explored}(v)$ never exceeds $\tilde{O}(k)$ by \Cref{thm:TZschemes}, and thus finding each mimimum weight edge leaving the current set of vertices explored takes at most $O(k)$ time per vertex $w$ already in $\textsc{Explored}(v)$. The total time spent on this procedure is thus at most $\tilde{O}(nk^3)$. Since each relevant computed shortest path consists of at most $\tilde{O}(k)$ edges (since it is contained in $G[B_G(v, A) \cup \{p(v)\}]$), we can also output all shortest paths explicitly within the same time bound.

Finally, we observe that by \Cref{obs:decrementalMaintenanceofTZ}, at any time, any shortest path in the ball $B_G(v, A)$ and the shortest path from $v$ to $p(v)$, are already shortest paths in $G^{(0)}$ and since the balls $B_G(v,A)$ are monotonically decreasing over time, it suffices to remove shortest paths to vertices that are no longer in $B_G(v, A) \cup \{p(v)\}$ from the initial shortest path set that was outputted. From our previous analysis and the bound on the size of clusters from \Cref{thm:TZschemes} and  \Cref{obs:decrementalMaintenanceofTZ}, we further can output the paths (with all edges) that are no longer relevant shortest paths as defined in \Cref{lma:workhorseSparsifier}, in time $\Otil(k^2)$ per update.

Since we maintain the pivot paths explicitly and the forest $F$ described in \Cref{lma:workhorseSparsifier} is the union of pivot paths, maintaining $F$ is straightforward in the claimed time and congestion.
\end{proof}

\paragraph{Analyzing the LSSFs $F_0, F_1, \ldots, F_{\lambda-1}$.} Let us start the analysis by establishing some properties of the path collection $\projP$ that we embed into the forests $F_0, F_1, \ldots, F_{\lambda -1}$. 

\begin{claim}\label{clm:numberOfProjectedPathsPlusHopTotal}
We have that $\projP$ is of size at most $\tilde{O}(mk^3)$ and each path in $\projP$ consists of at most $6 \cdot b \log n + 3$ edges. It takes time $\tilde{O}(mk^4)$ to construct the set $\projP$.  
\end{claim}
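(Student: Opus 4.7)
The plan is to bound $|\mathcal{P}|$ first, then leverage that to bound $|\projP|$, then argue about path length, and finally bound construction time.

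\textbf{Size of $\mathcal{P}$.} The set $\mathcal{P}$ is indexed by triples $(u, v, x)$ with $u \in V$, $v \in B_G(u, A)$, and $(v, x) \in E(G)$. I would swap the order of summation over $u$ and $v$: for a fixed $v$, the number of $u$'s with $v \in B_G(u, A)$ is exactly $|C_G(v, A)| \leq 2k \log n$ by \Cref{thm:TZschemes}. Hence
\[
|\mathcal{P}| \leq \sum_{v \in V} |C_G(v, A)| \cdot \deg_G(v) \leq 2k\log n \cdot 2m = \tilde{O}(mk).
\]

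\textbf{Size of $\projP$.} Each path $P \in \mathcal{P}$ spawns paths in $\projP$ indexed by choices $\hat{u} \in B_G(u, A) \cup \{p(u)\}$ and $\hat{x} \in B_G(x, A) \cup \{p(x)\}$; again by \Cref{thm:TZschemes} each of these sets has size at most $2k\log n + 1$, giving at most $\tilde{O}(k^2)$ extensions per path in $\mathcal{P}$. Combined, $|\projP| \leq \tilde{O}(mk) \cdot \tilde{O}(k^2) = \tilde{O}(mk^3)$.

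\textbf{Length of each path.} The key observation is that each of the three shortest-path segments making up a path in $\projP$ stays inside a ball of size $\tilde{O}(k)$. Concretely, the shortest path $\pi_G(u, v)$ for $v \in B_G(u, A)$ stays entirely within the open ball $B_G(u, A)$ (every proper prefix is strictly closer to $u$ than $A$), so it uses at most $|B_G(u,A)| - 1 \leq 2k\log n - 1$ edges. The same reasoning applies to $\pi_G(\hat{u}, u)$: for $\hat{u} \in B_G(u,A)$ this is bounded by $2k\log n - 1$ edges, and for $\hat{u} = p(u)$ the only vertex on the path outside the open ball $B_G(u,A)$ is $p(u)$ itself, yielding at most $2k\log n$ edges. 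The symmetric bound holds for $\pi_G(x, \hat{x})$. Adding the single edge $(v, x)$ and summing the three segments yields at most $6k\log n + 3$ edges, which matches the claimed bound (reading $b$ as $k$; the constant bookkeeping absorbs the $+1$ for the connecting edge and the $p(\cdot)$ endpoints).

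\textbf{Construction time.} By \Cref{clm:runtimeTZ}, all shortest paths $\pi_G(u, \cdot)$ from $u$ to vertices in $B_G(u, A) \cup \{p(u)\}$ are already available explicitly after initialization, each of length $\tilde{O}(k)$. Building $\projP$ thus amounts to iterating over the $\tilde{O}(mk^3)$ quadruples $(u, v, x, \hat{u}, \hat{x})$ and concatenating three precomputed segments of total length $\tilde{O}(k)$, giving total construction time $\tilde{O}(mk^3) \cdot \tilde{O}(k) = \tilde{O}(mk^4)$.

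The only non-routine step is the length bound, where one has to be careful that the endpoint $p(u)$ may lie outside the open ball but that all internal vertices of $\pi_G(u, p(u))$ are still inside it; everything else is a direct counting argument using the ball/cluster bounds from \Cref{thm:TZschemes}.
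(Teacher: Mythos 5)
Your proof is correct and follows essentially the same approach as the paper's: both count $\mathcal{P}$ via the cluster bound from \Cref{thm:TZschemes}, multiply by the two ball sizes to get $|\projP| = \tilde{O}(mk^3)$, and bound construction time by the number of quadruples times the per-path length. The one place you add genuine value is the length bound: the paper dispatches it in a single sentence, whereas you make explicit the key observation that each shortest-path segment $\pi_G(u,v)$ with $v \in B_G(u,A)$ is entirely contained in the open ball $B_G(u,A)$ (every proper prefix is strictly closer to $u$ than $A$), and that for the pivot endpoint $p(u)$ only the terminal vertex escapes the ball — this is exactly the right argument, and it is a detail worth having spelled out. (You are also right that $b$ in the claim statement should be read as the size-reduction parameter $k$; the paper uses both symbols for the same quantity.)
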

\begin{proof}
Let us start by analyzing the size of the set $\mathcal{P}$. We have for every edge $(v,x) \in E$, and vertex $u$ with $v \in B(u, A)$ that $\mathcal{P}$ has a path $\pi_G(u,v) \concat (v,x)$. There are $m$ edges in $E$, and since each vertex $v$ has its cluster $C_G(v, A)$ of size at most $\tilde{O}(k)$ by \Cref{thm:TZschemes}, there are at most $\tilde{O}(mk)$ many such paths. 

But recall that $\projP$ has for any $ux$-path $P$ in $\mathcal{P}$, and any $\hat{u} \in B_G(u, A) \cup \{p(u)\}$ and $\hat{x} \in B_G(x, A) \cup \{p(x)\}$, the path $\pi_G(\hat{u}, u) \concat P \concat \pi_G(x, \hat{x})$ to $\projP$. Using the upper bound on the size of the balls $B_G(u, A)$ and $B_G(x, A)$ from \Cref{thm:TZschemes}, we can thus upper bound the number of paths in $\projP$ by $\tilde{O}(k^2 \cdot |\mathcal{P}|) = \tilde{O}(k^3 \cdot m)$.

The bound on the number of edges for each path $P \in \projP$ follows by bounding the number of edges for each segment using \Cref{thm:TZschemes}. Similarly, the runtime follows from \Cref{clm:runtimeTZ}, the size bound on $\projP$, the bound on the number of edges on each path in the collection, and the fact that each path is formed from a constant number of shortest paths already explicitly computed.
\end{proof}

Next, let us analyze the number of LSSFs required to embed the collection of paths $\projP$.

\begin{lemma}\label{lem:LSSFsCoverPaths}
The process of embedding the path collection $\projP$ stops after finding forests $F_0, F_1, \ldots, F_{\lambda-1}$ for $\lambda = O(\log m)$. After the process terminates, we have that for every $P \in \projP$ there exists an index $0 \leq i < \lambda$, such that $\wstr_i(P) \leq 2 \cdot \gamma_{LSST}$.
\end{lemma}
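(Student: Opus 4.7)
The plan is to prove this by a length-weighted Markov argument that halves a natural potential function at each stage, combined with a $\poly(m)$ bound on its initial value.

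At a single stage $i$, I would combine the definition $w_i(e) = \econg(\projP_i, e) \cdot l(e)$ with the average-stretch guarantee from Property~\ref{item:avgstretchboundLem} of \Cref{lemma:globalstretch} and swap the order of summation:
\[
\sum_{P \in \projP_i} l(P) \cdot \wstr_i(P) \;=\; \sum_{P \in \projP_i} \sum_{e \in P} \wstr_i(e) l(e) \;=\; \sum_{e \in E} \wstr_i(e) \cdot w_i(e) \;\leq\; \gamma_{LSST} \sum_{e \in E} w_i(e) \;=\; \gamma_{LSST} \sum_{P \in \projP_i} l(P).
\]
The first equality uses $\wstr_i(P) = \frac{1}{l(P)} \sum_{e \in P} \wstr_i(e) l(e)$, and the last uses that $\sum_e w_i(e) = \sum_e \econg(\projP_i, e) l(e) = \sum_{P \in \projP_i} l(P)$. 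In words, the $l(P)$-weighted average of $\wstr_i(P)$ over $\projP_i$ is at most $\gamma_{LSST}$.

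Next I would apply Markov's inequality to the measure that assigns mass $l(P)$ to each $P \in \projP_i$. By construction, $\projP_{i+1}$ is precisely the set of paths in $\projP_i$ whose $\wstr_i(P) > 2\gamma_{LSST}$, so it can carry at most half the total mass:
\[
\sum_{P \in \projP_{i+1}} l(P) \;\leq\; \tfrac{1}{2} \sum_{P \in \projP_i} l(P).
\]
Thus the potential $\Phi_i := \sum_{P \in \projP_i} l(P)$ satisfies $\Phi_{i+1} \leq \Phi_i / 2$.

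Finally, I would bound $\Phi_0$ using \Cref{clm:numberOfProjectedPathsPlusHopTotal}: $|\projP| \leq \tilde{O}(mk^3)$, each path has $O(k\log n)$ edges, and every edge has integer length at most $L = \poly(n)$, yielding $\Phi_0 \leq \poly(m)$. Since edge lengths are positive integers we have $l(P) \geq 1$ for every path, so once $\Phi_\lambda < 1$ the collection $\projP_\lambda$ must be empty. Geometric halving thus forces termination after $\lambda = O(\log \Phi_0) = O(\log m)$ stages. Emptiness of $\projP_\lambda$ is precisely the statement that every $P \in \projP$ exited the sequence at some earlier stage $i$, which happens exactly when $\wstr_i(P) \leq 2\gamma_{LSST}$. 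The argument is essentially direct; the only step worth triple-checking is the double-counting identity linking the edge-weighted stretch bound of \Cref{lemma:globalstretch} to the length-weighted path-stretch average, but this follows immediately from how $w_i$ was chosen to encode path congestion.
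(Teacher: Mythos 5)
Your proof is correct and takes essentially the same approach as the paper: the paper also establishes $\sum_{P \in \projP_{i+1}} l(P) \leq \tfrac{1}{2}\sum_{P \in \projP_i} l(P)$ by the same double-counting identity linking $w_i$ to the length-weighted path stretch and then concludes via the $\poly(m)$ bound on $\Phi_0$ and integrality of lengths. The only cosmetic difference is that you argue the halving step forward via Markov while the paper phrases it as a contradiction.
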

\begin{proof}
Our proof follows by showing that for every $0 \leq i < \lambda$, $\sum_{P \in \projP_{i+1}} l(P) \leq \frac{1}{2} \cdot \sum_{P \in \projP_{i}} l(P)$. The implication then follows since each edge length is polynomially upper-bounded in $m$, and thus $\sum_{P \in \projP_{0}} l(P) = m^c$ for some constant $c$, which implies that if there are $\lceil c \cdot \log_2(m) \rceil$ iterations of the algorithm, we have $\sum_{P \in \projP_{\lceil c \cdot \log_2(m) \rceil + 1}} l(P) < 1$ which by the lower bound on edge lengths implies that $\projP_{\lceil c \cdot \log_2(m) \rceil + 1} = \emptyset$. Thus, $\lambda \leq \lceil c \cdot \log_2(m) \rceil + 1 = O(\log m)$.

It remains to prove the claim. We prove by contradiction. Assume that the statement does not hold for some $i$. We have from the definition of $\projP_{i+1}$ that
\[
\sum_{ P \in \projP_{i+1}} \wstr_i(P) \cdot l(P) > 2 \cdot \gamma_{LSST} \cdot \sum_{P \in \projP_{i+1}} l(P).
\]
Using $\projP_{i+1} \subseteq  \projP_{i}$ on the LHS, and the assumption that $\sum_{P \in \projP_{i+1}} l(P) > \frac{1}{2} \cdot \sum_{P \in \projP_{i}} l(P)$ on the RHS, we thus derive
\[
\sum_{ P \in \projP_{i}} \wstr_i(P) \cdot l(P) >  \gamma_{LSST} \cdot \sum_{P \in \projP_{i}} l(P). 
\]
It remains to observe that since $w_i(e) = \econg(\projP_i, e) \cdot l(e)$, we have $\sum_{ P \in \projP_{i}} l(P) \cdot \wstr_i(P) = \sum_{e \in E} \econg(\projP_i, e) \cdot l(e) \cdot \wstr_i(e) = \sum_{e \in E} w_i(e) \cdot \wstr_i(e)$ by definition of $\wstr_i(P)$. Using the same line of reasoning, we obtain that $\sum_{P \in \projP_{i}} l(P) = \sum_{e \in E} w_i(e)$. Combining these inequalities yields $\sum_{e \in E} w_i(e) \cdot \wstr_i(e) > \gamma_{LSST} \cdot  \sum_{e \in E} w_i(e)$. But this gives the desired contradiction as it violates the guarantee given by Property \Cref{item:avgstretchboundLem} in \Cref{lemma:globalstretch} that $\sum_{e \in E} w_i(e) \cdot \wstr_i(e) \leq \gamma_{LSST} \cdot \sum_{e \in E} w_i(e)$, as desired.
\end{proof}

\paragraph{Analyzing Vertex Sparsifiers $\tilde{G}$.} In this section, we prove various properties on the vertex sparsifier $\tilde{G}$. We start by proving the pivotal Lemma of this section: that distances between vertices in $A$ are preserved by $\hat{G}$. 

\begin{lemma}\label{lma:vsLowerBoundStretch}
At any time, for any two vertices $u,v\in A$, let $\hat{u}$ and $\hat{v}$ be any vertices in $V(\tilde{G})$ such that $\hat{u}$ is identified with $u$ and $\hat{v}$ is identified with $v$, we have $\dist_{G}(u,v) \leq \dist_{\hat{G}}(\hat{u}, \hat{v})$.
\end{lemma}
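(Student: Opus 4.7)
The plan is to show that any walk from $\hat{u}$ to $\hat{v}$ in $\tilde{G}$ can be mapped to a walk from $u$ to $v$ in $G$ of no larger total length. Since the length of any $uv$-walk in $G$ is at least $\dist_G(u,v)$, this implies $\dist_G(u,v) \leq \dist_{\tilde{G}}(\hat{u}, \hat{v})$.

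First I would fix a shortest $\hat{u}\hat{v}$-path $\hat{P}$ in $\tilde{G}$, and classify its edges into two types according to the construction of $\tilde{G}$ as a direct sum of the core graphs $\mathcal{C}(G, F_i, \wstr_i)$ augmented with length-$0$ identification edges: (i) \emph{core edges}, lying inside some $\mathcal{C}(G, F_i, \wstr_i)$; and (ii) \emph{identification edges} of length $0$ between copies of the same vertex of $G$ in different core graphs. Note that since $u, v \in A$, Property~\ref{item:affectedVerticesBecomeRoots} of \Cref{lemma:globalstretch} (applied with initial vertex set $S = A$) guarantees that $u$ and $v$ are roots in every $F_i$ and so they appear as vertices of each core graph, making the endpoints $\hat{u}, \hat{v}$ well-defined in the required cases.

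For each core edge $\hat{e}$ in $\mathcal{C}(G, F_i, \wstr_i)$, let $e = (x, y) \in E(G)$ be the pre-image. Then $\hat{e}$ connects $\hat{x} = \root^{F_i}(x)$ to $\hat{y} = \root^{F_i}(y)$, and
\[
  l_{\tilde{G}}(\hat{e}) \;=\; \wstr_i(e) \cdot l(e) \;\geq\; \str^{F_i, G}(e) \cdot l(e),
\]
by Property~\ref{item:avgstretchboundLem} of \Cref{lemma:globalstretch}. Unfolding the definition of $\str^{F_i,G}$, in either the same-component or the different-component case, the right-hand side equals the length in $G$ of the walk that goes from $\hat{x}$ to $x$ along $F_i$, then across $e$ to $y$, then from $y$ to $\hat{y}$ along $F_i$ (note that for edges within a single component, $\hat{x} = \hat{y}$ and this walk is a closed walk that we can simply drop from consideration since $\hat{e}$ then becomes a self-loop in the core graph and is irrelevant to any shortest path). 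Thus every core edge $\hat{e}$ corresponds to a $\hat{x}\hat{y}$-walk in $G$ of length at most $l_{\tilde{G}}(\hat{e})$. For each identification edge we take the trivial walk at the common vertex of $G$, which has length exactly $0 = l_{\tilde{G}}(\hat{e})$.

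Finally, I would concatenate the walks corresponding to successive edges of $\hat{P}$ in order. Consecutive walks meet correctly because consecutive edges of $\hat{P}$ share an endpoint, and by construction each endpoint of each walk in $G$ equals the $V(G)$-vertex with which the corresponding endpoint in $\tilde{G}$ is identified (core-edge endpoints are identified with the root in $V(G)$, identification-edge endpoints with the same vertex in $V(G)$). The resulting walk in $G$ begins at $u$ and ends at $v$, and has length at most $\sum_{\hat{e} \in \hat{P}} l_{\tilde{G}}(\hat{e}) = \dist_{\tilde{G}}(\hat{u}, \hat{v})$. The only mildly delicate step is keeping track of which root in which $F_i$ a given vertex along $\hat{P}$ represents, but this is entirely determined by which core graph the current edge belongs to, so the bookkeeping is routine.
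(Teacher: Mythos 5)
Your proof is correct and takes essentially the same approach as the paper: map each core-graph edge on a shortest path back to the walk through $F_i$ plus the original edge, and bound its length by the stretch overestimate. The only cosmetic difference is that the paper first passes to the contracted graph $\hat{G}$ (which has no identification edges) whereas you work directly in $\tilde{G}$ and dispose of the length-$0$ identification edges and self-loop case explicitly; both routes yield the same chain of inequalities.
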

\begin{proof}
It is not hard to see that in lieu of proving this statement for $\tilde{G}$, it suffices to establish that for any two vertices $u,v \in A$, $\dist_G(u,v) \leq \dist_{\hat{G}}(u,v)$.

Consider any shortest path $\pi_{\hat{G}}(u,v)$ for $u,v \in A \subseteq V(G)$. Consider now the following mapping procedure of $\pi_{\hat{G}}(u,v)$ to a path $P$ in $G$: for every edge $\hat{e} \in \pi_{\hat{G}}(u,v)$, where $\hat{e}$ originates from the core graph $\cC(G, F_i, \wstr_i)$ and has pre-image $e = (x,y)$ in $G$, we replace the edge $\hat{e}$ on $\pi_{\hat{G}}(u,v)$ by the path segment $F_i[\root^{F_i}(x), x] \concat (x,y) \concat F_i[y, \root^{F_i}(y)]$. By \Cref{def:coregraph}, we have that $\hat{e} = (\root^{F_i}(x), \root^{F_i}(y))$. Thus, the mapping yields a proper (although not necessarily simple) $uv$-path $P$ in $G$, as required.

Finally, by \Cref{def:spanningforest} and \Cref{def:coregraph}, we have that for each edge $\hat{e}$ in $\cC(G, F_i, \wstr_i)$  with pre-image $e$ in $G$ where $\hat{e} = (\root^{F_i}(x), \root^{F_i}(y))$, the path segment $F_i[\root^{F_i}(x), x] \concat (x,y) \concat F_i[y, \root^{F_i}(y)]$ has length at most $l_{\hat{G}}(\hat{e})$ with respect to the length function $l$ of $G$. Thus, $\dist_G(u,v) \leq l_G(P) \leq l_{\hat{G}}(\pi_{\hat{G}}(u,v)) = \dist_{\hat{G}}(u,v)$. 
\end{proof}

\begin{lemma}\label{lma:vsUpperBoundStretch}
At any time, for any two vertices $u,v\in A$, let $\hat{u}$ and $\hat{v}$ be any vertices in $V(\tilde{G})$ such that $\hat{u}$ is identified with $u$ and $\hat{v}$ is identified with $v$, we have $\dist_{\tilde{G}}(\hat{u},\hat{v}) \leq 10 \cdot \gamma_{LSSF} \cdot \dist_G(u,v)$.
\end{lemma}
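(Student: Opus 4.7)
The plan is to mimic the Andoni--Stein--Zhong stretch argument from the overview, lifting each step to the core graphs that make up $\tilde{G}$. Fix $u,v \in A$, set $x_0 = u$, and iteratively pick $x_{i+1}$ to be the last vertex of $\pi_G(u,v)$ whose predecessor $y_i$ on the path satisfies $y_i \in B_G(x_i, A)$; since the successor of $x_i$ on the path is itself in $B_G(x_i, A)$, the sequence makes progress and terminates at some $x_k = v$, and maximality ensures that for $i < k-1$ one has $x_{i+1} \notin B_G(x_i,A)$. This yields $\dist_G(x_i, p(x_i)) \leq \dist_G(x_i, x_{i+1})$ and, by the triangle inequality, $\dist_G(x_{i+1}, p(x_{i+1})) \leq 2\dist_G(x_i, x_{i+1})$, so the projected path
\[
P_i \defeq \pi_G(p(x_i), x_i) \concat \pi_G(x_i, y_i) \concat (y_i, x_{i+1}) \concat \pi_G(x_{i+1}, p(x_{i+1}))
\]
has length $l(P_i) \leq 4\dist_G(x_i, x_{i+1})$ (the terminal cases are easier since $p(u)=u$ and $p(v)=v$).

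The crucial observation I would use is that each $P_i$ belongs to the initial collection $\projP$ built at preprocessing time. Every interior vertex of every subsegment lies strictly inside a present-time ball; by \Cref{obs:decrementalMaintenanceofTZ} these balls only shrink, so the subsegments are unchanged shortest paths already present in $G^{(0)}$. In the fully-dynamic case one also uses that inserted edges have both endpoints in $A$, while interior vertices of balls avoid $A$ by definition, ruling out inserted edges on the subsegments. Given $P_i \in \projP$, \Cref{lem:LSSFsCoverPaths} supplies an index $j_i$ with $\wstr_{j_i}(P_i) \leq 2\gamma_{LSST}$.

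Finally, I would map $P_i$ edge-by-edge into the core graph $\cC(G, F_{j_i}, \wstr_{j_i})$. Because $p(x_i), p(x_{i+1}) \in A$ and vertices in $A$ are always roots of every $F_\ell$ (Property~\ref{item:affectedVerticesBecomeRoots} of \Cref{lemma:globalstretch}), this image is a walk between $p(x_i)$ and $p(x_{i+1})$, and by \Cref{def:coregraph} its total length equals $\wstr_{j_i}(P_i)\cdot l(P_i) \leq 8\gamma_{LSST}\dist_G(x_i,x_{i+1})$. Concatenating these walks across the different core graphs using the length-$0$ edges of $\tilde{G}$ that link the copies of each $p(x_{i+1})$ (and, at the ends, copies of $u$ and $v$) produces a walk in $\tilde{G}$ from $\hat{u}$ to $\hat{v}$. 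Summing over $i$ gives $\dist_{\tilde{G}}(\hat{u},\hat{v}) \leq 8\gamma_{LSST}\dist_G(u,v) \leq 10\gamma_{LSSF}\dist_G(u,v)$. The main technical obstacle is the initial-membership verification for $P_i$ in the second paragraph: one must carefully argue, in the fully-dynamic regime, that no newly inserted edge can appear on any of the shortest-path subsegments composing $P_i$, so that $\wstr_{j_i}(e)$ is actually defined (and bounded) on every edge of the walk we ultimately construct.
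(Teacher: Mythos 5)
Your proof follows essentially the same roadmap as the paper's: segment $\pi_G(u,v)$ using current balls, lift each segment to a projected path $P_i$ with $l(P_i)\le 4\dist_G(x_i,x_{i+1})$, argue $P_i\in\projP$ and apply \Cref{lem:LSSFsCoverPaths} to find a covering forest, map $P_i$ into the corresponding core graph (using that $A$-vertices are roots), and concatenate across core graphs with zero-length edges for the factor $10$. The quantitative steps are carried out correctly. However, the obstacle you flag at the end --- showing that no newly inserted edge lies on any $P_i$ --- is not merely unresolved; the claim you rely on to resolve it is false.

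The paper handles inserted edges by a reduction step that you omit: since both endpoints of every inserted edge lie in $A$ (Property 1 of \Cref{lma:workhorseSparsifier}), one first splits $\pi_G(u,v)$ at the endpoints of inserted edges. Each piece is either a single inserted edge between two $A$-vertices (which appears in every $\cC(G, F_i, \wstr_i)$ with stretch overestimate $1$, hence in $\tilde{G}$ at its exact length) or an $A$-to-$A$ sub-path lying entirely in $G^{(0)}$, and the ball-based segmentation is applied only to the latter pieces. Your alternative argument --- that inserted edges have both endpoints in $A$ while interior vertices of balls avoid $A$ --- does not close the gap, because a segmentation vertex $x_i$ can itself belong to $A$. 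In that case $B_G(x_i,A)=\emptyset$, there is no $y_i$, and the segment degenerates to the single edge $(x_i,x_{i+1})$. If $x_{i+1}\in A$ too and $(x_i,x_{i+1})$ was inserted after initialization, then $P_i$ is exactly that inserted edge, which is not in the initialization-time set $\projP$, and \Cref{lem:LSSFsCoverPaths} does not apply. This situation genuinely arises: whenever $\pi_G(u,v)$ traverses an inserted edge $(a,a')$, the sequence of segmentation vertices must eventually hit $a$ exactly (since $a\in A$ lies outside every ball $B_G(x_j,A)$, so no segmentation vertex can jump past it), and the next segment is then precisely the inserted edge.

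A secondary point: when $B_G(x_i,A)=\emptyset$, your definition ``last vertex whose predecessor lies in $B_G(x_i,A)$'' is vacuous and your termination justification (``the successor of $x_i$ is itself in $B_G(x_i,A)$'') fails; the paper's phrasing (``first vertex after $b_j$ outside $B_G(b_j,A)$'') handles the empty-ball case cleanly. Once you prepend the reduction to $\pi_G(u,v)\subseteq G^{(0)}$ and adopt that phrasing, the remainder of your argument aligns with the paper's and is correct.
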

\begin{proof}
For our analysis, we consider $\pi_G(u,v)$ which is the shortest path in $G$ from $u$ to $v$. Again, the proof follows if we can show that we can upper bound the distance between $u$ and $v$ in $\hat{G}$. 

\underline{Reducing to the case $\pi_G(u,v) \subseteq G^{(0)}$:} We first show that we can assume wlog that $\pi_G(u,v)$ is in the initial graph $G^{(0)}$. That is since all endpoints of edges not in $G^{(0)}$ are in the set $A$, and thus we can segment each general path between vertices in $A$ into paths with no such new edge between vertices in $A$ and new edges where new edges are present in every core graph $\mathcal{C}(G, F_i, \wstr_i)$ with stretch overestimate equal $1$, and thus it appears in $\hat{G}$ with the same length as in $G$, which yields the Lemma.

\underline{Segmenting $\pi_G(u,v)$:} Let us now show the Lemma conditioned on the fact that each edge on $\pi_G(u,v)$ is already in $G^{(0)}$. We define a sequence of vertices $b_0, b_1, \ldots, b_{\tau}$ as follows: we let $b_0 = u$, and for any $0 \leq j < \tau$, define $b_{j+1}$ to be the first vertex on $\pi_G(u,v)$ after vertex $b_j$ that is outside the ball $B_G(b_j, A)$. We let $b_{\tau}$ be the first vertex such that $b_{\tau} = v$. Note that by the subpath property of shortest paths, we have $\pi_G(b_j, b_{j+1}) = \pi_G(u,v)[b_j, b_{j+1}]$. We next show that we have 
\begin{equation}\label{eq:segmentUpperBound}
    \dist_H(p(b_j), p(b_{j+1})) \leq 10 \cdot \gamma_{LSSF} \cdot \dist_G(b_j, b_{j+1})
\end{equation}
which yields the final Lemma, which can be obtained by summing over the segments and observing $p(b_0) = u, p(b_{\tau}) = v$.

\underline{Establishing \eqref{eq:segmentUpperBound}:} To establish the claimed inequality for every $0 \leq j < \tau$, we first observe that by choice of $b_{j+1}$ and \Cref{obs:decrementalMaintenanceofTZ}, we have $\pi_G(b_j, b_{j+1}) = \pi_{G^{(0)}}(b_j, b_{j+1})$ and thus $\pi_G(b_j, b_{j+1}) \in \mathcal{P}$. Since the current pivots of $b_j$ and $b_{j+1}$ are vertices in the initial open balls of $b_j$ and $b_{j+1}$ (or the initial pivot itself), we have again from \Cref{obs:decrementalMaintenanceofTZ} that $P = \pi_G(p(b_j), b_j) \concat \pi_G(b_j, b_{j+1}) \concat \pi(b_{j+1}, p(b_{j+1}))$ is in the set $\projP$. Note that by definition, we have that 
\begin{align} \label{eq:upperBoundPathLength}
\begin{split}
l(P) &= \dist_G(b_j, p(b_j)) + \dist_G(b_j, b_{j+1}) + \dist_G(b_{j+1}, p(b_{j+1}))\\
    & \leq 2 \cdot (\dist_G(b_j, p(b_j)) + \dist_G(b_j, b_{j+1}))\\
    & \leq 4 \cdot \dist_G(b_j, b_{j+1})
\end{split}
\end{align}
where the second inequality is obtained from the triangle inequality and the definition of the pivot function, and the third from the fact that $b_{j+1}$ is not contained in the ball $B_G(b_j, p(b_j))$. 

From \Cref{lem:LSSFsCoverPaths}, we have that there is an index $0 \leq i < \lambda$ such that $\wstr_i(P) \cdot l(P) \leq 2 \cdot \gamma_{LSSF} \cdot l(P) \leq 8 \cdot \gamma_{LSSF} \cdot \dist_G(b_j, b_{j+1})$ where we use in the last inequality our derivation in \eqref{eq:upperBoundPathLength}. 

Next, let $c_0, c_1, c_2, \ldots, c_{\kappa}$ be defined such that $c_0 = p(b_j)$, and for every $0 \leq \ell < \kappa$, we have that $c_{\ell+1}$ is the first vertex on $P$ after vertex $c_{\ell}$ that is in a different connected component in $F_i$ than $c_{\ell}$. If such a vertex does not exist, we terminate and set $\kappa = \ell$. Letting for every $0 < \ell \leq \kappa$, $e_{\ell} = (x_{\ell}, c_{\ell})$ be the incoming edge to $c_{\ell}$ on $P$. Then, we have by \Cref{def:coregraph}, that the image $\hat{e}_{\ell}$ of $e_{\ell}$ in $\cC(G, F_i, \wstr_i)$ has $l_{\cC(G, F_i, \wstr_i)}(\hat{e})\defeq \wstr_i(e) \cdot l(e)$ and clearly, the edges $\hat{e}_1, \hat{e}_2, \ldots, \hat{e}_{\kappa}$ form a path $P'$ between the vertices in $\cC(G, F_i, \wstr_i)$ that correspond to the connected components in $F_i$ that contain $c_0$ and $c_{\kappa}$, respectively. The length of this path in $\cC(G, F_i, \wstr_i)$, and thus in $\hat{G}$, is 
\[
l(P') = \sum_{\ell = 1}^{\kappa} l_{\cC(G, F_i, \wstr_i)}(\hat{e}_{\ell}) = \sum_{\ell = 1}^{\kappa} \wstr_i(e_{\ell}) \cdot l(e_{\ell}) \leq \sum_{e \in P} \wstr_i(e_{\ell}) \cdot l(e_{\ell}) = \wstr_i(P) \cdot l(P).
\]
Finally, we observe that by Property \ref{item:affectedVerticesBecomeRoots} from \Cref{lemma:globalstretch} and the fact that we initialize each of these data structures with set $A^{(0)}$, and $A$ then evolves by adding the endpoints of edges inserted and deleted to $A$, it is ensured that every vertex in $A$ is a root in the forest $F_i$. Since $p(b_j), p(b_{j+1}) \in A$ (the map $p$ has its image in $A$), we thus have that both of these vertices are the roots of the connected components that contain them. We thus conclude that $\hat{G}$ contains a path from $p(b_j)$ to $ p(b_{j+1})$ of total length $2 + \wstr_i(P) \cdot l(P) < 10 \cdot \gamma_{LSSF} \cdot \dist_G(b_j, b_{j+1})$, as desired. 
\end{proof}

\begin{claim}\label{clm:edgeLengths}
Every edge length in $\tilde{G}$ is in $\{0\} \cup [1, nL]$.
\end{claim}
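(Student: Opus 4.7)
The plan is to split into the two kinds of edges present in $\tilde{G}$ by its construction. Recall that $\tilde{G}$ is the direct sum of the core graphs $\mathcal{C}(G, F_0, \wstr_0), \ldots, \mathcal{C}(G, F_{\lambda-1}, \wstr_{\lambda-1})$, plus a length-$0$ edge between any two vertices of distinct core graphs that are identified with the same vertex of $G$. So every edge of $\tilde{G}$ is either one of these gluing edges (trivially of length $0$) or an edge $\hat e$ inherited from some core graph $\mathcal{C}(G, F_i, \wstr_i)$.

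For the inherited edges, by \Cref{def:coregraph}, if $\hat e$ has preimage $e \in E(G)$ then $l_{\tilde G}(\hat e) = \wstr_i(e) \cdot l(e)$. For the lower bound, use that $l(e) \ge 1$ because lengths in $G$ lie in $[1,L]$, and that $\wstr_i(e) \ge \str^{F_i,G}(e) \ge 1$ by \Cref{def:spanningforest} (the stretch, and hence any overestimate of it, is always at least $1$). Thus $l_{\tilde G}(\hat e) \ge 1$.

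The one delicate part is the upper bound $l_{\tilde G}(\hat e) \le nL$, and this is where I expect the main (but mild) obstacle to be. The natural bound is: the true stretch $\str^{F_i, G}(e) = 1 + l(F_i[u,v])/l(e)$ (or the analogous between-component expression) is at most $1 + (n-1)L/l(e)$, since the relevant forest path has at most $n-1$ edges each of length at most $L$, so $\str^{F_i, G}(e) \cdot l(e) \le l(e) + (n-1)L \le nL$. To transfer this bound from the actual stretch to the overestimate $\wstr_i(e)$, observe that we may assume without loss of generality that the overestimate returned by \Cref{lemma:globalstretch} is capped at the trivial bound $nL/l(e)$: replacing $\wstr_i(e)$ by $\min(\wstr_i(e), nL/l(e))$ still yields a valid stretch overestimate (since the actual stretch is at most this value), and only decreases the weighted total $\sum_{e \in E^{(0)}} w(e)\wstr(e)$, so Property~\ref{item:avgstretchboundLem} of \Cref{lemma:globalstretch} remains satisfied. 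With this capped overestimate, we have $\wstr_i(e) \cdot l(e) \le nL$ for every $e$, which completes the upper bound and proves the claim.
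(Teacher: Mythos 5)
Your proof is correct and follows the same high-level decomposition as the paper: every edge of $\tilde{G}$ is either a gluing edge of length $0$ or an edge inherited from some core graph $\mathcal{C}(G, F_i, \wstr_i)$, and for the latter one analyzes $\wstr_i(e) \cdot l(e)$. Your lower-bound argument ($l(e) \geq 1$ and $\wstr_i(e) \geq \str^{F_i,G}(e) \geq 1$ from \Cref{def:spanningforest}) is exactly right.

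The interesting point is your treatment of the upper bound, where you are actually more careful than the paper. The paper's proof asserts that ``each edge length corresponds to the length of a path in $G$,'' which is literally true only for $\str^{F_i,G}(e) \cdot l(e)$: by \Cref{def:spanningforest}, the quantity $\str^{F_i,G}(e)\cdot l(e)$ equals the length of a path of the form $F_i[\root^{F_i}(u),u]\concat (u,v)\concat F_i[v,\root^{F_i}(v)]$ (or the within-component analogue), which has at most $n-1$ edges and hence length at most $nL$. But the core graph assigns length $\wstr_i(e)\cdot l(e)$, where $\wstr_i$ is only guaranteed by \Cref{lemma:globalstretch} to be an \emph{overestimate} satisfying an aggregate weighted bound, with no stated per-edge upper bound. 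You correctly identified this as the delicate step, and your WLOG cap $\wstr_i(e) \gets \min(\wstr_i(e), nL/l(e))$ is a clean fix: the actual stretch never exceeds $nL/l(e)$ (so the capped value remains a valid overestimate at all times), and capping can only decrease the weighted sum in Property \ref{item:avgstretchboundLem} of \Cref{lemma:globalstretch}, so no other guarantee used in the construction is disturbed. This makes explicit an assumption the paper leaves implicit, namely that the overestimates do not blow up beyond the trivial path-length bound.
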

\begin{proof}
By \Cref{def:coregraph} and \Cref{def:spanningforest}, it is immediate that every edge length in $\hat{G}$ is in $[1, nL]$ since each edge length corresponds to the length of a path in $G$, and since each such path consists of at most $n$ vertices, the maximum length of any such path is upper bounded by $nL$. But since $\hat{G}$ differs from $\tilde{G}$ exactly by the fact all $0$ length edges are contracted, the claim follows.
\end{proof}

Next, we bound the recourse of $\tilde{G}$ and the running time.

To this end, we henceforth denote by $U_{F_i}^{(0)}, U_{F_i}^{(1)}, \ldots$ the update batches where $U_{F_i}^{(t)}$ consists of all edge deletions to $F_i$ that were issued by \Cref{algo:updateSparsifier} to the data structure $\mathcal{D}_i$ during the processing of the $t$-th update to $G$. That is, the forest $F_i$ at the time $t$ is obtained from applying all updates in  $U_{F_i}^{(0)}, U_{F_i}^{(1)}, \ldots, U_{F_i}^{(t)}$ to the initial forest $F_i$. We denote the initial forest $F_i$ by $F_i^{(-1)}$.

We denote by $U_{\tilde{G}}^{(0)}, U_{\tilde{G}}^{(1)}, \ldots$ the update batches where the batch $ U_{\tilde{G}}^{(t)}$ consists of all updates to $\tilde{G}$ while processing the $t$-th update to $G$. We again denote by $\tilde{G}^{(-1)}$ the initial graph $\tilde{G}$. We stress that the update batches $ U_{\tilde{G}}^{(t)}$ not only consist of edge insertion and deletions and insertions of isolated vertices, but also, of vertex splits.

\begin{claim}\label{clm:recGTilde}
The graph $\tilde{G}$ is initialized in time $\Otil(m)$ and consists of at most $\Otil(m)$ edges and $\Otil(m/k)$ vertices. The update batches $U_{\tilde{G}}^{(0)}, U_{\tilde{G}}^{(1)}, \ldots$ are such that for any time $t \geq 0$, $|U_{\tilde{G}}^{(t)}| \leq \Otil(1 + \sum_{0 \leq i < \lambda} |U_{F_i}^{(t)}|)$ and the update batch $U_{\tilde{G}}^{(t)}$ can be computed in time $\Otil(|U_{\tilde{G}}^{(t)}| \cdot k \cdot \Delta)$.
\end{claim}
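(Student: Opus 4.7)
The plan is to prove the three parts (initial size/time, recourse, per-update time) separately, all stemming from one structural observation: every change to $\tilde G$ is caused either by the current update to $G$ being forwarded to each of the $\lambda = O(\log m)$ data structures $\mathcal D_i$, or by an edge-deletion from some $F_i$, and an edge-deletion from $F_i$ splits a single tree component into two subtrees. This tight correspondence with forest updates lets me charge everything against the bounds of Lemma~\ref{lemma:globalstretch}.

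To build $\tilde G$ initially, I would first run Lemma~\ref{lemma:globalstretch} once per level to obtain $F_i^{(-1)}$ and $\wstr_i$ in $\Otil(m)$ time each, and then construct each core graph $\mathcal C(G, F_i^{(-1)}, \wstr_i)$ by a single linear scan over $E(G)$ that maps each edge $e = (u,v)$ to the core-graph edge $(\root^{F_i}(u), \root^{F_i}(v))$ with length $\wstr_i(e)\, l(e)$. Summed over $\lambda = O(\log m)$ levels this yields $\Otil(m)$ edges; the vertex count is $\Otil(m/k)$ because Property~\ref{item:cccountLem} guarantees $O(m/k)$ initial components per forest. The zero-length glue edges joining copies of the same $V(G)$-vertex across core graphs are installed during the scan while maintaining a dictionary that, for each $w \in V(G)$, stores the list of core graphs currently holding a root identified with $w$; this adds only $\Otil(n\lambda)$ further edges and negligible time.

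For recourse I would partition $U_{\tilde G}^{(t)}$ by cause. Forwarding the $G$-update to each $\mathcal D_i$ (via $\textsc{InsertEdge}$, $\textsc{DeleteEdge}$, or $\textsc{AddIsolatedVertex}$) directly modifies at most one edge or vertex in each core graph, contributing $O(\lambda)$ updates. Each deletion of an edge from some $F_i$ (whether triggered automatically by $\textsc{DeleteEdge}$ because the deleted $G$-edge belonged to $F_i$, or issued explicitly by $\textsc{DeleteEdgeFromForest}$ inside $\textsc{ReduceDegree}$) splits one tree component of $F_i$ and hence induces a single vertex-split in $\mathcal C(G, F_i, \wstr_i)$; by Property~\ref{item:affectedVerticesBecomeRoots} the newly promoted root is permanent, so I must additionally insert up to $\lambda-1$ glue edges linking its new core-graph copy to preexisting copies of that root in other core graphs. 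Each element of $U_{F_i}^{(t)}$ therefore accounts for $O(\lambda)$ updates in $\tilde G$, giving
\[
|U_{\tilde G}^{(t)}| \;\le\; O(\lambda) + O(\lambda)\sum_{0 \le i < \lambda} |U_{F_i}^{(t)}| \;=\; \Otil\Bigl(1 + \sum_{0 \le i < \lambda} |U_{F_i}^{(t)}|\Bigr).
\]

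For the per-update time, direct core-graph updates cost $\Otil(1)$ apiece using dynamic adjacency structures. A vertex-split requires rewiring the image edges incident to the smaller of the two resulting subtrees, and Property~\ref{item:degboundLem} of Lemma~\ref{lemma:globalstretch} bounds the $G$-edges incident to any tree component by $\Delta \cdot k$, so the split is handled in $\Otil(\Delta k)$ time. The accompanying glue edges are updated in $\Otil(\lambda)$ time per split using the dictionary. Summing over $|U_{\tilde G}^{(t)}|$ then gives $\Otil(|U_{\tilde G}^{(t)}| \cdot k \cdot \Delta)$, as claimed. I expect the main delicate point to be the glue-edge bookkeeping: one must be able, in $\Otil(1)$ per query, to determine exactly which other core graphs already contain a root identified with a newly promoted vertex, and to update the dictionary as splits happen. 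This is precisely what the $V(G)$-keyed dictionary above is designed to support.
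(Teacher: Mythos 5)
Your proposal is correct and follows essentially the same route as the paper's proof: build each core graph by mapping every $G$-edge to the pair of component roots of $F_i$ in $\Otil(m)$ per level, charge recourse to forest updates by noting each $F_i$-deletion yields one vertex split in $\tilde G$ plus up to $\lambda-1$ new glue edges, and bound the per-split cost by the $k\Delta$ incidence bound of Property~\ref{item:degboundLem}. You actually spell out the time analysis and glue-edge bookkeeping in a bit more detail than the paper, which simply invokes Lemma~\ref{lemma:globalstretch}.
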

\begin{proof}
We recall that $\tilde{G}$ is maintained as the direct sum of core graphs over the forests $F_0, F_1, \ldots, F_{\lambda-1}$ plus $0$ length edges between any pair of vertices in the resulting graph that are identified with the same vertex in $G$. 

For every $0 \leq i < \lambda$, given an initial forest $F_i$, it is straightforward using \Cref{def:coregraph} to create the core graph $\cC(G, F_i, \wstr_i)$ in time $\Otil(m)$ using a dynamic tree data structure to maintain the trees of forest $F_i$. Adding the $0$ length edges between vertices in different core graphs can be implemented straightforwardly in $\Otil(m)$ time. The number of edges is immediate from this analysis and the number of vertices in each core graph is at most $O(m/k)$ by \Cref{item:cccountLem}. Since $\tilde{G}$ is the direct sum of $O(\log m)$ core graphs, the bound on the vertices follows.

Next, let us analyze an update to $G$. Such an update can cause for any $0 \leq i < \lambda$ that $F_i$ further undergoes an additional number of at most $O(\log^2 m)$ deletions and at most one isolated vertex insertion to $F_i$ by \Cref{lemma:globalstretch}. Since $F_i$ has a monotonically decreasing edge set, each such update to $F_i$ can be encoded as a single update to $\tilde{G}$ in the form of the insertion of an isolated vertex or a vertex split. Additionally, if $G$ undergoes an edge insertion or deletion, we also have to insert or delete the corresponding projected edge from $\cC(G, F_i, \wstr_i)$. Finally, whenever a new root is added to the core graph $\cC(G, F_i, \wstr_i)$, up to $\lambda - 1$ new $0$ length edges are added to $\tilde{G}$. Thus the number of updates is bounded for every $0 \leq i < \lambda$, and our analysis of the recourse to $\tilde{G}$ thus follows from $\lambda = O(\log m)$.

The update time follows immediately from \Cref{lemma:globalstretch} and our discussion above.
\end{proof}

\paragraph{Analyzing the Final Sparsifier $H$.} Let us now analyze the final sparsifier $H$. We start by proving that the initialization procedure given in \Cref{algo:initSparsifier} is efficient and produces a graph $H$ with few vertices and edges.

\begin{claim}\label{clm:initSparisfier}
Choosing $\gamma_{degConstr}$ to be a reasonably large value such that $\gamma_{degConstr} = \Otil(|V(\tilde{H})|^{1/K} \cdot \gamma_{\ell})$, we have that \Cref{algo:initSparsifier} can be implemented in time $\Otil(m \gamma_{\ell} + mk^4+m\Delta)$, and that after the algorithm terminates, graph $\tilde{H}$ has at most $\Otil(m \gamma_{\ell} / k)$ vertices and edges.
\end{claim}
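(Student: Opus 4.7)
I split the proof into a running-time accounting and a sparsity accounting, with the latter carrying the subtle argument.

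\textbf{Runtime decomposition.} I would track the cost of \Cref{algo:initSparsifier} across four setup phases plus the while loop. The $\lambda = O(\log m)$ LSSF instances $\mathcal{D}_0, \ldots, \mathcal{D}_{\lambda-1}$ initialize in $\Otil(m)$ total by \Cref{lemma:globalstretch}. Producing the pivot, ball, and path data, together with the projected collection $\projP$, costs $\Otil(mk^4)$ by \Cref{clm:runtimeTZ} and \Cref{clm:numberOfProjectedPathsPlusHopTotal}. Assembling the direct sum $\tilde G$ of core graphs costs $\Otil(m)$ by \Cref{clm:recGTilde}, and the initial spanner $\tilde H^{(0)}$ is produced in $\Otil(|E(\tilde G)|\gamma_\ell)=\Otil(m\gamma_\ell)$ by \Cref{thm:spanner} with parameter $\Lambda=K$. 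Inside the while loop, a single call to $\textsc{ReduceDegree}(v,i,\gamma_{degConstr})$ reads the $|E_i|$ edges incident to $v$, applies \textsc{FindSets} on the tree $T_i$ in $O(|E_i|+|V(T_i)|)=O(k\Delta)$ time by \Cref{item:degboundLem} of \Cref{lemma:globalstretch} and \Cref{thm:partitionFrederickson}, and forwards the $|E'|$ edge removals to $\mathcal{D}_i$, $\tilde G$, and $\tilde H$ at cost $\Otil(|E'|)$ per data structure, modulated by the spanner factor from \Cref{thm:spanner}. The $m\Delta$ term emerges from the cumulative FindSets/degree-enumeration work.

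\textbf{Sparsity bound.} By \Cref{thm:spanner} applied to $\tilde G$ with $|V(\tilde G)|=\Otil(m/k)$, the starting $\tilde H^{(0)}$ already has $\gamma_{recES}\cdot|V(\tilde G)|\cdot\gamma_\ell=\Otil(m\gamma_\ell/k)$ edges, matching the claimed bound. It therefore suffices to show that the while loop inflates $|E(\tilde H)|$ and $|V(\tilde H)|$ by at most polylogarithmic factors. For a single call on $(v,i)$, \Cref{thm:partitionFrederickson} yields $|E'|\le |E_i|/\gamma_{degConstr}$ tree edges removed, each of which triggers one vertex split in $\tilde G$. By \Cref{thm:spanner}, a vertex split causes at most $\gamma_{recES}\cdot|V(\tilde G)|^{1/K}\gamma_\ell$ edge insertions or deletions in $\tilde H$, and the calibrated choice $\gamma_{degConstr}=\Otil(|V(\tilde H)|^{1/K}\gamma_\ell)$ makes this at most $\Otil(\gamma_{degConstr})$, so a single call alters $\tilde H$ by $\Otil(|E_i|)$ edges and creates $\Otil(|E_i|/\gamma_{degConstr})$ new vertices.

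\textbf{Main obstacle.} The real challenge is bounding the total work over all while-loop iterations and showing that the spanner's own edge insertions cannot indefinitely re-populate $v$'s degree above the threshold. I plan to use the \textsc{FindSets} guarantee of \Cref{thm:partitionFrederickson} twice: once to show that after the call $v$'s degree with respect to the edges of the old $E_i$ drops to $\le \Delta\gamma_{degConstr}$, and once to distribute the spanner-added edges across the newly created split vertices so that no single vertex absorbs more than $\Delta\gamma_{degConstr}$ of them. The balancing choice of $\gamma_{degConstr}$ is what makes these two bounds compatible, and it allows me to charge each call against a potential such as $\Phi = |E(\tilde H)| + \sum_{u,i}\max\bigl(0,\deg_{\tilde H\cap\mathcal{C}(G,F_i,\wstr_i)}(u)-\Delta\gamma_{degConstr}\bigr)$ that starts at $\Otil(m\gamma_\ell/k)$ and strictly decreases by $\Omega(\Delta\gamma_{degConstr})$ per call. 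This bounds the total number of iterations by $\Otil(m\gamma_\ell/(k\Delta\gamma_{degConstr}))$, which, combined with the per-iteration bounds above, yields the claimed initialization time $\Otil(m\gamma_\ell+mk^4+m\Delta)$ as well as the sparsity bound $|V(\tilde H)|,|E(\tilde H)|=\Otil(m\gamma_\ell/k)$.
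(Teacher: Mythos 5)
Your overall structure tracks the paper closely: the same per-phase runtime accounting, the same potential function centered on degree excesses (you add an $|E(\tilde H)|$ term that the paper folds in via the bound $\Phi \le 2|E(\tilde H)|$, but that's cosmetic), and the same idea of calibrating $\gamma_{degConstr}$ against the spanner's per-split recourse $\gamma = \Otil(|V(\tilde H)|^{1/K}\gamma_\ell)$ so the potential still drops. Two details, however, do not survive scrutiny.

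First, the proposed ``second use of \textsc{FindSets}'' to distribute the spanner's newly inserted edges across the freshly split vertices is not a real mechanism. \textsc{FindSets} operates only on the forest $T_i$ and on the pre-existing edge set $E_i$; it has no control over which vertices of $\tilde H$ receive the edges that \Cref{thm:spanner} inserts in reaction to the split. The paper never attempts such a re-balancing during initialization: it simply charges each spanner insertion as a unit increase to $\Phi$, and the while loop continues until the degree threshold holds. (The max-degree guarantee you were implicitly reaching for is only established later, for the update phase, via the Pile Splitting Game in \Cref{clm:degreeUpperBoundH}.)

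Second, your sparsity charge has a genuine gap. You bound the number of while-loop iterations by $\Otil(m\gamma_\ell/(k\Delta\gamma_{degConstr}))$ using the potential drop $\Omega(\Delta\gamma_{degConstr})$ per call and multiply by the per-call edge churn $\Otil(|E_i|) \le \Otil(k\Delta)$; this yields $\Otil(m\gamma_\ell/\gamma_{degConstr})$ edges added to $\tilde H$. That matches the claimed $\Otil(m\gamma_\ell/k)$ only if $\gamma_{degConstr} \ge k$, which fails for the paper's parameter choices ($\gamma_{degConstr} = \Otil(m^{1/K}\gamma_\ell)$ is much smaller than $k = m^{1/\Lambda}$). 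The fix is to use the \emph{stronger} potential drop that \textsc{FindSets} actually delivers: since all but one component of $T_i\setminus E'$ is incident to $\ge \gamma_{degConstr}$ edges, one has $\deg_{\tilde H\cap\cC(G,F_i,\wstr_i)}(v) \ge |E'|\gamma_{degConstr}$, so each call drops $\Phi$ by $\Omega(\gamma_{degConstr}|E'|)$, giving $\sum |E'| = \Otil(m\gamma_\ell/(k\gamma_{degConstr}))$. Then the spanner's total insertions are $\gamma\sum|E'| = \Otil(m\gamma_\ell/k)$, which is what the claim needs. The paper performs exactly this $\sum|E'|$-based accounting; the iteration-count bound alone is not enough for the sparsity half of the statement.
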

\begin{proof}
We start by analyzing the runtime of the algorithm, and the number of edges and vertices of $\tilde{H}$ just up until the first iteration of the while-loop starting in \Cref{lne:whileLoopDegreeConstraint} in \Cref{algo:initSparsifier}.

By \Cref{clm:numberOfProjectedPathsPlusHopTotal}, we have that the collection of paths $\projP$ takes time $\Otil(mk^4)$ to compute. Before starting the data structure from \Cref{lemma:globalstretch} to maintain forests $F_i$, we have to construct $\projP_i$ which can again be achieved in time $\Otil(mk^4)$ by explicitly checking each path in $\projP$. This allows us to compute the weight function $w_i$ within the same time bound. Initializing the data structure \Cref{lemma:globalstretch} further takes time $\Otil(m)$ for each forest $F_i$. Since $\lambda = O(\log m)$ by \Cref{lem:LSSFsCoverPaths}, we can thus upper bound the total time to initialize all forests $F_0, F_1, \ldots, F_{\lambda- 1}$ by $\Otil(mk^4)$. It is not hard to see from \Cref{def:coregraph} and \Cref{def:spanningforest} that given these forests, the graph $\tilde{G}$ can be initialized in time $\Otil(m)$. Finally, by \Cref{thm:spanner}, the time to initialize $\tilde{H}$ is $\Otil(m \gamma_{\ell})$, as desired.

From \Cref{lemma:globalstretch}, \Cref{def:coregraph} and \Cref{def:spanningforest}, and \Cref{thm:spanner}, we can further bound the number of initial edges and vertices in $\tilde{H}$ by $\Otil(m \gamma_{\ell}/k)$.

We next analyze the while-loop iterations of the while-loop starting in \Cref{lne:whileLoopDegreeConstraint} in \Cref{algo:initSparsifier}. To this end, we define the following potential function for every $v \in V(\tilde{H})$ by 
\[\Phi(v) \defeq \sum_{0 \leq i < \lambda} \max\{\deg_{\cC(G, F_i, \wstr_i) \cap \tilde{H}}(v) - \Delta \cdot \gamma_{degConstr} , 0\}
\]
and define the global potential $\Phi(\tilde{H}) \defeq \sum_{v \in V(\tilde{H})} \Phi(v)$. It is not hard to see that initially $\Phi(\tilde{H}) \leq 2 \cdot |E(\tilde{H})| = \Otil(m\gamma_{\ell}/k)$. 

Next, observe that whenever the while-loop is run, and in the while-loop statement the procedure $\textsc{ReduceDegree}(v, i, \gamma_{degConstr})$ given in \Cref{algo:reduceDegree}, it computes a set of edges $E'$ to be deleted from $T_i$ being the tree rooted at $v$ in forest $F_i$ via the data structure $\mathcal{D}_i$. Observe that the guarantees of \Cref{thm:partitionFrederickson} imply that the degree of $v$ w.r.t. $\cC(G, F_i, \wstr_i) \cap \tilde{H}$ when entering the call to $\textsc{ReduceDegree}(v, i, \gamma_{degConstr})$ is at least $|E'| \cdot \gamma_{degConstr}$ since all but one component in $T_i \setminus E'$ are incident to at least $z$ edges in $E_i$ and the number of components is $|E'| + 1$. Similarly, \Cref{thm:partitionFrederickson} implies that the runtime required by the subroutine is at most $O(|E'| \gamma_{degConstr} \Delta)$. 

Consider now the effect of not feeding $E'$ into $\mathcal{D}_i$, but instead of directly deleting edges $E'$ from $F_i$, and then splitting the corresponding vertices in $\tilde{G}$ and $\tilde{H}$. This would directly yield a decrease of $\Phi(\tilde{H})$ by at least $\frac{|E'| \cdot \gamma_{degConstr}}{2}$ since by the while-loop condition we have that at least half the edge originally incident to $v$ in  $\cC(G, F_i, \wstr_i) \cap \tilde{H}$ contribute to $\Phi(\tilde{H})$, and since none of the at least $|E'| \cdot \gamma_{degConstr}$ edges originally incident to $v$ contributes to $\Phi(\tilde{H})$ after the set $E'$ is removed from $F_i$ and the vertex splits applied to $\tilde{H}$.

In fact, since $F_i$ has decreasing edge set under updates forwarded to \Cref{lemma:globalstretch} and ensures that edges $E'$ are deleted when forwarded, we have that if we implement the vertex splits suggested by $\mathcal{D}_i$ directly to $\tilde{G}$ and $\tilde{H}$ that the same drop in potential occurs since it only further splits the graphs. 

However, when using the data structures from \Cref{lemma:globalstretch} and \Cref{thm:spanner} to process the resulting vertex splits from $\mathcal{D}_i$, while it forwards these vertex splits directly to $\tilde{G}$ and $\tilde{H}$, it additionally updates $\tilde{H}$ via up to $\gamma \cdot |E'|$ for some $\gamma = \Otil(|V(\tilde{H})|^{1/K} \gamma_{\ell})$ updates to $\tilde{H}$ edge insertions/ deletions. But these edge insertions/deletions can increase the potential $\Phi(\tilde{H})$ by at most $\gamma \cdot |E'|$. Thus, choosing $\gamma_{degConstr} = 2 \gamma$, we ensure that the potential $\Phi(\tilde{H})$ drops by at least $\frac{1}{2} \gamma_{degContr} \cdot |E'|$. 

It is now straightforward to see that the total number of edges that are fed to any of the data structure $\mathcal{D}_i$ to be removed from some forest $F_i$ is at most $\Otil(m / (k \cdot \gamma_{degContr}))$. This bounds the runtime required by all calls to the procedure from \Cref{thm:partitionFrederickson} by $\Otil(m \Delta /k)$. It is not hard to verify that this in fact dominates asymptotically the runtime of the while-loop. It further bounds the number of vertices and edges in the graph $\tilde{H}$ obtained at the end of \Cref{algo:initSparsifier} by $\Otil(m \gamma_{\ell}/ k + m \gamma/ (k \gamma_{degContr})) = \Otil(m \gamma_{\ell}/ k)$.

Finally, we observe that $H$ can be obtained in time $O(|E(\tilde{H})|)$ and that the number of edges is equal to the number of edges in $\tilde{H}$ and the number of vertices at most the number of vertices in $\tilde{H}$ since $H$ is obtained via a set of contractions applied to $\tilde{H}$.
\end{proof}

We next prove that the maximum degree of $\tilde{H}$ is bounded at all times.

\begin{claim}\label{clm:degreeUpperBoundH}
At any time, $\tilde{H}$ has maximum degree at most $\Otil(\Delta \cdot m^{2/K} \cdot (\gamma_{\ell})^2)$. For any time $t \geq 1$, we apply $\Otil(\Delta \cdot m^{2/K} \cdot (\gamma_{\ell})^2)$ changes to the forests $F_0, F_1, \ldots, F_{\lambda-1}$ and the update batch $|U_H^{(t)}| = \Otil(\Delta \cdot m^{2/K} \cdot (\gamma_{\ell})^2)$.
\end{claim}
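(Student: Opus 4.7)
The plan is to extend the excess-degree potential argument used in the proof of \Cref{clm:initSparisfier} to the dynamic setting. Set
\[
\Phi(\tilde H) \defeq \sum_{v \in V(\tilde H)} \sum_{0 \le i < \lambda} \max\!\left(\deg_{\tilde H \cap \cC(G, F_i, \wstr_i)}(v) - \Delta \gamma_{degConstr},\, 0\right),
\]
so that the max degree $M$ over all core graphs satisfies $M \le \Phi + \Delta \gamma_{degConstr}$. Recall that $\gamma_{degConstr} = 2\gamma$ where $\gamma = \Otil(|V(\tilde H)|^{1/K} \gamma_{\ell}) = \Otil(m^{1/K} \gamma_{\ell})$ is the spanner overhead from \Cref{thm:spanner}; this choice, inherited from \Cref{clm:initSparisfier}, is exactly tuned so that every ReduceDegree call on a vertex of degree strictly greater than $2\Delta \gamma_{degConstr}$ causes a net non-increase of $\Phi$.

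I proceed by induction on the update index $t$. Given a single update handled by \Cref{algo:updateSparsifier}, the forward to each $\mathcal{D}_i$ produces $\Otil(1)$ forest edge deletions by Property~\ref{item:cccountLem} of \Cref{lemma:globalstretch}; combined with the edge update to $\tilde G$ itself, $|U_{\tilde G}^{(t)}| = \Otil(1)$ at the moment ReduceDegree is about to be called. By \Cref{thm:spanner} (with $\Lambda = K$), this triggers at most $\Otil(m^{1/K} \gamma_{\ell}) = \Otil(\gamma_{degConstr})$ edge updates to $\tilde H$, each inflating $\Phi$ by at most $2$. The ReduceDegree call on the argmax $(v^*, i^*)$ with $D^* = \deg_{\tilde H \cap C_{i^*}}(v^*)$ then, by \Cref{thm:partitionFrederickson}, yields $|E'| \le D^*/\gamma_{degConstr}$ splits in $\tilde G$; the spanner's $\Otil(|E'|\, m^{1/K} \gamma_{\ell}) \le \tfrac{1}{2}\gamma_{degConstr}|E'|$ downstream updates inflate $\Phi$ by at most this amount, while the split of $v^*$ itself drops $\Phi$ by at least $\tfrac{1}{2}\gamma_{degConstr}|E'|$ exactly when $D^* > 2\Delta \gamma_{degConstr}$. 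This is the calculation in \Cref{clm:initSparisfier} carried over verbatim.

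I split into two regimes. In the small-$D^*$ regime where $D^* \le 2\Delta\gamma_{degConstr}$, the max degree at the time of the ReduceDegree call is already $\le 2\Delta\gamma_{degConstr}$, and the subsequent spanner updates deposit at most $\Otil(|E'| \gamma_{degConstr}) \le \Otil(\Delta\gamma_{degConstr})$ additional edges on any single vertex, leaving the post-update max at $\Otil(\Delta\gamma_{degConstr})$, well within the target. In the large-$D^*$ regime where $D^* > 2\Delta\gamma_{degConstr}$, the ReduceDegree call is $\Phi$-neutral (or better), so the only net contribution to $\Phi$ from the step is the $\Otil(\gamma_{degConstr})$ from the spontaneous phase. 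To control the cumulative spontaneous drift I use a charging argument: each large-$D^*$ step bleeds $\Phi$ by at least $D^* - \Delta\gamma_{degConstr} = \Omega(\Delta\gamma_{degConstr})$, which amortizes against $\Omega(\Delta)$ preceding small-$D^*$ $\Otil(\gamma_{degConstr})$ increments. Combining the two regimes, $\Phi$ stays bounded by $\Otil(\Delta\gamma_{degConstr}^2)$ at all times, and hence $M \le \Otil(\Delta\gamma_{degConstr}^2) = \Otil(\Delta m^{2/K} \gamma_{\ell}^2)$.

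Given the max-degree bound, the remaining two assertions are immediate. The total forest changes in a single update are $\Otil(\lambda) + |E'| \le \Otil(D^*/\gamma_{degConstr}) = \Otil(\Delta\gamma_{degConstr}) \le \Otil(\Delta m^{2/K} \gamma_{\ell}^2)$. By \Cref{thm:spanner}, $|U_{\tilde H}^{(t)}| \le \Otil((|E'|+1)\, m^{1/K}\gamma_{\ell}) \le \Otil(\Delta m^{2/K} \gamma_{\ell}^2)$, and the emulation of each vertex split of $\tilde H$ into edge deletions, an isolated-vertex insertion, and re-insertions is bounded in size by the (now bounded) max degree of $\tilde H$; since only $\Otil(|E'|)$ vertex splits occur this is absorbed into the same $\Otil$-bound for $|U_H^{(t)}|$. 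The main obstacle is to make the charging argument of the large-$D^*$ case precise; the linchpin is that whenever $\Phi$ threatens its maximum the argmax vertex necessarily has $D^* > 2\Delta\gamma_{degConstr}$, so each such step does $\Omega(D^*)$ work against $\Phi$ and is self-amortizing against the slow spontaneous drift of $\Otil(\gamma_{degConstr})$ per update.
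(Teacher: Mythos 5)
Your proposal replaces the paper's appeal to the Pile Splitting Game theorem (\Cref{thm:pileSplittingThm}, from~\cite{levcopoulos1988balanced}) with a potential-function argument. This substitution does not work, and the gap is fundamental rather than a matter of missing details.

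The core issue is that you bound the \emph{maximum} degree $M$ via the \emph{global} potential $\Phi(\tilde H) = \sum_{v,i} \max(\deg_{\tilde H \cap \cC(G,F_i,\wstr_i)}(v) - \Delta\gamma_{degConstr}, 0)$ and the inequality $M \le \Phi + \Delta\gamma_{degConstr}$. The inequality is true, but it is only useful if $\Phi$ is small, and your claim that ``$\Phi$ stays bounded by $\Otil(\Delta\gamma_{degConstr}^2)$ at all times'' is not justified and is in fact false: immediately after initialization, the while-loop termination condition only guarantees $\deg_{\tilde H \cap \cC(G,F_i,\wstr_i)}(v) \le 2\Delta\gamma_{degConstr}$ for every $v$, which permits $\Phi$ to be as large as $\Delta\gamma_{degConstr} \cdot |V(\tilde H)| = \Otil(\Delta\gamma_{degConstr} \cdot m\gamma_{\ell}/k)$ — a near-linear quantity in $m$, vastly exceeding the subpolynomial $\Otil(\Delta\gamma_{degConstr}^2)$ you claim. (Indeed, $\Cref{clm:initSparisfier}$ only shows $\Phi$ does not increase during initialization, not that it shrinks to a subpolynomial value.) Because $\Phi$ aggregates excess over \emph{all} vertices, a large $\Phi$ arising from many vertices each with moderate excess reveals nothing about the maximum. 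Your intuition that ``whenever $\Phi$ threatens its maximum the argmax vertex necessarily has $D^* > 2\Delta\gamma_{degConstr}$'' is not correct: $\Phi$ can grow large while every degree stays just above threshold, so no single ReduceDegree call does $\Omega(\Phi)$ work.

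The charging argument is also internally inconsistent: you first state that in the large-$D^*$ regime ``the ReduceDegree call is $\Phi$-neutral (or better),'' and then that ``each large-$D^*$ step bleeds $\Phi$ by at least $D^* - \Delta\gamma_{degConstr} = \Omega(\Delta\gamma_{degConstr})$.'' The direct drop $D^* - \Delta\gamma_{degConstr}$ and the spanner's reinflation by $\gamma|E'| \le \gamma D^*/\gamma_{degConstr} = D^*/2$ are part of the same step, and the calculation in $\Cref{clm:initSparisfier}$ (with $\gamma_{degConstr} = 2\gamma$) is engineered precisely so that these cancel — leaving no surplus drop to amortize against the spontaneous $\Otil(\gamma_{degConstr})$ drift. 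The Pile Splitting Game theorem, which the paper uses, proves the subtle bound $O(\delta + \eta\log m)$ on the \emph{max} pile — not the total stones — by an argument that tracks the sorted order of pile sizes. A simple potential on total excess does not reproduce that bound: the ``second-largest pile grows undeterred until it becomes the max'' dynamics require the more sophisticated analysis from~\cite{levcopoulos1988balanced}. You should instead set up the explicit reduction to the Pile Splitting Game as the paper does — piles indexed by $(v,i)$, stone counts equal to degrees, $\delta = 2\gamma_{degConstr}\Delta$, $\eta = \Otil(\Delta m^{2/K}\gamma_{\ell}^2)$ via the nested induction — and invoke $\Cref{thm:pileSplittingThm}$ directly.
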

\begin{proof}
Since there are only $\lambda = \Otil(1)$ core graphs, it suffices to show that the maximum degree $\deg_{\tilde{H} \cap  \cC(G, F_i, \wstr_i)}(v)$ is bounded by $\Otil(\Delta \cdot m^{1/K} \cdot \gamma_{\ell})$ for every $v$ and $i$ to prove the claim.

Clearly, the maximum degree $\deg_{\tilde{H} \cap  \cC(G, F_i, \wstr_i)}(v)$ is bounded by $\delta = 2 \cdot \gamma_{degConstr} \cdot \Delta$ after \Cref{algo:initSparsifier} terminates since otherwise the while-loop starting in \Cref{lne:whileLoopDegreeConstraint} of the initialization algorithm would not have terminated.

Note that thereafter, forwarding a single update to $G$ to the data structures maintaining the forests, $\tilde{G}$ and $\tilde{H}$ causes at most $\gamma = \Otil(m^{1/K} \gamma_{\ell})$ recourse to $\tilde{H}$, and since we have by induction on time that $\tilde{H}$ had maximum degree $\Otil(\Delta \cdot m^{1/K} \cdot \gamma_{\ell})$, this then implies that when the vertex $v$ and index $i$ are picked in \Cref{lne:pickHighestDegreeVertex}, the procedure in \Cref{algo:reduceDegree} finds at most $\Otil\left(\frac{\Delta \cdot m^{1/K} \cdot \gamma_{\ell} +  \gamma}{\gamma_{degConstr}}\right)$ edges $E'$ to delete from $F_i$. But since these edge deletions and the update to $G$ can cause at most $(1+ |E'|) \gamma$ recourse to $\tilde{H}$, we have that the total recourse per update to $G$ is bounded by $\eta = \Otil(\Delta \cdot m^{2/K} \cdot (\gamma_{\ell})^2)$.

To bound the maximum degree of any vertex w.r.t. to graph $\tilde{H} \cap  \cC(G, F_i, \wstr_i)$, we next consider the following two-player game. 

\begin{definition}[Pile Splitting Game, see \cite{levcopoulos1988balanced}]
Given a row of piles $S_0, S_1, \ldots, S_k$. Initially, each pile contains at most $\delta$ stones. Then, consider the following two-player game where players take turns:
\begin{itemize}
    \item at the beginning of every round, the \emph{stone placer} adds up to $\eta$ new stones on the various different piles and deletes an arbitrary number of stones from the piles, 
    \item then, the \emph{pile splitter} takes the pile $S_i$ that contains the most stones and splits $S_i$ into the $S_{k+1}, S_{k+2}, \ldots, S_{k + k'}$ for some $k'$ and $S_i$ such that after the splitting each such pile contains at most half of the stones originally placed on $S_i$.
\end{itemize}
We call this game the \emph{Pile Splitting Game}.
\end{definition}
\begin{theorem}[see \cite{levcopoulos1988balanced}]\label{thm:pileSplittingThm}
The maximum number of stones on any pile throughout the entire \emph{Pile Splitting Game} is at most $O(\delta + \eta \log m)$ where $m$ is the total number of stones placed by the end of the game.
\end{theorem}

Now, let there be a pile $S_j$ for every $0 \leq i < \lambda$ and $v \in \cC(G, F_i, \wstr_i)$ and let the initial degree $\deg_{\tilde{H} \cap  \cC(G, F_i, \wstr_i)}(v)$ be the number of stones on $S_j$ upon initialization. We have from our previous analysis that every pile $S_j$ has at most $\delta$ stones. Further, the recourse to $G$ can then be mapped to the game as follows: for every edge insertion/ deletion, we add/ remove a stone to the endpoints of the edge in $H$ such that the number of stones corresponds to the updated degree for every pile/ vertex. An isolated vertex insertion to $G$ adds the corresponding vertex that is added to $\cC(G, F_i, \wstr_i)$ to a new empty pile. Note that per update to $G$, we add at most $2\eta$ stones by our analysis above.

We then map vertex splits of the current update phase to $G$ to the updates of the pile splitter where we use that \Cref{algo:updateSparsifier} invokes \Cref{algo:reduceDegree} on the vertex with maximum degree which corresponds to the highest pile. 

We note that there is only a small mismatch: if the maximum degree $\deg_{\tilde{H} \cap  \cC(G, F_i, \wstr_i)}(v)$ is at most $2 \cdot \gamma_{degConstr} \cdot \Delta$, the procedure from  \Cref{algo:reduceDegree} does not necessarily split the vertex in $\tilde{H}$ such that it has half its degree only incident to each newly split vertex. However, in this case, we have that all degrees are at most $\delta$ and we can thus restart the \emph{Pile Splitting Game}. 

This yields that for any vertex $v \in \cC(G, F_i, \wstr_i)$, we have that the degree $\deg_{\tilde{H} \cap  \cC(G, F_i, \wstr_i)}(v)$ is bounded by $\Otil(\delta + \eta \log m)$ by \Cref{thm:pileSplittingThm} which yields the claim.
\end{proof}

We can now prove \Cref{lma:workhorseSparsifier}. 

\begin{proof}[Proof of \Cref{lma:workhorseSparsifier}.]
We have from \Cref{clm:runtimeTZ} that the first 4 Properties of \Cref{lma:workhorseSparsifier}. It remains to establish  Property \ref{prop:workhorseDistancePreserve} and to give the runtime analysis. We start by doing the former.

We have from \Cref{clm:initSparisfier} that $H^{(0)}$ has at most  $\Otil(m \gamma_{\ell} / k)$ vertices and edges. We further have for every $t \geq 1$ that 
 \begin{enumerate}
        \item \underline{$|U_H^{(t)}| \leq \gamma_{recVS}$:} the bound follows from the fact that by \Cref{item:degboundLem} the number of updates at each time to $\tilde{H}$ is  $\Otil(\Delta \cdot m^{2/K} \cdot (\gamma_{\ell})^2)$ and that the maximum degree in $\tilde{H}$ is at most $\Otil(\Delta \cdot m^{2/K} \cdot (\gamma_{\ell})^2)$. But this yields that emulating the vertex splits to $\tilde{H}$ as described, which multiplies the number of updates by at most a multiple of the maximum degree in $\tilde{H}$ yields that the number of updates in $U_H^{(t)}$ is at most $\gamma_{recVS}$, since we choose $\gamma_{recVS}$ to be a reasonably large value in $\Otil(\Delta^2 \cdot m^{4/K} \cdot (\gamma_{\ell})^4)$.
        
        \item \underline{the maximum vertex degree of $H$ is at most $\gamma_{recVS} \cdot \Delta$:} this follows trivially from the fact that for each vertex $v$ in $G$, there are at most $\lambda = O(\log m)$ vertices in $\tilde{H}$ that are identified with $v$. Thus, the contraction procedure applied to $\tilde{H}$ to obtain $H$ increases the maximum degree by at most $\lambda$, and the claim now follows directly from  \Cref{item:degboundLem}.
        \item \underline{$H$ has lengths in $[1, nL]$:} this follows immediately from \Cref{clm:edgeLengths} and the fact that $0$ length edges in $\tilde{H}$ are contracted as they only exist between vertices in $\tilde{H}$ that identify with the same vertex in $G$.
        \item \underline{for every two vertices $u,v \in A$, we have $\dist_G(u,v) \leq \dist_{H}(u,v) \leq (\gamma_{\ell})^{O(K)} \cdot \dist_G(u,v)$:} the claim follows immediately from chaining \Cref{lma:vsLowerBoundStretch} and \Cref{lma:vsUpperBoundStretch} with the guarantees in \Cref{thm:spanner} and the fact that all vertices contracted in $\tilde{H}$ into a super-vertex in $H$ were already connected by a length $0$ path as can be seen from the guarantees of \Cref{thm:spanner}.
        \item \underline{for any edge $e = (u,v) \in H$, a $uv$-path $P$ in $G$ with $l_G(P) \leq l_H(e)$ can be outputted in time $O(|P|)$:} Every such edge corresponds to a projected edge in one of the core graphs, and the path between the endpoints $u$ and $v$ that are roots in this core graph can be returned straightforwardly.
\end{enumerate}

It remains to carry out the runtime analysis. Again, we use \Cref{clm:runtimeTZ} to upper bound the runtime related to maintaining pivots, balls, shortest paths and forest $F$. From \Cref{clm:initSparisfier}, we have that the rest of the data structure and graph $H$ can be initialized in the claimed time.

For the update time, we observe that from \Cref{clm:degreeUpperBoundH}, that the update to $H$ is obtained by first updating the data structures $\mathcal{D}_i$ with a total of $\Otil(\Delta \cdot m^{2/K} \cdot (\gamma_{\ell})^2)$ many updates. By \Cref{lemma:globalstretch}, this results in update time spend on updating these data structures of $\Otil(\Delta \cdot m^{2/K} \cdot (\gamma_{\ell})^2)$ and causes up to $\Otil(\Delta \cdot m^{2/K} \cdot (\gamma_{\ell})^2)$ updates to $\tilde{G}$. This in turn causes the algorithm from \Cref{thm:spanner} to process updates to $\tilde{G}$ and output $\tilde{H}$ to spend total time $\Otil(\Delta \cdot m^{2/K} \cdot (\gamma_{\ell})^{O(K^2)} m^{1/K} k^2)$ where we use that by the properties enforced by \Cref{lemma:globalstretch}, the maximum degree of any vertex in $\tilde{G}$ is $\Otil(k)$. This latter time bound subsumes the time spent in \Cref{algo:updateSparsifier} to compute the highest degree vertex and to find the edge set $E'$. Finally, the time to compute the update batch to update $H$ appropriately given $\tilde{H}$ is $\Otil(|U_H^{(t)}|)$. 
\end{proof}

\subsection{Mapping a Hierarchical Forest through the Vertex Sparsifier}
\label{subsec:mappingHierachVS}

In this section, we prove that we can extend \Cref{lma:workhorseSparsifier} such that given a hierarchical forest $F$ on $H$, we can map the tree to $G$ with similar guarantees. The precise statement is summarized below.

\begin{lemma}[Extension of \Cref{lma:workhorseSparsifier}]\label{lma:extensionOfMainThm}
Given inputs as in \Cref{lma:workhorseSparsifier} and let vertex set $A \subseteq V(G)$ and graph $H$ be maintained by the data structure from \Cref{lma:workhorseSparsifier}. 

Further, given a dynamic flat hierarchical forest $F$ over a monotonically increasing set $A_F \subseteq V(H)$ in graph $H$ along with vertex maps $\Pi_{A_F \mapsto V(F)}, \Pi_{V(F) \mapsto V(H)}$ and flat graph embedding $\Pi_{F \mapsto H}$, along with parameters $\gamma_{congRep}$ and $\gamma_{recRep}$ such that at any time  $\econg(\Pi_{F \mapsto H})$ is  bounded by $\gamma_{congRep}$ and the number of changes to $F$ caused by an update to $G$ is upper bounded by $\gamma_{recRep}$. We require the vertex maps to be such that whenever a vertex is added to the pre-image, its image remains constant for the rest of the algorithm.

Then, the algorithm can maintain a flat hierarchical forest $F'$ over set $A_F$ in graph $G$ along with vertex maps $\Pi_{A_F \mapsto V(F')}, \Pi_{V(F') \mapsto V(G)}$ and flat graph embedding $\Pi_{F' \mapsto G}$ such that at any time $\econg(\Pi_{F' \mapsto G})$ is bounded by $\Otil(\gamma_{congRep} \cdot \Delta \cdot \gamma_{recVS})$ and the number of changes to $F'$ per update to $G$ is $\Otil(\gamma_{recRep} + \gamma_{congRep} \cdot \gamma_{recVS} \cdot \Delta)$, and we have for any two vertices $u,v \in A_F$ that $l_G(\Pi_{F' \mapsto G}(\pi_{F'}(\Pi_{A_F \mapsto V(F')}(u),\Pi_{A_F \mapsto V(F')}(v)))) \leq l_H(\Pi_{F \mapsto H}(\pi_{F}(\Pi_{A_F \mapsto V(F)}(u),\Pi_{A_F \mapsto V(F)}(v))))$. Further, we have that the vertex maps are such that whenever a vertex is added to the pre-image, its image remains constant for the rest of the algorithm.

Having $A_F, F$ and the vertex maps and graph embedding associated with $F$ maintained, and $H$ as maintained by the data structure from \Cref{lma:workhorseSparsifier}, the algorithm to maintain $F'$ and the vertex maps and graph embedding associated with $F'$ requires additional initialization time $\tilde{O}(m \cdot \gamma_{congRep})$ and processes every update with additional worst-case time $\Otil(\gamma_{recRep} + \gamma_{congRep} \cdot \gamma_{recVS} \cdot \Delta)$.
\end{lemma}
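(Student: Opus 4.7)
I unfold each edge of $F$ through the multi-tree core-graph machinery that \Cref{lma:workhorseSparsifier} internally maintains. For each $u \in V(H) \subseteq V(G)$, let $T_u$ be the direct sum (glued at a single copy of $u$) of all subtrees of the forests $F_0,\ldots,F_{\lambda-1}$ rooted at $u$; since $F_i \subseteq G$, each $T_u$ flatly embeds into $G$. Define $F'$ as the direct sum of all such $T_u$, augmented, for each $\hat{e} = (\hat{u}, \hat{v}) \in F$, by an added edge as follows. Let $\bar{e} = \Pi_{F \mapsto H}(\hat{e})$, originating from a core graph $\cC(G, F_i, \wstr_i)$ with $G$-preimage $e = (x,y)$, so that $\root^{F_i}(x) = u$ and $\root^{F_i}(y) = v$ where $u,v$ are the $V(H)$-images of $\hat{u},\hat{v}$. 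The added edge connects the copy of $x$ in the $F_i$-part of $T_u$ to the copy of $y$ in the $F_i$-part of $T_v$ and flat-embeds to $e$. Each $\hat{u} \in A_F$ is mapped to the shared-root copy of $T_u$, identified with $u \in V(G)$.

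\textbf{Correctness.} Each $T_u$ is a tree, and contracting every $T_u$ in $F'$ yields a graph isomorphic to $F$; since $F$ is a forest, so is $F'$. Each $F$-edge $\hat{e}$ unfolds in $F'$ to a path of $G$-length $l(F_i[u,x]) + l(e) + l(F_i[y,v]) = \str^{F_i,G}(e) \cdot l(e) \le \wstr_i(e) \cdot l(e) = l_H(\bar{e})$ by \Cref{def:spanningforest} and \Cref{def:coregraph}. Summing along any $F$-path between nodes of $A_F$ gives the required distance inequality.

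\textbf{Congestion.} Each $G$-edge $e$ appears in $F'$ (i) at most once per $F_i$ as a $T_u$-tree edge, contributing multiplicity $\lambda$, and (ii) as an added edge for every $F$-edge whose $H$-image has $G$-preimage $e$; at most $\lambda$ distinct $H$-edges share the preimage $e$ (one per core graph where $e$ appears as a non-tree edge), and each is hit by at most $\gamma_{congRep}$ $F$-edges, giving multiplicity $\lambda \cdot \gamma_{congRep}$. Total congestion is $\Otil(\gamma_{congRep})$, within the stated $\Otil(\gamma_{congRep} \cdot \Delta \cdot \gamma_{recVS})$ bound.

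\textbf{Maintenance and main obstacle.} Each update to $G$ triggers $\Otil(1)$ edge deletions per $F_i$ via \Cref{lemma:globalstretch}, $\gamma_{recRep}$ changes to $F$, and $\gamma_{recVS}$ changes to $H$. A single deletion from $F_i$ detaches a subtree $S$ from some $T_u$; $S$ is incident to at most $\Delta \cdot k$ edges by Property~\ref{item:degboundLem}, so we allocate fresh $F'$-nodes for the new root $u'$ (never reusing retired ones), regraft $S$ into a newly created $T_{u'}$, and rewire only the added edges whose endpoints sit in $S$ (at most $\gamma_{congRep}$ per affected $H$-edge, so $\Otil(\gamma_{congRep} \cdot \Delta)$ total). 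Combining all sources gives recourse $\Otil(\gamma_{recRep} + \gamma_{congRep} \cdot \gamma_{recVS} \cdot \Delta)$. The chief technical subtlety is maintaining the invariant that $\Pi_{V(F') \mapsto V(G)}$ is fixed on existing nodes; allocating fresh nodes for every new root that \Cref{lemma:globalstretch} produces handles this, leveraging Property~\ref{item:affectedVerticesBecomeRoots} to ensure new roots appear only for vertices newly added to $A$. Initialization materializes all $T_u$ (size $O(m\lambda)$) and processes each $F$-edge once, taking $\tilde{O}(m \cdot \gamma_{congRep})$ time.
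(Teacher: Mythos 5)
Your construction omits a crucial component of the paper's proof: the paper does not use a \emph{single} copy of each tree $T_v$ but rather maintains $\gamma'_{congRep} \defeq \gamma_{congRep} \cdot \Delta \cdot \gamma_{recVS}$ copies $T_{v,0}, T_{v,1}, \ldots$ for every $v \in V(H)$, together with an injective assignment map $\Lambda$ that routes each node $x \in V(F)$ to its own private copy $T_{\Pi_{V(F)\mapsto V(H)}(x),\Lambda(x)}$. This multiplicity is not cosmetic. Since $F$ is a \emph{hierarchical} forest, the vertex map $\Pi_{V(F)\mapsto V(H)}$ need not be injective; the vertex congestion of $\Pi_{F\mapsto H}$ (bounded by $\gamma_{congRep}\cdot\Delta\cdot\gamma_{recVS}$) only controls, rather than eliminates, the number of $F$-nodes identified with the same $H$-vertex. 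If two nodes $\hat{u}_1,\hat{u}_2\in V(F)$ lie in the same tree of $F$ and both map to the same $v\in V(H)$, your single shared $T_v$ would glue them together: consider a path $\hat{u}_1$--$\hat{w}$--$\hat{u}_2$ in $F$; your added edges for $(\hat{u}_1,\hat{w})$ and $(\hat{w},\hat{u}_2)$ both land in the same pair of trees $T_v, T_w$, which are already connected after adding the first edge, so adding the second creates a cycle. Your claim that ``contracting every $T_u$ in $F'$ yields a graph isomorphic to $F$'' is exactly where this breaks down: the contraction yields $F$ with all $F$-nodes sharing a $V(H)$-image merged, not $F$ itself.

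This also explains why your congestion bound $\Otil(\gamma_{congRep})$ is suspiciously stronger than what the lemma claims: the true construction pays $\gamma_{congRep}\cdot\Delta\cdot\gamma_{recVS}$ precisely because of the copies needed to keep $F'$ acyclic. Finally, your maintenance sketch inherits the same gap: when $F$ adds a new node identified with some $v$ already in use, you would need a fresh copy of $T_v$, but your construction has only one, so you cannot keep the vertex map to $V(G)$ fixed without recycling nodes. Aside from this, the core idea (unfolding core-graph edges through the $F_i$-trees, with the per-edge length bound via the stretch overestimates) matches the paper.
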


\paragraph{The Algorithm.} For our algorithm, we use the following graphs. 

\begin{definition}[Direct Sum of Core Graph Trees]\label{def:coreGraphDirectSum}
Given a vertex $v \in V(H)$, we define $T_v$ to be the graph obtained by contracting the root vertices of all trees $T$ in some forest $F_i$ rooted at $v$ for some $0 \leq i < \lambda$. We denote by $(u,i)$ the vertex in $T_v$ associated with the vertex $u \in V(G)$ in the tree $T$ rooted at $v$ in $F_i$ (assuming $u \in T$).
\end{definition}

We give in \Cref{alg:mapHierForest} the procedure to initialize the forest $F'$ described in \Cref{lma:extensionOfMainThm}. The algorithm first initializes $F'$ to have $\gamma_{congRep}' \defeq \gamma_{congRep} \cdot \Delta \cdot \gamma_{recVS}$ copies $T_{v,j}$ of tree $T_v$ as defined above for every $v$. It then constructs a map $\Lambda$ from the vertices in $V(F)$ to integers in $[0, \gamma_{congRep}')$. This map can be seen as an injective map that maps each node $x$ in $V(F)$ to trees $T_{v,j}$ where $v$ is the vertex in $V(H) \subseteq V(G)$ identified with $x$ such that no two nodes are mapped to the same tree, i.e. it assigns each node $x \in V(F)$ a distinct tree $T_{v,j}$. It is easy to see that it suffices to map to $\gamma_{congRep}'$ many numbers by the bound on the edge congestion of graph embedding $\Pi_{F \mapsto H}$ which multiplied by the maximum vertex degree in $H$ (which is $\Delta \cdot \gamma_{recVS}$ by \Cref{lma:workhorseSparsifier}) yields the vertex congestion of $\Pi_{F \mapsto H}$ which in turn upper bounds the number of nodes in $V(F)$ identified with a single vertex in $V(H) \subseteq V(G)$.

The algorithm then maps the edges in $F$ into $F'$ as follows: each edge $\hat{e}'$ in $F$ that is mapped to $\hat{e} = (x,y)$ in $H$ via $\Pi_{F \mapsto H}$ (recall the graph embedding is flat) is then mapped into $F'$ by mapping its endpoints into the trees $T_{\Pi_{V(F) \mapsto V(H)}(x),j}$ and $T_{\Pi_{V(F) \mapsto V(H)}(y), j'}$ for $j =  \Lambda(x)$ and $j' =  \Lambda(y)$. 

The precise mapping of $\hat{e}'$ first checks from which core graph $\hat{e}$ originates. We denote this core graph by $\cC(G, F_i, \wstr_i)$. Then, the algorithm finds the pre-image $e = (u,v) \in E(G)$ of $\hat{e} \in E(H)$. Finally, it maps the endpoint of $\hat{e}'$ identified with $x \in V(H)$ to the vertex $(u, j)$ in the copy $T_{x,j}$ of $T_x$. It maps the other endpoint analogously. 

\begin{algorithm}
$F' \gets \emptyset$.\\
\ForEach{$v \in V(H), j \in [0, \gamma'_{congRep})$}{
    Add to $F'$ a copy $T_{v, j}$ of the tree $T_v$ as defined in \Cref{def:coreGraphDirectSum}.
}
Construct a map $\Lambda = \Pi_{V(F) \mapsto [0, \gamma'_{congRep})}$ that maps nodes from $V(F)$ to numbers in $[0, \gamma'_{congRep})$ such that any two nodes $x,y \in V(F)$ that are identified with the same vertex in $V(H)$ have $\Lambda(x) \neq \Lambda(y)$. \\
\ForEach{$\hat{e}' \in E(F)$}{
    $\hat{e} \gets \Pi_{F \mapsto H}(\hat{e}')$.\\
    Let $i$ be such that $\hat{e} = (x,y)$ originates from core graph $\cC(G, F_i, \wstr_i)$.\\
    Let $e$ be the pre-image of $\hat{e}$ in $G$.\\
    Add to $F'$ the edge $(u', v')$ where $u'$ is the vertex in  $T_{\Pi_{V(F) \mapsto V(H)}(x),j}$ that is associated with vertex $u$ in $G$ in the forest $F_i$; and $v'$ the vertex in $T_{\Pi_{V(F) \mapsto V(H)}(y), j'}$ that is associated with vertex $v$ in $G$ in the forest $F_i$.
}
\Return $F'$.
\caption{$\textsc{MapHierarchicalForest}(F, H, G)$}
\label{alg:mapHierForest}
\end{algorithm}

Finally, the algorithm maintains $F'$ in the same way as it was initialized. It first processes the deletions to trees $F_i$ and forwards them to the copies $T_v$ in the graph $F'$ by deleting all corresponding edges. It then handles isolated vertex insertions into $F_i$ by forwarding them to $F'$ straightforwardly. 

Finally, it processes changes to $F$. If $F$ underwent a node insertion, the algorithm carefully updates the vertex map $\Lambda$ to map the new vertex in $F$ to a number that associates it with a unique tree $T_{v,j}$. For any edge deletion to $F$, we remove the corresponding projected edge from $F'$, and for every edge insertion to $F$, we add the corresponding projected edge to $F$, as previously described in \Cref{alg:mapHierForest}.

The vertex map $\Pi_{V(F') \mapsto V(G)}$ is rather trivial to maintain, as is the flat graph embedding $\Pi_{F' \mapsto G}$. We maintain the vertex map $\Pi_{A_F \mapsto V(F')}$ by mapping each vertex $v$ in $A_F$ that is mapped to node $x \in V(F)$ via $\Pi_{A_F \mapsto V(F)}$ to the root vertex $v$ of the tree $T_{v, \Lambda(x)}$. 

\paragraph{Analysis.} We can now give the proof of \Cref{lma:extensionOfMainThm}.

For the analysis, we call the edges in $F'$ that originate from copies of trees $T_v$ as defined in \Cref{def:coreGraphDirectSum} as \emph{forest edges} and let the remaining edges that are in one-to-one correspondence with edges in $F$ be called the \emph{projected edges}.

We note that the correctness of the maps $\Pi_{A_F \mapsto V(F')}, \Pi_{V(F') \mapsto V(G)}$ and the flat embedding $\Pi_{F' \mapsto G}$ is straightforward to verify. We thus only need to establish that these maps satisfy the claimed properties and that the algorithm runs in the claimed time. We also can verify immediately that the vertex maps remain constant on each element in the pre-image after it has been added.

\underline{Congestion of $\Pi_{F' \mapsto G}$:} We have that every edge $e \in E(G)$ occurs at most once in every forest $F_0, F_1, \ldots, F_{\lambda-1}$ for $\lambda = O(\log m)$. Thus, we have that there are at most $O(\log m)$ copies of any edge $e \in E(G)$ in the direct sum of all trees $T_v$ over $v \in V(H)$. The algorithm maintains $\gamma'_{congRep}$ copies of each such direct sum of trees over vertices in $H$ and thus there are at most $\Otil(\gamma'_{congRep})$ copies of each edge in $G$ among the forest edges of $F'$. 

For the remaining edges in $F'$, we have that these edges are from forest $F$ mapped via $\Pi_{F' \mapsto H}$ first, and then a copy of their pre-image in $G$ is added to $F'$. Thus, the number of copies of each edge $e \in E(G)$ that is added to $F'$ due to this mapping is at most $\econg(\Pi_{F \mapsto H}) \leq \gamma_{congRep}$ which bounds the number of copies of each edge in $G$ among the projected edges of $F'$.

Combining these two facts bounds the initial edge congestion of $\Pi_{F' \mapsto G}$ by $\Otil(\gamma'_{congRep})$ and since $F'$ is maintained in the same way as it is initialized, the congestion remains bounded, as desired.

\underline{Recourse of $F'$:} To bound the number of changes to $F'$, we have by \Cref{clm:degreeUpperBoundH} that the number of edge deletions and isolated vertex insertions to forests $F_0, F_1, \ldots, F_{\lambda-1}$ is at most $\gamma_{recVS} $ which results in at most $\gamma_{recVS} \cdot \gamma_{congRep} \cdot 2\Delta$ such updates among the forest edges of $F'$. Here, an additional factor $2\Delta$ was added since we define $T_v$ as a direct sum of trees rooted in the same vertex $v$ that are then merged in the root vertex. Whenever a deletion to a forest occurs, this might lead to new roots, and new roots might have to be merged with each other. However, since the maximum degree in $G$ is $\Delta$, we can implement such a merging procedure of two roots with at most $2\Delta$ edge insertions and deletions. 

Further, it is easy to verify that each change to $F$ results in at most one change to $F'$. This establishes the desired bound on the recourse of $F'$.

\underline{Stretch Bound:} The stretch bound follows immediately from the definition of the length function $l_H$ as given in \Cref{def:coregraph} which allows us to give a one-on-one map between segments of $\pi_{F'}(\Pi_{A_F \mapsto V(F')}(u),\Pi_{A_F \mapsto V(F')}(v)))$ between any two roots of trees $T_{v,j}$ that are in $F'$ and edges $\hat{e}' \in \pi_{F}(\Pi_{A_F \mapsto V(F)}(u),\Pi_{A_F \mapsto V(F)}(v))$ where each segment is of length in $G$ at most equal to the length of the edge $\Pi_{F \mapsto H}(\hat{e}')$ with respect to the length function $l_H$.

\underline{Runtime:} It is straightforward to observe the claimed runtime from the previous discussion, the fact that all embeddings associated with $F$ can be evaluated in constant time and by implementing $\Lambda$ using a data structure that keeps track of which copies of each tree $T_v$ are currently in use.

\section{A Toolbox for Dynamic Shortest Path Problems via Vertex Sparsifiers with Low Congestion}
\label{sec:sparsifier_hierarchy}

Next, let us describe how to construct a sparsifier hierarchy from the vertex sparsifier maintenance algorithm from \Cref{sec:vertexSparsifer}. We use this hierarchy to obtain the (Informal) Theorems \ref{thm:mainVS}, \ref{thm:mainTheoremAPSP}, and \ref{thm:mainTheoremLowDiamTreeOverview}.

In this section, we assume that the input graph $G$ has at least as many edges as vertices, i.e. that $n \leq m$, and that the number of updates to $G$ is at most $m$. This assumption is without loss of generality by standard techniques in dynamic graph algorithms. We further assume that $\log^{1/21} m$ is integer.

\subsection{Fully-Dynamic APSP with Worst-Case Subpolynomial Update Time}
\label{subsec:fullyDynAPSP}

We start by proving the following Theorem which augments the interface of the data structure \Cref{thm:mainTheoremAPSP} by an operation $\textsc{QueryDiameterWitnessPair}()$ that we require in the next sections.

\begin{restatable}{theorem}{mainTheoremAPSPFormal}\label{thm:mainTheoremAPSPFormal}
Given an $m$-edge input graph $G = (V,E,l)$ with polynomially-bounded lengths and maximum-degree $3$, there is a data structure $\textsc{DynamicAPSP}$ that initially outputs an empty set $X \subseteq V$ and supports a polynomial number of updates of the following type:
\begin{itemize}
    \item $\textsc{InsertEdge}(e)/ \textsc{DeleteEdge}(e)$: adds/removed edge $e$ into/from $G$. If the edge is inserted, its associated length $l(e)$ has to be in $[1,L]$ and the maximum degree is not allowed to exceed $3$.
    \item $\textsc{InsertIsolatedVertex}() / \textsc{DeleteIsolatedVertex}(v)$: inserts an isolated vertex to $G$ and returns its identifier/ deletes vertex $v$ from $G$ where $v$ has to be isolated.
    \item $\textsc{QueryDist}(u,v)$: returns a distance estimate $\widehat{\dist}(u,v) \in [\dist_G(u,v), \gamma_{ApproxAPSP} \cdot \dist_G(u,v)]$.
    \item $\textsc{AddDiameterSetVertex}(x)$/$\textsc{RemoveDiameterSetVertex}(x)$: adds/removes the vertex $x \in V$ to/from set $X$.
    \item $\textsc{QueryDiameterWitnessPair}()$: assumes that the set $X$ is of size at least $2$ upon invocation and if so it returns two vertices $x,y \in X$ such that  $\dist_G(x,y) \geq \diam(X) / \gamma_{ApproxAPSP}$.
\end{itemize}
For some $\gamma_{ApproxAPSP} =  e^{O(\log^{6/7} m \log\log m)}, \gamma_{timeAPSP} =  e^{O(\log^{20/21} m \log\log m)}$, the data structure can be initialized in time $m \cdot \gamma_{timeAPSP}$, and thereafter processes each edge/vertex update in worst-case time $\gamma_{timeAPSP}$ and each query in worst-case time $O(\log m)$. 
\end{restatable}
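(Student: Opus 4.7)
I plan to build a sparsifier hierarchy $H_0 := G, H_1, \ldots, H_D$, where $H_{i+1}$ is the output sparsifier $H$ of \Cref{lma:workhorseSparsifier} applied to $H_i$ with size-reduction parameter $k$, internal levels parameter $K$, and degree bound $\Delta_i = 3 \cdot (\gamma_{recVS})^i$. I would set $k = m^{1/D}$, $K = \Theta(\log^{1/7} m)$, and $D = \Theta(\log^{1/21} m)$, so that $|V(H_D)| = m^{o(1)}$, the composed stretch $(\gamma_{\ell})^{O(KD)}$ matches $\gamma_{ApproxAPSP}$, and the combined per-update work, dominated by $(\gamma_{\ell})^{O(K^2)} \cdot \gamma_{recVS}^D$ across the $D$ levels, matches $\gamma_{timeAPSP}$. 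To respect the terminal-set growth budget of \Cref{lma:workhorseSparsifier} at each level, I restart $H_{i+1}$ from scratch after every $\Theta(|V(H_i)|/k)$ updates routed into $H_i$, de-amortizing the rebuild cost using standard bucketing so that the per-update cost at each level is worst-case subpolynomial. The max-degree invariant $\Delta_{i+1} = \Delta_i \cdot \gamma_{recVS}$ is preserved inductively by Property 5(b) of \Cref{lma:workhorseSparsifier}.

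For $\textsc{QueryDist}(u,v)$, I would use the pivot function $p_i$ maintained by Property 2 of \Cref{lma:workhorseSparsifier} together with the exact vertex-to-pivot distances maintained by Property 4. Walking the pivot chain $u_0 := u, u_{i+1} := p_i(u_i)$ (and analogously for $v$) up to $u_D, v_D \in V(H_D)$, I accumulate the pivot-hop distances and add a top-level precomputed distance $\dist_{H_D}(u_D, v_D)$. Because $|V(H_D)| = m^{o(1)}$ and the total recourse propagating into $H_D$ per update to $G$ is subpolynomial, a complete all-pairs distance table on $H_D$ can be maintained within the $\gamma_{timeAPSP}$ budget. A telescoping argument using Property 5(d) shows that the returned estimate lies in $[\dist_G(u,v), \gamma_{ApproxAPSP}\cdot \dist_G(u,v)]$, and the walk itself is $O(D) = O(\log m)$.

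For the diameter-witness-pair interface, I would maintain an anchor vertex $c \in X$ together with a max-heap $\mathcal{H}$ holding $(\widehat{\dist}(c,x), x)$ for each $x \in X$, where $\widehat{\dist}(c,x)$ is obtained by $\textsc{QueryDist}(c,x)$ at the moment $x$ is inserted. On $\textsc{AddDiameterSetVertex}(x)$ I push into $\mathcal{H}$; on $\textsc{RemoveDiameterSetVertex}(x)$ with $x \ne c$ I lazy-delete; on removal of $c$ itself I pick a fresh anchor $c'$ and rebuild $\mathcal{H}$ by replaying $\textsc{QueryDist}(c',\cdot)$ over active elements of $X$, amortizing the cost over $\Theta(|X|)$ prior operations and de-amortizing it using two anchor structures maintained in rotation. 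The query pops lazily-flagged entries until the top $x^\star$ is active and returns $(c, x^\star)$; by the standard star-based $2$-approximation for diameter combined with the multiplicative guarantee of $\widehat{\dist}$, the pair satisfies $\dist_G(c, x^\star) \ge \diam(X) / \gamma_{ApproxAPSP}$ after absorbing the constant $2$ into the outer exponent. The per-query cost is $O(\log m)$.

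The main technical obstacle will be the simultaneous parameter calibration: the three subpolynomial factors (stretch $(\gamma_{\ell})^{O(KD)}$, within-level time $(\gamma_{\ell})^{O(K^2)}$, and compounded recourse $\gamma_{recVS}^D$) must be balanced to hit $\log^{6/7} m$ and $\log^{20/21} m$, and the rebuild schedule must simultaneously ensure that each monotonically growing $A_i$ never exceeds $|V(H_i)|/k$ between rebuilds, that the level-$i$ max-degree stays below $\Delta_i$, and that all top-level all-pairs updates plus the diameter-heap rebuilds can be charged against the $\gamma_{timeAPSP}$ budget in the worst case rather than merely amortized.
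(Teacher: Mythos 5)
Your hierarchy construction, parameter choices, and rebuild schedule match the paper's (essentially $\Lambda=\log^{1/21}m$ levels, $k=m^{1/\Lambda}$, $K=\log^{1/7}m$, degree budget $\Delta_i=3\gamma_{recVS}^i$, rebuild level $i$ every $m^{1-(i+1)/\Lambda}$ updates, de-amortize, run static APSP on the top level), and the maintenance side of the argument is sound. The two query procedures, however, each contain a genuine gap.

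\textbf{Distance query.} Walking the full pivot chain $u\to p_1(u)\to p_2(p_1(u))\to\cdots$ up to $H_D$ and adding the top-level distance does \emph{not} give a multiplicative $\gamma_{ApproxAPSP}$-approximation, because the individual pivot-hop terms $\dist_{G_j}(\hat{p}_j(u),\hat{p}_{j+1}(u))$ are not bounded by $\dist_G(u,v)$ in general. If $u$ and $v$ are adjacent but both lie far from $A_1$ (e.g.\ the midpoint of a long path with sparse terminals), the very first hop $\dist_{G_0}(u,p_1(u))$ already dwarfs $\dist_G(u,v)=1$, so the telescoping bound you invoke has no base. The paper's query (\Cref{alg:query}) contains an essential early-stopping test: at each level $i$ it checks whether $\hat p_i(v)\in B_{G_i}(\hat p_i(u),A_{i+1})$, and if so it \emph{returns immediately} using the exact stored ball distance from Property~4 of \Cref{lma:workhorseSparsifier}. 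The invariant $d_{i,u},d_{i,v}\le(12\gammaVSApprox)^i\dist_G(u,v)$ is proved precisely by using the negation of this test (non-termination implies $\dist_{G_i}(\hat p_i(u),\hat p_{i+1}(u))<\dist_{G_i}(\hat p_i(u),\hat p_i(v))$) to bound the next pivot hop. Without that test, nothing controls the accumulated pivot-hop error. You need to add it.

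\textbf{Diameter witness.} Your heap caches $\widehat{\dist}(c,x)$ \emph{at the time $x$ is inserted into $X$} and never refreshes it; but $G$ is fully dynamic, so after a single edge deletion distances increase and the cached value can strictly underestimate the current $\dist_G(c,x)$. A vertex that has drifted far from $c$ would then not rise to the top of the heap, and the returned pair would have no relation to the current $\diam(X)$. Re-querying all of $X$ after each graph update is $\Theta(|X|)$ per update, which exceeds $\gamma_{timeAPSP}$. The paper instead maintains a weighted tree $T$ over the whole hierarchy (node per vertex of each $V(G_i)$, edge $(u,p_{i+1}(u))$ of weight $\dist_{G_i}(u,p_{i+1}(u))$) via \Cref{thm:fancyTreeDS}, kept fresh under updates; the query (\Cref{alg:queryWitnessPair}) picks an arbitrary $x\in X$ and, for each level $i$, scans the $\Otil(k)$ ball vertices around $\hat p_i(x)$ and calls $\textsc{QueryFarthestMarkedAncestor}$ to extract the farthest marked vertex consistent with where the distance query for that vertex would terminate, then unmarks the subtree to avoid double-counting. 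The $O(\log m)$ worst-case query time is obtained by running this $\Otil(k)$-time procedure after every update and caching the result. Your static-anchor heap cannot achieve the required freshness within the update budget, so you need a structure along these lines that is updated incrementally in the hierarchy.

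One minor note: even absent staleness, your anchor argument needs the standard star bound $\dist_G(c,x^\star_{\mathrm{true}})\ge\diam(X)/2$ and the multiplicative guarantee to conclude $\dist_G(c,x^\star)\ge\diam(X)/(2\gamma)$; that part is fine, it just does not survive the staleness problem.
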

\begin{remark}\label{rmk:pathReporting}
The algorithm also supports operation $\textsc{QueryPath}(u,v)$ that returns a $uv$-path $P$ in $G$ with $l(P) \leq \gamma_{ApproxAPSP} \cdot \dist_G(u,v)$ in time $O(|P|\log m)$. 
\end{remark}

\paragraph{Maintaining the Hierarchy.} We maintain $\Lambda + 1$ hierarchy levels for $\Lambda = \log^{1/21} m$, and at each level $0 \leq i \leq \Lambda$, we maintain a graph $G_i$. We let $G_0 = G$, and for $0 \leq i < \Lambda$, we obtain $G_{i+1}$ by running the algorithm from \Cref{lma:workhorseSparsifier} on the graph $G_i$ with size reduction parameter $k =m^{1/\Lambda}$, number of internal levels $K = \log^{3/21} m$ and degree threshold $\Delta_i = 3 \cdot \gamma_{recVS}^i$. We rebuild all data structures on levels $i$ and above after every $u_{i} \defeq m^{1-(i+1)/\Lambda}$ updates to the graph $G$ (i.e. we re-initialize the graphs $G_{i+1}, G_{i+2}, \ldots, G_{\Lambda}$ maintained by these data structures). Finally, after every update to $G$, we run a classic APSP algorithm like Floyd-Warshall's algorithm on the graph $G_{\Lambda}$. 

\paragraph{The Distance Query Algorithm.} In this section, we describe the query for a distance estimate $\widehat{\dist}(u,v)$ upon inputting vertices $u,v \in V$. We defer the description and analysis of the query that returns two vertices $x,y \in V$ such that they are at a distance roughly equal to the diameter of $G$ to the very end of this section, but already point out that it uses the distance query procedure as a subroutine.

We denote by $p_{i+1}$ the pivot function, and by $A_{i+1}$ the set of vertices whose distance is preserved by $G_{i+1}$ as maintained by the algorithm from \Cref{lma:workhorseSparsifier} that is currently run on the graph $G_i$. We define the pivot function $p_{0}$ to be the identity on the vertex set $V$, and denote by $\hat{p}_i$ the function that maps any vertex $v \in V$ to vertex $p_i(p_{i-1}( \ldots p_1(p_0(v))))$. We do not maintain these functions $\hat{p}_i$ explicitly. 

\begin{algorithm}
\For{$i = 0, 1, \ldots, \Lambda$}{
    \If(\label{lne:ifConditionQuery}){$\hat{p}_{i}(v) \in B_{G_{i}}(\hat{p}_{i}(u), A_{i+1})$ \textbf{ or } $i = \Lambda$}{
        $d_{i, u} = \sum_{j = 0}^{i-1} \dist_{G_{j}}(\hat{p}_{j}(u), \hat{p}_{j+1}(u))$.\\
        $d_{i, v} = \sum_{j = 0}^{i-1} \dist_{G_{j}}(\hat{p}_{j}(v), \hat{p}_{j+1}(v))$.\\
    
        \Return $\widehat{\dist}(u,v) = d_{i, u}  + \dist_{G_{i}}(\hat{p}_{i}(u), \hat{p}_{i}(v)) + d_{i, v}$.
    }
}
\caption{$\textsc{QueryDist}(u,v)$}
\label{alg:query}
\end{algorithm}

\paragraph{Analysis of the Query.} Henceforth, we denote by $\gammaVSApprox = (\gamma_{\ell})^{O(K)}$ the precise worst-case guarantee on the distance-preservation approximation in \Cref{thm:spanner} for our choice of $K$.

\begin{claim}
Assuming that $m$ is reasonably large, before the $(i+1)$-th iteration of the for-loop in \Cref{alg:query}, we have for
\[
d_{i, u} = \sum_{j = 0}^{i-1} \dist_{G_{j}}(\hat{p}_{j}(u), \hat{p}_{j+1}(u)) \text{ and }
        d_{i, v} = \sum_{j = 0}^{i-1} \dist_{G_{j}}(\hat{p}_{j}(v), \hat{p}_{j+1}(v))
\]
that $d_{i,u}, d_{i,v}, \dist_{G_{i}}(\hat{p}_{i}(u), \hat{p}_{i}(v)) \leq (12 \cdot \gammaVSApprox)^{i} \cdot \dist_G(u,v)$. 

Further, if the for-loop terminates within the $(i+1)$-th iteration, it returns a distance estimate $\widehat{\dist}(u,v)$ such that $\dist_G(u,v) \leq \widehat{\dist}(u,v) \leq 3 \cdot (12 \cdot \gammaVSApprox)^{(i+1)} \dist_G(u,v)$.    
\end{claim}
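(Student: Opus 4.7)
The plan is a straightforward induction on $i$, with the base case $i=0$ being trivial: $d_{0,u}=d_{0,v}=0$ and $\hat p_0$ is the identity, so $\dist_{G_0}(\hat p_0(u),\hat p_0(v))=\dist_G(u,v)$. For the inductive step from level $i$ to level $i+1$, I will use two ingredients: the failure of the if-condition in \Cref{lne:ifConditionQuery} at level $i$, which gives $\hat p_i(v) \notin B_{G_i}(\hat p_i(u), A_{i+1})$, and the sparsifier stretch guarantee (Property \ref{prop:approxVS} of \Cref{lma:workhorseSparsifier}), which says $\dist_{G_i}(x,y) \leq \dist_{G_{i+1}}(x,y) \leq \gammaVSApprox \cdot \dist_{G_i}(x,y)$ whenever $x,y \in A_{i+1}$.

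From the failed if-condition, the distance in $G_i$ from $\hat p_i(u)$ to its nearest point of $A_{i+1}$, namely $\hat p_{i+1}(u)$, is at most $\dist_{G_i}(\hat p_i(u),\hat p_i(v))$. Then by pivot minimality applied to $\hat p_i(v)$ against the candidate $\hat p_{i+1}(u) \in A_{i+1}$, combined with a triangle inequality, I obtain $\dist_{G_i}(\hat p_i(v),\hat p_{i+1}(v)) \leq 2\,\dist_{G_i}(\hat p_i(u),\hat p_i(v))$. Chaining via a further triangle inequality gives $\dist_{G_i}(\hat p_{i+1}(u),\hat p_{i+1}(v)) \leq 4\,\dist_{G_i}(\hat p_i(u),\hat p_i(v))$, and the sparsifier upper bound upgrades this to $4\gammaVSApprox$ times the same quantity in $G_{i+1}$. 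Plugging in the inductive hypothesis $\dist_{G_i}(\hat p_i(u),\hat p_i(v)) \leq (12\gammaVSApprox)^i \dist_G(u,v)$ then yields bounds of the form $2$, $3$, and $4\gammaVSApprox$ times $(12\gammaVSApprox)^i\dist_G(u,v)$ for $d_{i+1,u}$, $d_{i+1,v}$, and $\dist_{G_{i+1}}(\hat p_{i+1}(u),\hat p_{i+1}(v))$ respectively, each comfortably inside $(12\gammaVSApprox)^{i+1}\dist_G(u,v)$.

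For the second assertion, the upper bound on $\widehat{\dist}(u,v)$ is immediate by summing the three terms at the terminating level $i$ and using the first part (the extra factor of $12\gammaVSApprox$ in the exponent in the claim is slack). For the lower bound $\widehat{\dist}(u,v) \geq \dist_G(u,v)$, I would iterate the \emph{lower} half of Property \ref{prop:approxVS}, namely $\dist_{G_{j-1}}(x,y) \leq \dist_{G_j}(x,y)$ for common vertices, down to $\dist_G$, replacing every per-level distance by the corresponding $G$-distance. Then a single triangle inequality in $G$, $\dist_G(u,v) \leq \dist_G(u,\hat p_i(u)) + \dist_G(\hat p_i(u),\hat p_i(v)) + \dist_G(\hat p_i(v),v)$, together with the telescoping bound $\dist_G(u,\hat p_i(u)) \leq \sum_{j=0}^{i-1}\dist_G(\hat p_j(u),\hat p_{j+1}(u)) \leq d_{i,u}$ (and symmetrically for $v$), closes the estimate.

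The main obstacle is not conceptual but bookkeeping: one must carefully track that at each step the pivots $\hat p_j(u), \hat p_{j+1}(u)$ indeed lie in the appropriate terminal set $A_{j+1} \subseteq V(G_j)$, so that both directions of the sparsifier approximation can be invoked on each telescoped segment, and that the identification of vertices across levels is consistent with the interpretation needed for the final triangle inequality in $G$. Once this is in place, the numeric constants ($2$, $3$, $4\gammaVSApprox$ vs.\ $12\gammaVSApprox$) leave sufficient slack to absorb the induction and the lossy conversion between $G_i$- and $G$-distances.
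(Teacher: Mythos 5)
Your proposal is correct and follows essentially the same route as the paper: induction on the level index, using the failed if-condition at each level to bound the pivot-hop distance by $\dist_{G_i}(\hat p_i(u),\hat p_i(v))$, a triangle inequality and the upper half of Property~\ref{prop:approxVS} to push up one level, and the lower (overestimate) half of Property~\ref{prop:approxVS} chained down to $G_0=G$ for the returned estimate's lower bound. The paper's eq.~\eqref{eq:upperBoundQuery2} and the subsequent "geometric sum" remark are somewhat terser (and the indices $d_{i+2,u}$, $d_{i+1,u}$ in the paper's case analysis look like off-by-one typos), but the underlying argument is the same; your explicit step showing $\dist_{G_i}(\hat p_i(v),\hat p_{i+1}(v)) \le 2\,\dist_{G_i}(\hat p_i(u),\hat p_i(v))$ via pivot minimality against the candidate $\hat p_{i+1}(u)$ is precisely what the paper uses to bound $d_{i+1,v}$, and your constant tracking ($2$, $3$, $4\gammaVSApprox$, all $\le 12\gammaVSApprox$) matches.
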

\begin{proof}
We prove the claim by induction on $i$. For $i = 0$, we have trivially that $d_{0, u}, d_{0, v} = 0$ and $\dist_{G_0}(\hat{p}_0(u), \hat{p}_0(v)) = \dist_G(u,v)$ by definition. 

Next, let us analyze the $(i+1)$-th iteration of the algorithm in \Cref{alg:query}. We start by using the fact that 
\Cref{lma:workhorseSparsifier} yields that 
\begin{align}
\begin{split}\label{eq:upperBoundQuery}
    \dist_{G_{i+1}}(\hat{p}_{i+1}(u), \hat{p}_{i+1}(v)) \leq \gammaVSApprox \cdot \dist_{G_{i}}(\hat{p}_{i+1}(u), \hat{p}_{i+1}(v)). 
    \end{split}
\end{align}
We then use the triangle inequality and the induction hypothesis to derive
\begin{align}
\begin{split}\label{eq:upperBoundQuery2}
    \dist_{G_i}(\hat{p}_{i+1}(u), \hat{p}_{i+1}(v)) &\leq \dist_{G_{i}}(\hat{p}_{i+1}(u), \hat{p}_{i}(u)) + \dist_{G_{i}}(\hat{p}_{i}(u), \hat{p}_{i}(v)) + \dist_{G_{i}}(\hat{p}_{i}(v), \hat{p}_{i+1}(v))\\
    & \leq d_{i, u} + \dist_{G_{i}}(\hat{p}_{i}(u), \hat{p}_{i}(v)) + d_{i,v}\\
    & \leq 3 \cdot (12 \cdot \gammaVSApprox)^{i} \cdot \dist_G(u,v).
    \end{split}
\end{align}
We conclude via the following case analysis:
\begin{itemize}
    \item \underline{If the $(i+1)$-th iteration terminates:} since the $(i+1)$-th iteration terminates, we have that none of the conditions in the if-statement in \Cref{lne:ifConditionQuery} were satisfied, as the algorithm returns upon entering the if-statement. We thus have $\hat{p}_{i+1}(v) \not\in B_{G_{i+1}}(\hat{p}_{i+1}(u), A_{i+2})$ which is equivalent to 
    \[
    \dist_{G_{i+1}}(\hat{p}_{i+1}(u), \hat{p}_{i+2}(u)) < \dist_{G_{i+1}}(\hat{p}_{i+1}(u), \hat{p}_{i+1}(v)).
    \]
    Thus, we have that $d_{i+2,u} = d_{i+1,u} + \dist_{G_{i+1}}(\hat{p}_{i+1}(u), \hat{p}_{i+2}(u)) < (12\gammaVSApprox )^{i+1} \cdot \dist_G(u,v)$ by using \eqref{eq:upperBoundQuery} and \eqref{eq:upperBoundQuery2}, and finally the induction hypothesis and that $12\gammaVSApprox \geq 2$ which yields the desired constant to bound all terms of the geometric sum of $d_{i+1,u}$. 
    
    Further, we have $d_{i+2,v} = d_{i+1,u} + \dist_{G_{i+1}}(\hat{p}_{i+1}(v), \hat{p}_{i+2}(v))$ and we have from the minimality of the distance to the pivot and the triangle inequality that
    \begin{align*}
         \dist_{G_{i+1}}(\hat{p}_{i+1}(v), \hat{p}_{i+2}(v)) &\leq  \dist_{G_{i+1}}(\hat{p}_{i+1}(v), \hat{p}_{i+1}(u)) + \dist_{G_{i+1}}(\hat{p}_{i+1}(u), \hat{p}_{i+2}(u))\\
         &< 2 \cdot \dist_{G_{i+1}}(\hat{p}_{i+1}(v), \hat{p}_{i+1}(u))\\
         &\leq   
         6 \gamma_{approxVS} (12 \cdot \gamma_{approxVS})^i \cdot \dist_G(u,v) 
    \end{align*}
    and using again that $d_{i+1, v} \leq 6 \gamma_{approxVS} (12 \cdot \gamma_{approxVS})^i \cdot \dist_G(u,v)$ by induction and the fact that the terms form a geometric sum.
    
    This completes the proof that the invariant holds before the $(i+2)$-th iteration of the for-loop.

    \item \underline{Otherwise:} In the case that the algorithm terminates, we can use \eqref{eq:upperBoundQuery} and \eqref{eq:upperBoundQuery2} and finally the induction hypothesis, to straightforwardly derive that $\widehat{\dist}(u,v) \leq 3 \cdot (12 \cdot \gammaVSApprox)^{i+1} \dist_G(u,v)$. For the lower bound, it suffices to inspect \Cref{lma:workhorseSparsifier} and \Cref{thm:spanner} to see that all distances in $G_j$ for any $j \geq 0$ are overestimates.
\end{itemize}
\end{proof}

\begin{claim}
The query operation can be implemented in worst-case time $O(\Lambda)$.
\end{claim}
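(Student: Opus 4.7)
The plan is to show that each of the at most $\Lambda+1$ for-loop iterations in \Cref{alg:query} can be carried out in $O(1)$ time, using the information maintained by the vertex sparsifiers on each hierarchy level; the total query time is then $O(\Lambda)$.

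First, I would maintain the pivot chain $\hat{p}_0(u),\hat{p}_1(u),\ldots$ and the partial sum $d_{i,u}$ incrementally across the loop (and analogously for $v$). Since $\hat{p}_{i+1}(u)=p_{i+1}(\hat{p}_i(u))$ and property~2 of \Cref{lma:workhorseSparsifier} applied at the $(i+1)$-th level maintains $p_{i+1}$ explicitly, each step of the chain is an $O(1)$ table lookup. The update $d_{i+1,u} = d_{i,u} + \dist_{G_i}(\hat{p}_i(u), \hat{p}_{i+1}(u))$ requires only the distance from $\hat{p}_i(u)$ to its own pivot $\hat{p}_{i+1}(u)=p_{i+1}(\hat{p}_i(u))$ in $G_i$, which is stored explicitly by property~4 of \Cref{lma:workhorseSparsifier}.

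Second, the test $\hat{p}_i(v) \in B_{G_i}(\hat{p}_i(u), A_{i+1})$ in \Cref{lne:ifConditionQuery} is resolved by storing each ball $B_{G_i}(\cdot, A_{i+1})$ (of size $\tilde{O}(k)$ by \Cref{thm:TZschemes}, and maintained explicitly by property~4 of \Cref{lma:workhorseSparsifier}) as a hash table keyed on vertex identifiers; membership is then decided in $O(1)$. Whenever the condition holds, the exact value of $\dist_{G_i}(\hat{p}_i(u), \hat{p}_i(v))$ needed in the return statement is one of the distances stored inside that same ball and is also available in $O(1)$. The only other way to enter the if-branch is at $i=\Lambda$, and for this case we appeal to the all-pairs shortest-paths table computed on $G_\Lambda$ by Floyd--Warshall after every global update (as described in the paragraph on maintaining the hierarchy), which again gives an $O(1)$ lookup.

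The only subtlety is ensuring that the hash-table representations of the balls can be refreshed without charging queries, but this follows because any change to a ball is triggered by an edge update and the cost can be absorbed into the update time of \Cref{lma:workhorseSparsifier}. Summing $O(1)$ work over at most $\Lambda+1$ iterations gives total worst-case query time $O(\Lambda)$, as claimed.
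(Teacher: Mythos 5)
Your proof is correct and follows essentially the same approach as the paper's: evaluate the pivot chain in $O(1)$ per level via the explicitly maintained pivot function and per-vertex ball distances from \Cref{lma:workhorseSparsifier}, test ball membership in $O(1)$, read the level-$i$ distance from the stored ball (or the Floyd--Warshall table at level $\Lambda$), and sum $O(1)$ work over the $\Lambda+1$ iterations. The one cosmetic difference is that you accumulate $d_{i,u}, d_{i,v}$ incrementally while the paper computes them all at once upon first entering the if-branch, which still only costs $O(i) = O(\Lambda)$; these are interchangeable. A minor caveat worth flagging: you invoke a hash table for ball membership, which gives expected or amortized $O(1)$ rather than deterministic worst-case $O(1)$; for the claim as stated (deterministic, worst-case) one should instead note that the distances and radii of $B_{G_i}(\cdot, A_{i+1})$ are already maintained explicitly by property~4 of \Cref{lma:workhorseSparsifier}, so one can equip each ball with a direct-indexed or sorted structure built during preprocessing and charge rebuild costs to updates, exactly as you observe in your final paragraph. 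The paper glosses over this in much the same way, so this is not a real gap.
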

\begin{proof}
Note first that $\hat{p}_0(x) = p_0(x) = x$ and is thus trivial to evaluate. For $i \geq 1$, $\hat{p}_i(x) = p_i(\hat{p}_{i-1}(x))$ which again can be evaluated in time $O(1)$ give $\hat{p}_{i-1}(x)$. 

It is thus not hard to see, using \Cref{lma:workhorseSparsifier}, that the if-condition in \Cref{lne:ifConditionQuery} can be evaluated in time $O(1)$, and upon entering the if-statement, which occurs only once, the sums $d_{i,u}$ and $d_{i,v}$ can be calculated in time $O(i)$ since each term is maintained explicitly by one of the data structures from \Cref{lma:workhorseSparsifier}. Finally, if the if-statement is entered with $i = \Lambda$, we can simply read $\dist_{G_{i}}(\hat{p}_{i}(u), \hat{p}_{i}(v))$ off as we maintain all pairwise distances between vertices in $G_{\Lambda}$. Otherwise, if the if-statement is entered with $i < \Lambda$, then one of the other two conditions must have been true, which yields that we can extract the distance $ \dist_{G_{i}}(\hat{p}_{i}(u), \hat{p}_{i}(v))$ from \Cref{lma:workhorseSparsifier}. 
\end{proof}

Since $i \leq \Lambda$ as the algorithm terminates if it is in the $(\Lambda +1)$-th for-loop iteration, plugging in the values of $\gamma_{\ell}$, $\gammaVSApprox$ and $\Lambda$, we thus obtain the following corollary. Here, the extension to return a witness path is straightforward from the construction of the distance estimates in \Cref{alg:query} and the properties of \Cref{lma:workhorseSparsifier}.

\begin{corollary}\label{cor:summarizeDistanceQuery}
The query operation returns in worst-case time $O(\Lambda)$ a distance estimate $\widehat{\dist}(u,v)$ such that $\dist_G(u,v) \leq \widehat{\dist}(u,v) \leq \gamma_{approxDistQuery} \cdot \dist_G(u,v)$ for $\gamma_{approxDistQuery} = e^{O(\log^{6/7} m \log\log m)}$. The operation can further be extended to return a $uv$-path $P$ in $G$ of length at most $\widehat{\dist}(u,v)$ in time $\Otil(|P|)$.
\end{corollary}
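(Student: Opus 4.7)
The plan is to derive the corollary by specializing the two preceding claims to the specific parameter choices $\Lambda = \log^{1/21} m$, $K = \log^{3/21} m$, and $k = m^{1/\Lambda}$, and then to append a routine recursive unfolding argument to obtain the path-reporting extension.

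First, I will observe that \Cref{alg:query} must terminate within at most $\Lambda + 1$ iterations of the for-loop, because the if-condition in \Cref{lne:ifConditionQuery} is guaranteed to succeed once $i = \Lambda$. By the second part of the previous claim, the returned estimate then satisfies
\[
\dist_G(u,v) \;\leq\; \widehat{\dist}(u,v) \;\leq\; 3\cdot (12\cdot \gammaVSApprox)^{\Lambda+1}\cdot \dist_G(u,v).
\]
Second, I will compute the exponent. From \Cref{lma:workhorseSparsifier} we have $\gammaVSApprox = (\gamma_\ell)^{O(K)}$ with $\gamma_\ell = e^{O(\log^{2/3} m \log\log m)}$, so
\[
\log \gammaVSApprox \;=\; O\!\left(K \cdot \log^{2/3} m \cdot \log\log m\right) \;=\; O\!\left(\log^{3/21 + 14/21} m \cdot \log\log m\right) \;=\; O\!\left(\log^{17/21} m \cdot \log\log m\right).
\]
Multiplying by $\Lambda + 1 = O(\log^{1/21} m)$ gives $(12 \gammaVSApprox)^{\Lambda+1} = e^{O(\log^{18/21} m \cdot \log\log m)} = e^{O(\log^{6/7} m \log\log m)}$, which matches the claimed $\gamma_{approxDistQuery}$. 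The query time bound $O(\Lambda)$ is inherited verbatim from the previous claim.

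Finally, for the path-reporting extension, I will assemble a witness path by concatenating the path segments underlying each term of $d_{i,u}$, $d_{i,v}$, and $\dist_{G_i}(\hat{p}_i(u),\hat{p}_i(v))$ that the algorithm sums up. Each such segment is a path in some $G_j$ that is already maintained: distances to pivots and within balls are stored explicitly by Property 4 of \Cref{lma:workhorseSparsifier}, and when $i = \Lambda$ the distance in $G_\Lambda$ comes with a shortest-path witness produced by Floyd--Warshall. Every edge appearing in a $G_j$-path with $j \geq 1$ is then recursively expanded into a $G_{j-1}$-path of no greater length using Property \ref{prop:mapBackToPath} of \Cref{lma:workhorseSparsifier} (which runs in time linear in the returned path). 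Since each expansion is paid for by the length of its output, the total work telescopes to $\Otil(|P|)$, where $P$ is the final $uv$-path in $G$.

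The only mildly delicate point will be bookkeeping for the path unfolding across $\Lambda$ levels, ensuring each edge of the final reported path in $G$ is charged only a polylogarithmic amount of work from intermediate levels; this follows because the unfolding forms a tree whose leaves are exactly the edges of the final path and whose internal nodes each do $O(1)$ work under the interface of \Cref{lma:workhorseSparsifier}. With this, all three claims of the corollary are in hand.
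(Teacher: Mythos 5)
Your proof is correct and matches the paper's (very terse) treatment of this corollary. You correctly observe that the loop in \Cref{alg:query} must terminate by iteration $i=\Lambda$ (forcing the if-condition), plug the choices $\Lambda = \log^{1/21} m$, $K=\log^{3/21} m$ into $\gammaVSApprox=(\gamma_\ell)^{O(K)}$ with $\gamma_\ell = e^{O(\log^{2/3} m\log\log m)}$ to get $(12\gammaVSApprox)^{\Lambda+1} = e^{O(\log^{6/7}m\log\log m)}$, and inherit the $O(\Lambda)$ query time verbatim --- exactly what the paper does in its one-sentence justification. Your path-reporting argument via recursive expansion through Property~\ref{prop:mapBackToPath} of \Cref{lma:workhorseSparsifier} also agrees with the intent of the paper's ``straightforward from the construction \dots'' remark and with \Cref{rmk:pathReporting}. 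One small imprecision: you say the internal nodes of the unfolding tree each do ``$O(1)$ work,'' but Property~\ref{prop:mapBackToPath} charges $O(|\text{output path}|)$ per expansion, i.e.\ proportional to the node's out-degree; the correct accounting is that total work is bounded by the number of edges in the tree, which is $O(\Lambda\cdot|P|) = \Otil(|P|)$ since $\Lambda = \log^{1/21}m$. This does not affect the conclusion.
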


\paragraph{Runtime Analysis.} Finally, we analyze the runtime of algorithm.

\begin{claim}\label{clm:runtimeFulylDynAPSP}
The algorithm takes initialization time $e^{O(\log^{14/15} m \log\log m)} \cdot m$ and thereafter can process each update to $G$ in time $e^{O(\log^{14/15} m \log\log m)}$.
\end{claim}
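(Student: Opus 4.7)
The plan is to decompose the worst-case runtime into four contributions and bound each separately: (i) the per-update work spread across the $\Lambda+1$ hierarchy levels, (ii) the amortized cost of rebuilding level $i{+}1$ every $u_i = m^{1-(i+1)/\Lambda}$ updates, (iii) the Floyd--Warshall recomputation on $G_\Lambda$ after every update, and (iv) the one-off initialization cost. The key auxiliary quantity throughout is the edge count $m_i$ of $G_i$, which I will bound inductively using \Cref{lma:workhorseSparsifier}, then substitute the parameter choices $k = m^{1/\Lambda}$, $K = \log^{3/21} m$, $\Lambda = \log^{1/21} m$, and $\Delta_i = 3\gamma_{recVS}^i$.

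I would first establish the inductive invariant $m_i \leq \gamma_{recVS}^{O(i)} \cdot m/k^i$. By \Cref{lma:workhorseSparsifier} the sparsifier output on a graph of size $m_i$ with degree bound $\Delta_i$ starts with at most $\Delta_i \cdot \gamma_{recVS} \cdot m_i/k$ edges, and over the rebuild window of $u_i$ updates to $G$ it can grow by at most $u_i \cdot \gamma_{recVS}^{i+1}$ further edges (iterating the per-level recourse bound $\gamma_{recVS}$ to reach level $i{+}1$). Using $u_i = m/k^{i+1}$, both terms match the invariant. In particular $|V(G_\Lambda)| \leq \gamma_{recVS}^{O(\Lambda)} = e^{O(\log^{6/7}m \cdot \log^{1/21}m)} = e^{O(\log^{19/21}m)}$, so Floyd--Warshall at the top level costs $e^{O(\log^{19/21}m)}$ per update, which is swallowed by the other terms.

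Next, I plug parameters into the update-time formula from \Cref{lma:workhorseSparsifier}: $k^4 = e^{O(\log^{20/21}m)}$, $\gamma_\ell^{O(K^2)} = e^{O(\log^{2/3+2/7}m \log\log m)} = e^{O(\log^{20/21}m \log\log m)}$, $\gamma_{recVS} = e^{O(\log^{6/7}m)}$, and $\Delta_i \leq \gamma_{recVS}^\Lambda = e^{O(\log^{19/21}m)}$. Thus each invocation of the level-$i{+}1$ update procedure costs $e^{O(\log^{20/21}m \log\log m)}$ in the worst case. Since one $G$-update triggers at most $\gamma_{recVS}^i$ updates at level $i$, summing the resulting geometric series gives a total per-$G$-update cost across all levels of $\gamma_{recVS}^{O(\Lambda)} \cdot e^{O(\log^{20/21}m \log\log m)} = e^{O(\log^{20/21}m \log\log m)}$. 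For the rebuild contribution, level $i{+}1$'s initialization by \Cref{lma:workhorseSparsifier} costs $\tilde{O}(m_i(k^4 + \Delta_i + \gamma_\ell))$, all of which is dominated by $m_i \cdot k^4$. Dividing by $u_i$ and using $m_i/u_i \leq \gamma_{recVS}^{O(i)} \cdot k$ (since $m/u_i = k^{i+1}$), the amortized rebuild cost per $G$-update is $\gamma_{recVS}^{O(i)} \cdot e^{O(\log^{20/21}m)}$; summed over levels this is again $e^{O(\log^{20/21}m \log\log m)}$. To pass from amortized to worst-case, I spread each rebuild evenly across the $u_i$ updates in its window using the standard de-amortization of \cite{willard1985adding,thorup2004fully}, which costs only a constant factor blowup. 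The one-off initialization at time zero is dominated by level $0$, giving $\tilde{O}(m k^4) = m \cdot e^{O(\log^{20/21}m \log\log m)}$.

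The main obstacle is the careful bookkeeping: the three quantities $k$, $\gamma_{recVS}$, and $\Delta_i$ all accumulate multiplicatively across $\Lambda$ levels, and one has to verify that the rebuild cadence $u_i = m^{1-(i+1)/\Lambda}$ is simultaneously frequent enough to keep the drift in $|A|$ (and hence in $m_i$) of the same order as the initial sparsifier size, and yet infrequent enough that the amortized rebuild cost stays absorbed into the per-update bound. Once the inductive bound on $m_i$ is in place, the four contributions all collapse to the same subpolynomial expression, and the claimed runtime follows by taking the maximum.
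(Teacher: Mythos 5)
Your approach is essentially the same as the paper's (inductive bound on $m_i$, per-level worst-case update cost, de-amortized rebuild cadence, Floyd--Warshall at the top, substitution of the parameter choices), and you correctly target $e^{O(\log^{20/21}m\log\log m)}$ — note the claim's displayed exponent $14/15$ appears to be a typo, since both the paper's own proof body and \Cref{thm:mainTheoremAPSPFormal} use $20/21$.

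There is, however, a genuine bookkeeping error in your recourse cascade. You assert that ``one $G$-update triggers at most $\gamma_{recVS}^i$ updates at level $i$,'' i.e.\ that the per-level amplification factor is $\gamma_{recVS}$. But Property 5a of \Cref{lma:workhorseSparsifier} bounds the per-update batch size by $\Delta\cdot\gamma_{recVS}$, where $\Delta$ is the degree threshold passed to the data structure. Since level $j$ is run with $\Delta_j = 3\gamma_{recVS}^j$, one update to $G_j$ causes up to $\Delta_j\cdot\gamma_{recVS} = 3\gamma_{recVS}^{j+1}$ updates to $G_{j+1}$, so the cumulative per-$G$-update recourse at level $i$ is $\prod_{j<i}(\gamma_{recVS}\Delta_j) = \gamma_{recVS}^{\Theta(i^2)}$, not $\gamma_{recVS}^{O(i)}$. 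The paper accordingly proves $m_i \leq 2m^{1-i/\Lambda}\gamma_{recVS}^{4i^2}$, whereas your invariant $m_i\leq\gamma_{recVS}^{O(i)}m/k^i$ is too optimistic, as is your Floyd--Warshall estimate $|V(G_\Lambda)|\leq\gamma_{recVS}^{O(\Lambda)}$ (it should be $\gamma_{recVS}^{O(\Lambda^2)}$). You are saved by the parameter choice: $\gamma_{recVS}^{O(\Lambda^2)} = e^{O(\log^{6/7}m\cdot\log^{2/21}m\cdot\log\log m)} = e^{O(\log^{20/21}m\log\log m)}$ already matches the target bound, so the extra $\Lambda$ in the exponent does not hurt the final conclusion. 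Still, the corrected invariant is what makes the induction close and what the final bound is actually computed from, so this step needs to be fixed.
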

\begin{proof}
We have from \Cref{lma:workhorseSparsifier} that each graph $G_{i+1}$ undergoes at most $\prod_{0 \leq j \leq i} (\gamma_{recVS} \cdot \Delta_j) = (\gamma_{recVS})^{4 i^2}$ updates for every update to $G$ unless $G_{i+1}$ is re-initialized because of a rebuild of level $i$. 

The data structure from \Cref{lma:workhorseSparsifier} that is inputted $G_i$ and maintains $G_{i+1}$ further processes each update to $G_i$ with worst-case update time $\gamma_{recVS}  (\gamma_{\ell})^{O(K^2)} \Delta_i = (\gamma_{recVS})^{O(i)}  (\gamma_{\ell})^{O(K^2)}$. Using that $i \leq \Lambda$, we obtain that each update to $G_i$ can be processed in time $e^{O(\log^{20/21} m \log\log m})$ by the data structure that maintains $G_{i+1}$. 

Using the upper bound on the number of updates to $G_{i+1}$ per update to $G$, and the fact that the data structure at level $i$ that maintains $G_{i+1}$ is rebuilt every $u_i = m^{1-(i+1)/\Lambda}$ updates to $G$, we have that since $G_{i+1}$ was last re-initialized, that there were at most $u_i \cdot (\gamma_{recVS})^{4i^2}$ updates to $G_i$. 

Thus, we can prove by induction that the maximum number of vertices and edges in $G_{i+1}$ is at most $m_{i+1} \defeq 2 m^{1-(i+1)/\Lambda} \cdot (\gamma_{recVS})^{4(i+1)^2}$. This is clearly true for $m_0 \leq 2m$ which follows since graph $G_0 = G$ undergoes at most $m$ updates overall which trivially upper bounds the number of edges in $G$ by $2m$. For the inductive step $i \mapsto i+1$, we have from \Cref{lma:workhorseSparsifier} that the number of vertices and edges in $G_{i+1}$ is then, by straightforward calculations, at most 
\begin{align*}
\gamma_{recVS} \cdot \Delta_i \cdot (m_i/k + u_i) &\leq  3  \cdot (\gamma_{recVS})^{(i+1)} \cdot (2 m^{1-(i+1)/\Lambda} \cdot (\gamma_{recVS})^{4i^2} + m^{1-(i+1)/\Lambda})\\
&\leq  3  \cdot (\gamma_{recVS})^{(i+1)} \cdot 3 m^{1-(i+1)/\Lambda} \cdot (\gamma_{recVS})^{4i^2}\\
& < 2m^{1-(i+1)/\Lambda} \cdot (\gamma_{recVS})^{4(i+1)^2}\\
& = m_{i+1}.
\end{align*}

We can now use the bounds on the maximum number of edges in $G_{i+1}$ to straightforwardly argue about the time spent on rebuilding each level $i$ and above. However, we cannot afford to rebuild such a data structure during a single time step. Instead, we use a standard de-amortization technique (see for example \cite{gutenberg2020fully, bernstein2022fully}). Here, instead of rebuilding the data structure at level $i$ on the spot to produce the graph $G_{i+1}$, we can instead start already $u_i$ updates to $G$ earlier to obtain such a new data structure and forward the additional updates to $G_i$ that are issued in the meantime to this data structure such that it reflects the current graph $G_i$ once it is needed. Using the parameters above, we can prove that this increases the number of edges in $G_{i+1}$ to at most $2^{i+1} m_{i+1}$. 

But now, the time to rebuild the data structure at level $i$ can be split evenly over the sequence of $u_i$ updates (here it is essential that \Cref{lma:workhorseSparsifier} processes updates with worst-case update time itself). It is not hard to show that this yields a worst-case update time of $e^{O(\log^{20/21} m \log\log m})$ by \Cref{lma:workhorseSparsifier}. 

Finally, the runtime spend on computing APSP on $G_{\Lambda}$ can be bounded straight-forwardly by $O(m_{\Lambda}^3) = O((\gamma_{recVS})^{12\Lambda^2}) = e^{O(\log^{20/21} m \log\log m})$ worst-case time per update to $G$, as desired. 
\end{proof}

\paragraph{Obtaining a Witness-Pair for the Diameter of $X$.}  It remains to describe how to support queries for a pair of vertices $x,y \in X$ in $G$ such that $\dist_G(x,y) \geq \diam_G(X) / \gamma_{approxAPSP}$ as described in \Cref{thm:mainTheoremAPSP}.

To implement this query efficiently we need the following dynamic tree data structure which can be obtained rather straight-forwardly from link-cut trees \cite{sleator1981data} or top tree \cite{alstrup2005maintaining}.

\begin{theorem}\label{thm:fancyTreeDS}
Given a directed $n$-vertex rooted forest graph $F = (V, E, l)$, there is a data structure that maintains an initially empty set $M \subseteq V$ implicitly and that supports the following operations:
\begin{itemize}
    \item $\textsc{InsertEdge}(e)/\textsc{DeleteEdge}(e)$: adds/deletes edge $e$ to/from $F$. This operation assumes that the operation results in $F$ being a rooted forest graph again.
    \item $\textsc{MarkVertex}(v)/ \textsc{UnmarkSubtreeRootedAt}(v)$: the former operation adds vertex $v$ to the set $M$; the latter operation removes all vertices that are ancestors of $v$ (including $v$) from the set $M$.
    \item $\textsc{QueryFarthestMarkedAncestor}(v)$: returns a vertex $w \in M$ that is an ancestor of $v$ of maximal length on the $wv$-path or $\bot$ if no such vertex exists.
\end{itemize}
The algorithm takes initially time $\Otil(n)$ to preprocess $F$ and thereafter supports every operation in worst-case time $\Otil(1)$.
\end{theorem}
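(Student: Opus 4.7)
The plan is to realize this data structure as a standard application of top trees (Alstrup--Holm--de Lichtenberg--Thorup), which support $\link$, $\cut$, path queries, and path updates on a dynamic rooted forest in worst-case $O(\log n)$ time per operation. Each vertex $v$ carries a bit $m(v)\in\{0,1\}$ indicating $v\in M$, and each edge carries its given length. The forest maintenance $\textsc{InsertEdge}$ and $\textsc{DeleteEdge}$ are implemented directly by the top tree's $\link$ and $\cut$. Initialization links the $n-1$ edges one-by-one in $\tilde O(n)$ time.

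The nontrivial work is choosing the right aggregate to store in each top-tree cluster so that $\textsc{QueryFarthestMarkedAncestor}$ becomes a path aggregate, and $\textsc{UnmarkSubtreeRootedAt}$ becomes a path update. For a cluster $C$ with top-path having boundary vertices $a$ (the one closer to the query root) and $b$, store a pair $(v^\star_C, d^\star_C)$ where $v^\star_C$ is a marked vertex of $C$'s top-path maximizing the distance to $b$, and $d^\star_C$ is that distance (with $(\bot,-\infty)$ if no such vertex exists); also store $\mathrm{len}(C)$, the total length of the top-path. When two clusters $C_1,C_2$ are merged along their common boundary (with $C_1$'s bottom boundary becoming the combined bottom), the merged aggregate is the argmax of $(v^\star_{C_1},d^\star_{C_1})$ and $(v^\star_{C_2},d^\star_{C_2}+\mathrm{len}(C_1))$. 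This composition is associative and fits the top-tree framework. In addition, each cluster carries a lazy ``clear'' flag that zeroes out the aggregate of everything below it; it composes with the aggregate by simply dominating it, and is pushed down by standard lazy propagation whenever the cluster is visited.

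With these structures in place, the three operations are immediate. $\textsc{MarkVertex}(v)$ exposes $v$ (making $v$ a boundary of the root cluster), flips $m(v)$ to $1$, and recomputes aggregates up the top-tree path in $O(\log n)$. $\textsc{UnmarkSubtreeRootedAt}(v)$ exposes the root-to-$v$ path so that it becomes the top-path of the root cluster and sets that cluster's lazy ``clear'' flag, again in $O(\log n)$. $\textsc{QueryFarthestMarkedAncestor}(v)$ exposes the same root-to-$v$ path and reads off $v^\star$ from the resulting root cluster, returning $\bot$ if $v^\star=\bot$; since all edge lengths are at least $1$, the marked ancestor maximizing the stored distance to $v$ is exactly the marked ancestor maximizing the length of the path to $v$, as required.

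The only point that needs care is that top trees expose a vertex, not a path-to-root, so one must verify that the standard ``expose$(v)$'' primitive produces a root cluster whose top-path is precisely the path from $v$ to $\root(v)$ in its component---this is immediate from the top-tree specification when $v$ is not in the interior of another preferred path. The correctness and worst-case $\tilde O(1)$ bounds then follow from the standard top-tree analysis, yielding \Cref{thm:fancyTreeDS}.
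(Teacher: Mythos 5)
Your top-tree construction would establish the theorem \emph{as literally stated}, where both $\textsc{QueryFarthestMarkedAncestor}(v)$ and $\textsc{UnmarkSubtreeRootedAt}(v)$ operate on the ancestors of $v$, i.e.\ on the root-to-$v$ path: these are path operations, and a path aggregate (farthest marked vertex on the cluster path, merged by offsetting by path length) together with a lazy path-clear flag give $O(\log n)$ worst case. A few points in the writeup need tightening: the cluster aggregate is described with a fixed orientation (``$a$ closer to the query root''), but top-tree clusters are attached without knowing query endpoints, so the aggregate must be stored symmetrically in both directions; ``preferred path'' is link-cut-tree terminology, whereas top trees expose via a two-endpoint $\textsc{expose}(u,w)$ (here $u=\root(v)$, $w=v$); and the lazy clear must be a \emph{path}-only clear that does not propagate into rake children, or else $\textsc{UnmarkSubtreeRootedAt}$ would wipe marks off side branches.

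There is, however, a more fundamental gap: the theorem statement appears to swap ``ancestor'' and ``descendant,'' and your proof follows the literal statement rather than what the data structure must actually do. In Algorithm~\ref{alg:queryWitnessPair}, the marked set $M=X$ consists entirely of level-$0$ leaves of the pivot tree $T$, while the argument $w$ passed to $\mathcal{D}_T.\textsc{QueryFarthestMarkedAncestor}$ and $\mathcal{D}_T.\textsc{UnmarkSubtreeRootedAt}$ is an internal node at level $i\ge 1$. Under the literal semantics no marked ancestor of such $w$ can exist, so every query would return $\bot$ and every unmark would be a no-op, contradicting the correctness analysis (which asserts that the returned $y'$ satisfies $d_{i,y'}\ge d_{i,z}$ and is removed from $M$ during the subsequent unmark). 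The intended semantics are clearly subtree operations — find the farthest marked \emph{descendant} of $v$; unmark the subtree rooted at $v$ — as the name $\textsc{UnmarkSubtreeRootedAt}$ already suggests. A path aggregate on the exposed root-to-$v$ path never sees the marks hanging off the sides of that path, so your structure answers the wrong query. The subtree version is also achievable in $\tilde O(1)$ worst case, e.g.\ via an Euler-tour tree storing per-occurrence depth-to-root, with range-add updates to shift depths of whole subtrees under $\textsc{InsertEdge}/\textsc{DeleteEdge}$ and a range maximum over marked occurrences to answer the query, or via top trees maintaining non-path aggregates with lazy clears that also propagate into rake children; but it is a genuinely different data structure from the path-only one you built.
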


To support the query for a witness of the diameter of set $X$ in $G$, we maintain a dynamic tree $T$ (via a data structure $\mathcal{D}_T$ from \Cref{thm:fancyTreeDS}) with $\Lambda + 1$ levels where level $0$ has a node for each vertex in $V(G_0)$, level $1$ has a node for each vertex in $V(G_1)$, and so on up until level $\Lambda$ which has a node for every vertex in $V(G_{\Lambda})$. We henceforth talk interchangeably of the nodes in $T$ and the vertices in $V(G_i)$ for every level $i$. Further, we have an edge $(u,v)$ in $T$ for every $u \in V(G_i)$ and $v \in V(G_{i+1})$ where $v = p_{i+1}(u)$ and let each such edge be of weight $\dist_{G_i}(u, v)$. We mark all vertices that are in the set $X$. 

The query operation is given in \Cref{alg:queryWitnessPair}. We pick an arbitrary vertex $x \in X$ and then try to find a vertex $y \in X$ that is far in $G$ from $x$ using the distance oracles. In order to achieve this goal, the query algorithm uses various operations on the data structure $\mathcal{D}_T$. Before returning $y$, however, the algorithm reverts all such information such that $\mathcal{D}_T$ is in the same state after the query as it was before the query.

\begin{algorithm}
Let $x$ be an arbitrary vertex in $X$.\\
$y \gets x$; $d \gets 0$.\\
\For(\label{lne:witnessForLoop}){$i = 0, 1, \ldots, \Lambda -1$}{
    \ForEach{$w \in B_{G_i}(\hat{p}_i(x), A_{i+1}) \cup \{\hat{p}_{i+1}(x)\}$}{
        $y' \gets \mathcal{D}_T.\textsc{QueryFarthestMarkedAncestor}(w)$.\label{lne:queryFarthestVertex}\\
        \If{$y \neq \bot$}{
            $d' \gets \textsc{QueryDist}(x, y')$.\\
            \lIf{$d' > d$}{
                $d \gets d'$; $y \gets y'$.
            }
        }
        $\mathcal{D}_T.\textsc{UnmarkSubtreeRootedAt}(w)$.\label{lne:unmarkLowerTier}
    }
}
Revert all operations to the data structure $\mathcal{D}_T$.\\
\Return $y$.
\caption{$\textsc{QueryDiameterWitnessPair}()$}
\label{alg:queryWitnessPair}
\end{algorithm}

Let us establish the correctness of the query operation.

\begin{claim}
The vertices $x,y \in X$ returned by the procedure above satisfy \[
\dist_G(x,y) \geq \diam_G(X)/ (2 \gamma_{approxDistQuery}).
\]
\end{claim}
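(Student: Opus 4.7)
The plan is to reduce the claim to an eccentricity-style argument from $x$. Let $x^{*}, y^{*} \in X$ realize the diameter, i.e.\ $\dist_G(x^{*}, y^{*}) = \diam_G(X)$. Since $x \in X$, the triangle inequality
\[
\dist_G(x, x^{*}) + \dist_G(x, y^{*}) \;\geq\; \dist_G(x^{*}, y^{*})
\]
guarantees that at least one of $x^{*}, y^{*}$, call it $z$, satisfies $\dist_G(x, z) \geq \diam_G(X)/2$. It therefore suffices to produce a candidate $y'$ enumerated by the loop with $\textsc{QueryDist}(x, y') \geq \textsc{QueryDist}(x, z)$: indeed $y$ is chosen to maximize $d = \textsc{QueryDist}(x, y')$, while \Cref{cor:summarizeDistanceQuery} gives $\dist_G(x, y) \geq d/\gamma_{approxDistQuery}$ and $\textsc{QueryDist}(x, z) \geq \dist_G(x, z) \geq \diam_G(X)/2$, immediately yielding $\dist_G(x,y) \geq \diam_G(X)/(2\gamma_{approxDistQuery})$.

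To produce such a $y'$, let $i^{*}$ be the termination level of $\textsc{QueryDist}(x, z)$ and set $w^{*} = \hat{p}_{i^{*}}(z)$. By the termination condition in \Cref{alg:query}, either $i^{*} = \Lambda$ or $w^{*} \in B_{G_{i^{*}}}(\hat{p}_{i^{*}}(x), A_{i^{*}+1})$; in either case $w^{*}$ is enumerated by the outer loop of \Cref{alg:queryWitnessPair} at iteration $i = i^{*}$. The construction of $T$ ensures that the subtree handed to $\textsc{QueryFarthestMarkedAncestor}(w^{*})$ consists exactly of those $v \in V(G)$ whose level-$i^{*}$ pivot chain terminates at $w^{*}$, so in particular it contains $z$. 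I next argue that $z$ is still marked at this iteration. Any earlier unmarking would occur at some iteration $(i', w')$ with $i' < i^{*}$ and $z$ in the subtree of $w'$, which forces either $\hat{p}_{i'}(z) \in B_{G_{i'}}(\hat{p}_{i'}(x), A_{i'+1})$ (contradicting minimality of $i^{*}$) or the $w' = \hat{p}_{i'+1}(x)$ branch, which I will handle symmetrically by running the same argument at the iteration $(i', \hat{p}_{i'+1}(x))$ in place of $(i^{*}, w^{*})$. Assuming $z$ is still marked, the farthest-marked-ancestor query returns some $y' \in X$ with $\hat{p}_{i^{*}}(y') = w^{*}$ and with tree-path weight to $w^{*}$ at least that of $z$, i.e.\ $d_{i^{*}, y'} \geq d_{i^{*}, z}$. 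Since $y'$ shares its level-$i^{*}$ pivot with $z$, the query $\textsc{QueryDist}(x, y')$ satisfies the same if-condition at level $i^{*}$, so in the formula of \Cref{alg:query} the two distance summands $d_{i^{*},x}$ and $\dist_{G_{i^{*}}}(\hat{p}_{i^{*}}(x), w^{*})$ coincide with those appearing in $\textsc{QueryDist}(x, z)$, while the last summand $d_{i^{*}, y'}$ dominates $d_{i^{*}, z}$. This gives $\textsc{QueryDist}(x, y') \geq \textsc{QueryDist}(x, z)$, completing the proof.

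The main obstacle is justifying that the formula used by $\textsc{QueryDist}(x, y')$ really is the level-$i^{*}$ formula, rather than one produced by an even earlier termination at some level $i'' < i^{*}$ along the pivot chain of $y'$. To handle this I plan to show by induction on the level index that the intermediate distance estimates computed by \Cref{alg:query} are non-decreasing as the termination level grows along a fixed pivot chain, so an earlier termination can only yield a smaller estimate that is in turn still an overestimate of $\dist_G(x, y')$; this is enough, since we only need the maximum estimate over candidates to dominate $\textsc{QueryDist}(x, z)$, which the level-$i^{*}$ candidate $y'$ provides. The remaining book-keeping is the symmetric handling of the $w = \hat{p}_{i+1}(x)$ branch, which I expect to be a direct parallel to the ball-branch case after identifying $w$ with its natural inclusion into $V(G_{i+1})$.
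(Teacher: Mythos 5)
Your proposal follows the same overall route as the paper: pick a far vertex $z$ (via the diameter endpoints and triangle inequality), identify the termination level $i^{*}$ of $\textsc{QueryDist}(x,z)$ and $w^{*}=\hat p_{i^{*}}(z)$, argue $z$ is still marked when $w^{*}$ is processed, let $y'$ be the farthest marked ancestor so that $d_{i^{*},y'}\geq d_{i^{*},z}$, and conclude $\textsc{QueryDist}(x,y')\geq\textsc{QueryDist}(x,z)\geq\dist_G(x,z)\geq\diam_G(X)/2$. This matches the paper's decomposition step for step.

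However, the part you flag as ``the main obstacle'' is where the proposal goes wrong. You propose to handle a possible early termination of $\textsc{QueryDist}(x,y')$ at some level $i''<i^{*}$ by showing that the formula values are non-decreasing in the termination level, and then arguing that an early termination ``can only yield a smaller estimate.'' Even if that monotonicity held, it would give you the inequality in the wrong direction: you need the \emph{returned} value $\textsc{QueryDist}(x,y')$ to be \emph{at least} $\textsc{QueryDist}(x,z)$, and showing the early-termination value is smaller than the level-$i^{*}$ formula does nothing to lower-bound it against $\widehat{\dist}(x,z)$. The correct (and much shorter) resolution, which the paper uses, is to observe that $y'$ was returned as a still-marked vertex at iteration $(i^{*},w^{*})$. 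Because the unmark calls at every earlier iteration $j<i^{*}$ remove exactly the vertices $v$ with $\hat p_j(v)\in B_{G_j}(\hat p_j(x),A_{j+1})\cup\{\hat p_{j+1}(x)\}$, the fact that $y'$ survived forces $\hat p_j(y')\notin B_{G_j}(\hat p_j(x),A_{j+1})$ for all $j<i^{*}$. Hence the if-condition in \Cref{alg:query} for $(x,y')$ fails at every level below $i^{*}$, so the query cannot terminate early, and the level-$i^{*}$ formula really is what is returned. No induction on estimates is needed; the marking invariant already pins down the termination level.

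Two smaller remarks. First, the reduction to a far vertex $z$ via the triangle inequality on the diameter pair is equivalent to the paper's choice of $z$ as the farthest vertex of $X$ from $x$ — no substantive difference. Second, you correctly notice the $w'=\hat p_{i'+1}(x)$ branch can also unmark vertices; the paper is terse about this and your instinct to ``run the same argument symmetrically'' at that iteration is the right fix (in that case the query for the returned $y'$ terminates at level $i'+1=i^{*}$ with $\dist_{G_{i^{*}}}(\hat p_{i^{*}}(x),\hat p_{i^{*}}(y'))=0$, and the inequality still closes). So that piece of book-keeping is fine; it is only the monotonicity-of-estimates step that should be replaced by the marking-invariant argument.
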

\begin{proof}
Consider any vertex $y'$ that is found in \Cref{lne:queryFarthestVertex} of \Cref{alg:queryWitnessPair} during the $(i+1)$-th iteration of the for-loop starting in  \Cref{lne:witnessForLoop} (which means that the counter is equal to $i$). We have that $y' \in X$ since we have that $y'$ is in the set of marked vertices $M$ at the time that it was queried via operation $\textsc{QueryFarthestMarkedAncestor}$ and before the query algorithm started, we had $M$ equal to $X$ and thereafter no invocation of the procedure $\textsc{MarkVertex}(v)$ which implies that $y' \in X$.

We claim that when invoking the distance query algorithm in \Cref{alg:query} for tuple $(x, y')$ (note that the distance query algorithm is non-symmetric), we have that it returns in the $(i+1)$-th iteration and thus the distance estimate $\widehat{\dist}(x,y')$ is taken to be the sum $d_{i, x} + \dist_{G_i}(\hat{p}_i(x), \hat{p}_i(y')) + d_{i, y'}$. To make this observation, note that the distance query returns in the $(i+1)$-th iteration if and only if 
 for every $j < i$, we have $\hat{p}_j(y') \not\in B_{G_j}(\hat{p}_j(x), A_{j+1})$ and $\hat{p}_i(z) \not\in B_{G_i}(\hat{p}_i(x), A_{i+1})$. But the former guarantee implies that none of the invocations of the procedure $\mathcal{D}_T.\textsc{UnmarkSubtreeRootedAt}(\cdot)$ in \Cref{lne:unmarkLowerTier} in the for-loop iterations up to the $(i+1)$-th iteration could have unmarked the vertex $y'$ and thus at the beginning of the $(i+1)$-th iteration, we still have $y' \in M$. Further, the same argument shows that $y' \not\in M$ after the $(i+1)$-th while-loop iteration and thus in no iteration thereafter as $M$ is decreasing over the course of the query algorithm.

Now, consider any vertex $z \in X$ at maximal distance $\dist_G(x,z)$ from $x$. We have that $\dist_G(x,z) \geq \diam_G(X)/ 2$ from the triangle inequality. Let $i$ be the index such that if we invoke the distance query from \Cref{alg:query} for tuple $(x, z)$, it returns in the $(i+1)$-th iteration with distance estimate $\widehat{\dist}(x,z)$ is taken to be the sum $d_{i, x} + \dist_{G_i}(\hat{p}_i(x), \hat{p}_i(z)) + d_{i, z}$. It is not hard to argue again that during the $(i+1)$-th for-loop iteration before iterating over vertex $ \hat{p}_i(z)$ in the foreach-loop, we have $z \in M$.

Consider next the $(i+1)$-th iteration of the foreach-loop in \Cref{lne:witnessForLoop} and the foreach-loop iteration with $w = \hat{p}_i(z)$. Let $y'$ be the vertex chosen in \Cref{lne:queryFarthestVertex} during this iteration. Note that since $y'$ is at least as far from $w$ in $T$ as $z$ (which was still marked and therefore available), we have that $d_{i,y'} \geq d_{i,z}$. But this implies that 
\[
\widehat{\dist}(x,y') = d_{i, x} + \dist_{G_i}(\hat{p}_i(x), w) + d_{i, y'} \geq d_{i, x} + \dist_{G_i}(\hat{p}_i(x), z) + d_{i,z} = \widehat{\dist}(x,z).
\]
The claim now follows immediately from \Cref{cor:summarizeDistanceQuery}.
\end{proof}

The runtime for all update operations is established straightforwardly by noting that the tree $T$ can be maintained from the information supplied by the data structures from \Cref{lma:workhorseSparsifier} maintained in the hierarchy and using the guarantees from \Cref{thm:fancyTreeDS}. Further, we have that \Cref{alg:queryWitnessPair} can be implemented in time $\Otil(\Lambda k) = \Otil(k)$ as the size of each ball $B_{G_i}(\hat{p}_i(x), A_{i+1}) \cup \{\hat{p}_{i+1}(x)\}$ is of size at most $\Otil(k)$ again by \Cref{lma:workhorseSparsifier} and the guarantees from \Cref{thm:fancyTreeDS} which implies that rolling back $\ell$ operations takes $\Otil(\ell)$ time. Finally, we note that the query time mismatches the stated query time in \Cref{thm:mainTheoremAPSPFormal}, however, this can be remedied by calling the procedure $\textsc{QueryDiameterWitnessPair}()$ in \Cref{alg:queryWitnessPair} after every update to $G$ and $X$ and then store the two vertices $x,y$ that it returns. Upon a query invocation on the data structure in \Cref{thm:mainTheoremAPSPFormal}, we can then simply return these two pre-stored vertices instead of running the query algorithm.

This concludes the proof of \Cref{thm:mainTheoremAPSPFormal} which follows from the series of claims in this section.

\subsection{Maintaining a Low-Diameter Hierarchical Tree}
\label{subsec:maintainLowDiam}

In this section, we prove the following Theorem which is an extended version of \Cref{thm:mainTheoremLowDiamTreeOverview}.

\begin{theorem}\label{thm:mainTheoremLowDiamTree}
Given an $m$-edge input graph $G = (V,E,l)$ with polynomial lengths in $ [1,L]$ and maximum degree $3$. There is a data structure $\textsc{LowDiamTree}$ that maintains a flat hierarchical forest $F$ over $G$ that supports a polynomially-bounded number of updates of the following type:
\begin{itemize}
    \item $\textsc{InsertEdge}(e)/ \textsc{DeleteEdge}(e)$: adds/removed edge $e$ into/from $G$. If the edge is inserted, its associated length $l(e)$ has to be in $[1,L]$ and the maximum degree is not allowed to exceed $3$; if it is deleted, it has to be ensured that thereafter graph $G$ is still connected.
\end{itemize}
Under these updates, the algorithm maintains the flat hierarchical forest $F$ over $G$ along with graph embedding $\Pi_{F \mapsto G}$ where the length function $l_F$ of $F$ is defined by $l_F(e) = l(\Pi_{F \mapsto G}(e))$ for every edge $e \in E(F)$ and vertex maps $\Pi_{V(G) \mapsto V(F)}, \Pi_{V(F) \mapsto V(G)}$, such that, for some $\gamma_{lowDiamTree}= e^{O(\log^{20/21} m \log\log m})$, at any time:
\begin{enumerate}
    \item $\diam_F(\Pi_{V(G) \mapsto V(F)}(V)) \leq \gamma_{lowDiamTree}\cdot \diam(G)$, and
    \item  \label{prop:lowCongLowDiam} we have $\econg(\Pi_{F \mapsto G}) \leq \gamma_{lowDiamTree}$, and
    \item \label{prop:fewEdgesLowDiam} $F$ consists of at most $\gamma_{lowDiamTree} \cdot m$ vertices and edges.
\end{enumerate}
The algorithm maintains the flat hierarchical forest $F$ and all maps explicitly. Vertex maps are such that once an element is added to the pre-image, its image remains fixed until the element is again removed.

The algorithm is deterministic, can be initialized in time $m \cdot \gamma_{lowDiamTree}$, and thereafter processes each edge insertion/deletion in amortized time $\gamma_{lowDiamTree}$.
\end{theorem}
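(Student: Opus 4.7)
The plan is to build on the sparsifier hierarchy $G_0 = G, G_1, \ldots, G_\Lambda$ from \Cref{subsec:fullyDynAPSP} with the same parameter choices. Recall that each $G_{i+1}$ is a vertex sparsifier of $G_i$ with terminal set $A_{i+1} \subseteq V(G_i) \cap V(G_{i+1})$ preserving terminal distances up to the factor $\gammaVSApprox$, that $|V(G_\Lambda)| \le m^{o(1)}$, and that \Cref{lma:workhorseSparsifier}(3) also maintains a pivot forest $F^{\mathrm{piv}}_i \subseteq G_i$ connecting each $v \in V(G_i)$ to $p_{i+1}(v) \in A_{i+1}$ along a true shortest path in $G_i$ whose segments are explicitly maintained.

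At the top level, $G_\Lambda$ is connected whenever $G$ is (terminal distances are finite and every vertex of $G_\Lambda$ is a terminal of the level just above, so connectivity is inherited) and of subpolynomial size, so we simply maintain a shortest-path tree $T_\Lambda$ of $G_\Lambda$ rooted at an arbitrary fixed vertex by recomputing it from scratch after every update to $G_\Lambda$ in $\mathrm{poly}(m^{o(1)})$ time. Its diameter is at most $2\diam(G_\Lambda) \le 2(\gammaVSApprox)^\Lambda \diam(G)$. For $i = \Lambda-1$ down to $0$ we then maintain a flat hierarchical forest $T_i$ on $G_i$ as follows. Apply \Cref{lma:extensionOfMainThm} with $H = G_{i+1}$, $F = T_{i+1}$, and $A_F = A_{i+1}$: this produces a flat hierarchical forest $T'_i$ on $G_i$ over the vertex set $A_{i+1}$, whose edge-congestion is blown up by at most $\Otil(\gamma_{recVS} \Delta_i)$ and whose per-update recourse grows additively by the same factor; furthermore, the length in $G_i$ of the $T'_i$-path between any two terminals $u,v \in A_{i+1}$ is at most the length in $G_{i+1}$ of the corresponding $T_{i+1}$-path, by the stretch guarantee of \Cref{lma:extensionOfMainThm}. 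Then extend $T'_i$ to a forest $T_i$ spanning $V(G_i)$ by appending, for each non-terminal $v \in V(G_i) \setminus A_{i+1}$, the $v$-to-$p_{i+1}(v)$ path in $F^{\mathrm{piv}}_i$, gluing its endpoint at $p_{i+1}(v)$ to the node of $T'_i$ already identified with $p_{i+1}(v)$. Because $F^{\mathrm{piv}}_i \subseteq G_i$ uses each edge of $G_i$ at most once, these attachments contribute only an additive $1$ to the edge-congestion and only $\Otil(k)$ changes per update to $G_i$ by \Cref{lma:workhorseSparsifier}(3). The final output is $F \defeq T_0$.

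For diameter, unroll the recursion: for any $u,v \in V(G_i)$ their $T_i$-path first routes up to the pivots (total $G_i$-length at most $2\diam(G_i)$) and then follows $T'_i$, giving
\[
\diam_{T_i}(\Pi_{V(G_i) \mapsto V(T_i)}(V(G_i))) \le \diam_{T_{i+1}}(\Pi_{A_{i+1} \mapsto V(T_{i+1})}(A_{i+1})) + 2(\gammaVSApprox)^i \diam(G);
\]
summing the geometric series over $\Lambda+1$ levels and using $\Lambda = m^{o(1)}$ yields the claimed $m^{o(1)} \cdot \diam(G)$ bound. Edge-congestion and size each accumulate a multiplicative $\Otil(\gamma_{recVS} \Delta_i)$ factor per lifting step, so after $\Lambda$ levels they are $m^{o(1)}$ and $m \cdot m^{o(1)}$ respectively, matching properties \ref{prop:lowCongLowDiam} and \ref{prop:fewEdgesLowDiam}. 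For the amortized update time, we rebuild the sparsifier at level $i$ every $m^{1-(i+1)/\Lambda}$ updates, de-amortized as in \Cref{subsec:fullyDynAPSP}, rebuild $T_\Lambda$ after every change to $G_\Lambda$, and propagate the resulting edits up the hierarchy via \Cref{lma:extensionOfMainThm}; combining the worst-case per-update bounds of \Cref{lma:workhorseSparsifier,lma:extensionOfMainThm} across the $\Lambda+1$ levels gives amortized $m^{o(1)}$ per update to $G$.

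The main obstacle will be verifying inductively that each $T_{i+1}$ satisfies the recourse and congestion hypotheses required by \Cref{lma:extensionOfMainThm} at level $i$, so that the per-level multiplicative blowups remain in $m^{o(1)}$ and telescope to the stated $\gamma_{lowDiamTree}$. A secondary technicality is guaranteeing the stability of the vertex maps (once a vertex enters the preimage, its image is frozen), which we obtain at the top by rooting $T_\Lambda$ at a persistent vertex and at lower levels by inheriting stability from the analogous guarantees in \Cref{lma:workhorseSparsifier,lma:extensionOfMainThm}, together with the monotonic shrinking of the pivot balls observed in \Cref{obs:decrementalMaintenanceofTZ}.
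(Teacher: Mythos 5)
Your proposal is correct and uses the same building blocks as the paper --- the sparsifier hierarchy $G_0, \ldots, G_\Lambda$ from Section~\ref{subsec:fullyDynAPSP}, the pivot forests supplied by \Cref{lma:workhorseSparsifier}(3), a recomputed SSSP tree on the subpolynomial-size top level $G_\Lambda$, and \Cref{lma:extensionOfMainThm} for pushing flat hierarchical forests one level down --- but organizes the recursion differently. The paper maps each per-level forest $F_i$ down to $G$ independently, running a separate chain of $i$ applications of \Cref{lma:extensionOfMainThm} (always with $A_F = V(G_i)$) to obtain $F_i^0$, and only then glues $F_0^0, \ldots, F_\Lambda^0$ together at the pivot sets $A_1, \ldots, A_\Lambda$; you instead do a single top-down telescoping pass, alternating a one-level lemma application (with the smaller terminal set $A_F = A_{i+1}$) with the attachment of the level-$i$ pivot forest. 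Unrolling your recursion produces the same final object, and the diameter, congestion, and recourse accumulate multiplicatively by the same $m^{o(1)}$ factor per level in both constructions, telescoping over $\Lambda = \log^{1/21}m$ levels to $\gamma_{lowDiamTree}$. Your variant is arguably cleaner: a single chain of $\Lambda$ lemma calls rather than $\Theta(\Lambda^2)$ calls spread over $\Lambda+1$ independent chains, and it avoids ever feeding the lemma the full vertex set $V(G_i)$ as $A_F$ (which the paper does, relying implicitly on the fact that the output batches of \Cref{lma:workhorseSparsifier} contain no vertex deletions so that $V(G_i)$ is monotonically increasing as the lemma requires). The subtleties you flag --- the inductive verification that each $T_{i+1}$ obeys the recourse/congestion hypotheses of \Cref{lma:extensionOfMainThm}, and freezing vertex-map images --- are exactly the points the paper's proof addresses, and your sketched resolutions match.
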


We point out that above, Property \ref{prop:fewEdgesLowDiam} follows immediately from Property \ref{prop:lowCongLowDiam} but we include it as an additional Property to inform the intuition of the reader.

\paragraph{The Algorithm.} The Theorem above can be obtained rather straightforwardly by augmenting the hierarchy from \Cref{sec:sparsifier_hierarchy} in the following way: we let $F_i$ denote the forest described in Property \ref{prop:dynamicShortestPathPivotForest} of \Cref{lma:workhorseSparsifier}, obtained by the data structure run in the hierarchy on graph $G_i$ that maintains graph $G_{i+1}$ and pivot set $A_{i+1}$. We let $F_{\Lambda}$ denote the forest (or rather tree) obtained as the single-source shortest path tree from an arbitrary vertex $v \in V(G_{\Lambda})$ in $G_{\Lambda}$ where $F_{\Lambda}$ is re-computed after every update to $G$. We let $A_{\Lambda +1} = \emptyset$. Note that each connected component of the forest $F_i$ contains exactly one vertex from $A_{i+1}$.

Next, for every $0 \leq i < \Lambda$, we have that $F_i$ is a forest in $G_i$ and thus it trivially forms a flat hierarchical forest over $V(G_i)$, and we feed $F_i$ to the data structure at level $i-1$ and use \Cref{lma:extensionOfMainThm} to obtain flat hierarchical forest $F_i^{i-1}$ over $V(G_i)$ in $G_{i-1}$, then feed $F_i^{i-1}$ to the data structure at level $i-2$ and use \Cref{lma:extensionOfMainThm} to obtain $F_i^{i-2}$, and so on until we obtain forest $F_i^{0}$ over $V(G_i)$ in the graph $G_0 = G$. We let $\Pi_{V(G_i) \mapsto V(F_i^j)}$ be the vertex map that maps the terminal set $V(G_i)$ into the tree $F_i^j$.

Finally, we define the hierarchical tree $F$ described in \Cref{thm:mainTheoremLowDiamTree} as the direct sum of forests $F^{0}_{0}, F^0_1, F^0_2, \ldots, F^0_{\Lambda}$ where we merge the nodes in $\Pi_{V(G_0) \mapsto F^{0}_0}(A_1)$ with the nodes in $\Pi_{V(G_1) \mapsto F^{0}_1}(A_1)$, then the nodes in $\Pi_{V(G_1) \mapsto F^{0}_1}(A_2)$ with the nodes in $\Pi_{V(G_2) \mapsto F^{0}_2}(A_2)$, and so on until level $\Lambda$. 

We define the vertex map $\Pi_{V(G) \mapsto V(F)}$ to map each vertex $v \in V(G)$ to the node in $F$ that is identified with the vertex $v$ in the set $V(G_0)$. It is straightforward to define the flat graph embedding associated with $F$ from the corresponding maps associated with forests $F^{0}_{0}, F^0_1, F^0_2, \ldots, F^0_{\Lambda}$ which then also implicitly defines the vertex map $\Pi_{V(F) \mapsto V(G)}$.

\paragraph{Analysis.} In the following analysis, we do not obtain tight bounds but rather upper bound all subpolynomial factors loosely by $\gamma_{lowDiamTree}$ to simplify the proof.

\begin{claim}
The forest $F$ along with associated maps $\Pi_{V(G) \mapsto V(F)}, \Pi_{V(F) \mapsto V(G)}, \Pi_{F \mapsto G}$ can be maintained such that $F$ has amortized recourse $e^{O(\log^{20/21} m \log\log m})$, that $\econg(\Pi_{F \mapsto G}) = e^{O(\log^{20/21} m \log\log m})$ and $\diam_F(\Pi_{V(G) \mapsto V(F)}(V)) \leq \gamma_{lowDiamTree}\cdot \diam(G)$, and that the map $\Pi_{V(G) \mapsto V(F)}$ has its node in its image being a leaf node in $F$. Vertex maps are such that once an element is added to the pre-image, its image remains fixed until the element is again removed.

The algorithm to maintain these objects takes initialization time $m \cdot e^{O(\log^{20/21} m \log\log m})$ and thereafter amortized update time $e^{O(\log^{20/21} m \log\log m})$ per update to $G$.
\end{claim}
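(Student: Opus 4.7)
The plan is to verify that the forest $F$ obtained as the direct sum of $F^{0}_0, F^{0}_1, \ldots, F^{0}_\Lambda$ with the prescribed identifications at the $A_i$-boundaries, plus an auxiliary leaf layer, satisfies each listed property. To enforce that $\Pi_{V(G) \mapsto V(F)}$ lands in leaves, I would append, for each $v \in V(G)$, an auxiliary node $\ell_v$ as a new leaf attached to the node in $F^{0}_0$ identified with $v$, with the connecting edge mapped under $\Pi_{F \mapsto G}$ to an arbitrary $G$-edge incident to $v$, and set $\Pi_{V(G) \mapsto V(F)}(v) := \ell_v$. Since $G$ has maximum degree $3$, this adds only $O(1)$ congestion and recourse per update and trivially respects the stability requirement on the vertex maps.

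For the diameter bound, for any $u, v \in V(G)$ I would exhibit the level-by-level $F$-path $\ell_u \to v \to \hat{p}_1(u) \to \hat{p}_2(u) \to \ldots \to \hat{p}_\Lambda(u)$, then across $F^{0}_\Lambda$ to $\hat{p}_\Lambda(v)$, and symmetrically back down to $\ell_v$, using that each successive $A_i$-node is identified across adjacent sub-forests by construction. The contribution of level $i$ is the $F^{0}_i$-length (under $l_{G}$) of a path between two terminal-nodes of $F^{0}_i$; by iterating \Cref{lma:extensionOfMainThm} down through the $i$ intermediate levels, this is bounded by the corresponding $F_i$-length under $l_{G_i}$. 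Since $F_i$ is a union of vertex-to-pivot shortest paths in $G_i$ (Property \ref{prop:dynamicShortestPathPivotForest} of \Cref{lma:workhorseSparsifier}), this $F_i$-length is at most $2\cdot\diam(G_i) \leq 2 \cdot (\gammaVSApprox)^i \cdot \diam(G)$ by the stretch guarantee of Property \ref{prop:approxVS}. Including the $F_\Lambda$ crossing (also $\leq 2\diam(G_\Lambda)$) and summing the geometric series yields $\diam_F(\Pi_{V(G) \mapsto V(F)}(V)) = O((\gammaVSApprox)^\Lambda) \cdot \diam(G)$, which fits within $\gamma_{lowDiamTree}$ by the arithmetic already used in \Cref{clm:runtimeFulylDynAPSP}.

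For congestion and recourse I would bound each $F^{0}_i$ separately. Starting from $F_i \subseteq G_i$ with congestion $1$ and applying \Cref{lma:extensionOfMainThm} iteratively $i$ times inflates congestion by $\Otil(\gamma_{recVS} \cdot \Delta_j)$ at level $j$, where $\Delta_j = 3 \gamma_{recVS}^j$. The product telescopes to $\Otil((\gamma_{recVS})^{O(i^2)})$; summing over $i \leq \Lambda$ with the $O(1)$ leaf contribution gives $\econg(\Pi_{F \mapsto G}) \leq e^{O(\Lambda^2 \log \gamma_{recVS})} = e^{O(\log^{20/21} m \log\log m)}$, using $\log \gamma_{recVS} = O(\log m / K + \log^{2/3} m \log \log m)$ with $K = \log^{3/21} m$ and $\Lambda = \log^{1/21} m$. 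An analogous computation with $F_i$ changing by $\Otil(k \cdot \gamma_{recVS}^i)$ per update to $G$ (Property \ref{prop:dynamicShortestPathPivotForest}), amplified by $i$-fold application of \Cref{lma:extensionOfMainThm}, gives the same amortized bound for recourse; the top-level $F_\Lambda$ is re-computed wholesale per update, but $|V(G_\Lambda)| \leq (\gamma_{recVS})^{O(\Lambda)}$ (as in \Cref{clm:runtimeFulylDynAPSP}), so its contribution is subsumed. Initialization is dominated by the $\Lambda$-level composition of \Cref{clm:initSparisfier}, giving $m \cdot \gamma_{lowDiamTree}$; per-update running time matches the recourse up to polylogarithmic overhead.

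The main obstacle is the tight bookkeeping of multiplicative factors across the $\Lambda$ hierarchy levels: each invocation of \Cref{lma:extensionOfMainThm} inflates congestion and recourse by $\Otil(\gamma_{recVS} \cdot \Delta_j)$, and $\Delta_j$ itself grows geometrically in $j$, so the iterated product becomes $(\gamma_{recVS})^{O(\Lambda^2)}$ rather than $(\gamma_{recVS})^{O(\Lambda)}$. It is precisely the parameter choices $\Lambda = \log^{1/21} m$ and $K = \log^{3/21} m$ that make $\Lambda^2 \cdot \log \gamma_{recVS}$ telescope to the target $\log^{20/21} m \log \log m$ exponent; a looser balance between $\Lambda$ and $K$ would push the exponent above $1$ and break the subpolynomial bound.
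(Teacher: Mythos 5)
Your proposal follows essentially the same route as the paper's proof: bound the per-level recourse of $F_i$, apply \Cref{lma:extensionOfMainThm} recursively to obtain congestion and recourse $(\gamma_{recVS})^{O(\Lambda^2)}$, and bound the diameter by segmenting the $F$-path into at most two pieces per level and summing the geometric series of stretch factors. One place where your proposal is actually more careful than the paper's proof: the claim asserts that $\Pi_{V(G)\mapsto V(F)}$ maps into leaf nodes of $F$, but the paper's proof never verifies this; as you correctly observe, a vertex $u$ lying on another vertex's pivot path is an internal node of $F_0 = F_0^0$, so the property would fail without the extra leaf layer $\ell_v$ you append, and attaching that layer costs only $O(1)$ congestion/recourse and $O(\diam(G))$ diameter overhead given constant degree. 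A small slip in your bookkeeping: the per-update recourse of $F_i$ is $\Otil(k\cdot(\gamma_{recVS})^{O(i^2)})$ rather than $\Otil(k\cdot\gamma_{recVS}^i)$, since $G_i$ absorbs $\prod_{j<i}(\gamma_{recVS}\Delta_j)=(\gamma_{recVS})^{O(i^2)}$ updates per update to $G$ (as in \Cref{clm:runtimeFulylDynAPSP}); this does not change your conclusion because you already budget for $\Lambda^2\log\gamma_{recVS}$ in the exponent, but the intermediate estimate should be corrected.
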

\begin{proof}
For $i < \Lambda$, we have from \Cref{lma:workhorseSparsifier} that the forest $F_i$ is maintained by the data structure, and that they undergo $k \cdot e^{O(\log^{20/21} m \log\log m}) = e^{O(\log^{20/21} m \log\log m})$ changes per update to $G$ (this is explicit in the analysis of \Cref{clm:runtimeFulylDynAPSP}). For $i = \Lambda$, the bound follows by the bound on the size of $G_{\Lambda}$ analyzed in \Cref{clm:runtimeFulylDynAPSP}.

Thus, using \Cref{lma:extensionOfMainThm} recursively, we obtain that each forest $F_i^0$ has associated graph embedding $\Pi_{F^0_i \mapsto G}$ with edge congestion in $(\gamma_{recVS})^{O(\Lambda^2)} = e^{O(\log^{20/21} m \log\log m})$ by using that $F_i$ has congestion $1$ into graph $G_i$ and by leveraging the degree bounds $\Delta_i$ for each graph $G_i$. It is also not hard to see from \Cref{clm:runtimeFulylDynAPSP} that $\Pi_{F^0_i \mapsto G_i}$ can be maintained in amortized runtime $e^{O(\log^{20/21} m \log\log m})$. Above, we only obtain amortized bounds since we cannot apply the de-amortization of the batching technique for the precise definition of $F$ above but rather have to rebuild every now and then at a larger cost per update.

Again from \Cref{lma:workhorseSparsifier}, we have that the forests $F_i$ are maintained by the data structure, and that they undergo $k \cdot e^{O(\log^{20/21} m \log\log m}) = e^{O(\log^{20/21} m \log\log m})$ changes per update to $G$ (by the analysis from \Cref{clm:runtimeFulylDynAPSP}). And thus, via  \Cref{lma:extensionOfMainThm}, we can upper bound the recourse to each forest $F_i^0$ by $e^{O(\log^{20/21} m \log\log m})$ per update to $G$. Finally, we have that to obtain $F$ from forests $F^{0}_{0}, F^0_1, F^0_2, \ldots, F^0_{\Lambda}$ we take the direct sum and merge certain nodes. But here we only merge at most one node from each forest identified with the same vertex $v \in V$ and using the bound on the edge congestion of each map $\Pi_{F_i^0 \mapsto G}$ and that the graph $G$ is a graph with constant maximum degree, we thus have that the merge procedure can be implemented to cause additional recourse to $F$ at most $e^{O(\log^{20/21} m \log\log m})$ per update to any forest $F_i^0$. 

The bound on the diameter of the vertex set $V$ mapped into $V(F)$ with respect to $F$ can be obtained by observing that every component $C$ of a forest $F_i$ for any $0 \leq i \leq \Lambda$ is taken as the union of pivot paths to a root vertex $v$, thus it is a truncated shortest path tree in $G_i$ from $v$. But by the distance preserving properties of $G_i$ with respect to $G$, this implies that each such component has diameter at most $\diam(G) \cdot e^{O(\log^{20/21} m \log\log m})$. We further have that the tree $T$ obtained as the union of forests $F_0, F_1, \ldots, F_{\Lambda}$ has any path $P$ in $T$ such that it can be segmented to have at most two segments in each forest $F_i$, and thus the diameter the tree $T$ with respect to length functions of $F_0, F_1, \ldots, F_{\Lambda}$ is at most $\diam(G) \cdot e^{O(\log^{20/21} m \log\log m})$. Finally, it is not hard to see that instead of taking forests $F_0, F_1, \ldots, F_{\Lambda}$ directly, when using forests $F_0^0, F_1^0, \ldots, F_{\Lambda}^0$ to form $F$, we have that all vertices in $V$ are mapped via $\Pi_{V(G) \mapsto V(F)}$ into the same connected component of $F$ and that this connected component has diameter at most equal to the diameter of $T$ with respect to the lengths in $G$ by \Cref{lma:extensionOfMainThm}, as desired.

The time required to maintain the maps $\Pi_{V(G) \mapsto V(F)}, \Pi_{V(F) \mapsto V(G)}, \Pi_{F \mapsto G}$ can be bound asymptotically be the number of changes to $F$, which yields the runtime bound.
\end{proof}

\paragraph{Extending \Cref{thm:mainTheoremLowDiamTree} to Maintaining a Low-Depth Hierarchical Tree $T$.} While the goal of \Cref{thm:mainTheoremLowDiamTree} is to have the forest $F$ to be a flat hierarchical tree, for some applications it is simpler to work with a hierarchical tree that has low depth, i.e. where every path in the forest consists of only few edges. Here, we show how to maintain such a hierarchical forest $T$, in fact, a hierarchical tree (note the forest can no longer be flat) using the algorithm above. A subtle but crucial detail here is that it is no longer possible to maintain the embedding paths of $\Pi_{T \mapsto G}$ explicitly, i.e. output the new path whenever it changes. Since mapping into $G$ is crucial for many applications, we show that instead, we can map from $T$ back into $F$ where we can then find the embedding paths explicitly. The Lemma below summarizes our results.

\begin{lemma}[Extension of \Cref{thm:mainTheoremLowDiamTree}.]\label{lma:maintThmLowDiamExtension}
Given inputs as in \Cref{thm:mainTheoremLowDiamTree}, and let $F, \Pi_{F \mapsto G}$, $\Pi_{V(F) \mapsto V(G)}$, $\Pi_{V(G) \mapsto V(F)}$ be the objects maintained by the algorithm from \Cref{thm:mainTheoremLowDiamTree}. 

Then, the algorithm can additionally maintain a hierarchical tree $T$ over $G$ with graph embedding $\Pi_{T \mapsto G}$ and vertex maps $\Pi_{V(G) \mapsto V(T)}$ and $\Pi_{V(T) \mapsto V(G)}$ such that: 
\begin{enumerate}
    \item $\diam_{T}(\Pi_{V(G) \mapsto V(T)}(V)) \leq \gamma_{lowDiamTree}\cdot \diam(G)$, and
    \item $\econg(\Pi_{T \mapsto G}) \leq \gamma_{lowDiamTree}$, and
    \item every path in $T$ consists of at most $\gamma_{lowDiamTree}$ many edges.
\end{enumerate}
While the algorithm only explicitly maintains the forest $F$ and the vertex maps $\Pi_{V(G) \mapsto V(T)}$ and $\Pi_{V(T) \mapsto V(G)}$ explicitly (but not the graph embedding $\Pi_{T \mapsto G}$), it also explicitly maintains a vertex map $\Pi_{V(T) \mapsto V(F)}$ such that for any two nodes $x,y \in V(T)$, we have that 
\[
\Pi_{F \mapsto G}(\pi_F(\Pi_{V(T) \mapsto V(F)}(x), \Pi_{V(T) \mapsto V(F)}(y))) = \Pi_{T \mapsto G}(x,y).
\]
The algorithm from \Cref{thm:mainTheoremLowDiamTree} can maintain $T$ and the vertex maps $\Pi_{V(G) \mapsto V(T)}$, $\Pi_{V(T) \mapsto V(G)}$, and $\Pi_{V(T) \mapsto V(F)}$ explicitly with at most a constant asymptotic increase in initialization and amortized update time.
\end{lemma}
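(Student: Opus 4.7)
The plan is to build $T$ by imposing a balanced hierarchical structure on the relevant component of $F$ while preserving the exact path correspondence demanded by the statement. Concretely, I would combine heavy-light decomposition with balanced binary trees on each heavy path: root the component of $F$ containing $\Pi_{V(G)\mapsto V(F)}(V)$ arbitrarily, partition its edges into heavy and light, and for each heavy path $v_0 v_1 \cdots v_k$ build a balanced binary tree $B$ whose leaves are $v_0,\ldots,v_k$ in order and whose internal nodes are labelled with the midpoint of their leaf interval. The $B$-trees are joined via the (unchanged) light edges of $F$. I would define $\Pi_{V(T)\mapsto V(F)}$ to send each $T$-node to the $F$-vertex named by its label, and then define $\Pi_{T\mapsto G}$ edge-by-edge: each $T$-edge $(x,y)$ embeds as the $\Pi_{F\mapsto G}$-image of the $F$-path between the labels of $x$ and $y$. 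The maps $\Pi_{V(G)\mapsto V(T)}$ and $\Pi_{V(T)\mapsto V(G)}$ would inherit from their $F$-counterparts by composing with the canonical leaf identification.

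The key observation driving correctness is that, inside a single $B$-tree, walking from leaf $v_i$ to leaf $v_j$ visits labelled nodes that are consecutive midpoints of the subpath $v_i\ldots v_j$, so the concatenation of their $F$-subpaths is exactly the $F$-path from $v_i$ to $v_j$. Combined with the fact that light edges of $F$ map one-to-one to edges of $T$, this yields the identity
\[
  \Pi_{F\mapsto G}(\pi_F(\Pi_{V(T)\mapsto V(F)}(x),\Pi_{V(T)\mapsto V(F)}(y))) = \Pi_{T\mapsto G}(\pi_T(x,y))
\]
for every pair $x,y\in V(T)$, which is the main structural requirement.

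The three numerical bounds fall out readily. A $T$-path uses at most $O(\log^2 n)$ edges, since any $F$-path crosses at most $O(\log n)$ heavy paths and each contributes at most $O(\log n)$ $B$-edges; this is comfortably within $\gamma_{lowDiamTree}$. Each edge of $F$ is covered by at most $O(\log n)$ $B$-edges along its heavy path, so its $\Pi_{F\mapsto G}$-image appears in at most $O(\log n)$ $T$-edge embeddings, giving $\econg(\Pi_{T\mapsto G}) \le O(\log n)\cdot \econg(\Pi_{F\mapsto G}) \le \gamma_{lowDiamTree}$. The diameter bound is immediate since any $T$-path embeds into $G$ exactly as the corresponding $F$-path, whose $G$-length is already bounded by \Cref{thm:mainTheoremLowDiamTree}.

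The main obstacle will be maintaining the heavy-light decomposition and the balanced $B$-trees under updates to $F$ within the allotted amortized budget. I would use a link-cut tree to keep the heavy-light decomposition under changes to $F$ at amortized $O(\log n)$ per change, rebuilding only the locally affected portions of the $B$-trees whenever a heavy path's endpoints shift; this adds at most a $\mathrm{polylog}(n)$ factor per $F$-recourse event, which the $\gamma_{lowDiamTree}$ slack absorbs. Finally, to preserve the invariant that an element's image under $\Pi_{V(G)\mapsto V(T)}$ remains fixed while the element is present, I would tie each vertex's $T$-position to a stable local identifier inside its current $B$-tree and only revise it when the heavy path incident to that vertex itself changes, which is the only point at which its representative must be reassigned.
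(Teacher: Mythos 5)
Your approach is genuinely different from the paper's, and it contains a gap that breaks the central identity of the lemma.

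The paper does not decompose the generic forest $F$ at all. Instead it takes $T$ to be the direct sum of the internal level forests $F_0, F_1, \ldots, F_\Lambda$ from the vertex-sparsifier hierarchy, merged at the pivot sets $A_1, A_2, \ldots$. Each tree in each $F_i$ already has at most $\tilde{O}(k)$ vertices by Property~3 of \Cref{lma:workhorseSparsifier}, so any $T$-path crosses each $F_i$ in at most two segments of hop-length $\tilde{O}(k)$, giving the hop bound with $\gamma_{lowDiamTree}$. Moreover, $F$ is itself defined as the direct sum of the mapped-down forests $F_0^0,\ldots,F_\Lambda^0$, so the map $\Pi_{V(T)\mapsto V(F)}$ arises directly from the recursive maps $\Pi_{V(G_i)\mapsto V(F_i^0)}$ supplied by \Cref{lma:extensionOfMainThm}, and the identity between $\Pi_{T\mapsto G}$ and the $F$-embedding holds by construction, with no additional argument.

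Your construction does not satisfy the required identity, and the ``key observation'' you state to justify it is false. In a balanced binary tree on leaves $v_0,\ldots,v_k$ with internal nodes labelled by midpoints, the $T$-path from leaf $v_i$ to leaf $v_j$ climbs to their LCA and descends; the intermediate labels are midpoints of nested intervals containing $i$ or $j$, and these are in general \emph{not} confined to $[i,j]$, let alone ``consecutive midpoints of the subpath $v_i\ldots v_j$''. Concretely, on a heavy path $v_0,\ldots,v_{15}$ the $T$-path from leaf $v_7$ to leaf $v_8$ passes through internal nodes with labels roughly $7,6,4,8,12,10,9,8$, so the concatenation of the $F$-subpaths is a walk of total $F$-length $\Theta(k)$ that doubles back, whereas the actual $F$-path between $v_7$ and $v_8$ is a single edge. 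Hence $\Pi_{T\mapsto G}(\pi_T(x,y)) \neq \Pi_{F\mapsto G}(\pi_F(\Pi_{V(T)\mapsto V(F)}(x),\Pi_{V(T)\mapsto V(F)}(y)))$, and the diameter argument (``any $T$-path embeds into $G$ exactly as the corresponding $F$-path'') fails for the same reason. For the identity to hold, $\Pi_{V(T)\mapsto V(F)}$ must be monotone along every $T$-path: if $y$ lies on $\pi_T(x,z)$ then $\Pi_{V(T)\mapsto V(F)}(y)$ must lie on $\pi_F(\Pi_{V(T)\mapsto V(F)}(x),\Pi_{V(T)\mapsto V(F)}(z))$. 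No $O(\log n)$-depth binary-tree overlay on a long heavy path can have this property, which is why the paper draws $T$ from the hierarchy's own nested multi-level forests rather than from any combinatorial decomposition of $F$.
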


We point out that the tree $T$ can be taken to be the direct sum of forests $F_0, F_1, \ldots, F_{\Lambda}$ where the vertices in $A_1$ in $F_0$ are merged with the vertices in $A_1$ in $F_1$, the vertices in $A_2$ in $F_1$ are merged with the vertices in $A_2$ in $F_2$, and so on. 

It is not hard to see that this process yields a tree $T$ with properties as described in the Lemma above. For the bound on the number of edges on any path in $T$, we observe that any path in $T$ contains at most two segments in every forest $F_i$, and each path in $F_i$ is of length at most $\Otil(k) = e^{O(\log^{20/21}m \log\log m)}$ by \Cref{lma:workhorseSparsifier}.

\subsection{Maintaining a Vertex Sparsifier}

Finally, we describe how to maintain a vertex sparsifier as described in \Cref{thm:mainVS}. Here, we give an extended version of \Cref{thm:mainVS} that we believe to be useful for various applications. We obtain the theorem almost immediately by a straightforward combination of the techniques from \Cref{subsec:fullyDynAPSP} and
\Cref{subsec:mappingHierachVS}.

\begin{theorem}\label{thm:mainTheoremVSExt}
Given an $m$-edge input graph $G = (V,E,l)$ with polynomial lengths in $ [1,L]$ and maximum degree at most $3$. Then, for some $\gamma_{vertexSparsifier} = e^{O(\log^{20/21} m \log\log m})$, there is a data structure $\textsc{MaintainVertexSparsifier}$ that initially outputs an empty set $A$, and graph $H$ consisting of at most $\gamma_{vertexSparsifier}$ vertices and edges, and supports a polynomial number of updates of the following type:
\begin{itemize}
    \item $\textsc{InsertEdge}(e)/ \textsc{DeleteEdge}(e)$: adds/removed edge $e$ into/from $G$. If the edge is inserted, its associated length $l(e)$ has to be in $[1,L]$ and the maximum degree of $G$ is not allowed to exceed $3$.
    \item $\textsc{AddTerminalVertex}(a)/ \textsc{RemoveTerminalVertex}(a)$: adds/ removes the vertex $a \in V(G)$ to/from the terminal set $A$.
\end{itemize}
The algorithm processes the $t$-th update and outputs a batch of updates $U_H^{(t)}$ consisting of edge insertions/deletions, and isolated vertex insertions/deletions and that when applied to the graph $H^{(t-1)}$ yields graph $H^{(t)}$ such that, we have: 
\begin{itemize}
    \item we have $A \subseteq V(H^{(t)}) \subseteq V(G)$, and
    \item for all vertices $u,v \in V(H^{(t)})$, we have $\dist_{G^{(t)}}(u,v) \leq \dist_{H^{(t)}}(u,v)$ and further if $u,v \in A^{(t)}$ then we also have $\dist_{H^{(t)}}(u,v) \leq \gamma_{vertexSparsifier} \cdot \dist_{G^{(t)}}(u,v)$, and
    \item the number of edges and vertices in $H^{(t)}$ is at most $(1+|A|) \cdot \gamma_{vertexSparsifier}$, and
    \item we have $\sum_{t' \leq t} |U^{(t')}_H| \leq \gamma_{vertexSparsifier} \cdot t$.
\end{itemize}
The algorithm is deterministic, and initially takes time $m \cdot \gamma_{vertexSparsifier}$. Every update is processed in worst-case time $\gamma_{vertexSparsifier}$.

Further, given a dynamic flat hierarchical forest $F$ over a monotonically increasing set $A_F \subseteq V(H)$ in graph $H$ along with vertex maps $\Pi_{A_F \mapsto V(F)}, \Pi_{V(F) \mapsto V(H)}$ and flat graph embedding $\Pi_{F \mapsto H}$, along with parameters $\gamma_{congRep}$ and $\gamma_{recRep}$ such that at any time  $\econg(\Pi_{F \mapsto H})$ is  bounded by $\gamma_{congRep}$ and the number of changes to $F$ caused by an update to $G$ is upper bounded by $\gamma_{recRep}$. We require the vertex maps to be such that whenever a vertex is added to the pre-image, its image remains constant for the rest of the algorithm.

Then, the algorithm can maintain a flat hierarchical forest $F'$ over set $A_F$ in graph $G$ along with vertex maps $\Pi_{A_F \mapsto V(F')}, \Pi_{V(F') \mapsto V(G)}$ and flat graph embedding $\Pi_{F' \mapsto G}$ such that at any time $\econg(\Pi_{F' \mapsto G})$ is bounded by $\gamma_{congRep} \cdot \gamma_{vertexSparsifier}$ and the number of changes to $F'$ per update to $G$ is $\Otil(\gamma_{recRep} + \gamma_{congRep}  \cdot \gamma_{vertexSparsifier})$, and we have for any two vertices $u,v \in A_F$ that $l_G(\Pi_{F' \mapsto G}(\pi_{F'}(\Pi_{A_F \mapsto V(F')}(u),\Pi_{A_F \mapsto V(F')}(v)))) \leq l_H(\Pi_{F \mapsto H}(\pi_{F}(\Pi_{A_F \mapsto V(F)}(u),\Pi_{A_F \mapsto V(F)}(v))))$. Further, we have that the vertex maps are such that whenever a vertex is added to the pre-image, its image remains constant for the rest of the algorithm.

Having $A_F, F$ and the vertex maps and graph embedding associated with $F$ maintained, the algorithm to maintain $F'$ and the vertex maps and graph embedding associated with $F'$ requires additional initialization time $\tilde{O}(m \cdot \gamma_{congRep})$ and processes every update with additional worst-case time $\Otil(\gamma_{recRep} + \gamma_{congRep} \cdot \gamma_{vertexSparsifier})$.
\end{theorem}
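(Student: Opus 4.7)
The plan is to instantiate the hierarchy of \Cref{subsec:fullyDynAPSP} with the same parameters ($\Lambda = \log^{1/21} m$, $k = m^{1/\Lambda}$, $K = \log^{3/21} m$), building levels $G_0 = G, G_1, \ldots, G_\Lambda$ via repeated application of \Cref{lma:workhorseSparsifier}, and to output $H = G_\Lambda$. A terminal update $\textsc{AddTerminalVertex}(a)$ is propagated as an ``affected'' vertex at each level: inserting $a$ into the terminal set $A_1$ of the level-$1$ instance of \Cref{lma:workhorseSparsifier} forces $a \in V(G_1)$, which then cascades into $A_2, A_3, \ldots, A_\Lambda$. Since terminal vertices are already vertices of $G$, this maintains $V(H) \subseteq V(G)$. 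Terminal removals are handled by deferring them to the next rebuild at each level, exploiting the staggered rebuild schedule $u_i = m^{1-(i+1)/\Lambda}$ that is already used for fully-dynamic APSP.

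Distance preservation between terminals follows from chaining the per-level guarantee in item \ref{prop:approxVS} of \Cref{lma:workhorseSparsifier}: for $u,v \in A$ the cascading ensures $u, v \in A_i$ at every level, so each step contributes a factor $(\gamma_\ell)^{O(K)}$, giving a total stretch $(\gamma_\ell)^{O(K\Lambda)} = e^{O(\log^{20/21} m \log\log m)} = \gamma_{vertexSparsifier}$. The lower bound $\dist_G(u,v) \leq \dist_H(u,v)$ follows from the same item applied inductively across levels. The size bound $|V(H)| + |E(H)| \leq (1+|A|)\gamma_{vertexSparsifier}$ follows the analogous computation in \Cref{clm:runtimeFulylDynAPSP}, which inductively bounds $|V(G_{i+1})| + |E(G_{i+1})|$ by a geometric product involving the reduction factor $1/k$ at each level; terminals in $A$ contribute at most $|A|$ extra vertices per level, so the top level collapses to $(1+|A|)\gamma_{vertexSparsifier}$. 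The per-update worst-case time and total recourse bounds match those of \Cref{thm:mainTheoremAPSPFormal}.

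For the forest-embedding portion, I follow the strategy of \Cref{lma:maintThmLowDiamExtension}: recursively apply \Cref{lma:extensionOfMainThm} to pull $F$ from $H = G_\Lambda$ back through $G_{\Lambda-1}, G_{\Lambda-2}, \ldots, G_0 = G$, obtaining flat hierarchical forests $F^{\Lambda-1}, F^{\Lambda-2}, \ldots, F^0 = F'$ with associated graph embeddings and vertex maps. Each pull-back multiplies the congestion by $\Otil(\Delta_i \cdot \gamma_{recVS})$ and the per-update recourse by the same factor; after $\Lambda$ levels both are bounded by $\prod_{i < \Lambda} \Otil(\Delta_i \gamma_{recVS}) \leq \gamma_{vertexSparsifier}$, giving $\econg(\Pi_{F' \mapsto G}) \leq \gamma_{congRep} \cdot \gamma_{vertexSparsifier}$ and per-update recourse $\Otil(\gamma_{recRep} + \gamma_{congRep} \cdot \gamma_{vertexSparsifier})$. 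The length inequality relating $F'$-paths in $G$ to $F$-paths in $H$ follows by telescoping the per-level length inequalities guaranteed by \Cref{lma:extensionOfMainThm}, and the stability requirement on the vertex maps (images stay fixed once preimages are added) is preserved since each intermediate extension inherits it.

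The main obstacle will be handling the interaction between terminal removals and the monotonicity requirement in item~1 of \Cref{lma:workhorseSparsifier} on $A_i$. Since that interface requires $A_i$ to grow monotonically within a single instance, removals have to be absorbed by the next scheduled rebuild at each level, and we must check that (i) the rebuild schedule $u_i$ can accommodate the combined rate of graph updates and terminal add/remove operations without inflating the worst-case per-update work beyond $\gamma_{vertexSparsifier}$, and (ii) during the interval between a removal and the next rebuild, the removed terminal's transient presence in the intermediate $A_i$ does not violate the $(1+|A|)\gamma_{vertexSparsifier}$ size bound. Both can be handled by charging each terminal operation symmetrically to edge operations in the rebuild accounting of \Cref{clm:runtimeFulylDynAPSP}.
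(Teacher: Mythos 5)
Your recursive pull-back of the forest $F$ through the hierarchy via \Cref{lma:extensionOfMainThm} matches the paper's approach for the second half of the theorem, and the idea of building the sparsifier from the APSP hierarchy is also on the right track. However, there are two genuine gaps in the first half.

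First, fixing the output to $H = G_\Lambda$ cannot satisfy the requirement $A \subseteq V(H)$: the analysis of \Cref{clm:runtimeFulylDynAPSP} caps the size of $G_\Lambda$ by roughly $\gamma_{vertexSparsifier}$, independently of $|A|$. The theorem places no bound on $|A|$ beyond it being polynomial, so as soon as more than $\gamma_{vertexSparsifier}$ terminals have been added, $G_\Lambda$ is simply too small to contain them all. The paper resolves this with a dynamically-adjusted level index $\ell$: the data structure outputs $H = G_\ell$ where $\ell$ decreases as $|A|$ grows (decrementing when $|A|$ hits $m^{1-(\ell-1)/\Lambda}/8$ and incrementing when it drops to $\lfloor m^{1-\ell/\Lambda}/16\rfloor$), so that the output level always has a vertex budget proportional to $(1+|A|)\gamma_{vertexSparsifier}$. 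The hysteresis between the two thresholds is also what lets the paper charge each $\ell$-change to $\Omega(|A|)$ terminal operations (\Cref{clm:changeEll}), which is needed for the recourse bound $\sum_{t'\le t}|U_H^{(t')}|\le \gamma_{vertexSparsifier}\cdot t$. Your suggestion to ``charge each terminal operation symmetrically to edge operations in the rebuild accounting'' does not supply a substitute for this mechanism.

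Second, the step ``inserting $a$ into the terminal set $A_1$ of the level-$1$ instance'' is not an operation that the interface of \Cref{lma:workhorseSparsifier} exposes: that lemma grows its terminal set $A$ only by absorbing endpoints of edges updated in its input graph. The paper therefore forces a terminal $a$ into every live level by a short dance on the auxiliary graph $\hat G$ — insert an isolated copy $a'$, insert the edge $(a,a')$ with length $1/m$, then delete both — which, by the lower-bound half of Property~\ref{prop:approxVS} of \Cref{lma:workhorseSparsifier}, guarantees that $a$ propagates as an affected vertex (and hence as a root) down every level from $1$ through $\ell$. This sequence is re-issued for all of $A$ whenever $\ell$ changes or $G_\ell$ is rebuilt. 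Without spelling out this injection mechanism your proposal has no way to enforce $A \subseteq V(H)$ even when $|A|$ is small.
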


In this section, we only provide the implementation of the above theorem with amortized update time guarantees instead of worst-case update times. We refer the reader to \Cref{subsec:fullyDynAPSP} for a detailed discussion on how to de-amortize the algorithm.

Let us start by giving the algorithm to maintain the sparsifier $H$ as described in the first half of the theorem, i.e. without yet specifying how to map a given forest $F$ from $H$ back into $G$.

\paragraph{The Sparsifier Maintenance Algorithm.} We run the algorithm from \Cref{subsec:fullyDynAPSP} on the graph $\hat{G}$ that is initialized to $G$. We obtain the vertex sparsifier hierarchy $G_0, G_1, \ldots, G_{\Lambda}$ on $\hat{G}$ as described in the \Cref{subsec:fullyDynAPSP}.

Throughout, we maintain the set $A$ as specified, and maintain $H = G_{\ell}$ for some index $0 \leq \ell \leq \Lambda$ where initially we have $\ell = \Lambda$, and thus initially $H = G_{\Lambda}$.

We then forward edge updates to $G$ to the graph $\hat{G}$ and thus to the data structure maintaining the hierarchy $G_0, G_1, \ldots, G_{\Lambda}$. Additionally, whenever the update operation $\textsc{AddTerminalVertex}(a)$ is invoked, we check whether after adding $a$ to $A$, we have that the size of $A$ is $m^{1 - (\ell-1)/\Lambda}/8$; and if so, we decrement $\ell$, i.e. subtract one from $\ell$.  Whenever the update operation  $\textsc{RemoveTerminalVertex}(a)$ is invoked, we remove the vertex $a$ from $A$, and then check whether the size of $A$ is $\lfloor m^{1- \ell/\Lambda}/16  \rfloor$; and if so we increment $\ell$, i.e. we add one to $\ell$.

Finally, for every vertex $a \in A$, we apply to $\hat{G}$ the following additional sequence of updates whenever $a$ is added to $A$ via $\textsc{AddTerminalVertex}(a)$, $G_{\ell}$ is re-initialized in the hierarchy, or if the index $\ell$ changes: we first add an isolated vertex $a'$ to $\hat{G}$, then add an edge $(a, a')$ of length $1/m$ to $\hat{G}$, and finally remove first the edge $(a, a')$ and the vertex copy $a'$ again from $\hat{G}$. Note that here we are inserting an edge of length $1/m$ which is much smaller than allowed, however, by multiplying all edge weights in $\hat{G}$ by $m$, we only increase the edge weights polynomially, and maintain the same metric information (just shifted by factor $m$). Thus, we can allow for such updates wlog. 

This concludes the description of the algorithm.

\paragraph{Analysis.} We establish \Cref{thm:mainTheoremVSExt} via the following series of claims.

\begin{claim}
After processing every update, we have that $A \subseteq V(H) \subseteq V(G)$.
\end{claim}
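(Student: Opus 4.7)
The plan is to verify the two inclusions separately, using structural properties of the hierarchy from \Cref{subsec:fullyDynAPSP} together with the interface guarantees of \Cref{lma:workhorseSparsifier}.

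First I would establish $V(H) \subseteq V(G)$. By construction $H = G_\ell$ for some $0 \le \ell \le \Lambda$, and each $G_{i+1}$ is the sparsifier produced by an instance of \Cref{lma:workhorseSparsifier} on $G_i$. Property~\ref{prop:workhorseDistancePreserve} of that theorem guarantees $A_{i+1} \subseteq V(G_{i+1})$ but more importantly, inspecting its output format, every vertex of $G_{i+1}$ is either an initial vertex of $G_i$ or an isolated vertex that was later inserted. A short induction on $i$ therefore gives $V(G_i) \subseteq V(\hat G)$. Finally, $V(\hat G) = V(G)$ at the end of every update, because the only vertices ever added to $\hat G$ but not to $G$ are the auxiliary copies $a'$, each of which is removed again before the algorithm returns control. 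Thus $V(H) = V(G_\ell) \subseteq V(\hat G) = V(G)$ after every update.

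Next I would establish $A \subseteq V(H)$. The idea is that the artificial insert/delete trick on the edge $(a,a')$ forces $a$ into the internal terminal set of every level of the hierarchy up to level $\ell$, and hence into $V(G_\ell)$. More precisely, by Property~1 of \Cref{lma:workhorseSparsifier}, whenever an edge incident to a vertex $v$ is inserted into or deleted from the input, $v$ is added to the internal terminal set of that sparsifier; and by Property~5 together with the analysis in \Cref{subsec:mappingHierachVS}, terminal vertices always belong to the vertex set of the produced sparsifier. Apply this inductively: inserting $(a,a')$ into $\hat G = G_0$ forces $a$ into the terminal set $A_1$ maintained at level $0$, hence into $V(G_1)$. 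That same insertion propagates as an update to $G_1$, placing $a$ into $A_2$ and hence into $V(G_2)$, and so on up to level $\ell$. The subsequent deletions of $(a,a')$ and of $a'$ do not remove $a$ from any of these vertex sets since $A$-membership is monotone during the lifetime of a given instance, and the vertex $a$ itself is never removed (only $a'$ is).

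The final detail is to check that the invariant is not destroyed by the two events that can change $\ell$ or the identity of the underlying sparsifier instance: (i) a change of $\ell$ due to $\textsc{AddTerminalVertex}$ or $\textsc{RemoveTerminalVertex}$, and (ii) a rebuild of level $\ell$ by the hierarchy. In both cases, the algorithm re-runs the insert/delete trick for every $a \in A$, which by the argument above re-inserts $a$ into the terminal set of the (possibly freshly rebuilt) instances at levels $0,1,\dots,\ell$, hence into the new $V(G_\ell) = V(H)$. I expect this bookkeeping step---checking that the additional updates are applied at exactly the right moments (on every $\textsc{AddTerminalVertex}$, every rebuild event, and every change in $\ell$)---to be the main place where care is needed; once it is verified, the inclusion $A \subseteq V(H)$ follows immediately, and combined with the first part we obtain $A \subseteq V(H) \subseteq V(G)$ at the end of every update.
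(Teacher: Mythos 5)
You correctly identify the high-level mechanism (the artificial edge trick together with Property~1 and Property~5 of \Cref{lma:workhorseSparsifier}), and your treatment of $V(H) \subseteq V(G)$ and of the rebuild/level-change bookkeeping is sound. But the induction that carries $a$ up the hierarchy has a genuine gap at its core step. You write ``That same insertion propagates as an update to $G_1$, placing $a$ into $A_2$'' --- this is asserted, not proved, and it is precisely the nontrivial part of the paper's argument. Nothing in the interface of \Cref{lma:workhorseSparsifier} guarantees that an edge insertion into the \emph{input} graph $G_0$ produces an edge insertion incident to $a$ in the \emph{output} graph $G_1$. The level-$0$ instance merely outputs some batch $U_{G_1}^{(t)}$ of edge/vertex updates; a priori these could be entirely unrelated to $a$, in which case the level-$1$ instance would never add $a$ to its terminal set and the induction would stall at level $1$.

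The paper closes this gap with a dedicated argument: the inserted edge $(a,a')$ is given length $1/m$, which is strictly smaller than any other distance in $\hat G$ (all other edges have length $\geq 1$). By Property~\ref{prop:approxVS}, $G_1$ must approximate $\dist_{\hat G}(a,a') = 1/m$ to within a factor $(\gamma_\ell)^{O(K)} < m$, so any $a$--$a'$ path in $G_1$ has length $< 1$; if that path passed through a third vertex $v$, the $a$--$v$ prefix would have $G_1$-length $< 1$, contradicting the distance lower bound $\dist_{G_1}(a,v) \geq \dist_{\hat G}(a,v) \geq 1$. Hence $G_1$ must literally contain the edge $(a,a')$, i.e.\ the update batch to $G_1$ includes this edge insertion, and the same reasoning (with the cumulative stretch $(\gamma_\ell)^{O(K\Lambda)} \cdot 1/m < 1$) pushes the edge up through every level that has not since been rebuilt. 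Only then does Property~1 at each level fire and place $a$ into the terminal set, and Property~5 gives $a \in V(G_\ell)$. Without reconstructing this length-$1/m$ argument, the proposal does not establish $A \subseteq V(H)$.
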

\begin{proof}
We maintain $A$ as described and maintain $H$ to be the graph $G_{\ell}$ for some $\ell$. Note that since $G_{\ell}$ was last re-initialized, $\hat{G}$ was updated such that for every vertex $a \in A$, we added the isolated vertex copy $a'$ of $a$, connected $a$ and $a'$ by a length $1/m$ edge $e$ and then removed both the edge and the vertex copy. 

But from \Cref{lma:workhorseSparsifier}, we have that this sequence of operations ensures that $a$ is present in $G_{\ell}$. To see this, observe that when we apply the described update sequence, we have that $a$ and $a'$ are added to the set $A_1$ maintained by the data structure \Cref{lma:workhorseSparsifier} on $G_0$ to obtain $G_1$. Further, we claim that the graph $G_1$ must have the edge $e$ between these vertices with length at most $(\gamma_{\ell})^{O(K)} \cdot 1/m$ to preserve the distance between vertices $a$ and $a'$ in $G_1$.

Assume for the sake of contradiction that this is not the case. Then, by the upper bound on distances between vertices in $A_1$ in $G_1$ from Property \ref{prop:approxVS} in \Cref{lma:workhorseSparsifier}, we would have that there is a path $P$ between $a$ and $a'$ containing another vertex $v$ where $l_H(P) \leq (\gamma_{\ell})^{O(K)} \cdot 1/m < 1$. But this implies that $v$ is at distance less than $1$ from $a$. But since $\hat{G} \setminus \{a'\} = G$ and all edges in $G$ are in $[1, L]$, we have that the distance between $a$ and $v$ in $\hat{G}$ is at least $1$. But this contradicts the lower bound on distances in $G_1$ from Property \ref{prop:approxVS} in \Cref{lma:workhorseSparsifier}). 

Thus, $G_1$ is undergoing an edge insertion $e$, and we can show via the same argument where we use that ${(\gamma_{\ell})^{O(K)}}^{O(\Lambda)} \cdot 1/m < 1$ that $e$ is in fact added to every graph $G_i$ for any $0 \leq i \leq \Lambda$ since the last re-initialization of graph $G_{\ell}$ in the hierarchy.

But by \Cref{lma:workhorseSparsifier}, this implies that for all graphs that have not been re-initialized since this update, we have that $a$ is in its vertex set and thus, in particular, $a$ is in the vertex set of $G_{\ell}$ by assumption.  
\end{proof}

\begin{claim}\label{clm:changeEll}
At every time $t$ where the value $\ell$ is changed while processing the update, we have that at least $\Omega(|A|)$ terminal vertex insertions/removals have been observed since the last time $t' < t$ that $\ell$ was changed/ initialized.
\end{claim}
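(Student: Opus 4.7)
The plan is to do a case analysis on the mechanics of the hysteresis scheme: the thresholds for decrementing and incrementing $\ell$ are separated by constant multiplicative factors, and the thresholds at consecutive values of $\ell$ are separated by the much larger factor $m^{1/\Lambda}$. In every case, the number of intervening $\textsc{AddTerminalVertex}/\textsc{RemoveTerminalVertex}$ calls between two adjacent change times must cover the gap between the relevant thresholds, and this gap is $\Omega(|A|_t)$.

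First I would fix notation: let $\ell_t$ denote the value of $\ell$ immediately after the change at time $t$, and let $|A|_t$ denote the size of $A$ at that moment. Unpacking the algorithm, a decrement at $t$ (caused by an insertion) gives $\ell_t=\ell_{t^-}-1$ and $|A|_t=m^{1-\ell_t/\Lambda}/8$, while an increment at $t$ (caused by a removal) gives $\ell_t=\ell_{t^-}+1$ and $|A|_t=\lfloor m^{1-(\ell_t-1)/\Lambda}/16\rfloor$. Between $t'$ and $t$ the value of $\ell$ is constant, so each intervening operation changes $|A|$ by exactly $\pm 1$; therefore the number of such operations is at least $\bigl||A|_t-|A|_{t'}\bigr|$.

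Next I would enumerate the four (type of change at $t'$) $\times$ (type of change at $t$) cases, plus the boundary case where $t'$ is the initialization ($\ell_{t'}=\Lambda$, $|A|_{t'}=0$, so a decrement at $t$ gives $|A|_t=m^{1/\Lambda}/8$, trivially $\Omega(|A|_t)$ operations).
\begin{itemize}
\item \emph{Decrement then decrement:} $|A|_t = m^{1/\Lambda}\cdot|A|_{t'}$, so $|A|_t-|A|_{t'}=(1-m^{-1/\Lambda})|A|_t \geq |A|_t/2$ since $m^{1/\Lambda}=e^{\Omega(\log^{20/21}m)}\gg 2$.
\item \emph{Decrement then increment:} $|A|_{t'}=m^{1-\ell_{t'}/\Lambda}/8$ and $|A|_t\leq m^{1-\ell_{t'}/\Lambda}/16$, so $|A|_{t'}-|A|_t\geq m^{1-\ell_{t'}/\Lambda}/16\geq |A|_t$.
\item \emph{Increment then decrement:} $|A|_{t'}\leq m^{1-(\ell_{t'}-1)/\Lambda}/16$ and $|A|_t=m^{1-(\ell_{t'}-1)/\Lambda}/8$, so $|A|_t-|A|_{t'}\geq|A|_t/2$.
\item \emph{Increment then increment:} $|A|_{t'}=\Omega(m^{1/\Lambda})\cdot|A|_t$, so $|A|_{t'}-|A|_t=\Omega(|A|_t)$, again using $m^{1/\Lambda}\gg 2$.
\end{itemize}
In every case the number of terminal operations between $t'$ and $t$ is $\Omega(|A|_t)$.

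The only minor obstacle is a bookkeeping issue near the boundary where $m^{1-\ell/\Lambda}$ is close to $1$: the floor in the increment threshold could make some nominal threshold equal to $0$. This is handled by observing that $\ell$ can only reach values for which the relevant decrement threshold is integral and was actually crossed in the preceding regime (i.e.\ the initialization case covers the first transition at any $\ell$), so the thresholds whose gaps we are comparing are all at least $1$, and the arithmetic above goes through unchanged.
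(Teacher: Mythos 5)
Your proof is correct and takes essentially the same approach as the paper: both proceed by a case analysis on the type of change (decrement versus increment) at times $t'$ and $t$, using the separation between the hysteresis thresholds $m^{1-\ell/\Lambda}/8$ and $\lfloor m^{1-\ell/\Lambda}/16\rfloor$ (and the $m^{1/\Lambda}$ ratio between consecutive levels) to conclude that $\bigl||A|_t-|A|_{t'}\bigr|=\Omega(|A|_t)$. The only cosmetic difference is that you enumerate the full $2\times 2$ grid of cases while the paper organizes the argument by the type of change at $t$ with sub-cases for $t'$.
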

\begin{proof}
We prove by case analysis:
\begin{itemize}   
    \item \underline{if $\ell$ was decremented:} then we have that $A$ is of size $m^{1 - \ell/\Lambda}/8$ after the change of $\ell$. 

    We claim that the size of $A$ at time $t'$ was at most $m^{1 - \ell/\Lambda}/16$ which implies that there are at least $m^{1 - \ell/\Lambda}/16$ many terminal vertex insertion operations invoked since time $t'$. 

    Assume first that $t' = 0$, then we have that $A$ was of size $0$ at time $t'$. The claim follows immediately. Otherwise, $t' > 0$, we have that at time $t'$, $\ell$ was either decremented or incremented. If it was decremented, it was of size $m^{1 - (\ell+1)/\Lambda}/8 < m^{1 - \ell/\Lambda}/16$; if it was incremented it was of size $\lfloor m^{1 - \ell/\Lambda}/16 \rfloor \leq m^{1 - \ell/\Lambda}/16$.  

    \item \underline{if $\ell$ was incremented:} we have that after the change of the value of $\ell$, $A$ is of size $\lfloor m^{1- (\ell-1)/\Lambda}/16  \rfloor$. 
    
    Note that $\ell$ cannot exceed $\Lambda$ as $A$ never drops below size $\lfloor m^{1- \Lambda/\Lambda}/16  \rfloor = 0$. Thus, $t' > 0$.

    Again, if at time $t'$, $\ell$ was incremented, it was of size $\lfloor m^{1- (\ell-2)/\Lambda}/16  \rfloor$; if it was decremented it was of size $m^{1 - (\ell-1)/\Lambda}/8$. Thus, again in either case, there were at least $\Omega(|A|)$ terminal vertex removals to $A$ since time $t'$ that can be charged.
\end{itemize}
\end{proof}

\begin{claim}\label{clm:recourseVSStatement}
After the first $t$ updates to the data structure, the number of updates to $\hat{G}$ is $O(t)$. The number of updates to $H$ is $e^{O(\log^{20/21} m \log\log m)} \cdot t$ and at any stage, we have that the size of $H$ is $|A| \cdot e^{O(\log^{20/21} m \log\log m)}$.
\end{claim}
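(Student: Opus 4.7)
The three parts are proved largely independently, with the amortization in \Cref{clm:changeEll} and the bookkeeping already done in \Cref{clm:runtimeFulylDynAPSP} doing most of the work.

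To bound the updates applied to $\hat{G}$, I would partition them into four classes: (a) the direct forwards of $G$-updates, (b) the constant-length sequence triggered by each $\textsc{AddTerminalVertex}/\textsc{RemoveTerminalVertex}$, (c) the $O(|A|)$-length replays triggered whenever $\ell$ changes, and (d) the $O(|A|)$-length replays triggered whenever $G_\ell$ is re-initialized inside the hierarchy. Classes (a) and (b) contribute $O(t)$ trivially. For (c), \Cref{clm:changeEll} supplies $\Omega(|A|)$ terminal operations between successive $\ell$-changes, so the replay cost amortizes to $O(1)$ per terminal operation and thus $O(t)$ overall. For (d), $G_\ell$ is re-initialized every $u_{\ell-1}=m^{1-\ell/\Lambda}$ updates to $G$, and while $\ell$ is the current level $|A|\le m^{1-(\ell-1)/\Lambda}/8$, so the amortized per-$G$-update overhead is $|A|/u_{\ell-1}\le m^{1/\Lambda}/8$. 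Since $m^{1/\Lambda}=e^{O(\log^{20/21} m)}$, this subpolynomial factor is absorbed into $\gamma_{vertexSparsifier}$ (so the $O(t)$ in the statement should be read as hiding at most a subpolynomial slack that is consistent with the subsequent bounds).

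The bound on updates to $H=G_\ell$ then follows from propagating each $\hat{G}$-update through the hierarchy. By the recourse calculation in the proof of \Cref{clm:runtimeFulylDynAPSP}, each update to $\hat{G}$ produces at most $\prod_{0\le j<\ell}(\gamma_{recVS}\cdot\Delta_j) = (\gamma_{recVS})^{O(\ell^2)} = e^{O(\log^{20/21} m \log\log m)}$ updates at level $\ell$. Multiplying this recourse factor with the preceding bound on $\hat{G}$-updates yields the desired $e^{O(\log^{20/21} m \log\log m)}\cdot t$ bound on the updates to $H$.

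For the size of $H$, the inductive estimate in the proof of \Cref{clm:runtimeFulylDynAPSP} gives $|V(G_\ell)|+|E(G_\ell)| \le 2m^{1-\ell/\Lambda}(\gamma_{recVS})^{4\ell^2}$. Since the algorithm would have incremented $\ell$ had $|A|$ dropped below $\lfloor m^{1-\ell/\Lambda}/16 \rfloor$, we have $m^{1-\ell/\Lambda}\le 16(|A|+1)$ whenever $\ell\ge 1$; the $\ell=0$ case, in which $H=\hat{G}$, is handled separately using $n\le m$ and the observation that $\ell=0$ only occurs once $|A|\ge m^{1-1/\Lambda}/8$, which already satisfies $m \le |A|\cdot e^{O(\log^{20/21} m)}$. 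Plugging this in yields $|V(H)|+|E(H)|\le |A|\cdot e^{O(\log^{20/21} m \log\log m)}$, as required.

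The main obstacle is the amortization in class (d) above: the re-initializations of $G_\ell$ are internal to the APSP hierarchy and are not charged against external operations, so the argument relies on the specific choice of $u_i$ and the thresholds defining $\ell$ being in balance up to an $m^{1/\Lambda}$ factor. Verifying this balance and ensuring that the resulting subpolynomial slack is absorbed consistently into $\gamma_{vertexSparsifier}$ is where the bookkeeping actually lies.
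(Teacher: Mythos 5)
Your proof follows essentially the same plan as the paper's: split the $\hat{G}$-updates into immediate ones (forwarded $G$-updates and the constant-size sequences triggered by terminal operations) and non-immediate ones (the $4|A|$-length replays triggered by $\ell$-changes and by re-initializations of $G_\ell$), charge the former to external operations directly, charge the $\ell$-change replays via \Cref{clm:changeEll}, charge the re-initialization replays against the rebuild period $u_{\ell-1}$, and then propagate the resulting $\hat{G}$-update count through the hierarchy using the per-level recourse bound and the size estimate $m_{\ell}$ from \Cref{clm:runtimeFulylDynAPSP}. The size bound for $H$ via the increment threshold $\lfloor m^{1-\ell/\Lambda}/16\rfloor$, with a separate check for $\ell=0$, also matches the paper's intent (the paper is considerably terser there).

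Where you depart from the paper is in the accounting of class (d). The paper's proof asserts that when $G_\ell$ is re-initialized without an intervening $\ell$-change, ``$A$ was of size at most $m^{1-\ell/\Lambda}/8$ at the time of the last re-initialization,'' and from this concludes that the non-immediate updates are at most the number of immediate updates, yielding $O(t)$ exactly. However, the invariant actually maintained by the $\ell$-thresholds is the weaker $|A| < m^{1-(\ell-1)/\Lambda}/8$ (the decrement trigger); $|A|$ can legitimately sit an $m^{1/\Lambda}$-factor above $m^{1-\ell/\Lambda}/8 = u_{\ell-1}/8$ for a whole epoch without $\ell$ changing, so the replay cost per re-initialization can be $\Theta(m^{1/\Lambda})\cdot u_{\ell-1}$ rather than $O(u_{\ell-1})$. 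Your computation $|A|/u_{\ell-1} \le m^{1/\Lambda}/8$ is the correct bound under the stated thresholds, and your observation that this $m^{1/\Lambda} = e^{O(\log^{20/21}m)}$ slack is absorbed into the subsequent $e^{O(\log^{20/21}m\log\log m)}$ factor (so the claims about $H$'s recourse and size are unaffected) is exactly right. In short, your route is a bit more honest than the paper's: the literal $O(t)$ for $\hat{G}$ appears to be off by a subpolynomial factor as written, but the bottom-line guarantees survive. The only cosmetic nit is that $\ell=0$ is reached only once $|A|\ge m/8$ (the decrement from $\ell=1$ fires at $m^{1-0/\Lambda}/8$), which is even stronger than the $m^{1-1/\Lambda}/8$ you used, though your weaker bound suffices.
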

\begin{proof}
We have that every update to $G$ generates a single update to $\hat{G}$. Every addition of a terminal vertex $a$ to the set $A$ immediately generates $4$ updates to $\hat{G}$. We call these updates to $\hat{G}$ the \emph{immediate updates to $\hat{G}$}.

It remains to analyze the number of \emph{non-immediate updates to $\hat{G}$}. These updates are caused by $G_{\ell}$ being re-initialized, or $\ell$ being changed. In either case, the algorithm then generates an additional number of $4 \cdot |A|$ updates to $\hat{G}$.

We charge these updates using the following case analysis:
\begin{itemize}
    \item \underline{if $G_{\ell}$ is re-initialized and $\ell$ has not been changed since the last re-initialization of $G_{\ell}$:} since $G_{\ell}$ is re-initialized only when a level $j < \ell$ is rebuilt, and level $j$ is rebuilt only every $u_j$ updates to $G$, we have that in between two such re-initialization, we observe at least $u_{\ell-1}$ updates to $\hat{G}$. 

    But since $\ell$ was not changed, the only non-immediate updates to $\hat{G}$ caused since resulted from the re-initialization. But since $\ell$ was not changed, $A$ was of size at most $m^{1-\ell/\Lambda}/8$ at the time of the last re-initialization. Thus, only $4 \cdot m^{1-\ell/\Lambda}/8   
    = u_{\ell -1} /2$ non-immediate updates to $\hat{G}$ were observed since. 
    
    Thus, half of the updates to $\hat{G}$ are immediate updates. Thus, we can bound the number of non-immediate updates generated by $\hat{G}$ at the current time to at least $u_i/2$ immediate updates to $\hat{G}$ and thus to $\Omega(u_i)$ adversarial data structure operations.

    \item \underline{if $G_{\ell}$ is re-initialized and $\ell$ has been changed since the last re-initialization of $G_{\ell}$:} We then have that all updates to $\hat{G}$ can be charged to invocations of the operations $\textsc{AddTerminalVertex}(a)$/ $\textsc{RemoveTerminalVertex}(a)$ as follows: let $t'$ be the last time that $\ell$ was changed. Let $x$ be the size of the set $A$ at time $t'$. Then, we have that from \Cref{clm:changeEll}, we can charge $\Omega(x)$ data structure operations uniquely to the re-initialization of $G_{\ell}$. Further, if $A$ now has current size $y$, we have that there are at least $|y - x|$ many additional such operations since time $t'$ that can be charged.

    Since $G_{\ell}$ then generates only $4y$ updates to $\hat{G}$, we have that each data structure operation can be charged with $O(1)$ many of such updates.

    \item \underline{if $\ell$ is updates:} then we can charge $O(1)$ updates to $\hat{G}$ uniquely to each data structure operation by \Cref{clm:changeEll}.
\end{itemize}

We have that every update to $\hat{G}$ generates at most $e^{O(\log^{20/21} m \log\log m)}$ changes to the graph $G_{\ell}$. Whenever $\ell$ changes we can charge $\Omega(|A|)$ many data structure operations for this change, and we have from \Cref{clm:runtimeFulylDynAPSP} that $G_{\ell}$ has size at most $|A| \cdot e^{O(\log^{20/21} m \log\log m)}$.
\end{proof}

\begin{claim}
The algorithm is deterministic, and initially takes time $m \cdot e^{O(\log^{20/21} m \log\log m})$. Every update is processed in amortized time $e^{O(\log^{20/21} m \log\log m})$.
\end{claim}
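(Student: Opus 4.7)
The plan is to combine the structural recourse bound from \Cref{clm:recourseVSStatement} with the per-update time bound from \Cref{clm:runtimeFulylDynAPSP} for the fully-dynamic APSP hierarchy.

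First, determinism follows immediately: the algorithm from \Cref{subsec:fullyDynAPSP} on $\hat{G}$ is deterministic (invoking only \Cref{lma:workhorseSparsifier} and static APSP on $G_\Lambda$), and our additions on top---maintaining the monotone counter $\ell$, bookkeeping for $A$, and issuing the four-update ``ping'' sequence $(a')$ insertion, edge $(a,a')$ insertion, edge and vertex deletion---are clearly deterministic. For initialization, we set $\hat{G} \leftarrow G$, initialize the APSP hierarchy on $\hat{G}$, and set $\ell \leftarrow \Lambda$; by \Cref{clm:runtimeFulylDynAPSP} this costs $m\cdot e^{O(\log^{14/15} m \log\log m)}$, which is subsumed by $m \cdot e^{O(\log^{20/21} m \log\log m)}$.

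For the amortized update time, I would argue as follows. Each update to the outer data structure (edge update to $G$, or $\textsc{AddTerminalVertex}/\textsc{RemoveTerminalVertex}$) triggers at most $O(1)$ ``immediate'' updates to $\hat{G}$ and possibly a batch of ``non-immediate'' updates stemming from a change of $\ell$ or a rebuild of level $\ell$ in the APSP hierarchy. \Cref{clm:recourseVSStatement} already establishes the key structural bound: over any prefix of $t$ outer updates, $\hat{G}$ sees only $O(t)$ updates in total, because each non-immediate update can be charged uniquely to one of the $\Omega(|A|)$ terminal insertion/removal operations guaranteed by \Cref{clm:changeEll} to lie between consecutive level changes, or directly to a rebuild interval $u_{\ell-1}$ whose length dominates the number of ping updates it emits. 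Then, since \Cref{clm:runtimeFulylDynAPSP} shows that each update to $\hat{G}$ is handled by the APSP hierarchy in worst-case time $e^{O(\log^{14/15} m \log\log m)} \leq e^{O(\log^{20/21} m \log\log m)}$, the total work charged to the hierarchy over $t$ outer operations is $t \cdot e^{O(\log^{20/21} m \log\log m)}$, giving the claimed amortized bound.

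Two additional cost components need to be absorbed into this same bound. First, producing the output batch $U_H^{(t)}$: by \Cref{clm:recourseVSStatement} we have $\sum_{t' \le t} |U_H^{(t')}| \leq t \cdot e^{O(\log^{20/21} m \log\log m)}$ since each level-$i$ update cascades into at most $e^{O(\log^{20/21} m \log\log m)}$ changes to $G_\ell$ and each such change is a unit-cost edit to $H=G_\ell$. Second, the bookkeeping for $A$, evaluating the thresholds $m^{1-(\ell-1)/\Lambda}/8$ and $\lfloor m^{1-\ell/\Lambda}/16 \rfloor$, and enqueuing the four ping updates, all take $O(1)$ per outer operation.

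The main subtlety I anticipate is ensuring that the act of ``switching'' $\ell$---which semantically replaces the output graph $H$ with a possibly much larger or smaller graph from the hierarchy---does not produce a burst of $\Theta(|V(G_\ell)|)$ recourse in a single step that breaks the amortized accounting. This is handled by the charging scheme in \Cref{clm:recourseVSStatement}: each change of $\ell$ is preceded by $\Omega(|A|)$ terminal-set operations by \Cref{clm:changeEll}, and the size of $G_\ell$ is $|A|\cdot e^{O(\log^{20/21} m \log\log m)}$, so the cost of the $\ell$-switch distributes evenly across those $\Omega(|A|)$ operations. A fully worst-case bound requires the standard de-amortization scaffold (rebuilding level $\ell$ in the background over $u_{\ell-1}$ steps, as already deployed in \Cref{subsec:fullyDynAPSP}); since we only claim an amortized bound here, it suffices to execute the ping updates and any level-switch work inline and rely on the charging argument above.
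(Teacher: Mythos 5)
Your proposal is correct and follows the same route as the paper: both cite \Cref{clm:runtimeFulylDynAPSP} for initialization and per-update cost of the hierarchy, combine it with the $O(t)$ total-recourse bound to $\hat{G}$ and the size bound on $H=G_\ell$ from \Cref{clm:recourseVSStatement}, and observe that maintaining $\ell$ and emitting $U_H^{(t)}$ costs time linear in $H$'s recourse. The paper's proof is terser, but your more explicit treatment of the immediate/non-immediate charging and the $\ell$-switch burst is the same argument already packaged inside \Cref{clm:recourseVSStatement} and \Cref{clm:changeEll}.
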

\begin{proof}
The runtime of the initialization procedure follows straightforwardly from \Cref{clm:runtimeFulylDynAPSP}.

The update time follows from combining \Cref{clm:recourseVSStatement} and the analysis from \Cref{clm:runtimeFulylDynAPSP}, along with the simple observation that given the hierarchy, it is simple to maintain index $\ell$ and thus locate $G_{\ell}$, and we can then report all changes to $H$ in time linear in the recourse of $H$ which is bounded again by \Cref{clm:recourseVSStatement}.
\end{proof}

All other properties then follow straightforwardly from \Cref{lma:extensionOfMainThm} which extends \Cref{lma:workhorseSparsifier} and by choosing an appropriate value $\gamma_{vertexSparsifier}$.

\paragraph{Mapping Hierarchical Forests.} Finally, since $H$ is at all times equal to a graph $G_{\ell}$ in the hierarchy for some index $\ell$, we can recursively apply \Cref{thm:mainTheoremVSExt} (analogously to how it is applied in \Cref{subsec:maintainLowDiam}), to derive the statement in \Cref{thm:mainTheoremVSExt} about mapping a hierarchical forest $F$.
\section{Decremental Single-Source Shortest Paths via the Dynamic Shortest Path Framework}
\label{sec:decrSSSP}

In this section, we show how to use the toolbox created in \Cref{sec:sparsifier_hierarchy} to implement \Cref{thm:mainSSSPGeneral}. We build on the framework developed in \cite{gutenberg2020deterministic} that was further refined in \cite{bernstein2022deterministic}. While we do not require the refinements of \cite{bernstein2022deterministic} over \cite{gutenberg2020deterministic} (beyond a simple scaling technique), we present our algorithm in the framework given in \cite{bernstein2022deterministic} since it provides interfaces that are easier to adapt. 

We note that the theorem below that summarizes the technical result obtained in this section works only for bounded-degree graphs with small polynomial edge lengths. A simple reduction however suffices to show that general $m$-edge graphs with lengths in $[1, L]$ can be handled by invoking the theorem below on $\log_2 L$ decremental bounded-degree graphs with $\Otil(m)$-vertices and $\Otil(m)$ edges. The interested reader is referred to Proposition II.1.2, in \cite{bernstein2022deterministic}, for a formal proof.

\begin{theorem}\label{thm:mainSSSP}
Given an $n$-vertex bounded-degree graph $G$ with lengths in $[1, n^4]$ that undergoes a sequence of edge deletions, a dedicated source vertex $s \in V$ and an accuracy parameter $\eps = \Omega(1/\polylog m)$. Then, there is an algorithm that maintains a flat hierarchical forest $F$ over $G$ along with vertex maps $\Pi_{V(G) \mapsto V(F)}, \Pi_{V(F) \mapsto V(G)}$ and embedding $\Pi_{F \mapsto G}$ such that, for some $\gamma_{SSSP} = e^{O(\log^{83/84} m \log\log m)}$, at any time:
\begin{enumerate}
    \item for every $v \in V$, if $\dist_G(s,v) \leq n^5$, then $\Pi_{F \mapsto G}(\pi_F(\Pi_{V(G) \mapsto V(F)}(s), \Pi_{V(G) \mapsto V(F)}(v))) \leq (1+\eps) \dist_G(s,v)$, i.e. the path between the two nodes in $F$ that vertices $s$ and $v$ are mapped to has length at most $(1+\eps)\dist_G(s,v)$, and
    \item $\econg(\Pi_{F \mapsto G}) \leq \gamma_{SSSP}$.
\end{enumerate}
The algorithm maintains $F$ and the associated maps $\Pi_{V(G) \mapsto V(F)}, \Pi_{V(F) \mapsto V(G)}$ and $\Pi_{F \mapsto G}$ explicitly and the total number of changes to $F$ and these maps is at most $m \cdot \gamma_{SSSP}$. The algorithm runs in time $m \cdot \gamma_{SSSP}$.
\end{theorem}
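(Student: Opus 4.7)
The plan is to follow the decremental SSSP framework of \cite{FOCSbernstein2022deterministic} and replace its internal distance-maintenance subroutines with our new toolbox. By standard scaling reductions (see Proposition~II.1.2 of \cite{bernstein2022deterministic}), it suffices to prove the more general \Cref{thm:mainSSSPGeneral}, which produces, for each of $O(\log n)$ distance scales $i$, a flat hierarchical forest $F_i$ preserving distances from $s$ to every vertex $v$ with $\dist_G(s,v) \le 2^i \cdot n$ up to additive error $\eps \cdot 2^i$. \Cref{thm:mainSSSP} then follows by identifying the copies of $s$ across all of the $F_i$'s and, for each $v$, routing it through the smallest scale in range; since $\dist_G(s,v) \ge 2^{i-1}$ at that tight scale, the additive slack $\eps \cdot 2^i$ is absorbed into a $(1+O(\eps))$ multiplicative factor, itself absorbed by rescaling $\eps$.

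For a fixed scale $i$, I instantiate the new decremental sparse neighborhood cover of \Cref{thm:covering} on $G$ with radius tuned so that every pair of vertices at distance at most $2^i \cdot n$ lies in a common cluster, each cluster has weak diameter at most $\eps \cdot 2^i / \gamma_{lowDiamTree}$, and the per-vertex cluster multiplicity is subpolynomial in $m$. For each cluster $C$ I run the fully-dynamic low-diameter tree algorithm of \Cref{thm:mainTheoremLowDiamTree} on the induced graph $G[C]$, producing a flat hierarchical tree $T_C$ of diameter at most $\gamma_{lowDiamTree} \cdot \diam(G[C]) \le \eps \cdot 2^i$ that embeds into $G[C]$ with congestion at most $\gamma_{lowDiamTree}$. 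The $(s,v)$-path in $F_i$ is then defined as the pullback in $G$ of the unique $s$-to-$v$ path in $T_{C_v}$, for any currently active cluster $C_v$ containing both endpoints; this path has length at most $\eps \cdot 2^i$, as required. The dynamic APSP oracle of \Cref{thm:mainTheoremAPSPFormal} is used to locate such a cluster in $O(\log m)$ time per query and to detect scale-boundary crossings efficiently.

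Congestion and recourse for $F_i$ are bounded by combining \Cref{thm:covering} and \Cref{thm:mainTheoremLowDiamTree}: the embedding congestion of $F_i$ in $G$ is at most $\gamma_{lowDiamTree}$ times the cluster multiplicity, and summing over the $O(\log n)$ scales yields the global $\gamma_{SSSP}$ bound. The total recourse per update and the total runtime match the $m \cdot \gamma_{SSSP}$ budget by the corresponding amortized bounds in \Cref{thm:covering} and \Cref{thm:mainTheoremLowDiamTree}.

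The main obstacle is the recourse caused by \emph{cluster reassignments}: as $G$ loses edges, the cluster $C_v$ currently hosting the $(s,v)$-path in $F_i$ may shrink and evict $v$ (or lose $s$), forcing a reassignment of $v$ to a new cluster $C'_v$ and a rewrite of the corresponding path in $F_i$. Naively this could generate far too much recourse. This is resolved by two monotonicity properties that the new toolbox delivers: \Cref{thm:covering} guarantees that each vertex's cluster membership changes only a subpolynomial number of times over the entire update sequence, and the vertex maps of \Cref{thm:mainTheoremLowDiamTree} are monotone in the sense that once a vertex is identified with a tree node, the identification is never overwritten. Combining these, each reassignment event contributes at most $\gamma_{lowDiamTree}$ edge changes to $F_i$, and the aggregate recourse over all reassignments, vertices, and scales stays within the claimed $m \cdot \gamma_{SSSP}$ budget.
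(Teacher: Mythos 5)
Your proposal has a fundamental algorithmic gap that breaks the construction. You propose routing the $(s,v)$-path entirely through a single cluster $C_v$ that ``contains both endpoints'' and has weak diameter at most $\eps \cdot 2^i / \gamma_{lowDiamTree}$, and you assert ``this path has length at most $\eps \cdot 2^i$.'' But such a cluster cannot exist when $\dist_G(s,v)$ is large: if $s,v \in C$ and the (weak or strong) diameter of $C$ is at most $\eps \cdot 2^i / \gamma_{lowDiamTree}$, then by definition $\dist_G(s,v) \le \eps \cdot 2^i / \gamma_{lowDiamTree}$, whereas the theorem must handle $\dist_G(s,v)$ up to $2^i \cdot n$. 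Moreover, even if such a cluster magically existed, a path of length $\le \eps \cdot 2^i$ would be \emph{shorter} than $\dist_G(s,v)$, violating the lower bound on distances that any flat embedding must respect. A single low-diameter tree per cluster cannot certify long distances; it can only certify detours inside a short-radius region.

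What the paper actually does (and what your proposal omits entirely) is assemble the $(s,v)$-path from \emph{many} cluster segments via a hopset. Specifically, the paper builds $(1+\eps)$-approximate hopsets $\hat H_i$ from the coverings (\Cref{def:Htil}, \Cref{thm:hopsetMaintenance}), runs an approximate-ball data structure from $s$ in the hopset-augmented graph to obtain a low-depth forest $F_s$ in $G \cup \hat H_{<5\Lambda+1}$ (\Cref{thm:Apxball}), and then recursively ``unrolls'' each hopset edge of $F_s$ into a path built from the per-cluster approximate-ball forests $Y_C$ glued with the per-cluster low-diameter trees $F'_C$, level by level (\Cref{alg:mapShorestPathForest}). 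The low-diameter tree of \Cref{thm:mainTheoremLowDiamTree} is used only to traverse the \emph{interior} of a cluster, where the diameter bound is actually applicable; the hopset edges contribute the dominant $(1+\eps)\dist_G(s,v)$ term, and the cluster detours contribute the $\eps \cdot 2^i$ slack (see \Cref{clm:recStretchAnalysisSSSP}). Your proposal has no analogue of the hopset, the source ball forest $F_s$, or the level-by-level substitution, and without these the stretch and congestion bounds cannot be assembled. The recourse and congestion accounting you sketch is plausible in spirit (cluster multiplicity times per-cluster congestion, multiplied across levels), but it is only meaningful once the correct hierarchical structure is in place.
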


\paragraph{Roadmap.} In the following, we first show in 
\Cref{subsec:maintainCovering} that we can use our new algorithmic toolbox to maintain coverings in graphs under deletions. Obtaining an efficient algorithm for coverings was previously the key challenge addressed in \cite{bernstein2022deterministic}, here, we draw on their techniques (which have been heavily inspired by \cite{gutenberg2020deterministic}). In \Cref{subsec:maintainSSSPviaCover}, we then explain some of the components from \cite{bernstein2022deterministic} that we can use almost directly and therefore give various blackbox statements. This can in fact already be used to obtain an algorithm for decremental SSSP with significantly better runtime. Finally, in \Cref{subsec:maintainSSSPAndEmbedding}, we demonstrate how to use our new algorithm to maintain coverings, the blackbox components from \cite{bernstein2022deterministic}, and our toolbox to obtain the algorithm described in \Cref{thm:mainSSSP}.

\subsection{Maintaining a Covering}
\label{subsec:maintainCovering}

We start by defining a covering as defined in \cite{bernstein2022deterministic} which is also commonly referred to as a sparse neighborhood cover.

\begin{definition}[Covering, compare to Definition II.2.6. in \cite{bernstein2022deterministic}]
\label{def:Covering}
Let $G=(V,E,l)$ be a decremental graph. A $(d,K,\eps,\stretchSSSP, \Delta)$-covering $\cC$ of $G$ is a dynamic collection of vertex sets called \emph{covering sets} (or also covers) where each cover $C\in\cC$ is added to $\cC$ at some time during the algorithm and is then associated with a fixed radius $r(C) \in [d, \gamma \cdot d \cdot \left(\frac{\stretchSSSP}{\eps}\right)^{K-1}]$ such that 
\begin{enumerate} 
    \item after being added to the collection $\cC$, each cover $C$ is decremental over time, i.e. after first being added to $\cC$ the data structure only deletes vertices from $C$, and
    \item at any time, for any $C \in \cC$, we have that $\diam_{G[C]}(C) \leq r(C)$, and
    \item over all times, for every vertex $v \in V(G)$, there are at most $\Delta$ many covers $C \in \cC$ such that $v \in \bar{B}_G(C, \frac{\stretchSSSP}{8\eps} \cdot r(C))$.
\end{enumerate}
\end{definition}

The main result of this section is summarized by the following theorem.

\begin{theorem}[Covering, compare to Theorem II.4.1 in \cite{bernstein2022deterministic}]\label{thm:covering}
Let $G$ be an $n$-vertex bounded-degree
decremental graph. Given parameters $d,K,\eps, \gamma_{stretchCover}$ where $\eps\le0.1$ and $\gamma_{stretchCover} \geq 2 \gamma_{approxAPSP}^2$. There is an algorithm that maintains a $(d,K,\eps,\gamma_{stretchCover},O(Kn^{2/K}))$-covering of $G$
in total update time $\Otil(Kn^{1+2/K} \cdot\gamma_{timeAPSP})$.
\end{theorem}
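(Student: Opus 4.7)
The plan is to follow the high-level structure of Theorem II.4.1 in \cite{bernstein2022deterministic}, but substitute our fully-dynamic APSP data structure from \Cref{thm:mainTheoremAPSPFormal} as the underlying distance oracle. The interface of \textsc{DynamicAPSP}, in particular \textsc{QueryDist}, \textsc{QueryDiameterWitnessPair}, and the vertex set operations \textsc{AddDiameterSetVertex}/\textsc{RemoveDiameterSetVertex}, exposes exactly what we need: approximate distances to grow balls, and approximate diameter certification on a dynamic candidate cover set. The parameter compatibility condition $\gamma_{stretchCover} \geq 2\gamma_{approxAPSP}^2$ is precisely what lets us convert the $\gamma_{approxAPSP}$-approximate distances returned by the oracle into exact bounds of the form demanded by \Cref{def:Covering}.

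The construction proceeds scale by scale. Define $K$ scales with target radii $r_i = d \cdot (\gamma_{stretchCover}/\eps)^i$ for $0 \leq i \leq K-1$. At each scale $i$ we iterate over vertices $v$ that are not yet assigned to a cover at scale $i$, grow the candidate inner ball $\bar{B}_G(v, r_i)$ by repeated \textsc{QueryDist} invocations, and test the slow-growth condition $|\bar{B}_G(v, (\gamma_{stretchCover}/(8\eps)) r_i)| \leq n^{2/K} \cdot |\bar{B}_G(v, r_i)|$. If the condition holds, we commit $C := \bar{B}_G(v, r_i)$ as a new cover, insert its vertices into the diameter-query set of \textsc{DynamicAPSP}, set $r(C)$ using \textsc{QueryDiameterWitnessPair} (so $r(C)$ lies in the permitted interval $[d, \gamma d (\gamma_{stretchCover}/\eps)^{K-1}]$), and mark every vertex in the extended ball $\bar{B}_G(v, (\gamma_{stretchCover}/(8\eps)) r_i)$ as assigned at scale $i$. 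If the condition fails, $v$ is deferred to scale $i+1$. A pigeonhole argument guarantees that every vertex is covered within $K$ scales, since an $n^{2/K}$-fold blow-up at every scale would force the final ball to contain more than $n$ vertices. The same slow-growth property yields the multiplicity bound $\Delta = O(Kn^{2/K})$: at each of the $K$ scales, any fixed vertex $w$ can lie in at most $n^{2/K}$ simultaneously active extended balls, by a charging argument against the inner balls assigned to it at that scale.

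Decremental maintenance comes almost for free: since edge deletions only grow distances, each previously-committed core ball $\bar{B}_G(v, r_i)$ monotonically shrinks, which is precisely item (1) of \Cref{def:Covering}; we simply forward vertex removals from $C$ to the diameter-query set of \textsc{DynamicAPSP}. Whenever a vertex $v$ gets evicted from all covers that currently contain it, we treat $v$ as freshly uncovered and re-run the greedy procedure. A standard potential argument bounds the number of such reassignments per vertex by $O(K\log n)$, since each reassignment corresponds to the distance from $v$ to its nearest cover center crossing a geometric threshold between scales.

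The main obstacle will be the total update time analysis. Each ball-growth step costs $\tilde{O}(|C^{\text{ext}}| \cdot \gamma_{timeAPSP})$ for queries into \textsc{DynamicAPSP}, and summing over all covers yields $\tilde{O}(\gamma_{timeAPSP}) \cdot \sum_{C} |C^{\text{ext}}|$. The multiplicity bound $O(Kn^{2/K})$ and $|V| = n$ together give $\sum_{C} |C^{\text{ext}}| = O(Kn^{1+2/K})$, producing the claimed $\tilde{O}(Kn^{1+2/K}\gamma_{timeAPSP})$ total update time. The one subtle point is that edge deletions themselves trigger updates to \textsc{DynamicAPSP} costing $\gamma_{timeAPSP}$ each, contributing $\tilde{O}(m\gamma_{timeAPSP}) = \tilde{O}(n\gamma_{timeAPSP})$ in the bounded-degree setting, which is absorbed into the stated bound.
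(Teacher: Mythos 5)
Your proposal misses the main technical difficulty, and the gap is substantive. The covering definition (\Cref{def:Covering}) requires a \emph{strong} diameter bound $\diam_{G[C]}(C) \le r(C)$, i.e.\ the diameter of $C$ measured inside the induced subgraph $G[C]$, not a weak diameter bound measured in $G$. Your plan treats $C$ as a ball in $G$ and argues that ``edge deletions only grow distances, so the ball monotonically shrinks, giving item (1) for free.'' But even if the ball $\bar{B}_G(v,r_i)$ shrinks, the strong diameter of the committed set $C$ can blow up arbitrarily as soon as a deletion disconnects or stretches the induced subgraph $G[C]$, even while every vertex of $C$ remains close to $v$ through paths that leave $C$. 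This is exactly why the paper runs a \emph{separate} APSP instance $\mathcal{X}_C$ on the local graph $X_C$ (initially $G[\cover(C)]$) for each cluster, rather than one global \textsc{DynamicAPSP} with all cover sets dumped into a shared diameter set: the diameter witness must come from the induced local graph, and your global variant cannot certify that. Likewise the step ``whenever a vertex $v$ gets evicted from all covers, re-run the greedy procedure, and a standard potential argument bounds reassignments per vertex by $O(K\log n)$'' is not justified; a vertex can cycle in and out of covers at a fixed scale without its distance to any landmark crossing a geometric threshold, so the charging you describe does not control the multiplicity over all time.

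The actual argument in the paper replaces both of these hand-waved steps. When a cluster's core has too-large strong diameter, it does not evict single vertices: it locates a witness pair $s,t$, then runs a ball-growing / low-boundary search (\Cref{lne:findIntegerBallGrowing}) from $s$ and $t$ in parallel to find a cut ball $\bar{B}_{X_C}(u,2d_{\ell(C)}i)$ with controlled edge and core-vertex growth, peels the larger shell off the graph $X_C$, and removes only the inner shell from the core. This guarantees (\Cref{clm:verticesAreNeverInTheSameCoreAgain}) that anything removed from the core is permanently $2d_{\ell(C)}$-far from the remaining core, which is the invariant that makes the potential $\Phi_{\ell}(w) = \sum_{z \in \bar{B}_G(w,d_{\ell+1}/4)} \min\{|\bar{B}_G(z,2d_{\ell(C)})|, n^{(\ell+1)/K}\}$ usable in \Cref{clm:fewVerticesInPromity}: every time a cluster loses a constant fraction of its core, that event is charged against a large drop in $\Phi_\ell(w)$, and together with disjointness of cores at a fixed level (\Cref{clm:disjointCores}) this gives $\Delta = O(Kn^{2/K})$ over all time. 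Your proposal has neither the peeling rule nor the potential, so it bounds neither the multiplicity nor the total work; the running-time sum $\sum_C |C^{\mathrm{ext}}| = O(Kn^{1+2/K})$ that you assert is itself a consequence of the missing multiplicity argument. The pieces you do have right --- level radii $d_\ell = d(\gamma_{stretchCover}/\eps)^\ell$, the initialization-time slow-growth criterion, and using $\gamma_{stretchCover}\ge 2\gamma_{approxAPSP}^2$ to absorb the oracle's approximation --- are the easy scaffolding; the per-cluster local APSP instances and the peel-then-charge mechanism are where the theorem actually lives.
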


\paragraph{Maintaining Clustering, Covers and Cores.} In our algorithm, we focus on maintaining the set $\textsc{Clustering}$ whose elements are (possibly non-disjoint) vertex sets.
When we add a new vertex set to $\textsc{Clustering}$, we never change this set in $\textsc{Clustering}$ again.
We also never remove an element from $\textsc{Clustering}$, so this set of vertex sets is purely incremental.

We associate with each cluster $C \in \textsc{Clustering}$, a core $\Core(C)$, initially, the core is equal to the cluster, however, cores are decremental sets where we ensure that $\Core(C)$ always has small diameter in $G$. Further, we maintain for each cluster $C$, an associated $\cover(C)$. The cover of a cluster $C$ consists of the vertices in $\Core(C)$ and all vertices that are reasonably close to vertices in $\Core(C)$ with respect to the current graph $G$. We also take $\cover(C)$ to be decremental sets. 

Finally, we can take the covering $\cC$ to be the collection of all sets $\cover(C)$ for $C \in \textsc{Clustering}$. It remains to describe how to maintain $\textsc{Clustering}$ and the associated sets $\Core(C)$ and $\cover(C)$ for each cluster $C$.

\paragraph{Initializing the Covering.} To initialize the algorithm, we set the collection $\textsc{Clustering} = \emptyset$ (and thus implicitly $\cC = \emptyset$) and then invoke procedure $\textsc{CoverAllVertices}()$ implemented by \Cref{alg:coverVertices}.

The algorithm checks whether there is a vertex that is not yet covered by any of the clusters. If such a vertex $v \in V$ exists, we search for a small integer $\ell$ such that the ball around $v$ to radius $d_{\ell+1}$ is not by much larger than the ball to radius $d_{\ell}$ where we define $d_{\ell} \defeq d \cdot (\frac{\gamma_{stretchCover}}{\eps})^{\ell}$. We take the ball to radius $d_{\ell}$ to be a new cluster $C$ that is then added to the clustering $\textsc{Clustering}$. 

We take the core of the cluster $C$ to be the initial cluster and the cover of $C$ to be the cluster with some additional padding. We take $\ell$ to be the level of $C$, henceforth denoted by $\ell(C)$. Finally, we initialize data structures $\mathcal{X}_C$ and $\mathcal{Y}_C$ to monitor whether the core and cover of $C$ need to be changed. 

\begin{algorithm}
\tcc{While there exists a vertex $v\in V$ not covered by any core in $\textsc{Clustering}$.}
\tcc{We use $d_{\ell} \defeq d \cdot (\frac{\gamma_{stretchCover}}{\eps})^{\ell}$.}
\While(\label{lne:coverAlgoWhileCond}){there exists a vertex $v$, such that for all  $C\in\textsc{Clustering}, v \notin \cover(C)$}{
    Let $v$ be such a vertex and $\ell$ be the smallest integer with $|\bar{B}_{G}(v,d_{\ell+1})|\le n^{(\ell+1)/K}$.\label{lne:shellissmall}\\
    $C \gets \bar{B}_{G}(v, d_{\ell})$.\label{lne:takeC}\\
    Add $C$ to $\textsc{Clustering}$.\label{lne:addToCovering}\\
    \tcc{Initialize data structures associated with $C$.}
    $\Core(C) \gets \bar{B}_{G}(v, d_{\ell})$.\label{lne:takeCore}\\
    $\cover(C) \gets \bar{B}_{G}(v, 4d_{\ell})$.\label{lne:takeCover}\\
    $\ell(C) \gets \ell$.\label{lne:takeLevel}\\
    Initialize a dynamic APSP data structure $\mathcal{X}_C$, as described in \Cref{thm:mainTheoremAPSPFormal}, initialized on graph $X_C$ that is initially equal to $G[\cover(C)]$ with diameter set initialized to $\Core(C)$. \\
    Initialize a dynamic APSP data structure $\mathcal{Y}_C$, as described in \Cref{thm:mainTheoremAPSPFormal}, initialized on graph $Y_C$ that is initially equal to $G[\cover(C)] / \Core(C)$, 
 that is the graph $G[\cover(C)]$ where the vertices in $\Core(C)$ are contracted into a single vertex with diameter set initialized to $\cover(C)$.
}
\caption{$\textsc{CoverAllVertices}()$}
\label{alg:coverVertices}
\end{algorithm}

Once such an integer $\ell$ is found, the radius of the cluster is fixed to $r(C) = d_{\ell}$ and the cluster is then initialized to be the ball to radius $4 \cdot r(C)$ from the vertex $v$ that was previously not clustered. The algorithm finally starts an All-Pairs Shortest-Paths data structure $\mathcal{X}_C$ to monitor distance in the graph $G[\cover(C)]$. The algorithm returns when all vertices are clustered which yields the initial covering.

\paragraph{Updating the Covering.} 
We next discuss how updates to $G$ affect the covering $\cC$ through changes to the $\textsc{Clustering}$, cores, and cover sets.
The algorithm to maintain the covering $\cC$ under such updates is given in \Cref{alg:Covering}. The algorithm first forwards the edge deletion update to $G$ to all data structures $\mathcal{X}_C$ that are affected, i.e. that work on graphs that contain the edge deleted from $G$. The deletion might then lead to some cluster $C$ having larger diameter than stipulated by \Cref{thm:covering} as distances might increase. Therefore the algorithm queries data structure $\mathcal{X}_C$ and while it finds a pair of vertices at large distance, it peels off one of the vertices along with its ball of radius roughly $d_{\ell(C)}$ from the cluster $C$. After the first while-loop (the one starting in \Cref{lne:whileInnerSNC}) terminates, we thus have that all clusters satisfy the radius constraint again, however, some vertices might no longer be clustered due to them being removed in the previous step from a cluster.

\begin{algorithm}[!ht]
Update all data structures $\mathcal{X}_C$ by removing the edge deleted from $G$ during the $t$-th update to $G$ from every graph $X_C$.\label{lne:updateDCviaG}\\
\While(\label{lne:whileInnerSNC}){$\exists C \in \textsc{Clustering}$ so that $\mathcal{X}_C.\textsc{QueryDist}(s,t) > 8 \log_2(m)  \cdot \gamma_{ApproxAPSP} \cdot d_{\ell(C)}$ for $s, t \gets \mathcal{X}_C.\textsc{QueryDiameterWitnessPair}()$}{
    Let $s,t$ be the witness pair and $C \in \textsc{Clustering}$ for which the while-condition was satisfied.\\ 
    Find a vertex $u \in \{s,t\}$ and an integer $i$ with $0 < i \leq 2  \log_2(m) $ such that 
    $|\bar{B}_{X_C}(u, 2d_{\ell(C)} \cdot i) \cap \Core(C)| \leq 2 \cdot |\bar{B}_{X_C}(u, 2d_{\ell(C)} \cdot (i-1)) \cap \Core(C)| \leq |\Core(C)|/2$ and $\deg_{X_C}(\bar{B}_{X_C}(u, 2d_{\ell(C)} \cdot i)) \leq 2 \cdot \deg_{X_C}(\bar{B}_{X_C}(u, 2d_{\ell(C)} \cdot (i-1)))$.\label{lne:findIntegerBallGrowing}\\
    Remove all vertices in $\bar{B}_{X_C}(u, 2d_{\ell(C)} \cdot i) \cap \Core(C)$ from $\Core(C)$ and the diameter set of $\mathcal{X}_C$.\label{lne:updateCore} \\
    Remove all edges and vertices incident to ball $\bar{B}_{X_C}(u, 2d_{\ell(C)} \cdot (i-1))$ from graph $X_C$ and thus from data structure $\mathcal{X}_C$. \label{lne:updateGraph}
}
Update all data structures $\mathcal{Y}_C$ such that the graph $Y_C$ is equal to the graph $G[\cover(C)] / \Core(C)$, i.e. reflects the changes to $G$ and $\Core(C)$.\\
\While(\label{lne:whileInnerSNC2}){$\exists C \in \textsc{Clustering}$ so that $\mathcal{Y}_C.\textsc{QueryDist}(x,y) > 2 \cdot \gamma_{ApproxAPSP}^2 \cdot d_{\ell(C)}$ 
for $x,y \gets \mathcal{Y}_C.\textsc{QueryDiameterWitnessPair}()$}{
    \If{$\mathcal{Y}_C.\textsc{QueryDist}(\Core(C), x) > \gamma_{ApproxAPSP} \cdot d_{\ell(C)}$
    }{
        Remove vertex $x$ from set $\cover(C)$ and the diameter set of $\mathcal{Y}_C$.\label{lne:removeClsuterVertexX}
    }\Else{
        Remove vertex $y$ from set $\cover(C)$ and the diameter set of $\mathcal{Y}_C$.\label{lne:removeClsuterVertexY}
    }
}

$\textsc{CoverAllVertices}()$.
\caption{$\textsc{UpdateCovering}()$ 
\label{alg:Covering}}
\end{algorithm}

To deal with this issue, the algorithm finally invokes procedure $\textsc{CoverAllVertices}()$ which ensures that all vertices are clustered properly after the algorithm terminates.

\paragraph{Analysis.} We establish \Cref{thm:covering} by proving the following series of claims.

\begin{claim}\label{clm:analyzeBallGrowing}
Whenever the algorithm enters \Cref{lne:findIntegerBallGrowing}, there is an algorithm that finds a vertex $u$ and an integer $0 < i < 2  \log_2(m)$ that satisfy the requirements given in  \Cref{lne:findIntegerBallGrowing} and returns $u$ and $i$ in time $\Otil(|E_{X_C}(\bar{B}_{X_C}(u, 2d_{\ell(C)} \cdot (i-1)))|)$.
\end{claim}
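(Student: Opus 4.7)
}
The plan is to prove existence of a valid $(u,i)$ via a disjoint-balls pigeonhole together with a ball-growing product potential, and then to realize the search by two Dijkstra explorations from $s$ and $t$ advanced in an interleaved fashion.

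\emph{Existence.} The hypothesis that the while-loop in \Cref{lne:whileInnerSNC} fired gives $\mathcal{X}_C.\textsc{QueryDist}(s,t) > 8 \log_2(m) \cdot \gamma_{ApproxAPSP} \cdot d_{\ell(C)}$, and the distance-oracle guarantee of \Cref{thm:mainTheoremAPSPFormal} then yields the true-distance bound $\dist_{X_C}(s,t) > 8 \log_2(m) \cdot d_{\ell(C)}$. Hence the balls $\bar{B}_{X_C}(s, R)$ and $\bar{B}_{X_C}(t, R)$ with $R = 4 \log_2(m) \cdot d_{\ell(C)}$ are vertex-disjoint, so by pigeonhole one of $s, t$, call it $u$, intersects $\Core(C)$ in at most $|\Core(C)|/2$ vertices. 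Define $f(i) = |\bar{B}_{X_C}(u, 2 d_{\ell(C)} i) \cap \Core(C)|$ and $g(i) = \deg_{X_C}(\bar{B}_{X_C}(u, 2 d_{\ell(C)} i))$; both are nondecreasing in $i$. A standard ball-growing argument then isolates the required $i$: if for every candidate $i$ in the range (i.e., those where $2 f(i-1) \leq |\Core(C)|/2$) at least one doubling inequality failed, then $f(i) g(i)$ would at least double at each step, but since $X_C \subseteq G$ is bounded-degree we have $f(i) g(i) \leq |\Core(C)| \cdot |E(X_C)| = O(n^2) \leq O(m^2)$, so such a chain of failures can have length at most $O(\log m)$. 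The $8 \log_2(m)$ slack in the while-threshold is precisely what gives the room needed so that the good $i$ produced by the potential argument also satisfies $f(i-1) \leq |\Core(C)|/4$.

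\emph{Algorithm and runtime.} I run two Dijkstras from $s$ and from $t$ in $X_C$ in round-robin, at each step advancing the side whose explored edge-prefix is smaller, and I maintain for each newly settled vertex its contribution to the four counters $f_s(i), g_s(i), f_t(i), g_t(i)$ bucketed by the radius index $i$. As soon as either side completes its ball to radius $2 d_{\ell(C)} i$ for some $i$, I check the two inequalities on that side's counters and return as soon as they hold. At the returned index the doubling condition $g(i) \leq 2 g(i-1)$ bounds the work on the winning side by $\Otil(g(i)) = \Otil(|E_{X_C}(\bar{B}_{X_C}(u, 2(i-1) d_{\ell(C)}))|)$, and round-robin interleaving ensures the losing side contributes at most a constant factor more. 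The main delicate point I anticipate is making the constants mesh — matching the $|\Core(C)|/4$ chain-threshold with the $i \leq 2 \log_2(m)$ cap — which amounts to careful accounting of where $\log_2 n$ versus $\log_2 m$ appears in the pigeonhole and doubling bounds.
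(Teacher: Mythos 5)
Your proof takes essentially the same approach as the paper: the disjoint-balls pigeonhole from the while-threshold to choose $u$, a doubling argument for existence of $i$, and interleaved Dijkstras from $s$ and $t$ with the degree-doubling condition bounding the winning side and round-robin bounding the loser. Your use of a single product potential $f(i)g(i)$ in place of the paper's split of the failing indices into those satisfying the core-size-doubling inequality versus those satisfying the degree-doubling inequality is only a cosmetic variant of the same accounting, and you correctly flag the same constant-matching delicacy (the $|\Core(C)|/4$ threshold versus the $2\log_2 m$ cap) that the paper glosses over.
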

\begin{proof}
We first observe that there is a vertex $u \in \{s,t\}$ such that $|\bar{B}_{X_C}(u, 2d_{\ell(C)} \cdot 2  \log_2(m) ) \cap \Core(C)| \leq |\Core(C)|/2$. This observation follows since we have from the while-loop condition that $\widehat{\dist}(s,t) > 8  \log_2(m)  \cdot \gamma_{ApproxAPSP} \cdot d_{\ell(C)}$ which implies that $\dist_{X_C}(s,t) \geq \widehat{\dist}(s,t) / \gamma_{approxAPSP}$ by \Cref{thm:mainTheoremAPSPFormal} which implies that $\dist_{X_C}(s,t) > 4  \log_2(m)  \cdot 2d_{\ell(C)}$. But this implies that the balls $\bar{B}_{X_C}(s, 2  \log_2(m)  \cdot 2d_{\ell(C)})$ and $\bar{B}_{X_C}(t, 2  \log_2(m)  \cdot 2d_{\ell(C)})$ are vertex-disjoint, and thus either the ball of $s$ or $t$ contains at most half the vertices in $\Core(C)$.

Next, let $u$ be as above, we show that there exists an index $i$ that satisfies the requirements. Assume for the sake of contradiction that no such index $i$ exists. If $u$ is an isolated vertex, then the proof is trivial since for $i = 1$, we certainly have this property. Otherwise we have $\deg(\bar{B}_{X_C}(u, 0)) \geq 1$. Further, we have that $u \in \Core(C)$ since we maintain the diameter set of $\mathcal{X}_C$ to be the set of vertices in $\Core(C)$. Thus, $|\bar{B}_{X_C}(u, 0) \cap \Core(C)| \geq 1$. But since no index $i$ satisfies the requirements, we have
\begin{equation}\label{eq:ballGrowCore}
|\bar{B}_{X_C}(u, 2d_{\ell(C)}  i) \cap \Core(C)| > 2 \cdot |\bar{B}_{X_C}(u, 2d_{\ell(C)}  (i-1)) \cap \Core(C)| 
\end{equation}
or 
\begin{equation}\label{eq:ballGrowEdges}
\deg_{X_C}(\bar{B}_{X_C}(u, 2d_{\ell(C)} \cdot i)) > 2 \cdot \deg_{X_C}(\bar{B}_{X_C}(u, 2d_{\ell(C)} \cdot (i-1))). 
\end{equation}

Thus, there are at least $ \log_2(m) $ indices $i$ for which either \eqref{eq:ballGrowCore} or \eqref{eq:ballGrowEdges} holds. But in the former case, it is not hard to show by induction on indices $i$ that this implies $|\bar{B}_{X_C}(u, 2d_{\ell(C)} \cdot 2  \log_2(m) ) \cap \Core(C)| > 2^{ \log_2(m) } \geq m$ which yields a contradiction since there are at most $m$ vertices in $X_C \subseteq G$ by assumption; and otherwise, we have that $\deg_{X_C}(\bar{B}_{X_C}(u, 2d_{\ell(C)} \cdot 2  \log_2(m) )) > m$ which contradicts the bound on the sums of degrees that we derived earlier for $u$.

We have now established existence of $u$ and $i$ that satisfy the requirements. To compute such $u$ and $i$, one can run Dijkstra's algorithm from $s$ and $t$ in parallel where after relaxing all vertices at distance at most $2d_{\ell(C)} \cdot i$ one can evaluate whether the index $i$ satisfies the requirement. Once the first Dijkstra algorithm finds such an index $i$, with respect to the ball of $u \in \{s,t\}$, both procedures are aborted and the algorithm returns $u$ and $i$. The runtime analysis is straightforward from the fact that Dijkstra's algorithm run from vertex $u$ relaxes all vertices at distance at most $2d_{\ell(C)} \cdot i$ in time $\Otil(|E_{X_C}(\bar{B}_{X_C}(u, 2d_{\ell(C)} \cdot i)|)$ where we have from the fact that $i$ stipulates \eqref{eq:ballGrowEdges} that $\deg_{X_C}(\bar{B}_{X_C}(u, 2d_{\ell(C)} \cdot i)) > 2 \cdot \deg_{X_C}(\bar{B}_{X_C}(u, 2d_{\ell(C)} \cdot (i-1)))$ and we have that since the procedures are run in parallel that both have spent the same amount of time up until this point of the algorithm.
\end{proof}

\begin{claim}\label{clm:verticesAreNeverInTheSameCoreAgain}
For any cluster $C \in \textsc{Clustering}$, consider a time when the set $\Core(C)$ is updated in \Cref{lne:updateCore} in \Cref{alg:Covering}. We denote by $u$ the vertex that was chosen by the algorithm in \Cref{lne:findIntegerBallGrowing} while updating $\Core(C)$, by $\Core^{OLD}(C)$ the core of $C$ before the update, by $\Core^{NEW}(C)$ the core thereafter, and by $X_C^{OLD}$ and $X_C^{NEW}$ the graph $X_C$ before and after the while-loop iteration. Then, we have for every vertex $v \in \bar{B}_{X^{OLD}_C}(u, 2d_{\ell(C)} \cdot (i-1)) \cap \Core^{OLD}(C)$ that $\dist_G(v, \Core^{NEW}(C)) > 2 \cdot d_{\ell(C)}$.
\end{claim}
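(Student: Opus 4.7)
The plan is a two-step argument: first establish a distance lower bound inside $X_C^{OLD}$ by a triangle inequality; then upgrade it to a $G$-distance lower bound via an inductive invariant linking $X_C$ to $G$.

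For the first step, fix any $w \in \Core^{NEW}(C)$. By the update rule in \Cref{lne:updateCore}, $w \notin \bar{B}_{X_C^{OLD}}(u, 2 d_{\ell(C)} \cdot i)$, so $\dist_{X_C^{OLD}}(u, w) > 2 d_{\ell(C)} \cdot i$. The hypothesis $v \in \bar{B}_{X^{OLD}_C}(u, 2 d_{\ell(C)} \cdot (i-1))$ gives $\dist_{X_C^{OLD}}(u, v) \leq 2 d_{\ell(C)} \cdot (i-1)$. A triangle inequality in $X_C^{OLD}$ then yields $\dist_{X_C^{OLD}}(v, w) \geq \dist_{X_C^{OLD}}(u,w) - \dist_{X_C^{OLD}}(u,v) > 2 d_{\ell(C)}$. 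So in $X_C^{OLD}$ we already have the desired slack; the real work is to convert this into a bound on $\dist_G$, which can only be smaller since $X_C^{OLD}$ has at most the edges of $G$ between vertices of $\cover(C)$.

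For the second step, I would maintain the inductive invariant that at every time, the induced subgraph of $G$ on $\bar{B}_G(\Core(C), 2 d_{\ell(C)})$ is entirely contained in $X_C$ (same vertices, same edges). Given this invariant applied to $X_C^{OLD}$ and $\Core^{OLD}(C)$, any $G$-path of length $\leq 2 d_{\ell(C)}$ starting at $v \in \Core^{OLD}(C)$ is also a path in $X_C^{OLD}$, so its $G$-length equals its $X_C^{OLD}$-length. Combining with the first step, $\dist_G(v,w) \geq \dist_{X_C^{OLD}}(v,w) > 2 d_{\ell(C)}$ for every $w \in \Core^{NEW}(C)$, yielding the claim after taking the minimum over $w$.

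The hard part is proving the invariant in the second step, and I expect this to be the main obstacle. The base case is easy: at initialization $\cover(C) = \bar{B}_G(v_0, 4 d_{\ell(C)})$ and $\Core(C) = \bar{B}_G(v_0, d_{\ell(C)})$, so by the triangle inequality $\cover(C) \supseteq \bar{B}_G(\Core(C), 3 d_{\ell(C)})$ and $X_C = G[\cover(C)]$, which is strictly stronger than needed. The inductive step requires checking each maintenance operation of \Cref{alg:Covering}. Line~\ref{lne:updateCore}--\ref{lne:updateGraph} remove only the ball $\bar{B}_{X_C^{OLD}}(u, 2 d_{\ell(C)}(i-1))$ from $X_C$, while removing a strictly larger ball from $\Core$; by the same triangle inequality as Step~1, every removed vertex lies at $X_C^{OLD}$-distance $> 2 d_{\ell(C)}$ from $\Core^{NEW}(C)$, and the previous iteration's invariant promotes this to a $G$-distance statement. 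Line~\ref{lne:removeClsuterVertexX}--\ref{lne:removeClsuterVertexY} only evict a vertex $x$ from $\cover(C)$ after the approximate oracle $\mathcal{Y}_C$ certifies $\dist_{Y_C}(\Core(C), x) > d_{\ell(C)}$; using $\gamma_{stretchCover} \geq 2\gamma_{approxAPSP}^2$ and the previous invariant (which guarantees $\dist_{Y_C}$ matches $\dist_G$ on the relevant neighborhood), the $3 d_{\ell(C)}$ padding that was set up initially is never eroded below $2 d_{\ell(C)}$. The delicate bookkeeping is therefore to reconcile the multiplicative $\gamma_{approxAPSP}$ slack in the distance queries against the additive $2 d_{\ell(C)}$ padding required by the invariant; the choice of $4 d_{\ell(C)}$ for the initial cover radius and the lower bound on $\gamma_{stretchCover}$ are what make this go through.
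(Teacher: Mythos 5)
Your proof takes a genuinely different route from the paper's, and the core argument is sound. The paper argues by contradiction: it posits a $va$-path $P$ in the current $G$ of length $\leq 2d_{\ell(C)}$ with $a \in \Core^{NEW}(C)$, shows $P$ was contained in the \emph{initial} $X_C$ (using the $3d_{\ell(C)}$ slack between the initial core and initial cover), then shows that whenever an edge of $P$ could be pruned from $X_C$ by \Cref{lne:updateGraph}, the core vertices near that edge --- including $v$ or $a$ --- would already have been evicted from $\Core(C)$; so $P \subseteq X_C^{OLD}$, and one then appeals (the paper leaves this final step implicit) to exactly your Step~1 triangle inequality $\dist_{X_C^{OLD}}(v,a) > 2d_{\ell(C)}$ to close the contradiction. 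You instead package the per-path tracking into a global invariant $G[\bar{B}_G(\Core(C), 2d_{\ell(C)})] \subseteq X_C$, whose base case and whose preservation under \Cref{lne:updateCore}--\ref{lne:updateGraph} follow from the same triangle inequality combined with the pre-iteration invariant, and the claim then falls out by contraposition. Your formulation buys some cleanliness: the ``any short $G$-path from a surviving core vertex is still an $X_C$-path'' fact is stated once as a reusable invariant rather than re-derived for the specific $P$, and it makes explicit the concluding step that the paper's proof skips.

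Two corrections, neither fatal. First, your paragraph on \Cref{lne:removeClsuterVertexX}--\ref{lne:removeClsuterVertexY} rests on a misreading of the algorithm: those lines remove a vertex from $\cover(C)$ and from the diameter set of $\mathcal{Y}_C$, but never from $X_C$ --- only \Cref{lne:updateDCviaG} and \Cref{lne:updateGraph} modify $X_C$. So the second while-loop cannot erode your $X_C$-invariant, and the constants $\gamma_{stretchCover}$ and $\gamma_{ApproxAPSP}$ play no role in this claim (they matter for \Cref{clm:whenRemovedFromClusterItIsFar}, not here); the delicate bookkeeping you anticipate simply does not arise. Second, the chain $\dist_G(v,w) \geq \dist_{X_C^{OLD}}(v,w)$ is false as a blanket inequality --- for a subgraph $X_C^{OLD}$ of $G$ the inequality runs the other way. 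What the invariant actually delivers is the contrapositive you implicitly intend: any $G$-path from $v$ to $w$ of length $\leq 2d_{\ell(C)}$ would be a path inside $X_C^{OLD}$ and hence witness $\dist_{X_C^{OLD}}(v,w) \leq 2d_{\ell(C)}$, contradicting Step~1; that is the right way to phrase the conversion.
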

\begin{proof}Let $v \in \bar{B}_{X^{OLD}_C}(u, 2d_{\ell(C)} \cdot (i-1)) \cap \Core^{OLD}(C)$ as defined above, and let $a$ be any vertex in $\Core^{NEW}(C)$. We prove that there is no $va$-path $P$ in the current graph $G$ of length at most $2 \cdot d_{\ell}(C)$. This then establishes the claim since the choice of $a$ is arbitrary among all vertices in $\Core^{NEW}(C)$.

We prove the claim by contradiction. Assume there is such a path $P$ of length at most $2d_{\ell}$. Note that every vertex on $P$ is at distance at most $d_{\ell}$ from either $v$ or $a$. Since both $a$ and $v$ are in the initial core $\Core(C)$ when $C$ was added to $\textsc{Clustering}$, we have that the initial $\cover(C)$ must have contained all vertices on $P$ since it consists of all vertices at distance at most $3d_{\ell}$ from at least one vertex in the initial core. Since initially $X_C = G[\cover(C)]$, we thus have that $P$ was contained in the initial graph $X_C$.

Further, we have that whenever we remove edges from $X_C$ this occurs either in \Cref{lne:updateDCviaG} of \Cref{alg:Covering} when we delete edges that are deleted from $G$, but this cannot delete an edge from $P$ since we claimed that it is still in $G$. Or it occurs in \Cref{lne:updateGraph} of \Cref{alg:Covering}. But note that whenever we remove edges from $X_C$ in this scenario, before we remove any edge from $X_C$, we remove every vertex that is at a distance less than $2d_{\ell}$ from such an edge from the core $\Core(C)$ (see \Cref{lne:updateCore} and \Cref{lne:updateGraph}). Thus, if the path $P$ is not in $X^{OLD}_C$, then we must have that either $v$ or $a$ where already removed from $\Core(C)$ at an earlier iteration of the while-loop thus contradicting that $v,a \in \Core^{OLD}(C)$.
\end{proof}

A similar argument establishes the following claim about the maintenance of the cluster set.

\begin{claim}\label{clm:whenRemovedFromClusterItIsFar}
For any cluster $C$, we have that whenever we remove a vertex $z$ from $\cover(C)$ in \Cref{lne:removeClsuterVertexX} or \Cref{lne:removeClsuterVertexY}, we have that $\dist_G(z, \Core(C)) > d_{\ell(C)}$.
\end{claim}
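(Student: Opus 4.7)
The plan is to mirror the logical structure of \Cref{clm:verticesAreNeverInTheSameCoreAgain}: I will assume toward contradiction that $\dist_G(z, \Core(C)) \leq d_{\ell(C)}$ at the moment of removal, and argue that a shortest witnessing path $P$ must in fact lie entirely within $G[\cover(C)]$, contradicting the distance lower bound in $Y_C$ that triggered the removal. The argument will proceed by induction over the sequence of vertex removals from $\cover(C)$.

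First, I would verify that, in either removal branch, $\dist_{Y_C}(\Core(C), z) > d_{\ell(C)}$ holds at the moment $z$ is removed. In \Cref{lne:removeClsuterVertexX} this is immediate from the if-condition together with the approximation guarantee of $\mathcal{Y}_C$ given by \Cref{thm:mainTheoremAPSPFormal} (which ensures $\dist \leq \widehat{\dist} \leq \gamma_{ApproxAPSP} \cdot \dist$). In \Cref{lne:removeClsuterVertexY}, I would combine the failed if-condition, giving $\dist_{Y_C}(\Core(C), x) \leq \gamma_{ApproxAPSP} \cdot d_{\ell(C)}$, with the outer while-loop guarantee $\dist_{Y_C}(x,y) > 2\gamma_{ApproxAPSP} \cdot d_{\ell(C)}$, to obtain $\dist_{Y_C}(\Core(C), y) > \gamma_{ApproxAPSP} \cdot d_{\ell(C)} \geq d_{\ell(C)}$ via a triangle inequality for distances to a set.

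Next, assuming $\dist_G(z, \Core(C)) \leq d_{\ell(C)}$, I would let $P$ be a shortest such path in the current $G$ from $z$ to some $a \in \Core(C)$ with $|P| \leq d_{\ell(C)}$. The core step is to prove that every vertex $w$ on $P$ lies in the current $\cover(C)$. This requires two sub-arguments. \emph{Membership in the initial cover:} since $\Core(C)$ is decremental, $a$ was in the initial core, so $\dist_{G^{(t_0)}}(v, a) \leq d_{\ell(C)}$, where $v$ is the center of $C$ at clustering time $t_0$; since $G$ is decremental the edges of $P$ were already present in $G^{(t_0)}$, so $\dist_{G^{(t_0)}}(a, w) \leq d_{\ell(C)}$; by the triangle inequality $\dist_{G^{(t_0)}}(v, w) \leq 2d_{\ell(C)} < 4d_{\ell(C)}$, placing $w$ in the initial $\cover(C) = \bar{B}_{G^{(t_0)}}(v, 4 d_{\ell(C)})$. \emph{No prior removal:} if $w$ had been removed from $\cover(C)$ at some earlier time $t'$, the inductive hypothesis would give $\dist_{G^{(t')}}(w, \Core^{(t')}(C)) > d_{\ell(C)}$; the monotonicity of $G$ (edges only deleted) and of $\Core(C)$ (vertices only removed) would transfer this inequality to the current time, contradicting that $w$ lies on a path of length at most $d_{\ell(C)}$ to $a \in \Core(C)$.

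With every vertex of $P$ in the current $\cover(C)$, the image of $P$ in $Y_C = G[\cover(C)] / \Core(C)$ is a path from the contracted $\Core(C)$-vertex to $z$ of length at most $d_{\ell(C)}$, contradicting the lower bound established in the first step. I do not anticipate a major obstacle, as the argument is a close parallel of \Cref{clm:verticesAreNeverInTheSameCoreAgain}. The only subtlety worth flagging is the need to carefully track which quantities are monotone in time---edges of $G$, vertices of $\Core(C)$, and vertices of $\cover(C)$---so that the inductive transfer of inequalities from earlier removal times to the present is rigorously justified, and to confirm in both removal branches that the actual (not merely estimated) distance in $Y_C$ exceeds $d_{\ell(C)}$.
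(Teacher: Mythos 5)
Your proposal is correct and takes essentially the same approach as the paper. The paper establishes, through an identical case analysis on the if/else branches, that $\dist_{Y_C}(\Core(C),z) > d_{\ell(C)}$ at the moment of removal, and then states that the transfer to $\dist_G(z,\Core(C)) > d_{\ell(C)}$ follows "along the same line of reasoning" as the earlier \Cref{clm:verticesAreNeverInTheSameCoreAgain}; you simply spell out that transfer explicitly (via the initial-cover membership argument plus a strong induction over removal events from $\cover(C)$), which is exactly what that deferred reasoning amounts to.
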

\begin{proof}
We have from the while-loop condition that in every iteration that picks such $x$ and $y$ in cluster $C$, that $\mathcal{Y}_C.\textsc{QueryDist}(x,y) = \widehat{\dist}(x,y) > 2 \cdot \gamma_{ApproxAPSP}^2 \cdot d_{\ell(C)}$. From \Cref{thm:mainTheoremAPSPFormal}, we have that this implies that $\dist_{Y_C}(x,y) \geq \widehat{\dist}_{Y_C}(x,y)/\gamma_{approxAPSP} > 2 \cdot \gamma_{approxAPSP} \cdot d_{\ell(C)}$. We have by the triangle inequality that $\dist_{Y_C}(x, \Core(C)) + \dist_{Y_C}(\Core(C), y) \geq \dist_{Y_C}(x,y)$ and thus $\min\{\dist_{Y_C}(x, \Core(C)), \dist_{Y_C}(\Core(C), y)\} \geq \frac{1}{2} \cdot \dist_{Y_C}(x,y) > \gamma_{approxAPSP} \cdot d_{\ell(C)}$. 

Now, if the if-condition holds, we have that $\dist_{Y_C}(\Core(C), x) \geq \widehat{\dist}(\Core(C), x) / \gamma_{ApproxAPSP} \geq d_{\ell(C)}$ by \Cref{thm:mainTheoremAPSPFormal}. In the else-case, we have that $\dist_{Y_C}(\Core(C), x) \leq \widehat{\dist}(\Core(C), x) \leq \gamma_{approxAPSP} \cdot d_{\ell(C)}$, and thus $\dist_{Y_C}(\Core(C), y) > \gamma_{approxAPSP} \cdot d_{\ell(C)}$.

Using these lower bounds on the distance in either case and arguing along the same line of reasoning as in the proof of \Cref{clm:verticesAreNeverInTheSameCoreAgain}, we can thus establish that in either case, the vertex removed from $C$ is at distance at least $d_{\ell(C)}$ not only in $Y_C$ but, in fact, also in $G$.
\end{proof}

The above claim can next be used to derive the following useful claim on the disjointness of cores of clusters.

\begin{claim}\label{clm:disjointCores}
At any time, for any vertex $w \in V$ and $\ell \in [0, K-1]$, there is at most one cluster $C \in \textsc{Clustering}$  with level $\ell(C) = \ell$ and  $w \in \Core(C)$.
\end{claim}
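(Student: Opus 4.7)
The plan is to argue by contradiction. Suppose there exist two clusters $C_1, C_2 \in \textsc{Clustering}$ at the same level $\ell(C_1) = \ell(C_2) = \ell$ and a vertex $w$ with $w \in \Core(C_1) \cap \Core(C_2)$ at some time $t$. Denote by $t_1 \leq t_2$ the creation times of $C_1$ and $C_2$, with centers $v_1, v_2$, so that initially $\Core(C_i) = \bar{B}_G(v_i, d_\ell)$ and $\cover(C_i) = \bar{B}_G(v_i, 4 d_\ell)$ measured in $G^{(t_i)}$. Using the decremental property of cores (they only shrink, per \Cref{lne:updateCore} of \Cref{alg:Covering}) we have $w \in \Core(C_1)^{(s)}$ for all $t_1 \leq s \leq t$ and $w \in \Core(C_2)^{(t_2)}$; in particular $\dist_{G^{(t_2)}}(v_2, w) \leq d_\ell$ since $w$ lies in the initial ball defining $\Core(C_2)$.

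The creation of $C_2$ was triggered by the while-loop condition in \Cref{lne:coverAlgoWhileCond} of \Cref{alg:coverVertices}, so $v_2 \notin \cover(C_1)^{(t_2)}$. I split into two cases. In the first case, $v_2$ was previously in $\cover(C_1)$ and was removed at some time $t_r < t_2$. Then \Cref{clm:whenRemovedFromClusterItIsFar} applied to $C_1$ gives $\dist_{G^{(t_r)}}(v_2, \Core(C_1)^{(t_r)}) > d_\ell$; since distances in $G$ only grow and $w \in \Core(C_1)^{(t_r)}$ (by the decremental property, since $t_1 \leq t_r \leq t$), we get $\dist_{G^{(t_2)}}(v_2, w) \geq \dist_{G^{(t_r)}}(v_2, w) > d_\ell$, contradicting the bound from the previous paragraph. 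In the second case, $v_2$ was never in $\cover(C_1)$, so at creation time $t_1$ it was already outside $\bar{B}_{G^{(t_1)}}(v_1, 4 d_\ell)$, giving $\dist_{G^{(t_1)}}(v_1, v_2) > 4 d_\ell$; combined with $w \in \Core(C_1)^{(t_1)} = \bar{B}_{G^{(t_1)}}(v_1, d_\ell)$, the triangle inequality yields $\dist_{G^{(t_1)}}(v_2, w) > 3 d_\ell$. Monotonicity of distances then gives $\dist_{G^{(t_2)}}(v_2, w) > 3 d_\ell > d_\ell$, again contradicting $\dist_{G^{(t_2)}}(v_2, w) \leq d_\ell$. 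Both cases yield a contradiction, proving the claim.

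The main obstacle I anticipated was handling the second case, where $v_2$ was never in $\cover(C_1)$, since \Cref{clm:whenRemovedFromClusterItIsFar} does not apply; the trick is to exploit the extra padding factor (radius $4 d_\ell$ in the initial cover versus radius $d_\ell$ in the initial core) so that the triangle inequality provides the needed separation even using only distances at the creation time of $C_1$.
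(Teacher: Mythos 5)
Your proof is correct and follows essentially the same route as the paper's: both hinge on the observation that $\dist_G(v_2, w) \leq d_\ell$ from the creation of $C_2$ onward (hence also at all earlier times in the decremental graph), combine this with the fact that $w$ has always remained in $\Core(C_1)$, and then invoke \Cref{clm:whenRemovedFromClusterItIsFar} together with the $4d_\ell$ padding of the initial cover to conclude $v_2$ must still be covered by $C_1$ when $C_2$ is created, contradicting the while-loop condition in \Cref{lne:coverAlgoWhileCond}. The only presentational difference is that you split into the two explicit cases ``$v_2$ removed'' vs.\ ``$v_2$ never added'', whereas the paper folds them together by directly arguing $v_2$ entered the initial $\cover(C_1)$ and was never removed; the underlying ingredients and bounds are identical.
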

\begin{proof}
Assume for the sake of contradiction that there are two distinct clusters $C, C'$ are in $\textsc{Clustering}$ with $\ell = \ell(C) = \ell(C')$ and $w \in \Core(C), \Core(C')$. Let $C$ be the cluster that was first added to $\textsc{Clustering}$, i.e. that was added before $C'$ was added and let us focus on the time when cluster $C'$ was added to $\textsc{Clustering}$. Let $v$ be the vertex that is chosen in \Cref{lne:shellissmall} of \Cref{alg:coverVertices} when $C'$ was created, i.e. initially, $\Core(C') = \bar{B}_G(v, d_{\ell})$. Since each core set is a monotonically decreasing set we also have that $w$ is in the initial set $\Core(C')$. Thus, we have $\dist_G(v, w) \leq d_{\ell}$. 

But we show that this yields a contradiction because we can also derive from these facts that $w \in \cover(C)$ at the time that $C'$ was created which contradicts the while-loop condition in \Cref{lne:coverAlgoWhileCond} in \Cref{alg:coverVertices}.

To see this last claim, observe that since the distance from $v$ to $w$ is at most $d_{\ell}$ at the time that $C'$ is added, it must have been at most $d_{\ell}$ at all previous times. But since $v$ was added to $\Core(C)$ when $C$ was added to $\cC$, we had $w$ added to the initial set $\cover(C)$, but then by  \Cref{clm:whenRemovedFromClusterItIsFar} it has remained in the cover ever since as its distance to $v$ and thus $\Core(C)$ did not increase to more than $d_{\ell}$. 
\end{proof}

We next prove that every vertex $v$ is only ever in the proximity of a few clusters. This claim is the main technical claim of this section and almost immediately yields the proof of \Cref{thm:covering}.

\begin{claim}\label{clm:fewVerticesInPromity}
Over all times, for every vertex $w \in V(G)$, there are at most $O(K n^{2/K})$ many clusters $C \in \textsc{Clustering}$ such that at the time that $C$ is added to $\textsc{Clustering}$, we have $w \in \bar{B}_G(C, d_{\ell(C) + 1}/8)$.
\end{claim}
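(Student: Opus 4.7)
The plan is to fix a vertex $w \in V(G)$, bound for each level $\ell \in \{0,1,\dots,K-1\}$ the number $N_\ell^w$ of clusters at level $\ell$ with $w \in \bar B_G(C, d_{\ell+1}/8)$ at creation by $O(n^{2/K})$, and then sum over levels to obtain the $O(K n^{2/K})$ bound. Throughout I will use the decremental monotonicity of balls, i.e., $\bar B_{G^{(t_2)}}(x,r) \subseteq \bar B_{G^{(t_1)}}(x,r)$ whenever $t_1 \le t_2$, together with the monotonicity of cores.

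First I would set up a common ``universe'' for the level. Enumerate the relevant clusters $C_1,\dots,C_{N_\ell^w}$ in creation order with centers $v_i$ chosen as in Line~\ref{lne:shellissmall} and creation times $t_1<\cdots<t_{N_\ell^w}$. Since $w \in \bar B_{G^{(t_i)}}(C_i, d_{\ell+1}/8)$ and $C_i = \bar B_{G^{(t_i)}}(v_i,d_\ell)$, the triangle inequality gives $\dist_{G^{(t_i)}}(v_i,w) \le d_\ell + d_{\ell+1}/8 \le d_{\ell+1}/4$ (using that $\gamma_{stretchCover}/\eps$ is sufficiently large). By the minimality of $\ell$ chosen in Line~\ref{lne:shellissmall} we have $|\bar B_{G^{(t_i)}}(v_i,d_{\ell+1})| \le n^{(\ell+1)/K}$, and combining this with the distance from $v_i$ to $w$, all initial cores $S_i := \bar B_{G^{(t_i)}}(v_i,d_\ell)$ sit in the set $W_w := \bar B_{G^{(t_1)}}(w, d_{\ell+1}/2)$, which by monotonicity has $|W_w| \le n^{(\ell+1)/K}$. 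For $\ell \ge 1$, minimality of $\ell$ also forces $|S_i| > n^{\ell/K}$; the case $\ell=0$ gives $N_0^w \le n^{2/K}$ separately from the overlap bound below applied with $|S_i|\ge 1$.

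Next I would use the ``uncovered'' property of $v_i$ at $t_i$ to get disjointness from the current cores of all previous level-$\ell$ clusters. For $j<i$, since $v_i \notin \cover^{(t_i)}(C_j)$, either $v_i$ was never in $\cover(C_j)$, in which case the initial cover $\bar B_{G^{(t_j)}}(v_j,4d_\ell)$ together with decremental monotonicity forces $\dist_{G^{(t_i)}}(v_i,\Core^{(t_i)}(C_j)) > 3 d_\ell$, or $v_i$ was ejected at some earlier time, in which case \Cref{clm:whenRemovedFromClusterItIsFar} gives $\dist_G(v_i,\Core(C_j)) > d_\ell$ at ejection and distance/core monotonicity preserves it. In both cases $S_i \cap \Core^{(t_i)}(C_j)=\emptyset$, confirming \Cref{clm:disjointCores} at the moment of creation but also giving the stronger statement that $S_i$ avoids \emph{every} previous current core.

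The crux, and the main obstacle, is then an overlap bound: for every $u \in W_w$, the vertex $u$ lies in at most $O(n^{1/K})$ of the initial cores $S_i$. Given this overlap bound, double counting closes the argument: $N_\ell^w \cdot n^{\ell/K} \le \sum_i |S_i| \le O(n^{1/K})\cdot |W_w| \le O(n^{(\ell+2)/K})$, so $N_\ell^w \le O(n^{2/K})$ and summing gives $O(K n^{2/K})$. To prove the overlap bound, suppose $u \in S_{i_1}\cap\cdots\cap S_{i_k}$ with $i_1<\cdots<i_k$. By the disjointness of current cores at any single time, between consecutive events $t_{i_m}$ and $t_{i_{m+1}}$ the vertex $u$ must be ejected from $\Core(C_{i_m})$, which by \Cref{clm:verticesAreNeverInTheSameCoreAgain} certifies $\dist_G(u,\Core^{NEW}(C_{i_m})) > 2 d_\ell$ at ejection and hence forever after. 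At the same time, $u \in S_{i_m}$ forces $\dist_{G^{(t_{i_1})}}(v_{i_m},u)\le d_\ell$, so all centers $v_{i_m}$ are distinct and lie in $\bar B_{G^{(t_{i_1})}}(u,d_\ell)$. The hard part is turning the successive ``$>2d_\ell$'' displacement into a tighter ball-growing statement that forces $k\le n^{1/K}$, as the naive bound $|\bar B_{G^{(t_{i_1})}}(u,d_\ell)|\le n^{(\ell+1)/K}$ is a factor of $n^{\ell/K}$ too weak. I would close this gap by exploiting the minimality of $\ell$ (which gives that every subball of radius $d_\ell$ around a \emph{level-$\ell$ center} has size at most $n^{(\ell+1)/K}$ but at least $n^{\ell/K}$) together with the fact that the ejection condition forces the $v_{i_m}$'s to be spread in a way compatible with the $2 d_\ell$-separated removed fragments, giving a pairwise packing of size $n^{\ell/K}$ subsets inside $\bar B_{G^{(t_{i_1})}}(u,d_\ell)$ and hence at most $n^{1/K}$ many such $v_{i_m}$.
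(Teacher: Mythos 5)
Your proposal sets up a genuinely different route from the paper's proof: you aim for a pointwise \emph{overlap bound} — every $u \in W_w$ lies in at most $O(n^{1/K})$ initial cores $S_i$ — and then close by double counting $\sum_i |S_i|$. The paper instead introduces a monotone potential $\Phi_\ell(w) = \sum_{z \in \bar B_G(w, d_{\ell+1}/4)} \min\{|\bar B_G(z, 2d_\ell)|, n^{(\ell+1)/K}\}$, splits the clusters into those whose core is still large (bounded by $O(n^{1/K})$ via \Cref{clm:disjointCores}) and those whose core has halved, and then shows that each cluster of the second type can be \emph{uniquely charged} a drop of $\Omega(n^{2\ell/K})$ in $\Phi_\ell(w)$, because the inner-ball vertices removed in \Cref{lne:updateCore} each lose at least $\Omega(n^{\ell/K})$ elements from their $2d_\ell$-ball by \Cref{clm:verticesAreNeverInTheSameCoreAgain}. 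Your setup (enumerating the relevant clusters, placing the initial cores in a common universe $W_w$ of size $n^{(\ell+1)/K}$, lower-bounding $|S_i|$ by minimality of $\ell$) matches the paper's; the divergence is that you track the number of cores a single vertex $u$ has passed through, while the paper tracks an aggregate quantity that amortizes over many vertices at once.

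There are two concrete problems. First, you misapply \Cref{clm:verticesAreNeverInTheSameCoreAgain}: that claim asserts $\dist_G(v,\Core^{NEW}(C)) > 2d_\ell$ only for vertices $v$ in the \emph{inner} ball $\bar B_{X_C^{OLD}}(u_0, 2d_{\ell(C)}(i-1)) \cap \Core^{OLD}(C)$, where $u_0,i$ are the vertex and index chosen in \Cref{lne:findIntegerBallGrowing}. The set removed from $\Core(C)$ in \Cref{lne:updateCore} is the larger outer ball $\bar B_{X_C}(u_0, 2d_{\ell(C)}i) \cap \Core(C)$, and vertices in the outer ring carry \emph{no} such distance guarantee — they are removed from the core but stay in $X_C$ and may remain arbitrarily close to the residual core. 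So when ``$u$ is ejected from $\Core(C_{i_m})$'' there is no reason for $\dist_G(u, \Core^{NEW}(C_{i_m})) > 2d_\ell$ to hold. Second, even granting that step, you explicitly acknowledge you cannot turn the displacement into the overlap bound: the residual core at the time $u$ leaves has no size lower bound, so the $2d_\ell$-ball around the fixed vertex $u$ need not shrink by $\Omega(n^{\ell/K})$ per ejection, and the ``pairwise packing'' you gesture at is not substantiated. This is exactly the obstruction the paper sidesteps by shifting from a per-vertex count to an aggregate potential over all $z$ near $w$, and by only charging clusters whose core has \emph{halved} (at which point the removed inner-ball vertices collectively account for $\Omega(n^{\ell/K})$ terms, each dropping by $\Omega(n^{\ell/K})$). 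As written, your proposal has a genuine gap in the key lemma and an incorrect invocation of \Cref{clm:verticesAreNeverInTheSameCoreAgain}.
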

\begin{proof}
Let us fix any level $\ell \in [0, K-1]$. For vertex $w$, we have that while $|\bar{B}_G(w, d_{\ell+1}/2)| > n^{(\ell+1)/K}$, we have that every cluster $C$ that is added to $\textsc{Clustering}$ in a while-loop iteration starting in \Cref{lne:coverAlgoWhileCond} in \Cref{alg:coverVertices} that picks vertex $v$ and level $\ell(C)$ in \Cref{lne:shellissmall} and has $w \in \bar{B}_G(C, d_{\ell(C) + 1}/8)$ has $\ell(C) \neq \ell$ since $\bar{B}_G(w, d_{\ell+1}/2) \subseteq \bar{B}_G(v, d_{\ell+1})$ and thus $|\bar{B}_G(v, d_{\ell +1})| > n^{(\ell+1)/K}$. 

Let us therefore focus on the clusters added to $\textsc{Clustering}$ from the first time where 
\[
|\bar{B}_G(w, d_{\ell+1}/2)| \leq n^{(\ell+1)/K}.
\]
Let us define the following potential function 
\[
\Phi_{\ell}(w) = \sum_{z \in \bar{B}_G(w, d_{\ell+1}/4)} \min\{| \bar{B}_G(z, 2 \cdot d_{\ell(C)})|, n^{(\ell + 1)/K}\}
\] 
which is monotonically decreasing over time since $G$ is decremental and where we have that initially $\Phi_{\ell}(w) \leq |\bar{B}_G(w, d_{\ell+1}/2)|^2 \leq n^{2(\ell+1)/K}$.

We show that for every cluster $C$ added to $\textsc{Clustering}$ with $\ell(C) = \ell$ and $w \in \bar{B}_G(C, d_{\ell(C) + 1}/8)$, it either has at the current time still a core $\Core(C)$ of size at least $\frac{1}{2}n^{\ell/K}$ or we can uniquely charge a drop by $\Omega(n^{2\ell/K})$ units of the potential $\Phi_{\ell}(w)$ to cluster $C$. Since each core $\Core(C)$ of a cluster $C$ under consideration has $\Core(C) \subseteq \bar{B}_G(w, d_{\ell+1}/2)$, and since each vertex can be in at most one core at level $\ell$ by  \Cref{clm:disjointCores} at any time, we can bound the number of such clusters with cores of size at least $\frac{1}{2}n^{\ell/K}$ by $2n^{1/K}$. Further, since $\Phi_{\ell}(w)$ is initially of size at most $n^{2(\ell+1)/K}$ and we charge for each other cluster $C$ at least $\Omega(n^{2\ell/K})$ units, and since $\Phi_{\ell}(w)$ remains non-negative, we can also bound the number of such clusters of the second type by $O(n^{2/K})$. This yields the claim.

It remains to show that we can charge each cluster $C \in \textsc{Clustering}$ with $\ell(C) = \ell$ that on initialization had $w$ contained in $\bar{B}_G(C, d_{\ell(C) + 1}/8)$ and that at the current time has $\Core(C)$ of size less than $\frac{1}{2} n^{\ell/K}$. To this end, observe that every core set $\Core(C)$ is initially of size at least $n^{\ell/K}$ by minimality of $\ell$ in \Cref{lne:shellissmall}. Next, consider the times while $\Core(C)$ was still of size at least $\frac{1}{2}n^{\ell/K}$ and some vertex set was removed from $\Core(C)$ in  \Cref{lne:updateCore}. Let $u$ and $i$ be the vertex and index selected in \Cref{lne:findIntegerBallGrowing} and let $\bar{B}_{X_C}(u, d_{\ell(C)} \cdot i) \cap \Core(C)$ be the set of vertices removed from $\Core(C)$. Then, we have that by \Cref{clm:verticesAreNeverInTheSameCoreAgain} that all vertices in $\bar{B}_{X_C}(u, d_{\ell(C)} \cdot (i-1)) \cap \Core(C)$ are at distance at least $2d_{\ell}$ from the vertices that remain in $\Core(C)$ and thus from at least $\frac{1}{4} n^{\ell/K}$ vertices in $\Core(C)$.  And since for each vertex $z \in \Core(C)$, we have that when $C$ was added that $\Core(C) \subseteq \bar{B}_G(z, 2 \cdot d_{\ell(C)})$, we have for each such vertex $z \in \bar{B}_{X_C}(u, d_{\ell(C)} \cdot (i-1)) \cap \Core(C)$ that while it was in the core of $C$, its ball $\bar{B}_G(z, 2 \cdot d_{\ell(C)})$ decreased in size by at least $\frac{1}{4} n^{\ell/K}$. 

But since by choice of $u$ and $i$, we have $|\bar{B}_{X_C}(u, d_{\ell(C)} \cdot i) \cap \Core(C)| \leq 2|\bar{B}_{X_C}(u, d_{\ell(C)} \cdot (i-1)) \cap \Core(C)|$, we have that at least half the vertices that leave $\Core(C)$ have this property. And thus, we have that once $\Core(C)$ has size less than $\frac{1}{2}n^{\ell/K}$, we can charge at least $\frac{1}{4}n^{\ell/K}$ vertices $z$ whose balls decreased by size at least $\frac{1}{4}n^{\ell/K}$ while being in the core of $C$, and we can thus uniquely charge $C$ with $\frac{1}{16}n^{2\ell/K}$ units of the potential, as desired.
\end{proof}

We can now establish that the covering $\cC$ is maintained correctly.

\begin{claim}\label{clm:correctnessCovering}
The algorithm correctly maintains a $(d,K,\eps,\gamma_{stretchCover},O(Kn^{2/K}))$-covering $\cC$.
\end{claim}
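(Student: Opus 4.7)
The plan is to set $r(C) \defeq c \cdot \log m \cdot \gamma_{approxAPSP}^2 \cdot d_{\ell(C)}$ for a suitable constant $c$ (so $r(C) \in [d, \gamma \cdot d \cdot (\gamma_{stretchCover}/\epsilon)^{K-1}]$ after absorbing polylog factors into $\gamma$), and then verify the three conditions of \Cref{def:Covering} for the collection $\cC = \{\cover(C)\}_{C \in \textsc{Clustering}}$. The decremental property is immediate by construction: when $\cover(C)$ is first created in \Cref{lne:takeCover} of \Cref{alg:coverVertices} it is fixed, and is only ever shrunk in \Cref{lne:removeClsuterVertexX} and \Cref{lne:removeClsuterVertexY} of \Cref{alg:Covering}. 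I also need to observe that $\textsc{CoverAllVertices}$ terminates: level $\ell=K-1$ in \Cref{lne:shellissmall} always exists since $|\bar B_G(v,\cdot)|\le n=n^{K/K}$, and each cluster added contains at least one previously uncovered vertex.

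For the diameter bound $\diam_{G[\cover(C)]}(\cover(C)) \le r(C)$, I would chain together the guarantees enforced by the two while loops in \Cref{alg:Covering}. The first loop exits only when $\mathcal X_C.\textsc{QueryDist}$ certifies that all pairs of $\Core(C)$ vertices are at distance $\le 8\log_2(m)\gamma_{approxAPSP} d_{\ell(C)}$ in $X_C = G[\cover(C)]$, so $\diam_{G[\cover(C)]}(\Core(C))$ is bounded by that quantity. The second loop exits only when, in the contracted graph $Y_C = G[\cover(C)]/\Core(C)$, every vertex $x \in \cover(C)$ either is within $\gamma_{approxAPSP} d_{\ell(C)}$ of the contracted core (via the if-branch of \Cref{lne:removeClsuterVertexX}), or $x$ was the one still closer of an unreachable witness pair. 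Unpacking this using \Cref{thm:mainTheoremAPSPFormal}, every remaining $x \in \cover(C)$ lies within $\gamma_{approxAPSP}^2 d_{\ell(C)}$ of $\Core(C)$ in $G[\cover(C)]$; combining with the core's internal diameter via the triangle inequality gives $\diam_{G[\cover(C)]}(\cover(C)) = O(\log m \cdot \gamma_{approxAPSP}^2 \cdot d_{\ell(C)}) \le r(C)$, as required.

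The main obstacle is the congestion bound: showing that over all times, every vertex $w$ lies in at most $\Delta = O(Kn^{2/K})$ of the enlarged balls $\bar B_G(\cover(C),\tfrac{\gamma_{stretchCover}}{8\epsilon} r(C))$. Here I would reduce to \Cref{clm:fewVerticesInPromity}, which gives the analogous bound but only at the creation time of $C$ and against the smaller ball $\bar B_G(C,d_{\ell(C)+1}/8)$. The reduction uses monotonicity twice: $G$ is decremental (so distances only increase in time) and $\cover(C)$ is decremental (so the set only shrinks). Concretely, if $w \in \bar B_{G^{(t)}}(\cover(C)^{(t)},\rho)$ via some witness $u \in \cover(C)^{(t)}$, then $u \in \cover(C)^{(t_C)} = \bar B_{G^{(t_C)}}(v_C,4d_{\ell(C)})$ and $\dist_{G^{(t_C)}}(w,u) \le \dist_{G^{(t)}}(w,u) \le \rho$ at the creation time $t_C$, so $w \in \bar B_{G^{(t_C)}}(v_C,\rho+4d_{\ell(C)})$. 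Plugging in $\rho = \tfrac{\gamma_{stretchCover}}{8\epsilon} r(C) = \Theta(\log m\cdot \gamma_{approxAPSP}^2)\cdot d_{\ell(C)+1}/8$ and choosing the constant in $r(C)$ small enough that $\rho + 4 d_{\ell(C)} \le d_{\ell(C)+1}/8$ (using $d_{\ell+1}/d_\ell = \gamma_{stretchCover}/\epsilon$ which is superpolylogarithmic), this places $w$ in the ball of \Cref{clm:fewVerticesInPromity}, yielding the $O(Kn^{2/K})$ bound by summing over the $K$ possible levels. Modulo tuning the constant in $r(C)$ against the $\gamma_{stretchCover}/\epsilon$ gap, this completes the three conditions.
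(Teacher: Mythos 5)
Your overall structure — set $r(C)$, verify the three conditions, reduce the congestion bound to \Cref{clm:fewVerticesInPromity} — is the same route the paper takes, and your diameter argument is actually more careful than what the paper writes: the paper only asserts a bound on the strong diameter of $Y_C = G[\cover(C)]/\Core(C)$, but since $Y_C$ contracts the core to a point, this alone does not bound $\diam_{G[\cover(C)]}(\cover(C))$; one also needs the first while-loop's control on $\diam_{G[\cover(C)]}(\Core(C))$, which you correctly include and chain through the triangle inequality. (Minor arithmetic: the first loop's exit gives $\widehat{\dist}(s,t) \le 8\log_2(m)\gamma_{ApproxAPSP}\, d_{\ell(C)}$, but converting this to a diameter bound via the witness-pair guarantee costs another factor $\gamma_{ApproxAPSP}$, so the core diameter bound is $O(\log m\cdot \gamma_{ApproxAPSP}^2 \, d_{\ell(C)})$, not $O(\log m\cdot \gamma_{ApproxAPSP}\, d_{\ell(C)})$ as you wrote.)

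The genuine gap is in the congestion step. You compute $\rho = \tfrac{\gamma_{stretchCover}}{8\eps} r(C) = \Theta(\log m\cdot \gamma_{ApproxAPSP}^2) \cdot d_{\ell(C)+1}/8$ and then claim you can choose the constant in $r(C)$ so that $\rho + 4d_{\ell(C)} \le d_{\ell(C)+1}/8$. That inequality is unsatisfiable for your $r(C)$: the factor $\Theta(\log m\cdot\gamma_{ApproxAPSP}^2) \ge 1$ already makes $\rho \ge d_{\ell(C)+1}/8$ before adding anything, and this factor cannot be tuned away by the constant $c$ because the diameter bound in condition 2 forces $r(C) = \Omega(\gamma_{ApproxAPSP}^2\, d_{\ell(C)})$. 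So there is a real tension: condition 2 pushes $r(C)$ up, while landing the enlarged ball $\bar B_G(\cover(C), \tfrac{\gamma_{stretchCover}}{8\eps} r(C))$ inside \Cref{clm:fewVerticesInPromity}'s ball of radius $d_{\ell(C)+1}/8$ pushes $r(C)$ down to $O(d_{\ell(C)})$. Your monotonicity reduction (decrementality of $G$ and of $\cover(C)$, pulling the witness back to creation time $t_C$) is the right mechanism and is exactly what the paper leaves implicit behind ``follows immediately from \Cref{clm:fewVerticesInPromity}'', but you would need either a strengthened version of \Cref{clm:fewVerticesInPromity} that covers balls with radius $\Theta(\gamma_{ApproxAPSP}^2)\cdot d_{\ell(C)+1}$ (which the claim's proof, as written, has only $O(1)$ slack for), or a different reading of the ball radius in \Cref{def:Covering}; simply ``tuning the constant in $r(C)$'' does not close the gap.
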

\begin{proof}
For each cluster in $\textsc{Clustering}$, we let $r(C) = 2\gamma_{approxAPSP}^2 \cdot d_{\ell(C)}$ whenever we initialize  cluster $C$ with associated cover set $\cover(C)$ and let $r(\cover(C)) = r(C)$.

It is straightforward to verify that the first property in \Cref{def:Covering} is satisfied by the algorithm. To see that for every cover $\cover(C)$ where $C$ is the cluster in $\textsc{Clustering}$, we have $\diam_{G[\cover(C)]}(C) \leq r(C)$, it suffices to inspect the intialization procedure $\textsc{CoverAllVertices}()$ and the while-loop starting in \Cref{lne:whileInnerSNC2} in procedure $\textsc{UpdateCovering}(\cdot)$ which removes vertices from $\cover(C)$ after each update to $G$ until $Y_C = G[\cover(C)] / \Core(C)$ has strong diameter at most $2\gamma_{approxAPSP}^2 \cdot d_{\ell}(C) \leq r(\cover(C))$.  The final property follows immediately from \Cref{clm:fewVerticesInPromity}.
\end{proof}

It remains to establish the following claim to bound the runtime.

\begin{claim}
The runtime of the algorithm to maintain covering $\cC$ is at most $O(K \cdot n^{1+2/K} \cdot \gamma_{timeAPSP})$. 
\end{claim}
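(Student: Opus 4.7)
The plan is to charge all runtime to one of three quantities: the total initial size of the APSP data structures $\mathcal{X}_C, \mathcal{Y}_C$ across clusters $C$; the number of updates forwarded to these data structures; and the number of queries made on them. I will argue each quantity is bounded by $\Otil(K n^{1+2/K})$ and each APSP operation takes $\gamma_{timeAPSP}$ time by \Cref{thm:mainTheoremAPSPFormal}, yielding the claim.

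First, I would establish the master accounting fact: at the moment a cluster $C$ is created in \Cref{alg:coverVertices}, its initial cover $\cover(C) = \bar{B}_G(v, 4 d_{\ell(C)})$ satisfies $\cover(C) \subseteq \bar{B}_G(C, d_{\ell(C)+1}/8)$. This holds because $d_{\ell(C)+1}/d_{\ell(C)} = \gamma_{stretchCover}/\eps \geq 32$ (using $\gamma_{stretchCover} \geq 2\gamma_{approxAPSP}^2$ and $\eps \leq 0.1$), so $d_{\ell(C)} + d_{\ell(C)+1}/8 \geq 4 d_{\ell(C)}$. By \Cref{clm:fewVerticesInPromity}, each vertex lies in $\bar{B}_G(C, d_{\ell(C)+1}/8)$ for at most $O(K n^{2/K})$ clusters $C$, so $\sum_C |\cover(C)|_{\text{initial}} \leq O(K n^{1+2/K})$. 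Since $G$ has bounded degree, this also bounds $\sum_C |E(X_C)|_{\text{initial}}$ and $\sum_C |E(Y_C)|_{\text{initial}}$.

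Second, I would bound the APSP costs. Initialization of each $\mathcal{X}_C$ and $\mathcal{Y}_C$ costs $O(|\cover(C)| \cdot \gamma_{timeAPSP})$, summing to $\Otil(K n^{1+2/K} \cdot \gamma_{timeAPSP})$. For updates: inspection of \Cref{alg:Covering} shows each $X_C$ only loses edges and vertices (the removals in \Cref{lne:updateDCviaG} and \Cref{lne:updateGraph}), so the total operations on $\mathcal{X}_C$ across time is bounded by $|E(X_C)|_{\text{initial}} + |V(X_C)|_{\text{initial}}$, again summing to $\Otil(K n^{1+2/K})$. For $\mathcal{Y}_C$, shrinking $\Core(C)$ causes vertices to emerge from the contracted super-vertex, issuing an isolated-vertex insertion plus $O(1)$ edge updates per released vertex (by bounded degree), so total operations on $\mathcal{Y}_C$ is again bounded by the initial cover size. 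Multiplying by $\gamma_{timeAPSP}$ gives $\Otil(K n^{1+2/K} \cdot \gamma_{timeAPSP})$. Forwarding the $t$-th edge deletion of $G$ to the $\mathcal{X}_C$'s containing it costs $O((K n^{2/K}) \cdot \gamma_{timeAPSP})$ per deletion (each endpoint belongs to at most $O(K n^{2/K})$ covers), and since $G$ has $O(n)$ edges the total is within the same bound.

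Third, the while-loop work. The outer while in \Cref{lne:whileInnerSNC} performs $O(\log m)$ queries per iteration on $\mathcal{X}_C$; each iteration removes at least one vertex from some $\Core(C)$, so the total iteration count is bounded by $\sum_C |\Core(C)|_{\text{initial}} \leq \sum_C |\cover(C)|_{\text{initial}} = \Otil(K n^{1+2/K})$. The ball-growing step costs $\Otil(|E_{X_C}(\bar{B}_{X_C}(u, 2d_{\ell(C)}(i-1)))|)$ by \Cref{clm:analyzeBallGrowing}; I charge this to the edges removed from $X_C$ in the subsequent \Cref{lne:updateGraph}, and since $X_C$ is decremental the total charge is $\Otil(\sum_C |E(X_C)|_{\text{initial}}) = \Otil(K n^{1+2/K})$. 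The analogous argument handles the second while loop on $\mathcal{Y}_C$ (charging iterations to cover-vertex removals). Finally, each call to $\textsc{CoverAllVertices}()$ requires Dijkstra searches to determine $\ell$ in \Cref{lne:shellissmall}: by minimality of $\ell$, the search costs $\Otil(n^{(\ell+1)/K})$, which is dominated by the APSP initialization cost of the new cluster; and an uncovered vertex can be located in amortized $O(1)$ using a per-vertex counter updated whenever a cover changes.

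The main technical obstacle is the amortization of the ball-growing work in \Cref{lne:findIntegerBallGrowing}, which touches edges in a ball larger than the one actually removed in \Cref{lne:updateGraph}; the key observation needed is that by choice of $i$ satisfying \eqref{eq:ballGrowEdges} we have $\deg(\bar{B}(u,2d_{\ell(C)}\cdot(i-1))) \geq \deg(\bar{B}(u,2d_{\ell(C)}\cdot i))/2$, so the work is within a factor $2$ of the number of edges actually removed, making the charging sound.
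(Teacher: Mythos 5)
Your proposal is correct and follows essentially the same approach as the paper: charge all work to participation in APSP data structures via \Cref{clm:fewVerticesInPromity} (the master bound $\sum_C |\cover(C)|_{\text{initial}} = \Otil(Kn^{1+2/K})$), observe that $X_C$ is decremental while the uncontractions in $Y_C$ cost only $O(1)$ per released vertex by bounded degree, and subsume the ball-growing work via \Cref{clm:analyzeBallGrowing}. You merely spell out several steps the paper leaves implicit, e.g. the containment $\cover(C)_{\text{init}} \subseteq \bar{B}_G(C, d_{\ell(C)+1}/8)$ (which, since $v \in C$, needs only $4d_{\ell(C)} \leq d_{\ell(C)+1}/8$, matching your stated bound of $32$) and the amortization argument for the Dijkstra calls in $\textsc{CoverAllVertices}()$.
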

\begin{proof}
We have that every vertex $v \in V$, by \Cref{clm:fewVerticesInPromity}, only participates in at most $O(K \cdot n^{2/K})$ many graphs $X_C$ and $Y_C$. Further for each data structure $\mathcal{X}_C$, the underlying graph is decremental and so it the diameter set. For each data structure $\mathcal{Y}_C$, we additionally have that the vertices in $\Core(C)$ are contracted and vertices are leaving the core over time and are then added to the graph with their incident edges. But since $G$ is a constant-degree graph, only a constant number of operations to $\mathcal{Y}_C$ suffices to remove a vertex from $\Core(C)$. 

Thus, the total runtime required by all data structures $\mathcal{X}_C$ and $\mathcal{Y}_C$ is $O(K \cdot n^{1+2/K} \cdot \gamma_{timeAPSP})$ by \Cref{thm:mainTheoremAPSPFormal}. 

All other operations of the algorithm can be subsumed by the runtime of these data structures (for the ball growing procedure in \Cref{lne:findIntegerBallGrowing}
 of \Cref{alg:Covering}, we spend by \Cref{clm:analyzeBallGrowing} time almost-linear in the number of edges deleted from $X_C$). 
\end{proof}

\subsection{Maintaining Shortest Paths via Coverings}
\label{subsec:maintainSSSPviaCover}

Given a covering $\cC$, it is rather straightforward to find a hopset for graph $G$. We will not be concerned with the exact guarantees of the hopset that we are creating and rather just create a hopset as it was used in \cite{bernstein2022deterministic}. 

\begin{definition}[Hopsets, see Definitions II.2.8 and  II.5.2 in \cite{bernstein2022deterministic}]\label{def:Htil}
Given a decremental graph $G = (V,E,l)$, depth parameters $d \geq 1$ and approximation parameter $1/\polylog(n) \leq \epsilon < 1$, and let there be a $(d,K,\frac{\eps}{2000 \log(n)},\gamma,\Delta)$-covering $\cC$ of $G$ being explicitly maintained. 

We say that the hopset $\Htil$ induced by covering $\cC$ is the fully-dynamic graph $\Htil = (\tilde{V}, \tilde{E}, \tilde{l})$ where $\tilde{V} = (V \cup \cC)$ and the edge set $\tilde{E}$ consists of the following edges:
\begin{itemize}
    \item for every $C \in \cC$ with vertex set $V^{init}_C$ being equal to the vertices in the ball $B_G(C, \frac{\gamma_{stretchCover} \cdot 250 \log n}{\eps} \cdot r(C))$ at the time when $C$ is added to $\cC_i$, we have for every vertex $v \in V^{init}_C \cup C$, an edge $e =(v,C)\in \tilde{E}$ of length $
    \tilde{l}(e) \defeq \gamma\cdot r(C) + \dist_G(v,C)$.
\end{itemize}
We say that $\hat{H}$ is a $(1+\eps/250)$-approximate hopset induced by $\cC$ if it is a graph over the same vertex and edge set as $\Htil$ but has lengths that are up to a factor $(1+\eps/250)$ larger than in $\Htil$ on every edge.
\end{definition}

The main data structure in \cite{bernstein2022deterministic} then internally maintains such $(1+\eps)$-approximate hopsets. 

\begin{theorem}[Hopset Maintenance, compare to Proofs of Proposition II.2.3 and Theorem II.5.1 in \cite{bernstein2022deterministic}]
\label{thm:hopsetMaintenance}
Given an $n$-vertex bounded-degree decremental graph $G=(V,E,l)$ and accuracy parameter $\eps\ge0$. And let there be coverings $\cC_0, \cC_1, \ldots, \cC_{5 \Lambda}$ maintained explicitly such that for every $0 \leq i \leq 5\Lambda$, $\cC_i$ is a $(n^{i/\Lambda},K,\frac{\eps}{2000 \log(n)},\gamma,\Delta)$-covering of $G$ where we require that $n^{1/\Lambda} \geq \left(\frac{\stretchSSSP}{\eps}\right)^K$.

Then, there is an algorithm that maintains an $(1+\eps/250)$-approximate hopset $\hat{H}_i$ induced by covering $\cC_i$ for every $0 \leq i \leq 5 \Lambda$ with total update time $\Otil(m \Delta n^{O(1/\Lambda)})$.
\end{theorem}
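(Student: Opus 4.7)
The plan is to maintain, for every covering level $0 \le i \le 5\Lambda$ and every cover $C \in \cC_i$, a decremental approximate single-source shortest-path data structure that stores $(1+\eps/500)$-approximations of $\dist_G(v,C)$ for every $v \in V^{init}_C \cup C$, up to the required depth $D_C \defeq \frac{\gamma \cdot 250 \log n}{\eps} \cdot r(C)$. Any standard decremental approximate SSSP data structure suffices here (e.g.\ an approximate ES-tree à la Henzinger--Krinninger--Nanongkai/Bernstein applied to integer-scaled edge lengths), maintained over the subgraph induced by the initial ball $V^{init}_C$, with $C$ implemented as a super-source via zero-length virtual edges from a dummy vertex. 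The hopset $\hat{H}_i$ is then maintained explicitly by setting, for every such $v$,
\[
    \hat{l}(v, C) \gets \gamma \cdot r(C) + \widetilde{\dist}_G(v, C),
\]
where $\widetilde{\dist}_G(v, C)$ is the approximation returned by this data structure. Since $\widetilde{\dist}_G(v,C) \in [\dist_G(v,C), (1+\eps/500)\dist_G(v,C)]$, we have $\hat{l}(v,C) \in [\tilde l(v,C), (1+\eps/500)\tilde l(v,C)] \subseteq [\tilde l(v,C), (1+\eps/250)\tilde l(v,C)]$, as required.

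First I would initialize each $\hat{H}_i$ by scanning, at initialization of every newly added cover $C$, the ball $V^{init}_C$ via a single truncated Dijkstra from $C$ (or by reusing the APSP structures of Theorem~\ref{thm:covering}); this yields the initial edges of $\hat{H}_i$ together with an instantiation of the decremental SSSP on $G[V^{init}_C]$. Whenever $G$ loses an edge, I forward the deletion to every SSSP instance whose underlying subgraph contains that edge; whenever the covering algorithm removes a vertex $u$ from $\cover(C)$ (equivalently, from $V^{init}_C$, which is purely decremental as a vertex set), I delete $u$ from the corresponding SSSP instance, and if $u$ further leaves $\Core(C)$ then the zero-length super-source edge to $u$ is removed. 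Each such change triggers a (bounded) number of updates to hopset edges $(v,C)$, which are pushed out immediately.

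The key accounting step is the runtime bound. By the third property in Definition~\ref{def:Covering} applied to $\cC_i$ with its covering parameter $\eps/(2000\log n)$, every vertex of $G$ belongs to at most $\Delta$ balls $\bar B_G(C, \frac{\gamma}{8 \cdot (\eps/(2000\log n))} r(C)) = \bar B_G(C, \frac{250 \gamma \log n}{\eps} r(C))$, which is precisely the ball defining $V^{init}_C$. Since $G$ has constant maximum degree, the same congestion bound holds for edges, so $\sum_{C\in\cC_i} m_C = O(\Delta m)$. The depth satisfies $D_C \leq n^{i/\Lambda} \cdot (\gamma/\eps)^{O(K)} = n^{i/\Lambda} \cdot n^{O(1/\Lambda)}$ using the assumption $n^{1/\Lambda} \ge (\gamma/\eps)^K$, and after the standard length-bucketing reduction to integer weights we can treat $D_C$ as $n^{O(1/\Lambda)}$ in terms of hop-count on the scaled graph. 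Plugging into the total-update-time bound $\Otil(m_C\cdot D_C/\eps)$ of the decremental approximate SSSP subroutine and summing gives $\Otil(m\Delta \cdot n^{O(1/\Lambda)})$ per level, and another factor of $O(\Lambda) = O(1/\log_n(\text{stuff}))$ across the $5\Lambda+1$ levels is absorbed by the $\Otil(\cdot)$.

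The main obstacle will be cleanly composing the decremental SSSP with the dynamic super-source $C$: both $C$ itself and the ambient vertex set $V^{init}_C$ can lose vertices independently (the former from $\Core(C)$ shrinking, the latter from $\cover(C)$ shrinking), and we must guarantee that these combined deletions do not break the monotonicity assumed by the SSSP. This is handled by realizing the super-source as a dummy vertex connected by zero-length edges only to vertices currently in $C$: losing a vertex from $C$ corresponds to deleting one virtual edge, losing a vertex from $V^{init}_C$ corresponds to deleting that vertex from the subgraph, and in either case the overall data structure remains purely decremental so the standard total-update-time analyses apply unchanged.
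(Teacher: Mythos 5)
Your per-cover reduction (SSSP from a super-source $C$, the arithmetic showing $\hat{l}(v,C)=(1+\eps/500)\tilde{l}(v,C)$ implies $(1+\eps/250)$-approximation, the observation that deletions from both $\Core(C)$ and $\cover(C)/V^{\mathrm{init}}_C$ keep everything purely decremental, and the use of covering property (3) to bound $\sum_C m_C = O(\Delta m)$) are all sound. The gap is the step where you claim that ``after the standard length-bucketing reduction \ldots we can treat $D_C$ as $n^{O(1/\Lambda)}$ in terms of hop-count.'' That is not what length bucketing buys you: rounding edge lengths to multiples of $\eps D_C/n$ compresses the \emph{weight range}, but a path of total length $D_C$ in $G[V^{\mathrm{init}}_C]$ can still use $\Omega(\min\{|V^{\mathrm{init}}_C|, D_C\})$ edges when minimum edge length is $1$. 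For a level-$i$ cover with $i\geq 1$, you correctly derived $D_C = n^{i/\Lambda}\cdot n^{O(1/\Lambda)}$, so the hop-count is $n^{\Omega(i/\Lambda)}$ and your per-cover total update time becomes $\Otil(m_C\,n^{i/\Lambda+O(1/\Lambda)})$, which blows past $\Otil(m\Delta n^{O(1/\Lambda)})$ once $i$ is a constant fraction of $\Lambda$.

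The missing idea --- and the reason the theorem is stated to maintain \emph{all} $\hat{H}_0,\ldots,\hat{H}_{5\Lambda}$ at once rather than independently --- is an \emph{induction across levels}: the SSSP/ball data structure for a cover $C\in\cC_i$ must run on $G[V^{\mathrm{init}}_C]$ \emph{augmented with the lower-level hopset edges} $\hat{H}_{<i}$, exactly as Definition~\ref{def:Apxball} does with $G\cup \hat{H}_{\leq\lfloor\log_{n^{1/\Lambda}}D\rfloor}$. In that augmented graph, any path of length $D_C$ has a $(1+\eps)$-approximate surrogate that replaces $\Theta(n^{(i-1)/\Lambda})$-length subpaths by single hopset edges, bringing the hop-count down to $n^{O(1/\Lambda)}$. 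Once you add this, your accounting template goes through: you additionally need that each hopset edge of $\hat{H}_j$, $j<i$, lands in at most $\Delta$ balls $V^{\mathrm{init}}_C$ for $C\in\cC_i$ (again from covering property (3)), so the augmented edge sets still sum to $\Otil(m\Delta)$ per level. Without this inductive use of lower-level hopsets, your proof establishes the claim only for $i=0$.
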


Next, we define an approximate ball data structure. Our definition is extremely close to the corresponding definition in \cite{bernstein2022deterministic}. However, \cite{bernstein2022deterministic} tailored their definitions to distance maintenance only, and then in a later section showed how to maintain shortest paths. By focusing on distances, \cite{bernstein2022deterministic} obtained a slightly slicker interface that omits the existence of the hopset and a forest certifying the distances. But both hopset and forest are crucial components of their internal data structures and here we make them explicit for latter purposes.

\begin{definition}[Approximate Ball, see Definitions II.2.1 in \cite{bernstein2022deterministic}]
\label{def:Apxball}
An \emph{approximate ball} data structure $\Apxball(G,S,D, \Lambda ,K,\eps,\gamma,\Delta, \{\cC_{0}, \cC_{1}, \ldots, \cC_{5\Lambda}\}, \{ \hat{H}_0, \hat{H}_1, \ldots, \hat{H}_{5\Lambda}\})$
is given a decremental graph $G=(V,E,l)$, a decremental source set $S\subseteq V$, a depth parameter $D$, an accuracy parameter $\eps\ge0$, and a $(n^{i/\Lambda},K,\frac{\eps}{2000 \log(n)},\gamma,\Delta)$-covering $\cC_i$ of $G$ along with an $(1+\eps/250)$-approximate hopset $\hat{H}_i$ induced by covering $\cC_i$ for every $0 \leq i \leq 5 \Lambda$. We define $\hat{H}_{\leq i} \defeq \cup_{j \leq i} \hat{H}_j$ and the (static) vertex set $V^{init}$ to be the set of vertices in the initial ball $\ball_{G}(S,D)$. We let $E_S$ be the dynamic edge set consisting of edges in $G[V^{init}]$ and edges from $\hat{H}_{\leq \lfloor \log_{n^{1/\Lambda}}(D) \rfloor}$ that are incident to a vertex in $V^{init}$. We let $G_S$ denote the graph $G \cup \hat{H}_{\leq \lfloor \log_{n^{1/\Lambda}}(D) \rfloor}$ induced by the edge set $E_S$.

Then, the data structure explicitly maintains a forest $F$ on graph $G_S$  such that, for every vertex $v \in V(G)$, we have
\begin{enumerate}
\item \label{enu:Apxball:overestimate} $\dist_G(S, v) \leq\dist_F(S, v) $,
\item \label{enu:Apxball:approx} if $v\in \bar{B}_{G}(S,D)$, then $\dist_F(S, v) \leq (1+\epsilon)\dist_{G}(S,v)$.
\end{enumerate}
\end{definition}

Finally, we observe the following Theorem from \cite{bernstein2022deterministic} that can be derived rather straightforwardly by inspecting the proof of Theorem II.5.1 in \cite{bernstein2022deterministic}. In fact, proving the theorem below is easier since \cite{bernstein2022deterministic} used a delicate inductive proof where coverings and approximate balls up to certain depths are used to build one another while for us, this is not necessary since we can maintain the coverings from our new APSP data structure (see \Cref{thm:covering}) which streamlines the proof.

\begin{theorem}[compare to Proposition II.2.3 and Theorem II.5.1 in \cite{bernstein2022deterministic}]
\label{thm:Apxball}
Given an $n$-vertex bounded-degree decremental graph $G=(V,E,l)$ with lengths in $[1, n^4]$, a decremental source set $S\subseteq V$, a depth parameter $D$, and accuracy parameter $\eps\ge0$. And let there be coverings $\cC_0, \cC_1, \ldots, \cC_{5 \Lambda}$ along with an $(1+\eps)$-approximate hopsets $\{ \hat{H}_0, \hat{H}_1, \ldots, \hat{H}_{5\Lambda}\}$ where $\hat{H}_i$ is induced by covering $\cC_i$ maintained explicitly such that for every $0 \leq i \leq 5\Lambda$, $\cC_i$ is a $(n^{i/\Lambda},K,\frac{\eps}{2000 \log(n)},\gamma,\Delta)$-covering of $G$. Here, we require that $n^{1/\Lambda} > \left(\frac{\stretchSSSP \cdot 2000 \log(n)}{\eps}\right)^K$.

Then, there is an algorithm that implements an approximate ball data structure as defined in \Cref{def:Apxball} denoted
$\Apxball(G, S,D,\Lambda,K,\eps,\gamma,\Delta, \{\cC_{0}, \cC_{1}, \ldots, \cC_{5\Lambda}\})$ with total update time $\Otil(\left|\ball_{G}(S,D)\right|\Delta n^{O(1/\Lambda)})$.
\end{theorem}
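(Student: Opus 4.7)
} My plan is to port the approach of Proposition~II.2.3 and Theorem~II.5.1 in \cite{bernstein2022deterministic}, which maintain an approximate ball via an Even-Shiloach (ES) tree run on the graph $G_S = G[V^{init}] \cup \hat{H}_{\leq \lfloor \log_{n^{1/\Lambda}}(D)\rfloor}|_{V^{init}}$. The proof simplifies in our setting because the coverings $\cC_i$ and associated hopsets $\hat{H}_i$ are supplied as inputs, rather than having to be built inductively in lockstep with the ball data structures as in \cite{bernstein2022deterministic}. Concretely, the data structure will maintain a rooted shortest-path forest $F$ of $G_S$ from the sources $S$, truncated at depth $(1+\eps) D$, using the classical ES-tree recipe: each edge deletion (from $G$ or from the hopsets, as notified by \Cref{thm:hopsetMaintenance}) triggers relaxations from the newly disconnected subtrees, and we reattach vertices to their approximate-nearest parent by scanning incident edges.

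Property~\ref{enu:Apxball:overestimate} (the lower bound $\dist_G(S,v) \le \dist_F(S,v)$) should follow by showing that every edge in $G_S$ is a valid upper bound on a $G$-distance: original edges trivially, and each hopset edge $(v,C)$ of length $\gamma r(C) + \dist_G(v,C)$ upper-bounds $\dist_G(v,w)$ for every $w\in C$ by the triangle inequality combined with the diameter bound $\diam_{G[C]}(C) \le r(C)$ from \Cref{def:Covering}. For Property~\ref{enu:Apxball:approx}, I would fix any $v \in \bar{B}_G(S, D)$ with shortest path $P = \pi_G(S,v)$ of length $d \le D$, segment $P$ at the scale $n^{i/\Lambda}$ corresponding to the appropriate hopset level $i = \lfloor \log_{n^{1/\Lambda}}(d)\rfloor$, and argue that each segment endpoint belongs to some cover $C \in \cC_i$ of appropriate radius, with the hopset edges at this level providing a $(1+\eps/250)$-approximate shortcut past the segment. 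Composing these shortcuts along $P$ gives a path in $G_S$ of length at most $(1+\eps/250)^{O(\log n)} d \le (1+\eps) d$ after our choice of per-level accuracy $\eps/(2000 \log n)$. The condition $n^{1/\Lambda} > (\gamma \cdot 2000\log n/\eps)^K$ is precisely what ensures the next-level covers in $\cC_{i+1}$ exist to serve as ``large-ball'' witnesses, preventing the segmentation from stalling.

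For the runtime, the ES-tree cost is standard: relaxation of a vertex takes time proportional to its degree in $G_S$, and each vertex in $V^{init}$ is relaxed $O((1+\eps) D / w_{\min})$ times where $w_{\min}$ is the minimum edge weight. Applying the standard weight-scaling trick (replacing real weights by ceilings of $w/\eps'$ with $\eps' = \eps/\polylog n$) makes the effective depth $\Otil(1/\eps)$ per scale, and handling all $O(\Lambda)$ scales in parallel via the level-wise structure of $G_S$ gives total relaxations $\Otil(|V^{init}| \cdot \text{deg}_{G_S}(V^{init}))$. By \Cref{def:Covering}, every vertex lies in at most $\Delta$ covers per level, hence has at most $\Delta \cdot O(\Lambda)$ incident hopset edges, yielding total update time $\Otil(|V^{init}| \cdot \Delta \cdot n^{O(1/\Lambda)})$ as claimed.

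The main obstacle will be carrying out the approximation bookkeeping cleanly: one must verify that the $(1+\eps/250)$ multiplicative slack in each approximate hopset edge, plus the additive $\gamma r(C)$ contributions from traversing cover-shortcuts, accumulate to at most a $(1+\eps)$ factor over the entire $Sv$-path. This is where I expect to need to replay the telescoping calculation in \cite{bernstein2022deterministic} essentially verbatim, exploiting that the chosen scale $n^{i/\Lambda} \gg r(C) \cdot \gamma$ per \Cref{def:Covering} makes each additive term absorbable into a $(1+O(\eps/\log n))$ multiplicative slack. The other routine but important check is that when edges leave the hopsets due to covers shrinking, the ES-tree can process these updates within the claimed budget; since we treat each hopset edge update identically to a $G$-edge update in the ES-tree, and the total recourse of hopset edges is bounded by \Cref{thm:hopsetMaintenance}, this reduces to the above edge-counting argument.
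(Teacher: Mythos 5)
Your proposal takes essentially the same route as the paper. The paper does not give a self-contained proof of \Cref{thm:Apxball}: it explicitly defers to the proofs of Proposition~II.2.3 and Theorem~II.5.1 in \cite{bernstein2022deterministic}, noting only that the argument \emph{simplifies} here because the coverings and hopsets are supplied externally rather than being built in lockstep with the ball data structures via a mutual induction. Your plan --- run a weight-scaled ES-tree on $G_S$, prove the lower bound by showing each hopset edge (or pair of consecutive hopset edges through a cover vertex $C$) dominates the corresponding $G$-distance via $\diam_{G[C]}(C)\le r(C)$ and $\gamma\ge 1$, prove the upper bound by segmenting a shortest $Sv$-path at the appropriate hopset scale and telescoping, and bound the runtime by degree sums times effective scaled depth --- is precisely this porting.

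Two points of imprecision are worth flagging for a full writeup. First, the claim that composing shortcuts gives length at most $(1+\eps/250)^{O(\log n)} d \le (1+\eps)d$ is not correct as stated; $(1+\eps/250)^{O(\log n)}$ does not compose to $(1+\eps)$ for large $n$. The factor that actually controls the telescoping is the covering accuracy $\eps/(2000\log n)$, which guarantees that the additive detour cost $\gamma\cdot r(C)$ per shortcut is absorbable into a $(1+O(\eps/\log n))$ multiplicative slack relative to the segment it replaces; the hopset's $(1+\eps/250)$ slack is a separate, constant-level loss. You gesture at this with ``after our choice of per-level accuracy,'' and correctly identify the telescoping as the place where \cite{bernstein2022deterministic}'s calculation must be replayed verbatim, but the sentence as written conflates the two error sources. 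Second, your runtime sketch (effective depth $\Otil(1/\eps)$ per scale, $O(\Lambda)$ scales) does not account for the $n^{O(1/\Lambda)}$ factor in the claimed bound; that factor arises because the ES-tree must be truncated at a hop-depth of roughly $n^{O(1/\Lambda)}$ per hopset level (the gap between consecutive scales $n^{(i-1)/\Lambda}$ and $n^{i/\Lambda}$), and your current accounting would miss it. Neither gap is fatal since the paper itself outsources these calculations, but both would need to be filled in to make the proof stand alone.
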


We point out that the reason that we consider $5\Lambda + 1$ coverings is that we have that $n^{5/5\Lambda} = n^5$ is the largest distance in $G$ since we have by assumption that the largest edge length is $n^4$ and each path consist of at most $n$ edges. 

While we could use the above Theorem directly to obtain a $(1+\eps)$-approximate SSSP data structure, we defer the proof to the next section where we also show how to maintain a approximate SSSP tree $T$ that certifies these distance and has a low congestion embedding into $G$.

\subsection{Maintaining a Single-Source Shortest Path Tree with Embeddings of Low Congestion}
\label{subsec:maintainSSSPAndEmbedding}

\paragraph{Maintaining Coverings and Approximate Ball Data Structures.} We define $\gamma_{stretchCover} = \max\{ \gamma_{approxAPSP}, \gamma_{lowDiamTree}, e^{\log^{20/21}m\log\log m}\}$ which implies $\gamma_{stretchCover} = e^{\Theta(\log^{20/21}m\log\log m)}$, and define $K \defeq \log^{1/42}(n)$ and $\Lambda \defeq \log^{1/84}(n)$. By this choice, we have that $\left(\frac{\gamma_{stretchCover} \cdot 2000 \log n}{\eps}\right)^K = e^{O(\log^{41/42} m \log\log m)}$ and thus for reasonably large $n$, we have $n^{1/\Lambda} > \left(\frac{\gamma_{stretchCover} \cdot 2000 \log n}{\eps}\right)^K$.

We maintain for every $0 \leq i \leq 5 \Lambda$, a $(n^{i/\Lambda},K,\frac{\eps}{2000 \log(n)},\gamma_{stretchCover},\Delta = O(Kn^{2/K}))$-covering $\cC_i$ of $G$ using the algorithm from \Cref{thm:covering} and a $(1+\eps/250)$-approximate hopset $\hat{H}_i$ induced by covering $\cC_i$ for every $0 \leq i \leq 5 \Lambda$ via the data structure from \Cref{thm:hopsetMaintenance}.

We further associate with each cover set $C$ in $\cC_i$, an approximate ball data structure \\ $\Apxball(G, C, \frac{\gamma_{stretchCover} 250 \log n}{\eps} \cdot r(C),\Lambda,K,\eps,\gamma_{stretchCover},\Delta, \{\cC_{0}, \cC_{1}, \ldots, \cC_{5\Lambda}\}, \{\hat{H}_0, \hat{H}_1, \ldots, \hat{H}_{5\Lambda}\})$  as described in \Cref{thm:Apxball} and let the forest maintained by this data structure be denoted by $Y_C$. Note that by our choice of $K$ and $\Lambda$, we have from \Cref{thm:Apxball} that $Y_C$ is a forest with edges in the $G \cup \hat{H}_{<i}$.

We further maintain a flat hierarchical forest $F'_C$ that preserves the diameter on the graph $G[C]$ as described in \Cref{thm:mainTheoremLowDiamTree}. We denote by $\Pi_{V(C) \mapsto V(F'_C)}, \Pi_{V(F'_C) \mapsto V(C)}$ and $\Pi_{F'_C \mapsto G[C]}$ the corresponding vertex maps and graph embeddings.

Finally, for our dedicated source vertex $s \in V$, we run an approximate ball data structure $\Apxball(G, \{s\},\infty,\Lambda,K,\eps,\gamma_{stretchCover},\Delta, \{\cC_{0}, \cC_{1}, \ldots, \cC_{5\Lambda}\},  \{\hat{H}_0, \hat{H}_1, \ldots, \hat{H}_{5\Lambda}\})$  as described in \Cref{thm:Apxball} and let the forest maintained by this data structure be denoted by $F_s$.

\paragraph{Approximate Shortest Path Forest via a Forest Hierarchy.} Finally, we describe how to maintain the flat hierarchical forest $F$ over $G$ that preserves distances from our dedicated source vertex. Our construction is fairly similar to the construction in \Cref{subsec:mappingHierachVS} except that we additionally need to map in multiple smaller steps instead of a single mapping operation (this is somewhat similar to the algorithm in \Cref{subsec:maintainLowDiam}).

To obtain $F$, we define a hierarchy of flat hierarchical forests $F_{5 \Lambda+1}, F_{5 \Lambda}, \ldots, F_0$ where each forest $F_i$ is a flat hierarchical forest over $V(G)$ in graph $G \cup \hat{H}_{< i}$, i.e. each edge in $F_i$ is a copy of an edge in graph $G$ or in $\hat{H}_{< i}$, i.e. an edge of a hopset induced by a covering at level $j < i$. Since $\hat{H}_{<0}$ is an empty graph, we have that $F_0$ consists of edges that are copies of edges in $G$, and thus $F_0$ is a flat hierarchical forest over $G$. We take $F = F_0$.

\begin{algorithm}
$F_{5\Lambda+1} \gets F_s$.\\
Define $\Delta_i \defeq (1 + \Delta^2 + \Delta \cdot \gamma_{lowDiamTree})^{5\Lambda + 1 - i}$ for all $0 \leq i \leq 5\Lambda + 1$.\\
\For{$i = 5\Lambda, 5\Lambda-1, \ldots, 0$}{
    \ForEach{$C \in \cC_i, j \in [0, \Delta_i)$}{
        Add to $F_i$ a copy $D_{C, j}$ of the graph $X_C$ that is obtained as the direct sum of forests $Y_C$ and $F'_C$ where we merge the nodes in $\im(\Pi_{V(C) \mapsto V(F'_C}))$ from $F'_C$ each of which is identified with a vertex in $G \cup H_{<i}$ with the corresponding vertices in $Y_C$ if they exist.
    }
    Construct a map $\Lambda_i = \Pi_{V(F_{i+1}) \mapsto [0, \Delta_i)}$ that maps each node $x$ from $V(F_{i+1})$ identified with a cover set $C \in \cC_i$ to a number $\Lambda_i(x)$ in $[0, \Delta_i)$ such that any two nodes $x,y \in V(F_{i+1})$ that are identified with the same cover set $C \in \cC_i$ in $G \cup \hat{H}_{< i+1}$ have $\Lambda(x) \neq \Lambda(y)$. \label{lne:constructMapSSSP}\\
    Add to $F_i$ all nodes from forest $F_{i+1}$ that are not identified with cover sets $C \in \cC_i$.\\
    \ForEach{$e = (x,y) \in E(F_{i+1})$}{
        \If{neither of the endpoints of $e$ is identified with a cover set $C \in \cC_i$}{
            Add $e$ directly to $F_i$.
        }\Else{
            Let w.l.o.g. $x$ be the node that is identified with a cover set $C \in \cC_i$.\\
            Merge the node $y$ in $F_i$ with the node in $D_{C,\Lambda(x)}$ that is identified with $y$ (if there are multiple such vertices, pick an arbitrary one).
        }
    }
    
}

\Return $F_0$.
\caption{$\textsc{MapApproximateShortestPathForest}()$}
\label{alg:mapShorestPathForest}
\end{algorithm}

\paragraph{Initializing the Approximate Shortest Path Forest.} In \Cref{alg:mapShorestPathForest}, we describe how to initialize the hierarchy of forests. Here, we define each forest $F_i$ recursively: we take $F_{5 \Lambda +1}$ to be equal to the shortest path forest $F_s$. Then for $i \leq 5 \Lambda$, the algorithm constructs $F_i$ from $F_{i+1}$. To this end, it first constructs a large number of graphs $X_C$ for every cover set $C \in \cC_i$. The $j$-th copy of this graph $X_C$ is denoted by $D_{C, j}$. Each graph $X_C$ is obtained from stitching together the low diameter forest $F'_C$ over the vertices in $C$ and the shortest path forest $Y_C$ that preserves distances between the set $C$ and the vertices in the ball of radius $\Theta(\stretchSSSP_{stretchCover}/\eps) \cdot r(C)$ around $C$. 

The algorithm then constructs a map $\Lambda_i$ that implicitly maps each node $x$ in $F_{i+1}$ that is identified with a cover set $C \in \cC_i$ to a unique copy $D_{C, \Lambda_i(x)}$ of graph $X_C$. Finally, the algorithm adds all nodes not identified with such cover sets from $F_{i+1}$ to $F_i$. It then maps the edge set of $F_{i+1}$ to $F_i$ by either adding the same edge if it was not incident to any node identified with a cover set $C \in \cC_i$, or otherwise, it simply merges the non-cover set endpoint into a node identified with the same vertex as itself in a copy of graph $X_C$. 

\paragraph{Maintaining the Approximate Shortest Path Forest.} The maintenance of the hierarchy of forest $F_{5 \Lambda+1}, F_{5 \Lambda}, \ldots, F_0$ and thus of $F = F_0$ is rather straightforward: every update to $F_s = F_{5 \Lambda + 1}$ can be forwarded directly recursively to every forest in the hierarchy, and only affects how a single edge is projected in the case of an edge insertion/deletion or affects a single entry of each map $\Lambda_i$ in case of an isolated vertex insertion/deletion.

Every update to a graph $Y_C$ or $F'_C$ leads only to a single change in the graph $X_C$. That is because the vertex map $\Pi_{V(C) \mapsto V(Y_C')}$ has the image of each element in the image fixed throughout the existence of the element, thus we have that edge insertions and deletion are processed with a single edge recourse, and each vertex insertion/deletion with at most recourse of one.

It is further not hard to see that the number of copies of such a graph $X_C$ for cover set $C \in \cC_i$ is exactly $\Delta_i$ which is then an upper bound on the number of changes this causes in the forests $F_{i}, F_{i-1}, F_{i-2}, \ldots, F_0$ because to these forests a change to each copy in $X_C$ can be processed just like the changes to $F_s$ by the entire hierarchy.

Finally, it is straightforward to maintain the vertex map $\Pi_{V(F_i) \mapsto V(G) \cup V(H_{< i})}$ for every $0 \leq i \leq 5\Lambda+1$ and thus, in particular, the vertex map $\Pi_{V(F) \mapsto V(G)}$ since $F = F_0$. We let $\Pi_{V(G) \mapsto V(F_i)}$ be defined as follows: we let $\Pi_{V(G) \mapsto V(F_{5\Lambda + 1})}$ simply be the identity map since $F_{5\Lambda + 1}$ is a real forest (not only a hierarchical forest) and thus only has each vertex in $V$ once (plus the vertices corresponding to cover sets). For $i \leq 5 \Lambda$, we have that $F_i$ is created from copies of graphs $X_C$ and the set of nodes in $F_{i+1}$ that are not identified with a cover set $C \in \cC_i$ and thus all nodes in $\im(\Pi_{V(G) \mapsto V(F_{i+1})})$ are present in $F_i$. Thus, we can obtain $\Pi_{V(G) \mapsto V(F_i)}$ from $\Pi_{V(G) \mapsto V(F_{i+1})}$ by using the identity map.

\paragraph{Analysis.} We now show that the algorithm described is a valid implementation of \Cref{thm:mainSSSP}. We start by proving correctness of the algorithm.

\begin{claim}
For every $0 \leq i \leq 5 \Lambda + 1$, we have that the vertex congestion of map $\Pi_{V(F_i) \mapsto V(G) \cup V(H_{<i})}$ is at most $\Delta_{i-1}$. We further show that we can maintain a map $\Lambda_i$ as described in \Cref{lne:constructMapSSSP}, i.e. we can map each node $x$ in $F_i$ identified with a cover set $C \in \cC_i$ to a unique graph $D_{C, \Lambda(x)}$. 
\end{claim}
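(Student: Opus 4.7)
The plan is to prove both parts of the claim simultaneously by downward induction on $i$, starting at $i = 5\Lambda+1$ and descending to $i = 0$. The base case $i = 5\Lambda+1$ is immediate: by construction $F_{5\Lambda+1} = F_s$ is the honest forest maintained by $\Apxball(G, \{s\}, \infty, \ldots)$ inside $G \cup \hat{H}_{<5\Lambda+1}$, so each vertex has exactly one pre-image under $\Pi_{V(F_{5\Lambda+1}) \mapsto V(G) \cup V(\hat{H}_{<5\Lambda+1})}$ and the vertex congestion is $1 \leq \Delta_{5\Lambda}$; no labeling $\Lambda_i$ is required at this level.

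For the inductive step, suppose the congestion bound $\Delta_i$ has been established for $F_{i+1}$. First I would verify existence of the labeling $\Lambda_i$: every cover set $C \in \cC_i$ is a vertex in $V(\hat{H}_i) \subseteq V(\hat{H}_{<i+1})$, so by the inductive hypothesis applied to $F_{i+1}$ at the target $C$, there are at most $\Delta_i$ nodes $x \in V(F_{i+1})$ with $\Pi_{V(F_{i+1}) \mapsto V(G) \cup V(\hat{H}_{<i+1})}(x) = C$. Since \Cref{alg:mapShorestPathForest} creates exactly $\Delta_i$ copies $D_{C,0}, \ldots, D_{C, \Delta_i-1}$, an injective assignment of these nodes into $[0, \Delta_i)$ is always possible and can be maintained incrementally as $F_{i+1}$ evolves.

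Next I would bound the vertex congestion of $\Pi_{V(F_i) \mapsto V(G) \cup V(\hat{H}_{<i})}$. Fix a target vertex $v \in V(G) \cup V(\hat{H}_{<i})$ and partition its pre-images in $F_i$ into two classes. Class (a) comprises nodes of $V(F_{i+1})$ not identified with any cover set in $\cC_i$; these descend into $F_i$ unchanged and contribute at most $\Delta_i$ pre-images by the inductive hypothesis. Class (b) comprises nodes living inside some copy $D_{C, j}$, and here $v$ can appear only if $v \in V(X_C) = V(Y_C) \cup V(F'_C) \subseteq V^{init}_C$. By the covering property of $\cC_i$ (item~3 of \Cref{def:Covering}) and the deliberate matching between the covering accuracy $\frac{\eps}{2000 \log n}$ used in our instantiation and the hopset initial-ball radius $\frac{\gamma_{stretchCover} \cdot 250 \log n}{\eps}\cdot r(C)$ from \Cref{def:Htil}, the number of covers $C \in \cC_i$ with $v \in V^{init}_C$ is at most $\Delta$. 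Within a single $X_C$, the multiplicity of $v$ is at most $1$ in $Y_C$ (an honest forest over $G \cup \hat{H}_{<i}$) and at most $O(\gamma_{lowDiamTree})$ in $F'_C$ (converting $\econg(\Pi_{F'_C \mapsto G[C]}) \leq \gamma_{lowDiamTree}$ to a vertex multiplicity using the constant maximum degree of $G$); the root-merging step in the definition of $X_C$ only identifies copies and cannot increase the total. Multiplying by the $\Delta_i$ available copies of each $X_C$ yields a class (b) contribution of at most $\Delta \cdot \Delta_i \cdot O(1 + \gamma_{lowDiamTree})$, which combined with class (a) is at most $\Delta_i(1 + \Delta^2 + \Delta \gamma_{lowDiamTree}) = \Delta_{i-1}$, the $\Delta^2$ term absorbing the hidden constants.

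The main obstacle I anticipate is the careful alignment of radii: the bound above requires checking that $\bar{B}_G(C, \frac{\gamma_{stretchCover}}{8 \cdot (\eps/(2000 \log n))} r(C))$ is the same ball as $B_G(C, \frac{\gamma_{stretchCover} \cdot 250 \log n}{\eps} r(C))$ from \Cref{def:Htil}, so that the covering property's per-vertex cap of $\Delta$ on covers can be applied directly to $V^{init}_C$, and that the subsequent decremental shrinking of $V^{init}_C$ does not break this bound (it only reduces contributions). A secondary bookkeeping point is the conversion of the edge-congestion guarantee of \Cref{thm:mainTheoremLowDiamTree} into a bound on the number of copies of a single vertex in $V(F'_C)$; this step requires the hypothesis that $G$ has constant maximum degree, which is in force throughout \Cref{sec:decrSSSP}.
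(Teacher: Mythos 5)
Your overall strategy (downward induction, split pre-images of a target $v$ into those descending unchanged from $F_{i+1}$ and those living inside some copy $D_{C,j}$, and bound each piece) matches the paper's proof. The base case, the existence and maintenance of $\Lambda_i$ via the inductive congestion bound, and the treatment of the $F'_C$ pieces are all fine.

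The gap is the containment $V(X_C) = V(Y_C) \cup V(F'_C) \subseteq V^{init}_C$, which is false. Recall from \Cref{def:Apxball} that $Y_C$ is a forest over $G_C$, and $G_C$ includes hopset edges from $\hat{H}_{<i}$ incident to $V^{init}_C$; hence $V(Y_C)$ contains cover-set nodes $C'$ from coverings $\cC_j$, $j < i$, which are emphatically not in $V^{init}_C \subseteq V(G)$. For such a target $v = C' \in V(\hat{H}_{<i})$, the covering property gives you no bound of the form ``at most $\Delta$ covers $C \in \cC_i$ have $C'$ in $Y_C$'' — you need a separate argument, as the paper gives: $C'$ can appear in $G_C$ only if some $w \in V(G)$ simultaneously lies in $C'$ (more precisely is incident to a hopset edge to $C'$) and in $V^{init}_C$, and chaining the two per-vertex caps of $\Delta$ yields a bound of $\Delta^2$ such covers $C$, not $\Delta$. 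Your final inequality $\Delta\Delta_i O(1+\gamma_{lowDiamTree}) \leq \Delta_i(\Delta^2 + \Delta\gamma_{lowDiamTree})$ happens to absorb this, but only because you had unrelated slack in the $\Delta^2$ term; the argument as written never establishes the claimed congestion bound at cover-set nodes, which is needed both for the claim itself (the map's codomain explicitly includes $V(\hat{H}_{<i})$) and for the inductive step (the labeling $\Lambda_i$ is constructed exactly at nodes identified with cover sets of $\cC_i$, whose pre-image count in $F_{i+1}$ is what you must bound). You should separately handle $v \in V(G)$ (your $\Delta$ bound via the covering property and the radius match you correctly verified) and $v$ a cover-set node of $\hat{H}_{<i}$ (the $\Delta^2$ bound via a shared witness vertex $w$).
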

\begin{proof}
We prove the claim by induction. We have that $F_s$ is a simple forest in the graph $G \cup H_{< 5\Lambda +1}$ by \Cref{thm:Apxball}, and thus so is $F_{5\Lambda +1}$ which we maintain equal to $F_s$ and thus, the vertex map $\Pi_{V(F_{5\Lambda +1}) \mapsto G \cup H_{< 5\Lambda +1}}$ is simply the identity map on $F_s$ and thus the vertex congestion of this map is $1 = (1+\Delta)^0 = \Delta_{5\Lambda + 1}$.

For $i \leq 5 \Lambda$, we have that by the inductive hypothesis, we have that the vertices in $F_{i+1}$ are mapped to vertices in $G \cup \hat{H}_{< i +1}$ with vertex congestion at most $\Delta_{i+1}$ and thus the map $\Lambda_i$ exists (and can be maintained rather straightforwardly). 

We then have that $F_i$ is obtained from the vertices in $F_{i+1}$ that are not identified with a cover set $C \in \cC_i$, and the $\Delta_i$ copies of graphs $Y_C$ and $Y_C'$ for every $C \in \cC_i$ and it then merges certain nodes (recall that $X_C$ is obtained from $Y_C$ and $Y_C'$ by merging vertices). However, merging nodes identified with the same vertex or cover set can only reduce the vertex congestion of the graph embedding.

We can thus derive an upper bound on the congestion by separately upper bounding the congestion induced by $F_{i+1}$, all the graphs $Y_C$, and all the graphs $Y_C'$ added to $F_i$. The congestion induced by $F_{i+1}$ is trivially bound by $\Delta_i$ by the induction hypothesis.

For the graphs $Y_C$, we note that each such forest is a simple forest in the graph $G_C \subseteq G \cup H_{<i}$ and thus contains each vertex in this graph at most once. In fact, by \Cref{def:Htil} and \Cref{def:Apxball}, we have that $G_C$ consists of vertices in $V(G)$ that are in the initial ball $B_G(C, \frac{\gamma_{stretchCover} \cdot 250 \log n}{\eps})$ but from \Cref{thm:covering}, we have that each vertex $v \in V(G)$ appears in at most $\Delta$ such balls. Further, $G_C$ contains vertices from the graph $H_{< i}$ that are incident to a vertex in the initial ball $B_G(C, \frac{\gamma_{stretchCover} \cdot 250 \log n}{\eps})$. This implies that we have a node in $G_C$ identified with a cover set $C'$ from covering $\cC_j$ for some $j < i$ if $C'$ contains a vertex $w$ that also appears in $B_G(C, \frac{\gamma_{stretchCover} \cdot 250 \log n}{\eps})$. But since each vertex $w \in V(G)$ appears in at most $\Delta$ such cover sets $C'$ for covering $\cC_j$, we have that each such cover set $C' \in \cC_j$ appears in at most $\Delta^2$ graphs $G_C$ overall. Thus, the congestion of all copies of graphs $Y_C$ in $F_i$ is at most $\Delta_i \cdot \Delta^2$. 

Finally, we have that each forest $F'_C$ is obtained on graph $G[C]$, and we have again from \Cref{thm:covering} that every vertex $w \in V(G)$ appears in at most $\Delta$ such cover sets $C$ and thus graphs $G[C]$ and thus the total congestion of all forests $F'_C$ is at most $\Delta_i \cdot \Delta \cdot \gamma_{lowDiamTree}$ by \Cref{thm:mainTheoremLowDiamTree}. 

We can thus conclude that the congestion of the vertex map from $F_i$ into $G \cup H_{<i}$ is at most $(1 + \Delta^2 + \Delta \cdot \gamma_{lowDiamTree}) \Delta_i = \Delta_{i-1}$. 
\end{proof}

Next, we provide a stretch analysis for forest $F$ (we obtain a slightly weaker guarantee on the $\epsilon$ here, however, by standard rescaling techniques this is w.l.o.g.).  

\begin{claim}\label{clm:recStretchAnalysisSSSP}
At any time, for every $v \in V$, if $\dist_G(s,v) < \infty$, and any $0 \leq i \leq 5 \Lambda + 1$, then 
\[
\dist_G(s,v) \leq \Pi_{{F_i} \mapsto G \cup \hat{H}_{< i}}(\pi_{F_i}(\Pi_{V(G) \mapsto V({F_i})}(s), \Pi_{V(G) \mapsto V({F_i})}(v))) \leq (1+\eps)^{5\Lambda + 2 - i} \dist_G(s,v)
\]
where the graph embedding of $F_i$ denoted by $\Pi_{{F_i} \mapsto G \cup \hat{H}_{< i}}$ is defined straightforwardly as every edge in $F_i$ is a copy of an edge in $G \cup \hat{H}_{< i}$.
\end{claim}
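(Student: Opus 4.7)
The plan is to prove the claim by reverse induction on $i$, starting at $i = 5\Lambda + 1$ and descending to $i = 0$, losing one factor of $(1+\eps)$ per level in the upper bound. The lower bound will be handled separately and uniformly: every edge of $F_i$ is a copy of an edge in $G \cup \hat{H}_{<i}$, and by \Cref{def:Htil} a hopset edge $(v, C)$ has length at least $\dist_G(v, C)$, so repeatedly applying the triangle inequality in $G$ to any $s$-to-$v$ path in $G \cup \hat{H}_{<i}$ gives $\dist_G(s,v) \leq \Pi_{F_i \mapsto G \cup \hat{H}_{<i}}(\pi_{F_i}(\cdot,\cdot))$.

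For the base case $i = 5\Lambda + 1$, I would use that $F_{5\Lambda+1} = F_s$ is maintained by the approximate-ball data structure $\Apxball(G, \{s\}, \infty, \ldots)$ of \Cref{thm:Apxball}. Properties \ref{enu:Apxball:overestimate} and \ref{enu:Apxball:approx} in \Cref{def:Apxball}, combined with $D = \infty$ so that every $v$ with $\dist_G(s,v) < \infty$ lies in the ball, directly give $\dist_G(s,v) \leq \dist_{F_s}(s,v) \leq (1+\eps)\dist_G(s,v)$, which is exactly $(1+\eps)^{5\Lambda + 2 - (5\Lambda+1)} \dist_G(s,v)$.

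For the inductive step, assume the statement for $i+1$ and consider the unique path $P_{i+1}$ in $F_{i+1}$ between the images of $s$ and $v$. Segment $P_{i+1}$ at the nodes identified with cover sets $C \in \cC_i$. Since $s, v \in V(G)$ and hopset edges in $\hat{H}_i$ only connect $V(G)$-nodes to $\cC_i$-nodes, each such $C$ appears internally on $P_{i+1}$, flanked by exactly two edges $(y_1, C), (y_2, C) \in \hat{H}_i$ whose combined length in $\hat{H}_i$ is at least $2\gamma_{stretchCover} r(C) + \dist_G(y_1, C) + \dist_G(y_2, C)$ (up to the $(1+\eps/250)$ factor from approximation). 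In $F_i$, the construction replaces $C$ by a copy $D_{C, \Lambda_i(C)}$ of $X_C = Y_C \cup F'_C$ with $y_1, y_2$ merged into the corresponding nodes. The path from $y_1$ to $y_2$ inside this copy can be routed $y_1 \to c_1 \in C$ via $Y_C$, then $c_1 \to c_2 \in C$ via $F'_C$, then $c_2 \to y_2$ via $Y_C$. By \Cref{thm:Apxball} applied to the $\Apxball$ producing $Y_C$ (whose depth parameter $\frac{\gamma_{stretchCover} \cdot 250 \log n}{\eps}\, r(C)$ dominates $\dist_G(y_j, C)$ because $y_j$ lay in a hopset edge's initial ball), each $Y_C$-segment has length at most $(1+\eps)\dist_G(y_j, C)$, and by \Cref{thm:mainTheoremLowDiamTree} the $F'_C$-detour adds at most $\gamma_{lowDiamTree}\cdot r(C)$. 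Our choice $\gamma_{stretchCover} \geq \gamma_{lowDiamTree}$ makes the replacement cost at most $(1+\eps)$ times the combined hopset cost. All other edges of $P_{i+1}$ are copied unchanged into $F_i$, so $l(P_i) \leq (1+\eps) l(P_{i+1}) \leq (1+\eps)^{5\Lambda+2-i}\dist_G(s,v)$.

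The main obstacle will be the careful bookkeeping in the replacement argument: first, verifying that every relevant vertex $y_j$ genuinely lies within the depth parameter of the $\Apxball$ producing $Y_C$ (so the $(1+\eps)$-approximation applies), which follows from the definition of $V^{init}_C$ in \Cref{def:Htil} combined with the fact that hopset edges in $P_{i+1}$ were only created when $y_j$ was in the initial covering ball; second, checking that the path between the merged endpoints inside $D_{C, \Lambda_i(C)}$ is indeed the one described above and is not artificially lengthened by the gluing of $Y_C$ with $F'_C$; and third, making sure the $(1+\eps/250)$ slack from the $(1+\eps)$-approximate hopset can be absorbed into the $(1+\eps)$ factor at each inductive level, which will work because $250 \gg 1$ and we are only paying $O(\Lambda)$ levels overall.
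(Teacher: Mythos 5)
Your proposal is correct and follows essentially the same route as the paper: reverse induction with the base case at $i=5\Lambda+1$ from the $\Apxball$ guarantees on $F_s$, and an inductive step that excises each $\cC_i$-node from $P_{i+1}$, lower-bounds the two removed hopset edges by $2\gamma_{stretchCover}r(C)+\dist_G(y_1,C)+\dist_G(y_2,C)$, upper-bounds the detour through $X_C$ by $\gamma_{lowDiamTree}r(C)+(1+\eps)(\dist_G(y_1,C)+\dist_G(y_2,C))$, and absorbs it via $\gamma_{stretchCover}\geq\gamma_{lowDiamTree}$. The only difference is that you spell out a few points the paper leaves implicit (the $y_1\to c_1\to c_2\to y_2$ routing through $Y_C\cup F'_C$ and the check that $y_j$ lies within the $\Apxball$ depth parameter), which is a reasonable fleshing-out rather than a different argument.
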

\begin{proof}
We prove the claim by induction. For $F_{5\Lambda + 1}$ the claim follows immediately from the guarantees from \Cref{def:Apxball} and \Cref{thm:Apxball} on the forest $F_s$ where we recall that we maintain $F_{5\Lambda+1} = F_s$.

Next, consider the forest $F_i$. We have by the induction hypothesis that the claim holds for the forests $F_{i+1}$. But note that by construction of $F_i$, we can map the path 
\[
\pi_{F_{i+1}}(\Pi_{V(G) \mapsto V({F_{i+1}})}(s), \Pi_{V(G) \mapsto V({F_{i+1}})}(v))
\]
in $F_{i+1}$ straightforwardly to a path in $F_i$ where each vertex on the path that is identified with a cover set $C \in \cC_i$ is removed with both its incident edges and the endpoints of both endpoints are instead merged into the corresponding vertices in $X_C$. Let $x$ and $y$ these endpoints and note that these endpoints are identified with vertices $u$ and $v$ in $V$ (by definition \Cref{def:Htil}). Then, we have, again from \Cref{def:Htil}, that the two edges incident to $C$ that were removed from the path had combined length of at least $2 \cdot \gamma_{stretchCover} \cdot r(C) + \dist_G(u, C) + \dist_G(v,C)$. But the path between the vertices $u$ and $v$ in graph $X_C$ is clearly of length at most $\gamma_{lowDiamTree} \cdot r(C) + (1+\epsilon) \cdot (\dist_G(u, C) + \dist_G(v, C))$ by the guarantees of the data structures from \Cref{thm:mainTheoremLowDiamTree} and \Cref{thm:Apxball} that maintain $Y_C$ and $Y_C'$, and the way that we merge vertices to obtain $X_C$. Thus, the path $\pi_{F_i}(\Pi_{V(G) \mapsto V({F_i})}(s), \Pi_{V(G) \mapsto V({F_i})}(v))$ has length (when mapped to $G \cup H_{<i}$) at most $(1+\eps)$ times the length of path $\pi_{F_{i+1}}(\Pi_{V(G) \mapsto V({F_{i+1}})}(s), \Pi_{V(G) \mapsto V({F_{i+1}})}(v))$ in $F_{i+1}$, as desired.
\end{proof}

Finally, we analyze the runtime of our algorithm.

\begin{claim}\label{clm:runtiemSSSP}
The algorithm runs with initialization and total update time $n \cdot e^{O(\log^{83/84} n \log\log n)}$.
\end{claim}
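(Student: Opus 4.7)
The plan is to account for the running time by decomposing it into six contributions and bounding each one using the parameters $K=\log^{1/42}n$, $\Lambda=\log^{1/84}n$, $\Delta=O(Kn^{2/K})$, $\gamma_{stretchCover},\gamma_{lowDiamTree},\gamma_{timeAPSP}=e^{O(\log^{20/21}n\log\log n)}$, and the key identity $n^{1/\Lambda}=e^{\log^{83/84}n}$. The six contributions are: (i) the $5\Lambda+1$ coverings $\cC_i$; (ii) the $5\Lambda+1$ hopsets $\hat H_i$; (iii) the single approximate-ball data structure rooted at $s$; (iv) one approximate-ball data structure per cover $C\in\bigcup_i\cC_i$; (v) one low-diameter forest $F'_C$ per cover; and (vi) the maintenance of the forest hierarchy $F_{5\Lambda+1},\dots,F_0$ that defines $F$.

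For (i), \Cref{thm:covering} gives total time $\tilde O(Kn^{1+2/K}\gamma_{timeAPSP})$ per covering, and $5\Lambda+1=O(\log^{1/84}n)$ coverings contribute only a polylog factor. For (ii), \Cref{thm:hopsetMaintenance} gives total time $\tilde O(m\Delta n^{O(1/\Lambda)})$. For (iii), \Cref{thm:Apxball} gives $\tilde O(n\Delta n^{O(1/\Lambda)})$. For (iv), I will invoke \Cref{thm:Apxball} once per cover with depth $O(\gamma_{stretchCover}r(C)/\eps)$ and sum: since the covering property of \Cref{def:Covering} guarantees every vertex lies in at most $\Delta$ of the relevant padded balls, the aggregated ball sizes are $O(n\Delta)$ and the total time is $\tilde O(n\Delta^2 n^{O(1/\Lambda)})$. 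For (v), the same $\Delta$-overlap argument bounds $\sum_C|E(G[C])|$ by $O(n\Delta)$, so \Cref{thm:mainTheoremLowDiamTree} gives total $\tilde O(n\Delta\gamma_{lowDiamTree})$.

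The hard part is (vi), maintaining the forest hierarchy. Each update to $F_s$, $Y_C$ or $F'_C$ produces a single local change to some $X_C$; this change is then replicated in every one of the $\Delta_i$ copies $D_{C,j}$ and propagated downward through $F_i,F_{i-1},\dots,F_0$. The bottleneck is the amplification factor $\Delta_0=(1+\Delta^2+\Delta\gamma_{lowDiamTree})^{5\Lambda+1}$. To control this I will compute $\Delta^2,\Delta\gamma_{lowDiamTree}\le e^{O(\log^{41/42}n)}$ (using $2/K=2\log^{-1/42}n$ and $20/21=40/42<41/42$), so $1+\Delta^2+\Delta\gamma_{lowDiamTree}=e^{O(\log^{41/42}n)}$, and raising to the $5\Lambda+1=O(\log^{1/84}n)$ power yields
\[
\Delta_0\le e^{O(\log^{41/42+1/84}n)}=e^{O(\log^{83/84}n)}.
\]
Multiplying this amplification against the total number of elementary updates to the bottom-level forests (which is at most $n\cdot\Delta\cdot\gamma_{lowDiamTree}\cdot n^{O(1/\Lambda)}$ from contributions (iii)--(v), using the bounds just established) gives an overall bound of $n\cdot e^{O(\log^{83/84}n\log\log n)}$ for (vi).

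Finally, I will observe that every other contribution is bounded by $n\cdot n^{O(1/\Lambda)}\cdot\gamma_{timeAPSP}=n\cdot e^{O(\log^{83/84}n)}\cdot e^{O(\log^{20/21}n\log\log n)}=n\cdot e^{O(\log^{83/84}n\log\log n)}$ since $83/84>20/21$. The same bound therefore dominates the initialization and total update time, matching the claim. The main obstacle is the careful arithmetic in controlling $\Delta_0$ and verifying that the chosen values $K=\log^{1/42}n$ and $\Lambda=\log^{1/84}n$ balance $n^{2/K}$ (which enters quadratically through $\Delta^2$) against $n^{1/\Lambda}$ (which enters via the hopset/ball data structures) so that both fit inside a single $e^{O(\log^{83/84}n\log\log n)}$ envelope.
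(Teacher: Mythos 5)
Your proposal is correct and follows essentially the same route as the paper's proof: decompose the cost into the coverings, hopsets, per-cover approximate-ball and low-diameter-forest data structures, the source ball, and the forest hierarchy; use the $\Delta$-overlap guarantee from the covering to sum the per-cover costs; and show the hierarchy amplification factor is $\Delta_0 = (1+\Delta^2 + \Delta\gamma_{lowDiamTree})^{O(\Lambda)} = e^{O(\log^{83/84}n)}$ by balancing $\Delta = O(Kn^{2/K})$ against $n^{1/\Lambda}$. Your accounting of the $\Delta^2$ factor in item (iv) (coming from the $|\bar{B}| \cdot \Delta$ in \Cref{thm:Apxball} summed over $\Delta$-overlapping balls) is in fact slightly more careful than the paper's wording, but it is absorbed into the same envelope since $\Delta^2 = e^{O(\log^{41/42}n)} \le e^{O(\log^{83/84}n)}$, so the bottom line is identical.
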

\begin{proof}
For our algorithm, we maintain for every $0 \leq i \leq 5 \Lambda$, a $(n^{i/\Lambda},K,\frac{\eps}{2000 \log(n)},\gamma_{stretchCover},\Delta = O(Kn^{2/K}))$-covering $\cC_i$ of $G$ and the approximate hopset $\hat{H}_i$ induced by this covering. These coverings and hopsets can be maintained by \Cref{thm:covering} and \Cref{thm:hopsetMaintenance} in time $n \cdot e^{O(\log^{83/84} n \log\log n)}$. 

Then, for every $0 \leq i \leq 5\Lambda$, we maintain data structures for every $C \in \cC_i$ to maintain the forests $Y_C$ and $Y_C'$. Let us first analyze the time to maintain all such forests $Y_C$. We have that for such a cluster $C$, the data structure to maintain $Y_C$ requires time $\Otil(|\bar{B}(C, \frac{\gamma_{stretchCover} 250 \log n}{\eps} \cdot r(C))$ where the ball is taken at the time that $C$ was first added to $\cC_i$. But we have that every vertex $v \in V(G)$ is in at most $\Delta$ cover sets in $\cC_i$ over all times by \Cref{thm:covering}. Thus, the total time required by all such data structures is at most $n \cdot \Delta \cdot e^{O(\log^{83/84} n \log\log n)} = n \cdot e^{O(\log^{83/84} n \log\log n)}$. The time to maintain the low-diameter forests $F'_C$ can be bound similarly where we use that the time to maintain such a forest on a particular cover set $C$ is at most $3C \cdot \gamma_{lowDiamTree}$ by \Cref{thm:mainTheoremLowDiamTree} where we use the fact that $G$ has maximum degree $3$.

It is not hard to see that we can maintain each graph $X_C$ from $Y_C$ and $F'_C$ in time linear in the number of changes to the two forests. 

Finally, we have that to maintain the forest $F_{5\Lambda + 1} = F_s$, we run an additional data structure from \Cref{thm:Apxball} that might explore the entire graph $G$ and thus take time $m \cdot e^{O(\log^{83/84} n \log\log n)}$. The forests $F_i$ can then be maintained rather straightforwardly in time $n \cdot \Delta_{i} \cdot e^{O(\log^{83/84} n \log\log n)}$ using our analysis of graphs $X_C$ above. Thus the total time to maintain all forests $F_{5\Lambda + 1}, F_{5\Lambda}, \ldots, F_0$ is at most $n \cdot (\Delta \gamma_{lowDiamTree})^{O(\Lambda)} \cdot e^{O(\log^{83/84} n \log\log n)} = n \cdot e^{O(\log^{83/84} n \log\log n)}$. The corresponding vertex maps and graph embeddings with forests $F_{5\Lambda + 1}, F_{5\Lambda}, \ldots, F_0$ can be maintained with only a constant number of additional operations.

This bounds the runtime required by all components of the algorithm by $n \cdot e^{O(\log^{83/84} n \log\log n)}$, as desired.
\end{proof}

Finally, we obtain the proof of \Cref{thm:mainSSSP} from the fact that $F = F_0$, \Cref{clm:recStretchAnalysisSSSP}, and that we can rescale $\eps$ by a logarithmic factor in $n$ without increasing the runtime significantly; from \Cref{clm:recStretchAnalysisSSSP} combined with the fact that the vertex congestion trivially upper bounds the edge congestion; and \Cref{clm:runtiemSSSP} which not only provides an upper bound on the runtime of the algorithm, but also the recourse of $F$ and the vertex maps and graph embedding, since these are maintained explicitly.

\pagebreak
\bibliographystyle{alpha}
\bibliography{refs}

\newcommand{\etalchar}[1]{$^{#1}$}
\begin{thebibliography}{BvdBPG{\etalchar{+}}22}

\bibitem[ABF23]{abboud2023stronger}
Amir Abboud, Karl Bringmann, and Nick Fischer.
\newblock Stronger 3-sum lower bounds for approximate distance oracles via
  additive combinatorics.
\newblock In {\em Proceedings of the 55th Annual ACM Symposium on Theory of
  Computing}, pages 391--404, 2023.

\bibitem[ABKZ22]{abboud2022hardness}
Amir Abboud, Karl Bringmann, Seri Khoury, and Or~Zamir.
\newblock Hardness of approximation in p via short cycle removal: Cycle
  detection, distance oracles, and beyond.
\newblock In {\em Proceedings of the 54th Annual ACM SIGACT Symposium on Theory
  of Computing}, pages 1487--1500, 2022.

\bibitem[AHK12]{arora2012multiplicative}
Sanjeev Arora, Elad Hazan, and Satyen Kale.
\newblock The multiplicative weights update method: a meta-algorithm and
  applications.
\newblock {\em Theory of computing}, 8(1):121--164, 2012.

\bibitem[AHLT05]{alstrup2005maintaining}
Stephen Alstrup, Jacob Holm, Kristian~De Lichtenberg, and Mikkel Thorup.
\newblock Maintaining information in fully dynamic trees with top trees.
\newblock {\em Acm Transactions on Algorithms (talg)}, 1(2):243--264, 2005.

\bibitem[AKL{\etalchar{+}}22]{abboud2022breaking}
Amir Abboud, Robert Krauthgamer, Jason Li, Debmalya Panigrahi, Thatchaphol
  Saranurak, and Ohad Trabelsi.
\newblock Breaking the cubic barrier for all-pairs max-flow: Gomory-hu tree in
  nearly quadratic time.
\newblock In {\em 2022 IEEE 63rd Annual Symposium on Foundations of Computer
  Science (FOCS)}, pages 884--895. IEEE, 2022.

\bibitem[ALPS23]{gomoryHu}
Amir Abboud, Jason Li, Debmalya Panigrahi, and Thatchaphol Saranurak.
\newblock All-pairs max-flow is no harder than single-pair max-flow: Gomory-hu
  trees in almost-linear time.
\newblock In {\em 2023 IEEE 64rd Annual Symposium on Foundations of Computer
  Science (FOCS)}. IEEE, 2023.

\bibitem[ASZ20]{andoni2020parallel}
Alexandr Andoni, Clifford Stein, and Peilin Zhong.
\newblock Parallel approximate undirected shortest paths via low hop emulators.
\newblock In {\em Proceedings of the 52nd Annual ACM SIGACT Symposium on Theory
  of Computing}, pages 322--335, 2020.

\bibitem[BGS21]{bernstein2022deterministic}
Aaron Bernstein, Maximilian~Probst Gutenberg, and Thatchaphol Saranurak.
\newblock Deterministic decremental sssp and approximate min-cost flow in
  almost-linear time.
\newblock {\em arXiv preprint arXiv:2101.07149}, 2021.

\bibitem[BGS22]{FOCSbernstein2022deterministic}
Aaron Bernstein, Maximilian~Probst Gutenberg, and Thatchaphol Saranurak.
\newblock Deterministic decremental sssp and approximate min-cost flow in
  almost-linear time.
\newblock In {\em 2021 IEEE 62nd Annual Symposium on Foundations of Computer
  Science (FOCS)}, pages 1000--1008. IEEE, 2022.

\bibitem[BK23]{brand2023deterministic}
Jan van~den Brand and Adam Karczmarz.
\newblock Deterministic fully dynamic sssp and more.
\newblock {\em 2023 IEEE 64th Annual Symposium on Foundations of Computer
  Science (FOCS)}, 2023.

\bibitem[BLS23]{BLS23}
Jan {Brand}, Yang~P. Liu, and Aaron Sidford.
\newblock Dynamic {{Maxflow}} via {{Dynamic Interior Point Methods}}.
\newblock In {\em Proceedings of the 55th {{Annual ACM Symposium}} on
  {{Theory}} of {{Computing}}}, {{STOC}} 2023, pages 1215--1228, {New York, NY,
  USA}, June 2023. {Association for Computing Machinery}.

\bibitem[BvdBPG{\etalchar{+}}22]{bernstein2022fully}
Aaron Bernstein, Jan van~den Brand, Maximilian Probst~Gutenberg, Danupon
  Nanongkai, Thatchaphol Saranurak, Aaron Sidford, and He~Sun.
\newblock Fully-dynamic graph sparsifiers against an adaptive adversary.
\newblock In {\em 49th International Colloquium on Automata, Languages, and
  Programming (ICALP 2022)}, volume 229, page~20. Schloss
  Dagstuhl--Leibniz-Zentrum f $\{$$\backslash$" u$\}$ r Informatik, 2022.

\bibitem[CGH{\etalchar{+}}20]{chen2020fast}
Li~Chen, Gramoz Goranci, Monika Henzinger, Richard Peng, and Thatchaphol
  Saranurak.
\newblock Fast dynamic cuts, distances and effective resistances via vertex
  sparsifiers.
\newblock In {\em 2020 IEEE 61st Annual Symposium on Foundations of Computer
  Science (FOCS)}, pages 1135--1146. IEEE, 2020.

\bibitem[Che18]{chechik2018near}
Shiri Chechik.
\newblock Near-optimal approximate decremental all pairs shortest paths.
\newblock In {\em 2018 IEEE 59th Annual Symposium on Foundations of Computer
  Science (FOCS)}, pages 170--181. IEEE, 2018.

\bibitem[Chu21]{chuzhoy2021decremental}
Julia Chuzhoy.
\newblock Decremental all-pairs shortest paths in deterministic near-linear
  time.
\newblock In {\em Proceedings of the 53rd Annual ACM SIGACT Symposium on Theory
  of Computing}, pages 626--639, 2021.

\bibitem[CKL{\etalchar{+}}22]{chen2022maximum}
Li~Chen, Rasmus Kyng, Yang~P Liu, Richard Peng, Maximilian~Probst Gutenberg,
  and Sushant Sachdeva.
\newblock Maximum flow and minimum-cost flow in almost-linear time.
\newblock In {\em 2022 IEEE 63rd Annual Symposium on Foundations of Computer
  Science (FOCS)}, pages 612--623. IEEE, 2022.

\bibitem[CZ23]{chuzhoy2023new}
Julia Chuzhoy and Ruimin Zhang.
\newblock A new deterministic algorithm for fully dynamic all-pairs shortest
  paths.
\newblock {\em arXiv preprint arXiv:2304.09321}, 2023.

\bibitem[DGGP19]{durfee2019fully}
David Durfee, Yu~Gao, Gramoz Goranci, and Richard Peng.
\newblock Fully dynamic spectral vertex sparsifiers and applications.
\newblock In {\em Proceedings of the 51st Annual ACM SIGACT Symposium on Theory
  of Computing}, pages 914--925, 2019.

\bibitem[DI04]{demetrescu2004new}
Camil Demetrescu and Giuseppe~F Italiano.
\newblock A new approach to dynamic all pairs shortest paths.
\newblock {\em Journal of the ACM (JACM)}, 51(6):968--992, 2004.

\bibitem[FG19]{forster2019dynamic}
Sebastian Forster and Gramoz Goranci.
\newblock Dynamic low-stretch trees via dynamic low-diameter decompositions.
\newblock In {\em Proceedings of the 51st Annual ACM SIGACT Symposium on Theory
  of Computing}, pages 377--388, 2019.

\bibitem[FGH21]{forster2021dynamic}
Sebastian Forster, Gramoz Goranci, and Monika Henzinger.
\newblock Dynamic maintenance of low-stretch probabilistic tree embeddings with
  applications.
\newblock In {\em Proceedings of the 2021 ACM-SIAM Symposium on Discrete
  Algorithms (SODA)}, pages 1226--1245. SIAM, 2021.

\bibitem[FGNS23]{ForsterGNS23}
Sebastian Forster, Gramoz Goranci, Yasamin Nazari, and Antonis Skarlatos.
\newblock Bootstrapping dynamic distance oracles.
\newblock In Inge~Li G{\o}rtz, Martin Farach{-}Colton, Simon~J. Puglisi, and
  Grzegorz Herman, editors, {\em 31st Annual European Symposium on Algorithms,
  {ESA} 2023, September 4-6, 2023, Amsterdam, The Netherlands}, volume 274 of
  {\em LIPIcs}, pages 50:1--50:16. Schloss Dagstuhl - Leibniz-Zentrum f{\"{u}}r
  Informatik, 2023.

\bibitem[Fle00]{fleischer2000approximating}
Lisa~K Fleischer.
\newblock Approximating fractional multicommodity flow independent of the
  number of commodities.
\newblock {\em SIAM Journal on Discrete Mathematics}, 13(4):505--520, 2000.

\bibitem[FNPG23]{forster2023deterministic}
Sebastian Forster, Yasamin Nazari, and Maximilian Probst~Gutenberg.
\newblock Deterministic incremental apsp with polylogarithmic update time and
  stretch.
\newblock In {\em Proceedings of the 55th Annual ACM Symposium on Theory of
  Computing}, pages 1173--1186, 2023.

\bibitem[Fre83]{frederickson1983data}
Greg~N Frederickson.
\newblock Data structures for on-line updating of minimum spanning trees.
\newblock In {\em Proceedings of the fifteenth annual ACM symposium on Theory
  of computing}, pages 252--257, 1983.

\bibitem[GHP17]{goranci_et_al:LIPIcs:2017:7846}
Gramoz Goranci, Monika Henzinger, and Pan Peng.
\newblock {The Power of Vertex Sparsifiers in Dynamic Graph Algorithms}.
\newblock In Kirk Pruhs and Christian Sohler, editors, {\em 25th Annual
  European Symposium on Algorithms (ESA 2017)}, volume~87 of {\em Leibniz
  International Proceedings in Informatics (LIPIcs)}, pages 45:1--45:14,
  Dagstuhl, Germany, 2017. Schloss Dagstuhl--Leibniz-Zentrum fuer Informatik.

\bibitem[GHP18]{goranci2018dynamic}
Gramoz Goranci, Monika Henzinger, and Pan Peng.
\newblock Dynamic effective resistances and approximate schur complement on
  separable graphs.
\newblock In {\em 26th Annual European Symposium on Algorithms (ESA 2018)}.
  Schloss Dagstuhl-Leibniz-Zentrum fuer Informatik, 2018.

\bibitem[GHP20]{goranci2020improved}
Gramoz Goranci, Monika Henzinger, and Pan Peng.
\newblock Improved guarantees for vertex sparsification in planar graphs.
\newblock {\em SIAM Journal on Discrete Mathematics}, 34(1):130--162, 2020.

\bibitem[GHT18]{goranci2018incremental}
Gramoz Goranci, Monika Henzinger, and Mikkel Thorup.
\newblock Incremental exact min-cut in polylogarithmic amortized update time.
\newblock {\em ACM Transactions on Algorithms (TALG)}, 14(2):1--21, 2018.

\bibitem[GLP23]{gao2023fully}
Yu~Gao, Yang Liu, and Richard Peng.
\newblock Fully dynamic electrical flows: Sparse maxflow faster than
  goldberg--rao.
\newblock {\em SIAM Journal on Computing}, (0):FOCS21--85, 2023.

\bibitem[GRST21]{goranci2021expander}
Gramoz Goranci, Harald R{\"a}cke, Thatchaphol Saranurak, and Zihan Tan.
\newblock The expander hierarchy and its applications to dynamic graph
  algorithms.
\newblock In {\em Proceedings of the 2021 ACM-SIAM Symposium on Discrete
  Algorithms (SODA)}, pages 2212--2228. SIAM, 2021.

\bibitem[GWN20a]{gutenberg2020deterministic}
Maximilian~Probst Gutenberg and Christian Wulff-Nilsen.
\newblock Deterministic algorithms for decremental approximate shortest paths:
  Faster and simpler.
\newblock In {\em Proceedings of the Fourteenth Annual ACM-SIAM Symposium on
  Discrete Algorithms}, pages 2522--2541. SIAM, 2020.

\bibitem[GWN20b]{gutenberg2020fully}
Maximilian~Probst Gutenberg and Christian Wulff-Nilsen.
\newblock Fully-dynamic all-pairs shortest paths: Improved worst-case time and
  space bounds.
\newblock In {\em Proceedings of the Fourteenth Annual ACM-SIAM Symposium on
  Discrete Algorithms}, pages 2562--2574. SIAM, 2020.

\bibitem[HKN18]{henzinger2018decremental}
Monika Henzinger, Sebastian Krinninger, and Danupon Nanongkai.
\newblock Decremental single-source shortest paths on undirected graphs in
  near-linear total update time.
\newblock {\em Journal of the ACM (JACM)}, 65(6):1--40, 2018.

\bibitem[KLOS14]{kelner2014almost}
Jonathan~A Kelner, Yin~Tat Lee, Lorenzo Orecchia, and Aaron Sidford.
\newblock An almost-linear-time algorithm for approximate max flow in
  undirected graphs, and its multicommodity generalizations.
\newblock In {\em Proceedings of the twenty-fifth annual ACM-SIAM symposium on
  Discrete algorithms}, pages 217--226. SIAM, 2014.

\bibitem[KMG22]{kyng2022incremental}
Rasmus Kyng, Simon Meierhans, and Maximilian~Probst Gutenberg.
\newblock Incremental sssp for sparse digraphs beyond the hopset barrier.
\newblock In {\em Proceedings of the 2022 Annual ACM-SIAM Symposium on Discrete
  Algorithms (SODA)}, pages 3452--3481. SIAM, 2022.

\bibitem[LO88]{levcopoulos1988balanced}
Christos Levcopoulos and Mark~H Overmars.
\newblock A balanced search tree with o (1) worst-case update time.
\newblock {\em Acta Informatica}, 26:269--277, 1988.

\bibitem[Mad10]{madry2010fast}
Aleksander Madry.
\newblock Fast approximation algorithms for cut-based problems in undirected
  graphs.
\newblock In {\em 2010 IEEE 51st Annual Symposium on Foundations of Computer
  Science}, pages 245--254. IEEE, 2010.

\bibitem[Moi09]{M09}
Ankur Moitra.
\newblock Approximation {{Algorithms}} for {{Multicommodity-Type Problems}}
  with {{Guarantees Independent}} of the {{Graph Size}}.
\newblock In {\em 2009 50th {{Annual IEEE Symposium}} on {{Foundations}} of
  {{Computer Science}}}, pages 3--12, October 2009.

\bibitem[Pen16]{peng2016approximate}
Richard Peng.
\newblock Approximate undirected maximum flows in o (m polylog (n)) time.
\newblock In {\em Proceedings of the twenty-seventh annual ACM-SIAM symposium
  on Discrete algorithms}, pages 1862--1867. SIAM, 2016.

\bibitem[PGVWW20]{probst2020new}
Maximilian Probst~Gutenberg, Virginia Vassilevska~Williams, and Nicole Wein.
\newblock New algorithms and hardness for incremental single-source shortest
  paths in directed graphs.
\newblock In {\em Proceedings of the 52nd Annual ACM SIGACT Symposium on Theory
  of Computing}, pages 153--166, 2020.

\bibitem[She13]{sherman2013nearly}
Jonah Sherman.
\newblock Nearly maximum flows in nearly linear time.
\newblock In {\em 2013 IEEE 54th Annual Symposium on Foundations of Computer
  Science}, pages 263--269. IEEE, 2013.

\bibitem[ST81]{sleator1981data}
Daniel~D Sleator and Robert~Endre Tarjan.
\newblock A data structure for dynamic trees.
\newblock In {\em Proceedings of the thirteenth annual ACM symposium on Theory
  of computing}, pages 114--122, 1981.

\bibitem[ST04]{spielman2004nearly}
Daniel~A Spielman and Shang-Hua Teng.
\newblock Nearly-linear time algorithms for graph partitioning, graph
  sparsification, and solving linear systems.
\newblock In {\em Proceedings of the thirty-sixth annual ACM symposium on
  Theory of computing}, pages 81--90, 2004.

\bibitem[Tho04]{thorup2004fully}
Mikkel Thorup.
\newblock Fully-dynamic all-pairs shortest paths: Faster and allowing negative
  cycles.
\newblock In {\em Scandinavian Workshop on Algorithm Theory}, pages 384--396.
  Springer, 2004.

\bibitem[TZ01]{thorup2001compact}
Mikkel Thorup and Uri Zwick.
\newblock Compact routing schemes.
\newblock In {\em Proceedings of the thirteenth annual ACM symposium on
  Parallel algorithms and architectures}, pages 1--10, 2001.

\bibitem[TZ05]{thorup2005approximate}
Mikkel Thorup and Uri Zwick.
\newblock Approximate distance oracles.
\newblock {\em Journal of the ACM (JACM)}, 52(1):1--24, 2005.

\bibitem[vdBCK{\etalchar{+}}23]{detMaxFlow}
Jan van~den Brand, Li~Chen, Rasmus Kyng, Yang~P Liu, Richard Peng,
  Maximilian~Probst Gutenberg, Sushant Sachdeva, and Aaron Sidford.
\newblock A deterministic almost-linear time algorithm for minimum-cost flow.
\newblock In {\em 2023 IEEE 64rd Annual Symposium on Foundations of Computer
  Science (FOCS)}. IEEE, 2023.

\bibitem[VDBLL{\etalchar{+}}21]{van2021minimum}
Jan Van Den~Brand, Yin~Tat Lee, Yang~P Liu, Thatchaphol Saranurak, Aaron
  Sidford, Zhao Song, and Di~Wang.
\newblock Minimum cost flows, mdps, and ℓ1-regression in nearly linear time
  for dense instances.
\newblock In {\em Proceedings of the 53rd Annual ACM SIGACT Symposium on Theory
  of Computing}, pages 859--869, 2021.

\bibitem[WL85]{willard1985adding}
Dan~E Willard and George~S Lueker.
\newblock Adding range restriction capability to dynamic data structures.
\newblock {\em Journal of the ACM (JACM)}, 32(3):597--617, 1985.

\end{thebibliography}

\appendix 

\pagebreak

\section{Missing Proofs}

\subsection{Bunches and Clusters: Proof of \Cref{thm:TZschemes}}
\label{sec:bunchesAndClusters}
In this section, we prove \Cref{thm:TZschemes}, which we restate here for convenience, as \Cref{thm:TZschemesAppendix}.
\begin{theorem}[see \cite{thorup2001compact}, Theorem 3.1]\label{thm:TZschemesAppendix}
Given a graph constant degree graph $G=(V,E)$ with edge lengths $l$ and a size reduction parameter $b$, there is an algorithm $\textsc{Center}(G, b)$ that in time $\tilde{O}(mb)$ computes a set $A \subseteq V$ of size at most $n/b$ such that for every vertex $v \in V$, we have $|B_G(v, A)| \leq 2b \log n$ and $|C_G(v,A)| \leq 2b \log n$.
\end{theorem}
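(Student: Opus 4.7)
The plan is to follow the original Thorup–Zwick sampling strategy to establish existence and then derandomize via the method of conditional expectations. The first preparatory step will be to compute, for every vertex $v$, the set $T_v$ of its $k := 2b\log n$ nearest vertices. I run a truncated Dijkstra from each $v$ that halts once $k$ vertices are settled; because the maximum degree is constant, each local computation relaxes only $O(k) = O(b\log n)$ edges, giving total preprocessing time $\tilde O(n b) = \tilde O(m b)$.

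For existence, I consider the random set $A^{\mathrm{rand}}$ obtained by putting each vertex into $A$ independently with probability $p = 1/b$. Then $\mathbb{E}|A^{\mathrm{rand}}| = n/b$ and a Chernoff bound gives $|A^{\mathrm{rand}}| \le (1+o(1)) n/b$ with high probability. For the ball bound, the event $|B_G(v,A^{\mathrm{rand}})| \ge k$ is exactly the event that $A^{\mathrm{rand}} \cap T_v = \emptyset$, which has probability $(1-p)^{k} \le n^{-2}$. For the cluster bound, I use the identity $|C_G(v,A)| = \sum_{u} \mathbf{1}[\,v \in B_G(u,A)\,]$, where $\Pr[v \in B_G(u,A^{\mathrm{rand}})] = (1-p)^{|\bar B_G(u,\dist(u,v))|}$. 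The constant-degree hypothesis ensures that for each integer $j$ only $O(j)$ vertices $u$ satisfy $|\bar B_G(u,\dist(u,v))| \le j$ (since any such $u$ lies on a shortest $u$–$v$ path of at most $O(j)$ hops), so $\mathbb{E}|C_G(v,A^{\mathrm{rand}})| = O(b)$. A standard concentration argument (for instance, via the method of bounded differences, noting that flipping one vertex changes $|C_G(v,\cdot)|$ by at most the degree of influence, or via Chernoff applied after grouping into independent shells) yields $\Pr[\,|C_G(v,A^{\mathrm{rand}})| > k\,] \le n^{-2}$. A union bound over the $O(n)$ bad events shows that a suitable $A$ exists.

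To derandomize, I maintain a pessimistic estimator
\[
\Phi(F_+, F_-) \;=\; \Phi_{\mathrm{size}} \;+\; \sum_{v\in V}\Phi^{\mathrm{ball}}_v \;+\; \sum_{v\in V}\Phi^{\mathrm{cluster}}_v,
\]
where $(F_+,F_-)$ records which vertices have been forced into or out of $A$ so far, each summand is a conditional tail bound (a conditional Chernoff/Markov estimator) for the corresponding bad event, and initially $\Phi < 1$. I then scan vertices one by one and commit each to $F_+$ or $F_-$, choosing whichever keeps $\Phi$ non-increasing; since $\Phi$ stays below $1$, the final set $A$ avoids every bad event. Crucially, flipping a single vertex $v_i$ only affects the summands $\Phi^{\mathrm{ball}}_v$ and $\Phi^{\mathrm{cluster}}_v$ for $v$'s with $v_i \in T_v$ or $v \in T_{v_i}$, of which there are only $\tilde O(b)$ by the precomputed $T$-sets and the constant-degree assumption, so each decision is processed in $\tilde O(b)$ time. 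The total derandomization cost is $\tilde O(nb) = \tilde O(mb)$, matching the claimed bound.

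The main obstacle is the cluster bound: unlike the ball condition, which is a simple hitting-set condition against the precomputed $T_v$'s, the cluster condition is a sum of dependent indicators and is not a hitting-set condition on a single fixed set, so both the probabilistic analysis and the design of the pessimistic estimator $\Phi^{\mathrm{cluster}}_v$ require carefully exploiting the constant-degree assumption (for the shell-growth bound on $|\bar B_G(u,\dist(u,v))|$) and a moment/MGF estimator whose value can be updated incrementally under conditioning. All other ingredients are routine.
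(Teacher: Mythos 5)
Your approach diverges from the paper's at a crucial point, and the divergence introduces a genuine gap. The paper follows the Thorup--Zwick \emph{iterative} construction: first compute a hitting set $W_0$ for the size-$b$ balls (giving the bunch bound), then repeatedly identify the vertex set $V_i$ whose clusters are still too large and hit the size-$k_i$ restricted balls $B_i(u)\cap V_i$ with additional centers, halving $|V_i|$ in each of $O(\log n)$ rounds. The iteration is not cosmetic; it is precisely what makes the \emph{per-vertex} cluster bound go through. You instead try to establish the cluster bound in one shot via random sampling and then derandomize with a pessimistic estimator, and this does not work.

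The breakdown is your claim that in a constant-degree graph only $O(j)$ vertices $u$ satisfy $\lvert\bar B_G(u,\dist(u,v))\rvert\le j$. The justification offered (that such a $u$ is within $O(j)$ hops of $v$) only bounds the count by $\Delta^{O(j)}$, not $O(j)$, and in fact the claim itself is false. Take a ``broom'': a constant-degree binary tree of depth $\log k$ rooted at $v$ with $k$ leaves, each leaf starting a unit-weight path of length $L$, with all tree edges of negligible weight $\epsilon$. For $u$ at distance $t$ along path $i$ one has $\lvert\bar B_G(u,\dist(u,v))\rvert=\Theta(t+k)$, so with $j=\Theta(k)$ there are $\Theta(k^2)$ vertices in $S_j$, not $O(j)$. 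Pushing this through the sampling calculation with $p=1/b$ and $k=b$ gives
\[
\mathbb{E}\,\lvert C_G(v,A^{\mathrm{rand}})\rvert \;=\; \sum_{u}(1-p)^{\lvert\bar B_G(u,\dist(u,v))\rvert} \;=\; \Theta\!\bigl(k(1-p)^{\Theta(k)}\cdot\min(L,1/p)\bigr) \;=\; \Theta(b^2),
\]
so the target cluster size $2b\log n$ is exceeded with at least constant probability whenever $b\gg\log n$. Hence the ``bad event'' is not rare, the union bound fails, no valid pessimistic estimator $\Phi^{\mathrm{cluster}}_v$ with initial value summing below~$1$ exists, and the derandomization step has nothing to derandomize. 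Your treatment of the \emph{ball} bound via the precomputed $T_v$'s is fine, and the overall preprocessing/derandomization machinery is reasonable, but the cluster side needs the iterative re-centering on large clusters (with the Markov-plus-halving argument over the sets $V_i$) that the paper uses; one-shot sampling cannot replace it even under a constant-degree assumption.
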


Consider a set $A$, and a collection of subsets $\mathcal{B} \subseteq 2^{A}$, where each $B \in \mathcal{B}$ satisfies $\abs{B} \geq d$.
We say that $H \subseteq A$ is a hitting set for $\mathcal{B}$ if for all $B \in \mathcal{B}$ there exists some $a \in B \cap H$.
We will frequently use a standard result on deterministically computing hitting sets.

\begin{lemma}[Deterministic Hitting Set]
\label{lem:detHS}
Using a deterministic algorithm, in time $\O(|A| + \sum_{B \in \mathcal{B}} |B|)$, we can compute a set hitting $H$ of $\mathcal{B}$ of size $\frac{|A|}{d} \log |\mathcal{B}|$.
\end{lemma}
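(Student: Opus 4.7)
The plan is to use the standard greedy set-cover / hitting-set algorithm and analyze it via a double counting (averaging) argument. Concretely, I would maintain the set $U \subseteq \mathcal{B}$ of currently uncovered sets (initialized to $\mathcal{B}$) and, in each round, add to $H$ the element $a^\star \in A$ that appears in the largest number of sets in $U$; then remove from $U$ every set containing $a^\star$. I stop once $U = \emptyset$ and output $H$.

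For correctness, the key inequality is that at every round there exists $a^\star \in A$ lying in at least $d \cdot |U| / |A|$ sets of $U$. This is immediate from double counting: the sum $\sum_{B \in U} |B|$ is at least $d \cdot |U|$ by the size hypothesis, and also equals $\sum_{a \in A} |\{B \in U : a \in B\}|$, so by averaging some $a \in A$ is contained in at least $d|U|/|A|$ sets. Choosing $a^\star$ greedily therefore shrinks $|U|$ by a factor of at most $1 - d/|A|$ per round. After $k$ rounds, $|U| \le |\mathcal{B}| (1 - d/|A|)^k \le |\mathcal{B}| e^{-kd/|A|}$, so taking $k = \lceil (|A|/d)\ln|\mathcal{B}|\rceil$ forces $|U| < 1$, i.e.\ $U = \emptyset$, and $|H| \le (|A|/d)\log|\mathcal{B}|$ as required.

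For the running time, I would maintain, for each $a \in A$, a counter $c(a) = |\{B \in U : a \in B\}|$, together with a data structure (e.g.\ a bucket array indexed by counter value, or a max-heap) supporting extraction of a current maximizer in $\tilde O(1)$ amortized time and decrements to individual counters in $\tilde O(1)$ time. Initializing the counters and the bipartite incidence structure between $A$ and $\mathcal{B}$ takes $O(|A| + \sum_{B \in \mathcal{B}} |B|)$ time. Each set $B$ is ``processed'' exactly once across the whole algorithm: when the round picks an $a^\star \in B$, we scan $B$, mark it covered, and for every $a \in B$ decrement $c(a)$ by one (updating the data structure). The total work for these decrements is $\sum_{B \in \mathcal{B}}|B|$ charged across all rounds, and the overhead from the priority-queue / bucket structure contributes at most a logarithmic factor.

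The main obstacle, as usual for this kind of deterministic derandomization, is really the implementation: achieving the claimed $\tilde O(|A| + \sum_{B \in \mathcal{B}}|B|)$ time requires that the greedy choice and the subsequent counter updates are amortized correctly so that no set is touched more than a polylogarithmic number of times. This is standard but needs to be set up carefully; a simple bucketed structure keyed on counter values (with pointers between each element and the buckets of all sets containing it) suffices and avoids any $\log$ factor beyond what is already hidden in $\tilde O(\cdot)$.
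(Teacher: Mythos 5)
Your proposal is correct and is essentially the same greedy-plus-averaging argument the paper uses: pick the element hitting the most currently-unhit sets, argue by double counting that some element hits a $d/|A|$-fraction, bound the number of rounds geometrically, and implement with a max-heap (or bucket structure) plus degree decrements so each set is touched once. The only cosmetic difference is that the paper tracks a shrinking ground set $A_i = A \setminus H_i$ in its bipartite-graph bookkeeping, whereas you keep $A$ fixed; both yield the same $\frac{|A|}{d}\log|\mathcal{B}|$ bound.
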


\begin{proof}
Initially, we set $H_0 \gets \emptyset$.
We now consider the following procedure, starting with $i = 0$.
Consider a bipartite graph $G$ with vertex set $A_i =  A \setminus H_i$ on the left,
and $\mathcal{B}_i = \setof{B \in \mathcal{B} : B \cap H \neq \emptyset}$, and an edge $(a,B)$ iff $a \in B$.
Each vertex $B \in \mathcal{B}_i$ has degree $\geq d$, and hence the sum of degrees of vertices on the left is $\sum_{a \in A_i} \deg_G(a) \geq d |\mathcal{B}_i|$.
Thus, we must have a vertex $a \in A$ with degree
$\deg_G(a) \geq d |\mathcal{B}_i| / |A_i|$.
We define $H_{i + 1} = H_i \union \setof{a}$, 
and repeat the the procedure with $i \gets i+1$, until $\mathcal{B}_{i} = \emptyset$, at which point we conclude that $H = H_i$ is a hitting set, and $|H| = i$.
Furthermore, we have
\[
|\mathcal{B}_{i+1}|
\leq 
|\mathcal{B}_{i}|
(1 - d |\mathcal{B}_i| / |A_i|)
\leq 
|\mathcal{B}_{i}|
(1 - d / |A|)
\]
Thus,
$|\mathcal{B}_{i}| \leq (1 - d / |A|)^i |\mathcal{B}| $, and we must terminate with 
$\abs{H} = i \leq \frac{|A|}{d} \log |\mathcal{B}|$.

We now discuss how to implement the procedure in the stated time.
We can construct the initial bipartite graph in time
$O(|A| + \sum_{B \in \mathcal{B}} |B|)$,
and use a (Fibonacci) max-heap to maintain the vertex degrees for $a \in A_i$.
We then repeatedly extract a maximum degree vertex, delete its neighbors, and update degrees of vertices adjacent to the deleted neighbors, with total running time $\O(|A| + \sum_{B \in \mathcal{B}} |B|)$,. 
\end{proof}

\begin{proof}[Proof of \Cref{thm:TZschemesAppendix}]
First, we grow around each vertex $u$ a shortest distance ball 
$B(u) = \setof{ v \in V : d(u,v) < r_v }$, 
where we keep increasing $r_v$ until the ball contains exactly $b$ vertices, or
if equidistant vertices to $u$ make this impossible, we add an arbitrary subset of the vertices at distance $r_v$ to $B(u)$ until it ball has size exactly $b$.
We can do this in total time $\O(n b)$ by Dijkstra's algorithm.

Now, using \Cref{lem:detHS}, we compute a hitting set $W_0 \subseteq V$ for the set of balls $\mathcal{B} = \setof{B(v) : v \in V}$
with size $|W_0| \leq \frac{n}{b}\log(n) $.
By the definition of a bunch,
we must have that for all $u$, $B_G(u,W_0) \subseteq B(u)$,
and hence every bunch w.r.t. $W_0$ has size at most $b$.

We also want to bound the size of clusters.
We have a bound on the total cluster size of 
$\sum_{v \in V} 
|C_G(v,W_0)|
=
\sum_{v \in V} 
|B_G(v,W_0)|
\leq 
nb$.
Thus, by Markov's inequality, at least half of the $n$ clusters satisfy $|C_G(v,W_0)| \leq 2 b$.

Observe that for any sets $W, X \subseteq V$,
\[
\sum_{x \in X} 
|C_G(x,W)|
= 
\sum_{x \in X} 
\sum_{v \in V}
\mathbbm{1}_{[x \in B_G(v,W)]}
=
\sum_{v \in V} 
|B_G(v,W) \cap X|
\]

Furthermore, for any sets $W, W'$,
\begin{align}
\label{eq:monotoneBunchAndCluster}
C_G(v,W) \subseteq C_G(v,W \union W')
\text{ and }
B_G(v,W) \subseteq B_G(v,W \union W')
.
\end{align}

Let $V_1$ be the set of vertices whose clusters are strictly larger than $2b$ w.r.t. $W_{0}$.

We now repeat the following procedure, starting with $i = 1$, unless $|V_i| \leq n/b$. If $|V_i| \leq n/b$, we instead return $W = W_i \union V_i$ as our pivot set.
For each vertex $v$, grow a set
$B_i(u) = \setof{ v \in B(u) \cap V_i : d(u,v) < r_{v,i} }$,
where we keep increasing $r_{v,i}$ until the set contains exactly $k_i = \lfloor \frac{|V_i| b}{n} \rfloor$ vertices,
or 
if equidistant vertices to $u$ make this impossible, we add an arbitrary subset of $V_i$ at distance $r_{v,i}$ to $B_i(u)$ until the ball has size exactly $k_i$.
If $|B(u) \cap V_i| < k_i$, we instead define 
$B_i(u) = \emptyset$.

Now, using \Cref{lem:detHS}, we compute a hitting set $H_i \subseteq V_i$ for the collection of sets 
\[
\mathcal{B}_i = \setof{B_i(v) : v \in V \text{ and } B_i(v) \neq \emptyset}.
\]
As $B_i(v) \in \mathcal{B}_i$ has size $k_i$,
the hitting set has size $|H_i| \leq \frac{|V_i|}{k_i} \log n \leq 2 \frac{n}{b} \log n$.
We then define $W_i = W_{i-1} \union H_i$.
By \eqref{eq:monotoneBunchAndCluster}, we have 
$(B_G(v,W_i) \cap V_i) \subseteq B_i(v)$.

We conclude that 
\[
\sum_{v \in V_i} 
|C_G(v,W_i )|
=
\sum_{v \in V} 
|B_G(v,W_i) \cap V_i|
\leq 
n k_i 
\]

Thus, by Markov's Inequality, at least half of the $|V_i|$ vertices satisfy $|C_G(v,W_i )| \leq 2 n k_i/|V_i| \leq 2b$.
We now choose $V_{i+1}$ to be the subset of $V_i$, for which this cluster size upper bound fails to hold.
We then repeat the procedure above with $i \gets i+1$.

Finally, we obtain a pivot set $W$.
As $W$ includes all our hitting sets computed above, using \eqref{eq:monotoneBunchAndCluster}, we have 
for all $v \in V$, 
$|B_G(x,W)| \leq b$ and $|C_G(x,W)| \leq 2b$.
Each round above ensures $|V_{i+1}| \leq |V_{i}|/2$, and thus we finish in at most $\log n$ rounds.
Thus, overall $|W| \leq O(\frac{n}{b} \log^2n).$
Each round can be implemented by running Dijkstra in a ball of size $b$ around each vertex, followed by a hitting set computation, leading to an overall running time of $\O(nb)$.
\end{proof}

\end{document}